\newtheorem{claim}{Claim}[section]
\newtheorem{lemma}[claim]{Lemma}
\newtheorem{theorem}{Theorem}
\newtheorem{proposition}[claim]{Proposition}
\newtheorem{corollary}[claim]{Corollary}
\newtheorem{definition}[claim]{Definition}
\theoremstyle{definition}
\newtheorem{remark}[claim]{Remark}
\def\cI{{\cal I}}
\def\supp{{\rm supp}}
\def\cH{{\cal H}}
\def\oM{\overline{M}}
\def\<{\langle}
\def\>{\rangle}
\def\prob{{\mathbb P}}
\def\naturals{{\mathbb N}}
\def\E{{\mathbb E}} %expectation
\def\reals{\mathbb{R}}
\def\sT{{\sf T}}
\def\bh{\mathbf{h}}
\def\br{\mathbf{r}}
\def\bs{\mathbf{s}}
\def\bx{\mathbf{x}}
\def\by{\mathbf{y}}
\def\bu{\mathbf{u}}
\def\btu{\mathbf{\tilde{u}}}
\def\bts{\mathbf{\tilde{s}}}
\def\bv{\mathbf{v}}
\def\hbv{\mathbf{\widehat{v}}}
\def\bg{\mathbf{g}}
\def\bh{\mathbf{h}}
\def\one{\mathbf{1}}
\def\bz{\mathbf{z}}
\def\bw{\mathbf{w}}
\def\hbv{\widehat{\mathbf{v}}}
\def\hbu{\widehat{\mathbf{u}}}
\def\bvz{\mathbf{v_0}}
\def\buz{\mathbf{u_0}}
\def\bX{\mathbf{X}}
\def\rec{^\text{rec}}
\def\sym{^\text{sym}}
\def\bZ{\mathbf{Z}}
\def\bG{{G}}
\def\bV{{V}}
\def\bW{\mathbf{W}}
\def\cW{\mathcal{W}}
\def\E{\mathbb{E}}
\def\normal{{\sf N}}
\def\F{{\sf F}}
\def\H{{\sf H}}
\def\cF{{\cal F}}
\def\cP{{\cal P}}
\def\T{{\sf T}}
\def\S{{\sf S}}
\def\D{{\sf D}}
\def\Y{{\sf Y}}
\def\q{{\sf q}}
\def\R{{\sf R}_V}
\def\G{{\sf G}}
\def\id{{\rm I}}
\def\eps{\varepsilon}
\def\val{r}
\def\trans{^\sT}
\def\de{{\rm d}}
\def\tons{{\sf \tilde{b}}}
\def\ons{{\sf b}}
\def\onsd{{\sf d}}
\def\onsD{{\sf D}}
\def\onsa{{\sf a}}
\def\beps{{\bar \eps}}
\def\vth{\vartheta}
\def\cX{{\cal X}}
\def\cY{{\cal Y}}
\def\ind{{\mathbb{I}}}
\def\bDelta{{\bf \Delta}}
\def\brho{{\boldsymbol \rho}}
\def\bkappa{{\boldsymbol \kappa}}
\author{Andrea~Montanari\footnote{Department of Electrical
    Engineering and Department of Statistics, Stanford University}
\, and\, Emile Richard\footnote{Department of Electrical
    Engineering, Stanford University}}
\title{Non-negative Principal Component Analysis:\\
Message Passing Algorithms and Sharp Asymptotics}
\begin{document}
\maketitle

\begin{abstract}
Principal component analysis (PCA) aims at estimating the direction of
maximal variability of a high-dimensional dataset. A natural question is: does this task become
easier, and estimation more accurate, when we exploit additional knowledge on
the principal vector? We study the case in which the principal vector  is known to lie in the
positive orthant. Similar constraints arise in a number of
applications,  ranging from analysis of gene
expression data to spike sorting in neural signal processing. 

In the unconstrained case, the estimation performances of PCA 
has been precisely characterized using random matrix
theory, under a statistical model known as the `spiked model.'  
It is known that the estimation error undergoes
a phase transition as the signal-to-noise ratio crosses a certain threshold.
Unfortunately, tools from random matrix theory have no bearing on the constrained problem.
Despite this challenge, we develop an analogous characterization in
the constrained case, within a
one-spike model.

In particular: $(i)$~We prove that the estimation error undergoes a
similar phase transition, albeit at a different threshold in
signal-to-noise ratio that we determine exactly; $(ii)$~We prove that
--unlike in the unconstrained case-- estimation error depends on the
spike vector, and characterize the least favorable vectors; $(iii)$~We
show that a non-negative principal component can be approximately
computed --under the spiked model-- in nearly linear time. This
despite the fact that the problem is non-convex and, in general,
NP-hard to solve exactly. 
\end{abstract}

\section{Introduction}

Principal Component Analysis (PCA) is arguably the most successful of
dimensionality reduction techniques. 
Given samples $\bx_1,\bx_2,\dots,\bx_n$ from a $p$-dimensional
distribution, $\bx_i\in \reals^p$, PCA seeks the direction of maximum
variability. Assuming for simplicity the $\bx_i$'s to be centered
(i.e. $\E(\bx_i) = 0$),
and denoting by $\bx$ a random vector distributed  as $\bx_i$, the
objective is to estimate the solution of 
\begin{align}
\text{maximize} & \;\;\;\E\big(\<\bx,\bv\>^2\big)\, ,\label{eq:PopulationPCA}\\
\text{subject to} & \;\;\;\|\bv\|_2= 1\, .\nonumber
\end{align}
The solution of this problem is the principal eigenvector of the
covariance matrix $\E(\bx\bx^{\sT})$. This is normally estimated by
replacing expectation above by the sample mean, i.e. solving
\begin{align}\tag*{Classical PCA}
\text{maximize} & \;\;\;\sum_{i=1}^n\<\bx_i,\bv\>^2\, ,\label{eq:SamplePCA}\\
\text{subject to} & \;\;\;\|\bv\|_2= 1\, .\nonumber
\end{align}
Denoting by $\bX\in\reals^{n\times p}$ the matrix with rows $\bx_1,\bx_2,\dots,\bx_n$, the solution is of course given  by the principal eigenvector of the
sample covariance $\bX\bX^{\sT}/n=\sum_{i=1}^n\bx_i\bx_i^{\sT}/n$,
that we will denote by $\bv_1=\bv_1(\bX)$.

This approach is known to be consistent in low dimension. Let $\bvz$ be the solution of problem
(\ref{eq:PopulationPCA}). If $n/p\to\infty$, then $\|\bv_1-\bvz\|_2\to
0$ in probability \cite{anderson1963asymptotic}.
On the other hand, it is well understood that consistency can break
dramatically in the high-dimensional regime $n=O(p)$. This phenomenon is crisply captured by
the spiked covariance model \cite{johnstone2004sparse,johnstone2009consistency}, that postulates
\begin{align}
\bx_i = \sqrt{\beta}\, u_{0,i} \, \bvz + \, \bz_i\, ,
\end{align}
where $\bvz$ has unit norm, $\bz_1,\bz_2,\dots \bz_p$ are i.i.d. $p$-dimensional standard
normal vectors  $\bz_i\sim \normal(0,\id_{p}/n)$, and $\buz =
(u_{0,1},\dots,u_{0,n})^{\sT}$ is a unit-norm vector\footnote{The
  definition of \cite{johnstone2004sparse} assumes
  $u_i\sim_{\text{i.i.d.}}\normal(0,1/n)$ but for our purposes it is
  more convenient to consider $\bu$ as a given deterministic
  vector. Equivalently, we can condition on $\bu_0$.}.
The above model can also be written as
\begin{align}\tag*{Spiked Model}
\bX = \sqrt{\beta} \, \buz\,\bvz^{\sT} + \bZ\, ,\label{eq:SpikedMatrix}
\end{align}
where $\bZ\in\reals^{n}$ has i.i.d. entries $\bZ_{ij}\sim
\normal(0,1/n)$.

The spectral properties of the random matrix $\bX$ defined by
the \ref{eq:SpikedMatrix} have been studied in detail 
across statistics, signal processing and probability theory
\cite{baik2005phase,baik2006eigenvalues,baik2006eigenvalues,paul2007asymptotics,feral2009largest,benaych2012singular,capitaine2012central}. In the limit $n,p\to\infty$ with $p/n\to \alpha\in (0,\infty)$,
the leading eigenvector 
$\bv_1$ undergoes a phase transition:
\begin{align}
\lim_{n\to\infty}\big|\<\bv_1,\bvz\>\big|=  \begin{cases}
0 & \mbox{ if $\beta \leq \sqrt{\alpha}$,}\\
&\\
\sqrt{{\displaystyle\frac{1-\alpha/\beta^2}{1+\alpha/\beta}}} & \mbox{ if $\beta>\sqrt{\alpha}$,}
\end{cases}\label{eq:PCAPhaseTransition}
\end{align}
In other words,  \ref{eq:SamplePCA} contains information about the signal $\bvz$ if and
only if the signal-to-noise ratio is above the threshold
$\sqrt{\alpha}$.
Below that threshold, the principal component is asymptotically
orthogonal to the signal.

The failure of PCA has motivated significant effort aimed at developing
better estimation methods. A recurring idea is to use additional
structural information about the principal eigenvector $\bvz$, such as
its sparsity \cite{johnstone2004sparse,zou2006sparse} or its distribution (within a Bayesian framework)
\cite{bishop1999bayesian,lawrence2009non}. Here we focus on the simplest type of structural information,
namely we assume $\bvz$ is known to be non-negative\footnote{Of
  course the case in which $\bvz \in Q$ with $Q$ an arbitrary, known,
  orthant, can be reduced to the present one.}. It is then natural to
replace the \ref{eq:SamplePCA} problem with the following one (whereby we
use the matrix $\bX$ to represent the data):
\begin{align}\tag*{Non-negative PCA}
\text{maximize} & \;\;\;\|\bX\bv\|_2^2\, ,\label{eq:PositivePCA}\\
\text{subject to} & \;\;\;\bv\ge 0\,, \;\;\;\; \|\bv\|_2= 1\,  .\nonumber
\end{align}
Notice  that this problem in non-convex and cannot be solved by
standard singular value decomposition. Indeed it is in general NP-hard
by reduction from maximum independent set
\cite{de2002approximation}. Two questions are therefore
natural: given the additional complexity induced by the non-negativity
  constraint, does this constraint reduce the statistical error significantly?
Are there efficient algorithms to solve the \ref{eq:PositivePCA} problem?

In this paper we answer \emph{positively} to both questions within the 
spiked covariance model. Namely denoting by $\bv^+$ the solution of
the \ref{eq:PositivePCA} problem, we provide the
following contributions: 
\begin{enumerate}
\item[$(i)$] We unveil a new phase transition phenomenon concerning
  $\bv^+$ that is analogous to the classical one, see Eq.
  (\ref{eq:PCAPhaseTransition}). Namely, for $\beta> \sqrt{\alpha / 2}$,
 $\<\bv^+,\bvz\>$ stays bounded away from $0$, while,
 for  $\beta< \sqrt{\alpha / 2}$, there exists vectors $\bvz$ such
 that $\<\bv^+,\bvz\>\to 0$ as $n,p\to\infty$.

Non-negative PCA is superior to classical PCA in this respect since
$\sqrt{\alpha / 2} <\sqrt{\alpha}$ strictly.
\item[$(ii)$] We prove an explicit formula for the asymptotic scalar
  product  $\lim_{n\to\infty} \<\bv^+,\bvz\>$.
Non-negative PCA is superior to \ref{eq:SamplePCA} also in this respect. 
Namely $\<\bv^+,\bvz\>$ is strictly larger than $|\<\bv_1,\bvz\>|$
with high probability as $n,p\to\infty$.

 Note that the non-negativity
constraint breaks the rotational invariance of classical PCA (under
the spiked model). As a consequence, not all spikes $\bvz$ are equally
hard --or easy-- to estimate. We use our theory to characterize the
least favorable vectors $\bvz$.
\item[$(iii)$] We prove that (for any fixed $\delta>0$) a $(1-\delta)$ approximation to the
  non-convex optimization \ref{eq:PositivePCA} problem can be found efficiently
  with high probability with respect to the noise realization. Our
  algorithm has complexity of order $T_{\rm mult}\log(1/\delta)$, where
  $T_{\rm mult}$ is the maximum of the complexity of multiplying a vector by $\bX$
  or by $\bX^{\sT}$.
\end{enumerate}
Technically, our approach has two components. We use Sudakov-Fernique
inequality to upper bound the expected value of the 
\ref{eq:PositivePCA} optimization problem. We then define an iterative algorithm
to solve the optimization problem, and evaluate the value achieved by
the algorithm after any number $t$ of iterations. This provides a
sequence of lower bounds which we prove converge to the upper bound as
the number of iterations increase.

More precisely, we  use an approximate message
passing (AMP) algorithm of the type introduced in
\cite{DMM09,BM-MPCS-2011}.  Each iteration requires a multiplication
by $\bX$ and a multiplication by $\bX^{\sT}$ plus some lower
complexity operations. 
While AMP is not guaranteed to solve the
\ref{eq:PositivePCA} problem for arbitrary matrices $\bX$, we establish
the following properties:
\begin{enumerate}
\item After any number of iterations $t$, the algorithm produces a
  running estimate $\bv^t\in\reals^p$ that satisfies the constraints
  $\bv^t\ge 0$ and $\|\bv^t\|_2=1$.

Further the limit $\lim_{n,p\to\infty}\|\bX\bv^t\|^2_2= \val(t)$ exists almost
surely, and $\val(t)$ can be computed explicitly as a function of the
empirical law of entries of $\bvz$. Analogously, the asymptotic correlation
$\lim_{n,p\to\infty}\<\bv^t,\bvz\>= s(t)$
can be computed explicitly. 
\item Denoting by $\val_{*}$ the upper bound  on the value of the
  optimization \ref{eq:PositivePCA} problem  implied by Sudakov-Fernique
  inequality,
we prove that $\val(t)\ge (1-\delta)\val_*$ for all $t\ge t_0(\delta)$
for some dimension-independent $t_0(\delta)$.  This implies that Sudakov-Fernique inequality is
asymptotically tight in the high-dimensional limit.
\item The asymptotic correlation converges to a limit as 
the number of iteration tends to infinity $s_*=\lim_{t\to\infty}s(t)$
(the convergence is, again, exponentially fast).
Further, if we add the constraint $|\<\bv,\bvz\>-s_*|\ge \delta$ to the \ref{eq:PositivePCA} optimization problem, Sudakov-Fernique's upper bound on the resulting value
is asymptotically smaller than $\val_*$ for any $\delta>0$.

This implies that $\lim_{n\to\infty}\<\bv^+,\bvz\>  = s_*$. 
\end{enumerate}

Finally, we generalize our analysis to the case of symmetric matrices,
namely assuming that data consist of a $n\times n$ symmetric matrix $\bX$:
\begin{align}\tag*{Symmetric Spiked Model}
\bX = \beta\, \bvz\bvz^{\sT} + \bZ\label{eq:SymmetricModel}
\end{align}
with $\bvz\ge 0$, $\|\bvz\|_2 = 1$. Here $\bZ = \bZ^{\sT}$ is a noise
matrix such that $(\bZ_{ij})_{i\le j}$ are independent with
$\bZ_{ij}\sim\normal(0,1/n)$ for $i<j$ and $\bZ_{ii}
\sim\normal(0,2/n)$.

In this case we study the analogue of the \ref{eq:PositivePCA} problem, namely
\begin{align}\tag*{Symmetric non-negative PCA}
\text{maximize} & \;\;\;\, \<\bv,\bX\bv\>\, ,\label{eq:PositivePCASymm}\\
\text{subject to} & \;\;\;\bv\ge 0\,, \;\;\;\; \|\bv\|_2= 1\,  .\nonumber
\end{align}

\subsection{Related literature}

The non-negativity constraint on principal components arises naturally
in many situations: we briefly discuss a few related areas. Let us
emphasize that the theoretical understanding of the methods
discussed below is much more limited than for \ref{eq:SamplePCA}.

\vspace{0.25cm}

\noindent{\bf Microarray data.} Microarray measurements of gene expression result in a matrix
$\bX\in\reals^{n\times p}$ whereby $\bX_{ij}$ denotes the expression
level of gene $j$ in sample $i$. Several authors  
\cite{tanay2002discovering,kluger2003spectral,madeira2004biclustering,shabalin2009finding,sun2013maximal} seek for a
subset of genes that are simultaneously over-expressed (or
under-expressed) in a subset of
samples. Lazzeroni and Owen \cite{lazzeroni2002plaid} propose a model
of the form
\begin{align}
\bX_{ij} = \mu_0+\sum_{k=1}^{K}\mu_k\,\brho^{(k)}_i \bkappa^{(k)}_j\, ,\label{eq:Plaid}
\end{align}
where $k$ indexes such gene groups (or `layers'), and $\brho^{(k)}$,
$\bkappa^{(k)}$ indicate the level of participation of different
samples or different genes in group $k$. These authors assume
$\brho^{(k)}_i \bkappa^{(k)}_i\in \{0,1\}$, but it is natural to relax
this condition allowing for partial participation in group $k$, i.e. $\brho^{(k)}_i \bkappa^{(k)}_i\in [0,1]$,
By a change of normalization, this constraint can be simplified to
$\brho^{(k)}_i \bkappa^{(k)}_i\ge 0$. Note a few differences with
respect to our work:
\begin{itemize}
\item[$(i)$] We study a model with only one non-negative
  component. While Eq.~(\ref{eq:Plaid}) corresponds to a model with
  multiple $K\ge 1$ components, in practice several authors fit one
  `layer' at a time, hence 
  effectively reducing the problem to a single-component case.

Extending our analysis to the multiple component case will be the
object of future work.
\item[$(ii)$] The non-negativity constraint is imposed in the model (\ref{eq:Plaid})
  on both components. This is a relatively straightforward
  modification of our setting.
\item[$(iii)$] Several studies (e.g. \cite{lazzeroni2002plaid})  fit models of the form
  Eq.~(\ref{eq:Plaid}) using greedy optimization methods. Their
  conclusions are based on the unproven belief that these methods approximately
  solve the  optimization problem.
  Our results
  (establishing convergence, with high probability, of an iterative
  method) provide some mathematical justification for this approach.
\end{itemize}

\vspace{0.25cm}

\noindent{\bf Neural signal processing.}   Neurons' activity can be recorded through thin implanted
electrodes. The resulting signal is a superposition of localized
effects of single neurons (spikes). In order to reconstruct the
single neuron activity, it is necessary to assign each spike to a
specific neuron that created it, a process known as `spike sorting' \cite{lewicki1998review,quiroga2004unsupervised,quiroga2009extracting}.
Once spikes are aligned, the resulting data can be viewed as a matrix
$\bX = (\bX_{ij})_{i\in [n], j\in [p]}$, where $i$ indexes the spikes
and $j$ time (or a transform domain, e.g. wavelet domain).

In this context, principal component analysis is often used to 
project each row of $\bX$ (i.e. each recorded spike) in a low
dimensional space, or decomposing it as a sum of single neurons
activity, see e.g. 
\cite{brown2001independent,zhang2004spike,pavlov2007sorting}. Clustering may be carried out after dimensionality reduction.
Note that each spike is a sum of single neuron activity \emph{with
  non-negative} coefficients. In other words, the $i$-th row of $\bX$
reads
\begin{align}
\bx_{i} \approx \sum_{k=1}^K u_{0,ik} \bvz^{(k)}\, ,
\end{align}
where $\bvz^{(1)}$, \dots $\bvz^{(K)}$ are the signatures of $K$
neurons and $u_{0,ik}$ are non-negative coefficients. 

Again, this corresponds to a multiple component version of the problem
we study here. To the best of our knowledge, the non-negativity
constraint has not been exploited in this context.

\vspace{0.25cm}

\noindent{\bf Non-negative matrix factorization.} Initially introduced
in the context of chemometrics
\cite{paatero1994positive,paatero1997least}, non-negative matrix
factorization attracted considerable interest because of its
applications in computer vision and topic modeling. In particular, Lee
and Seung \cite{lee1999learning} demonstrated empirically that
non-negative matrix
factorization successfully identifies parts of images, or topics in
documents' corpora.
 
A  mathematical model to
understand these findings was put forward in \cite{donoho2003does} and most
recently studied, for instance, in \cite{arora2012computing}. Note
that these results only apply under a no-noise or very-weak noise
conditions, but for multiple components. Further, the aim is to
approximate the original data matrix, rather than estimating the
principal components.

In this sense, non-negative matrix
factorization is the farther among all related areas to the scope of
our work.

\vspace{0.25cm}

\noindent{\bf Approximate Message Passing.} Approximate Message
Passing algorithms  proved successful as a fast
first-order method for compressed
sensing reconstruction  \cite{DMM09}. Their definition is
inspired by ideas from statistical mechanics and coding theory
\cite{thouless1977solution,SpinGlass,RiU08}, see also \cite{MontanariChapter} for
further background. One attractive feature of AMP algorithms is that their
high-dimensional asymptotics can be characterized exactly and in close
form, through `state-evolution'  \cite{BM-MPCS-2011,javanmard2013state,BM-Universality}.
Several applications and generalizations were developed by Rangan
\cite{RanganGAMP}, Schniter \cite{SchniterEM} and collaborators. 

In
particular Schniter and Cevher
\cite{schniter2011approximate,parker2013bilinear} apply AMP the
problem of reconstructing a vector from bilinear noisy
observations, a problem that is mathematically equivalent to the one
explored here. These authors consider however more complex Bayesian models, and
evaluate performances through empirical simulations, while we
characterize a fundamental threshold phenomenon in a worst case
setting. Similar ideas were applied in \cite{krzakala2013phase} to the
problem of dictionary learning, and in \cite{vila2013hyperspectral} to
hyperspectral imaging. Finally, Kabashima and collaborators
\cite{kabashima2014phase} study low-rank matrix reconstruction using a
similar approach, but focus on the case in which the rank scales
linearly with the matrix dimensions.

\subsection{Organization of the paper}

In Section \ref{sec:Main} we present formally our results, both for
symmetric matrices and rectangular matrices.
As mentioned above, the proof is obtained by establishing an upper bound on
the value of the \ref{eq:PositivePCA} optimization  problem using
Sudakov-Fernique inequality, and a lower bound by analyzing an AMP algorithm.
The upper bound is outlined in  Section
\ref{sec:UpperBoundSlepianNonnegative}. Section \ref{sec:AMP}
introduces formally AMP and its analysis, hence establishing the
desired lower bound as well as the convergence properties of this
algorithm. 
Section \ref{sec:Numerical} presents a numerical illustration of
the phase transition phenomenon, and of the behavior of our
algorithm. Finally, Section \ref{sec:proofUpperBounds} contains 
proofs, with some technical details deferred to the appendices.
%
%*******************************************************
%
\section{Main results}
\label{sec:Main}

In this section we present formally our results. For the sake of clarity ,
we consider first the case of symmetric (Wigner) matrices, and then
the case of rectangular (or sample covariance, Wishart) matrices. Indeed formul\ae~for symmetric
matrices are somewhat simpler.  Before doing that, it is convenient to
introduce some definitions.
(For basic notations, we invite the reader to consult Section \ref{sec:Notations}.)

\subsection{Definitions}

Our results concern sequences of matrices $\bX$ with diverging
dimensions $n,p$, and are expressed in terms of the asymptotic
empirical distribution of the entries of $\bvz$. This is formalized
through the following definition.
\begin{definition}\label{def:convergesEmpirical}
Let $\{\bx(n)\}_{n\ge 0}$ be a sequence of vectors with, for each $n$,
$\bx(n)\in \reals^n$, and $\mu$ be a (Borel) probability measure on the real
line $\reals$. Then we say that  $\bx(n)$ \emph{converges in
empirical distribution} to $\mu$ if the probability measure
\begin{align}
\mu_{\bx(n)} \equiv \frac{1}{n}\sum_{i=1}^n\delta_{\bx(n)_i}\, ,
\end{align}
converges weakly to $\mu$ and the
second moment  of $\mu_{\bx(n)}$ converges as well or, equivalently,
$\|\bx(n)\|_2^2/n\to \int x^2\mu(\de x)$.  

With an abuse of
terminology, we will say that $\{\bx(n)\}_{n\ge 0}$ converges in
empirical distribution to $X$ if $X$ is a random variable with law $\mu$. 
\end{definition}

Given a random variable $X$, we let $\mu_X$ denote its law. We next
define a few functions of such a law.
\begin{definition}\label{def:functionsFGRTS}
Let $\bV$ be a real non-negative random variable independent of $\bG\sim \normal(0,1)$
and $x\in\reals_{\ge 0}$ be a real number. We define the two functions
\begin{equation}\label{eq:defResponseFunctionF}
\F_\bV( x) \equiv \frac{ \E~ \bV \left ( x \bV +\bG\right
  )_+}{\sqrt{\E \left ( x\bV+\bG\right
    )_+^2}}\quad\text{and}\quad\G_\bV(x) \equiv  \frac{\E~ \bG \left (  x \bV+\bG \right )_+}{\sqrt{\E \left ( x\bV+\bG\right )_+^2}} ~~.
\end{equation}
Using $\F_V$ and $\G_V$ define the following  `Rayleigh functions'
\begin{align}\label{eq:definitionEnergy}
\R\sym(x) & \equiv \beta~ \F^2_V(x) + 2~ \G_\bV(x)\\
\R\rec(x,\alpha) &\equiv  \sqrt { 1 + \beta~  \F_\bV(x / \sqrt \alpha )^2}+ \sqrt \alpha~ \G_\bV(x / \sqrt \alpha )~~.
\end{align}

For $\beta\ge 0$, we also define $\T_V(\beta)$ as the unique non-negative 
solution of $x = \beta\F_V(x)$ and $\S_V(\beta,\alpha)$ as
 the unique non-negative solution of $x^2(1+\beta\F_V(x/\sqrt{\alpha})^2)
 = \beta^2 \F_V(x/\sqrt\alpha)^2$.
\end{definition}
Note that  the above functions depend on the random variable $\bV$
only through its law $\mu_{\bV}$, but we prefer the notation 
--say-- $\F_{\bV}$ to the 
more indirect $\F_{\mu_\bV}$.
Existence and well-definedness of $\T_V$ and $\S_V$ are proved in
Lemma \ref{lem:Twelldefined}
below. Further in Lemma \ref{lem:Rvariations} we prove that the functions $\R\sym$ respectively $\R\rec(\cdot,\alpha)$ have a unique maximum reached respectively at $\T_V(\beta)$ and at $\S_V(\beta,\alpha)$.

Our results become particularly explicit in case $\bvz$ is sparse
which (in the asymptotic setting) is equivalent to 
$\prob(\bV\neq 0)$  small. We introduce some
terminology to address this case.
\begin{definition}\label{def:UniformConvergence}
Given a real random variable $\bV$, we let $\eps(\bV) \equiv
\prob(\bV\neq 0)$ denote its sparsity level. We let
$\cP$ be the set of probability measures $\mu$ supported on
$\reals_{\ge 0}$, with second moment equal to one, and, for $\eps\ge 0$, $\cP_{\eps}\equiv
\{\mu\in\cP:\, \mu(\{0\})\ge 1-\eps\}$. 

Given a function $Q:\cP\to\reals$, $\mu_{\bV}\mapsto Q_{\bV}$, and a
number $q\in\reals$, we write
that $\lim_{\eps(\bV)\to 0} Q_{\bV} = q$ \emph{uniformly over $\cP$} if
\begin{align}
\lim_{n\to\infty}\inf_{\mu_{\bV}\in\cP_{\eps}} Q_{\bV} = 
\lim_{n\to\infty}\sup_{\mu_{\bV}\in\cP_{\eps}} Q_{\bV} = q\, .
\end{align}
\end{definition}

In the following, we will often state  that an event holds almost surely 
as the dimensions of the random matrix $\bX$ tend to infinity. It is
understood that such statements hold with respect to the law of a
sequence $\{\bX_n\}_{n\ge 1}$ of \emph{independent} random matrices
distributed according to the \ref{eq:SpikedMatrix} or  the \ref{eq:SymmetricModel}. 

\subsection{Symmetric matrices}

For the sake of comparison, we begin by recalling some asymptotic
properties of \ref{eq:SamplePCA}. Given $\bX\in\reals^{n\times n}$ symmetric
distributed according to the \ref{eq:SymmetricModel},
we denote by $\bv_1=\bv_1(\bX)$ its principal eigenvector, and by
$\lambda_1=\lambda_1(\bX)$  the corresponding eigenvalue.

This model has been studied in probability theory under the name of
`low rank deformation of a Wigner matrix'. The following is a
simplified version of the main theorem in  \cite{capitaine2009largest}.
\begin{theorem}[\cite{capitaine2009largest}]
Let $\bX = \beta\bvz\bvz^{\sT}+\bZ$ be a rank-one deformation of the
Gaussian symmetric matrix $\bZ$, with $\bZ_{ij}\sim\normal(0,1/n)$
independent for $i<j$, and $\|\bvz\|_2 =1$. Then we have, almost surely
\begin{align}
\lim_{n\to\infty}\lambda_{1}(\bX) = \begin{cases}
2 & \mbox{ if $\beta\le 1$,}\\
\beta + 1/\beta & \mbox{ if $\beta>1$.}
\end{cases}
\end{align}
Further
\begin{align}
\lim_{n\to\infty}|\<\bv_1,\bv_0\>| = \begin{cases}
0 & \mbox{ if $\beta\le 1$,}\\
\sqrt{1-\beta^{-2}} & \mbox{ if $\beta>1$.}
\end{cases}
\end{align}
\end{theorem}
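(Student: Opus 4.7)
My plan is to combine the secular equation for a rank-one perturbation with the isotropic semicircle law for the Gaussian symmetric matrix $\bZ$. With the stated normalization, $\bZ$ is a GOE matrix whose empirical spectral distribution converges weakly to the Wigner semicircle on $[-2,2]$, and $\lambda_1(\bZ)\to 2$ almost surely. The orthogonal invariance of $\bZ$ lets me rotate coordinates so that $\bvz=e_1$ without loss of generality, which reduces the quadratic form $\bvz^\sT(\lambda\id-\bZ)^{-1}\bvz$ to a diagonal entry of the resolvent $R(\lambda)=(\lambda\id-\bZ)^{-1}$. Standard Wigner theory (Schur complement together with Gaussian concentration, or the isotropic local law) then yields
\begin{align}
\bvz^\sT R(\lambda)\bvz \;\longrightarrow\; g(\lambda)\;\equiv\;\frac{\lambda-\sqrt{\lambda^2-4}}{2},
\end{align}
almost surely and uniformly on compact subsets of $(2,\infty)$, together with the analogous statement $\bvz^\sT R(\lambda)^2\bvz\to -g'(\lambda)$. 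The function $g$ is the Stieltjes transform of the semicircle outside the bulk; it is continuous and strictly decreasing on $(2,\infty)$ with $g(2^+)=1$ and $g(+\infty)=0$.

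From the rank-one matrix determinant lemma, any eigenvalue of $\bX$ that is not already an eigenvalue of $\bZ$ satisfies the secular equation $1=\beta\,\bvz^\sT R(\lambda)\bvz$. Passing to the limit, an outlier must solve $1=\beta\,g(\lambda)$, which has a unique solution $\lambda_*>2$ iff $\beta>1$, namely $\lambda_*=\beta+1/\beta$. This, combined with Cauchy interlacing (which gives $\lambda_1(\bZ)\le\lambda_1(\bX)$ and $\lambda_2(\bX)\le\lambda_1(\bZ)$), settles the eigenvalue statements: if $\beta\le 1$ then $\limsup\lambda_1(\bX)>2$ would produce a contradictory outlier equation in the limit, so $\lambda_1(\bX)\to 2$; if $\beta>1$, the outlier exists, and $\lambda_1(\bX)\to\beta+1/\beta$. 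In the subcritical case the overlap $\<\bv_1,\bvz\>\to 0$ follows from delocalization: $\bv_1(\bX)$ is asymptotically supported on the spectral subspace of $\bZ$ near the edge, which by orthogonal invariance is (conditionally on the spectrum) uniform on the sphere, and Gaussian concentration on the sphere gives $\<\bvz,\bv_1\>\to 0$.

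For the supercritical overlap, the eigenvector equation $(\lambda_*\id-\bZ)\bv_1=\beta\bvz\,(\bvz^\sT\bv_1)$ shows $\bv_1\propto R(\lambda_*)\bvz$ up to normalization. Therefore
\begin{align}
\<\bv_1,\bvz\>^2 \;=\; \frac{\bigl(\bvz^\sT R(\lambda_*)\bvz\bigr)^2}{\bvz^\sT R(\lambda_*)^2\bvz}\;\longrightarrow\;\frac{g(\lambda_*)^2}{-g'(\lambda_*)}.
\end{align}
At $\lambda_*=\beta+1/\beta$ one has $\sqrt{\lambda_*^2-4}=\beta-1/\beta$, hence $g(\lambda_*)=1/\beta$ and $-g'(\lambda_*)=1/(\beta^2-1)$, yielding $\<\bv_1,\bvz\>^2\to 1-\beta^{-2}$. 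The main obstacle is the almost sure isotropic convergence of $\bvz^\sT R(\lambda)\bvz$ (and of its derivative) uniformly in a neighborhood of $\lambda_*$ in $(2,\infty)$; for Gaussian $\bZ$ this is entirely standard thanks to orthogonal invariance and Gaussian concentration of resolvent entries, but it is where the real analytic work sits, since one must also rule out eigenvalues of $\bZ$ approaching $\lambda_*$ so that the determinantal identity remains in force up to the limit.
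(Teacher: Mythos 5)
This theorem is not proved in the paper at all: it is quoted as background from the cited reference \cite{capitaine2009largest}, so there is no internal proof to compare against. Your sketch follows the standard secular--equation/resolvent route for finite-rank deformations, and for the eigenvalue limits and the supercritical overlap it is a correct outline: outliers of $\bX$ solve $1=\beta\,\bvz^{\sT}R(\lambda)\bvz$, the almost sure convergence of $\bvz^{\sT}R(\lambda)\bvz$ and $\bvz^{\sT}R(\lambda)^2\bvz$ to $g(\lambda)$ and $-g'(\lambda)$, uniformly on compact subsets of $(2,\infty)$, is indeed standard for GOE, Weyl interlacing pins the subcritical top eigenvalue at the edge, and your algebra at $\lambda_*=\beta+1/\beta$ (namely $g(\lambda_*)=1/\beta$, $-g'(\lambda_*)=1/(\beta^2-1)$) correctly yields $\<\bv_1,\bvz\>^2\to 1-\beta^{-2}$.

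The genuine gap is the subcritical overlap. The argument that $\bv_1(\bX)$ is ``asymptotically supported on the spectral subspace of $\bZ$ near the edge, which by orthogonal invariance is uniform on the sphere'' does not stand as written: $\bX$ is not orthogonally invariant (the perturbation singles out the direction $\bvz$), so $\bv_1(\bX)$ is not Haar-distributed even conditionally on the spectrum of $\bZ$; and the claim that $\bv_1(\bX)$ lives in the edge spectral subspace of $\bZ$ is precisely what would need quantitative proof (it amounts to an isotropic/delocalization estimate at the edge, i.e.\ overlaps of order $1/n$ with the roughly $n\delta^{3/2}$ eigenvectors within distance $\delta$ of the edge). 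A clean fix inside your own framework: almost surely $\lambda_1(\bX)>\lambda_1(\bZ)$, so the eigenvector equation gives $\<\bv_1,\bvz\>^2=\big(\beta^2\,\bvz^{\sT}R(\lambda_1(\bX))^2\bvz\big)^{-1}$; since $\lambda\mapsto \bvz^{\sT}R(\lambda)^2\bvz$ is decreasing on $(\lambda_1(\bZ),\infty)$ and $\lambda_1(\bX)\to 2$, for any fixed $\mu>2$ one eventually has $\bvz^{\sT}R(\lambda_1(\bX))^2\bvz\ge \bvz^{\sT}R(\mu)^2\bvz\to -g'(\mu)$, and $-g'(\mu)\to\infty$ as $\mu\downarrow 2$, forcing $\<\bv_1,\bvz\>\to 0$ without any edge-level local law. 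With that replacement (and the observation that the finite-$n$ secular identity is legitimate because $\lambda_1(\bX)$ is almost surely not an eigenvalue of $\bZ$), your sketch is a correct outline of the classical proof.
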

Numerous refinements exist on this basic result,
see for instance
\cite{capitaine2009largest,peche2009universality,benaych2011eigenvalues,benaych2011fluctuations,capitaine2011free,benaych2012large,knowles2013isotropic,pizzo2013finite}.

Our analysis provides a version of this theorem that holds for
non-negative PCA, and is intriguingly similar to the original one. Its proof  can be found in Appendix \ref{sec:ProofOfMainTheorems}. 
\begin{theorem}\label{th:mainSym}
Let $\bX = \beta\bvz\bvz^{\sT}+\bZ$ be a rank-one deformation of the symmetric
Gaussian matrix $\bZ$  with $\bZ_{ij}\sim\normal(0,1/n)$
independent for $i<j$, and $\|\bvz\|_2 =1$. Further let $\lambda^+=\lambda^+(\bX)$ be the
value of the \ref{eq:PositivePCASymm} problem, and
$\bv^+=\bv^+(\bX)$ be any of the optimizers.
Finally assume that $\bv_0=\bv_0(n)\in \reals^n$ is such that
$\{\sqrt{n}\bvz(n)\}$  converges in empirical
distribution to $\mu_V$.

Then (with the notation introduced in Definition
\ref{def:functionsFGRTS}), we have almost surely
\begin{align}
\lim_{n\to\infty}\lambda^+(\bX) &= \R\sym(\T_V(\beta))\, \label{eq:lambdaPlusSim},\\
\lim_{n\to\infty}\<\bv^+,\bv_0\>&= \F_V(\T_V(\beta))\,   \label{eq:innerProductSim}.
\end{align}
Further, uniformly over $\cP$,
\begin{align}\label{eq:RlimitEpsZeroSym}
\lim_{\eps(\bV)\to 0}\R\sym(\T_V(\beta))&=\begin{cases}
\sqrt 2 & \mbox{ if $\beta\le 1/\sqrt{2}$,}\\
\beta  + 1/(2\beta) & \mbox{ otherwise.}
\end{cases} 
\end{align}
and
\begin{align}\label{eq:FlimitEpsZeroSym}
\lim_{\eps(V)\to 0}\F_V(\T_V(\beta))&=\begin{cases}
0 & \mbox{ if $\beta\le 1/\sqrt{2}$,}\\
\sqrt{ 1 - 1/(2\beta^2)} & \mbox{ otherwise.}
\end{cases} 
\end{align}
\end{theorem}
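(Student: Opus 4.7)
The strategy is to sandwich $\lambda^+$ between matching asymptotic upper and lower bounds, and then transfer the resulting control to the optimizer $\bv^+$. The upper bound is obtained by a Sudakov--Fernique Gaussian comparison; the lower bound by analyzing the approximate message passing (AMP) algorithm of Section \ref{sec:AMP}. Once these bounds are shown to meet at $\R\sym(\T_V(\beta))$, convergence of the inner product $\<\bv^+,\bvz\>$ follows from a refined Sudakov--Fernique argument on a restricted feasible set.

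For the upper bound, I decompose $\<\bv, \bX \bv\> = \beta \<\bv,\bvz\>^2 + \<\bv, \bZ \bv\>$ and view $\{\<\bv, \bZ \bv\>\}_{\bv \in K}$ as a centered Gaussian process on the non-negative unit sphere $K = \{\bv \ge 0,\, \|\bv\|_2 = 1\}$. A direct covariance calculation yields increment variance $(2/n)\|\bv\bv^{\sT} - \bu\bu^{\sT}\|_F^2 = (4/n)(1-\<\bu,\bv\>^2)$, dominated (since $1+\<\bu,\bv\>\le 2$) by the increment variance $(4/n)\|\bv - \bu\|_2^2$ of the auxiliary process $2\<\bh, \bv\>$ with $\bh \sim \normal(0, \id_n/n)$. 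Sudakov--Fernique, applied after attaching the same deterministic spike term $\beta \<\bv,\bvz\>^2$ to both processes, gives $\E[\lambda^+] \le \E[\max_{\bv \in K} (\beta\<\bv,\bvz\>^2 + 2\<\bh, \bv\>)]$. Linearizing the quadratic via $\beta s^2 = \sup_{\xi}(2\xi s - \xi^2/\beta)$ and interchanging the suprema reduces the right-hand side to $\E[\sup_\xi (2\|(\xi \bvz + \bh)_+\|_2 - \xi^2/\beta)]$. For fixed $\xi$, the convergence in empirical distribution of $\sqrt n\,\bvz$ to $V$ forces $\|(\xi \bvz+\bh)_+\|_2^2 \to \E[(\xi V + G)_+^2]$ a.s.; since the supremum in $\xi$ is effectively confined to a compact set by the quadratic penalty, uniform concentration delivers
\begin{align}
\limsup_{n\to\infty}\E[\lambda^+] & \le \sup_{\xi\ge 0}\bigl\{2\sqrt{\E[(\xi V+G)_+^2]} - \xi^2/\beta\bigr\}\, .
\end{align}
The first-order condition reads $\xi = \beta \F_V(\xi)$, whose unique non-negative solution is $\T_V(\beta)$ (Lemma \ref{lem:Twelldefined}); combined with the Stein-type identity $\sqrt{\E[(\xi V+G)_+^2]} = \xi \F_V(\xi) + \G_V(\xi)$, the supremum equals $\R\sym(\T_V(\beta))$. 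Gaussian concentration of $\lambda^+$ as an $O(n^{-1/2})$-Lipschitz function of $\bZ$ promotes the expectation bound to an almost sure upper bound.

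For the matching lower bound, the AMP iteration of Section \ref{sec:AMP} produces feasible iterates $\bv^t \in K$ whose joint empirical distribution with $\bvz$ is described, via state evolution, by the scalar recursion $x_{t+1} = \beta \F_V(x_t)$ with unique non-negative fixed point $\T_V(\beta)$. State evolution delivers the almost sure limits $\<\bv^t, \bvz\> \to \F_V(x_t)$ and $\<\bv^t, \bX \bv^t\> \to \R\sym(x_t)$, so $\lambda^+ \ge \R\sym(x_t) - o(1)$ a.s. for every $t$. Exponential convergence $x_t \to \T_V(\beta)$ together with continuity of $\R\sym$ give $\lambda^+ \ge \R\sym(\T_V(\beta))$ a.s., matching the upper bound and establishing (\ref{eq:lambdaPlusSim}). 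For (\ref{eq:innerProductSim}), I rerun the Sudakov--Fernique scheme on the restricted set $K \cap \{|\<\bv,\bvz\> - \F_V(\T_V(\beta))| \ge \delta\}$; uniqueness of the maximizer of $\R\sym$ at $\T_V(\beta)$ (Lemma \ref{lem:Rvariations}), together with the identification $\<v^*,\bvz\>\to \F_V(\T_V(\beta))$ of the SF saddle point, ensures that the restricted bound is strictly smaller than $\R\sym(\T_V(\beta))$ for every $\delta > 0$, which combined with the matching-value statement forces $\<\bv^+,\bvz\> \to \F_V(\T_V(\beta))$ a.s.

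The uniform limits (\ref{eq:RlimitEpsZeroSym})--(\ref{eq:FlimitEpsZeroSym}) follow from explicit asymptotics of $\F_V$ and $\G_V$ as $\mu_V$ concentrates at $0$: by symmetry of the standard Gaussian one has $\G_V(0) \to 1/\sqrt 2$, while $\F_V(x)$ is uniformly small on compact $x$-sets over $\cP_\eps$, so the fixed-point equation $x = \beta \F_V(x)$ bifurcates uniformly at $\beta = 1/\sqrt 2$, and a direct computation yields the two cases. The main obstacle is the state evolution step: the AMP iteration here involves a non-separable renormalization of the iterate (to enforce $\|\bv^t\|_2 = 1$) that falls outside standard state-evolution theorems and requires a dedicated perturbative argument coupling the renormalization factor to its deterministic limit.
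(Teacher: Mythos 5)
Your proposal follows essentially the same route as the paper: a Sudakov--Fernique upper bound on the value, the AMP/state-evolution lower bound, Gaussian concentration plus Borel--Cantelli to upgrade both to almost sure statements, a restricted comparison bound to force $\<\bv^+,\bvz\>\to\F_V(\T_V(\beta))$, and uniform small-$\eps$ asymptotics of $\F_V,\G_V,\T_V$ for the sparse limit; your evaluation of the comparison bound by writing $\beta\<\bv,\bvz\>^2=\sup_{\xi}(2\xi\<\bv,\bvz\>-\xi^2/\beta)$ and exchanging suprema is an equivalent duality to the paper's device of fixing $\mu=\<\bv,\bvz\>$ and introducing the parameter $\vartheta$, and for the inner-product claim you in effect need the $\mu$-resolved, strictly concave bound anyway, exactly as in the paper. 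One slip to fix: $\F_V(x)$ is \emph{not} uniformly small over $\cP_{\eps}$ on compact $x$-sets as $\eps\to 0$ --- only $\F_V(0)\le\sqrt{\eps/\pi}$ is small, while $\F_V(x)$ converges uniformly to $x/\sqrt{1/2+x^2}$; it is precisely this nonzero limit that makes $x=\beta\F_V(x)$ bifurcate at $\beta=1/\sqrt{2}$ and produces the supercritical formulas, so your stated conclusion is right but the justification as written would wrongly collapse everything to the subcritical case (the correct statement is the paper's uniform-convergence lemma for $\F_V,\G_V,\T_V$). Also note the feasible iterates are the normalized estimates $\hbv^t=(\bv^t)_+/\|(\bv^t)_+\|_2$ rather than $\bv^t$ itself, and the normalization issue you flag at the end is exactly what the paper resolves by the perturbation lemmas reducing AMP to a standard state-evolution recursion.
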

The statement in Theorem \ref{th:mainSym} is dependent on the
empirical distribution of the entries of $\bvz$. It is of special
interest to characterize the least favorable situation, i.e. the
distribution corresponding to the smallest scalar product
$\<\bv^+,\bvz\>$. 
This has two  motivations: $(i)$ to guarantee the minimum value of
$\<\bv^+,\bv_0\>$ achieved by a solution $\bv^+$ of the optimization
problem \ref{eq:PositivePCASymm} and $(ii)$ to describe the least
favorable signal $\bvz$.

The worst-case scenario is realized for a particularly simple
distribution, namely 2-atoms distribution, with an atom at $0$.
However, unlike in classical denoising  \cite{Donoho94}, the worst
case mixture is not obtained by setting all the allowed coordinates to
non-zero. In the following Theorem we are interested in the worst case
among $\beps$-sparse signals, or equivalently in vector sequences
$\{\bvz(n)\}_{n\geq0}$ such that $\lim_{n \to \infty} \|\bvz(n)\|_0/n
\leq \beps$, or $V \in \cP_{\beps}$ since
  sparse signals are naturally interesting for applications. 
\begin{theorem}\label{th:Fworstcase}
Consider  the \ref{eq:SymmetricModel} with the \ref{eq:PositivePCASymm}
estimator.

If $\beta \le 1/\sqrt 2$, then there exists a sequence of vectors $\{\bv_0(n)\}_{n \geq 0}$
such that  $\lim_{n \to \infty} \|\bvz(n)\|_0/n = 0$ and, almost surely,
\begin{align}
\lim_{n \to \infty } \langle \bv^+ , \bvz\rangle  = 0~.\label{eq:WorstCaseLowSNR}
\end{align}

For any $\beta> 1/\sqrt{2}$, there exists $\eps_{*}(\beta,\beps)\in (0,\beps]$
such that the following is true. Let $\bV_*$ be the random variable
with law
\begin{align}
\mu_{\bV_{*}} = (1-\eps_{*})\delta_0 + \eps_{*}\,
\delta_{1/\sqrt{\eps_{*}}}\, .
\end{align}
Then for any sequence of vectors $\{\bvz(n)\}_{n \geq 0}$ such that $\|\bvz(n)\|_0\le
n\beps$ we have, almost surely,
\begin{align}
\lim\inf_{n \to \infty } \langle \bv^+ , \bvz\rangle \geq
\F_{V_{*}}(\T_{V_{*}}(\beta))~. \label{eq:WorstCaseHighSNR}
\end{align}
Equality holds if $\bvz(n)$ is the vector with $n\eps_{*}$ non-zero
entries, all equal to $1/\sqrt{n\eps_*}$.
\end{theorem}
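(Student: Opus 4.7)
The plan is to leverage Theorem~\ref{th:mainSym}, which reduces the question to an analysis of the functional $\mu_V \mapsto \F_V(\T_V(\beta))$ on the space $\cP_\beps$, combined with the uniform limits (\ref{eq:FlimitEpsZeroSym}). I first identify a worst-case distribution in $\cP_\beps$ and then lift this variational fact to the statement about arbitrary sequences $\bvz(n)$ via a compactness argument.

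For the low-SNR regime $\beta \le 1/\sqrt{2}$, I would construct the vector sequence by diagonalization. Let $\eps_m \downarrow 0$ and consider the 2-atom measures $\mu_m \equiv (1-\eps_m)\delta_0 + \eps_m \delta_{1/\sqrt{\eps_m}}$, each of which is the empirical limit of the explicit sequence $\bvz^{(m)}(n)$ with $\lfloor n\eps_m\rfloor$ coordinates equal to $1/\sqrt{\lfloor n\eps_m\rfloor}$. Theorem~\ref{th:mainSym} yields $\langle \bv^+(\bX_n), \bvz^{(m)}(n)\rangle \to \F_{V_m}(\T_{V_m}(\beta)) =: f_m$ almost surely, and by (\ref{eq:FlimitEpsZeroSym}) we have $f_m \to 0$. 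I then choose thresholds $N_m$ so that with probability at least $1-2^{-m}$ the bound $\langle \bv^+(\bX_n), \bvz^{(m)}(n)\rangle \le 2 f_m$ holds for every $n \ge N_m$, and glue the sequence by setting $\bvz(n) \equiv \bvz^{(m)}(n)$ for $n \in [N_m, N_{m+1})$. Borel--Cantelli then shows $\langle \bv^+, \bvz(n)\rangle \to 0$ almost surely while $\|\bvz(n)\|_0/n = \eps_{m(n)} \to 0$.

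For the high-SNR regime $\beta > 1/\sqrt{2}$, the first and main step is a variational claim: the infimum of $\F_V(\T_V(\beta))$ over $\mu_V \in \cP_\beps$ is attained at a distribution of the form $\mu_{\bV_*} = (1-\eps_*)\delta_0 + \eps_* \delta_{1/\sqrt{\eps_*}}$ with $\eps_* \in (0,\beps]$. I would argue this in two substeps. First, fix the sparsity $\eps \in (0, \beps]$ and restrict to distributions $(1-\eps)\delta_0 + \eps \nu$ with $\nu$ supported on $(0,\infty)$ and second moment $1/\eps$. Using the explicit definitions (\ref{eq:defResponseFunctionF}) together with the stationarity equation $x = \beta\F_V(x)$ defining $\T_V(\beta)$, I would compute the first-order variation of $\F_V(\T_V(\beta))$ in $\nu$ (with a Lagrange multiplier for the second-moment constraint) and argue that any minimizer must have $\nu$ supported on a single atom, which by the moment constraint is forced to sit at $1/\sqrt{\eps}$. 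Second, I would optimize the resulting function of $\eps$ over $(0,\beps]$; continuity in $\eps$ together with the strictly positive boundary limit $\sqrt{1-1/(2\beta^2)}$ from (\ref{eq:FlimitEpsZeroSym}) as $\eps \downarrow 0$ ensures the minimum is attained at some $\eps_* \in (0,\beps]$.

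The final step converts this variational inequality into the claimed statement for arbitrary $\bvz(n)$ with $\|\bvz(n)\|_0 \le n\beps$. By a subsequence argument it suffices to bound $\liminf_n \langle \bv^+, \bvz(n)\rangle$ along any subsequence on which the empirical measures $\mu_{\sqrt{n}\bvz(n)}$ converge weakly; tightness is automatic since their second moments all equal $1$. Upper semi-continuity of mass on the closed set $\{0\}$ gives $\mu(\{0\}) \ge 1-\beps$, so $\mu \in \cP_\beps$ as soon as its second moment is $1$. The technical subtlety is ruling out escape of second moment to infinity, which I would handle by truncation: decompose $\bvz(n) = \bvz^{\le M}(n) + \bvz^{>M}(n)$ at a large threshold $M$, observing that for any fixed escape budget one can pick $M$ so that the tail piece has small $\ell_2$ norm (hence contributes negligibly to $\langle \bv^+, \bvz\rangle$ via Cauchy--Schwarz with $\|\bv^+\|_2 = 1$) while the truncated part has bounded entries and converges in empirical distribution along a sub-subsequence. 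Applying Theorem~\ref{th:mainSym} to the truncated piece and combining with the variational minimum over $\cP_\beps$ yields the lower bound $\F_{V_*}(\T_{V_*}(\beta))$. The equality case is immediate from Theorem~\ref{th:mainSym} applied to the explicit vector with $n\eps_*$ non-zero entries equal to $1/\sqrt{n\eps_*}$. The main obstacle is the variational reduction to 2-atom distributions: it requires a delicate Lagrangian analysis of the functional $\F_V(\T_V(\beta))$ and is where the bulk of the technical work concentrates.
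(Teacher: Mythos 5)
Your skeleton matches the paper's (low-SNR case by letting $\eps\downarrow 0$ slowly along uniform two-point vectors; high-SNR case by a variational reduction to two-point mixtures over $\cP_{\beps}$, attainment of an optimal $\eps_*$, and a tightness/subsequence step combined with Theorem~\ref{th:mainSym}), but the central step is not actually proved. You assert that a first-order (Lagrangian) variation of $\mu_V\mapsto \F_V(\T_V(\beta))$ at fixed sparsity forces the non-zero part of a minimizer to be a single atom, and you acknowledge this is ``where the bulk of the technical work concentrates'' --- which is precisely where the paper's substance lies, and first-order stationarity alone does not deliver it. When the duality argument is carried out (Lemma~\ref{lemma:Constrained}), stationarity only confines the optimal measure to the minimizing set of an analytic function, which can contain an interior point as well as the endpoints $0$ and $\infty$ (Lemma~\ref{lemma:TwoSols}); the possible atom at $\infty$ is the second-moment-escape phenomenon resurfacing inside the variational problem, and eliminating it requires the separate global inequality of Lemma~\ref{lemma:ElementaryInequality}, which shows the reduced two-variable function $G(w,\eps)$ has no interior stationary point. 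The paper also sidesteps differentiating through the fixed-point equation: it minimizes $\F_V(x)$ at \emph{fixed} $x$ (Theorem~\ref{thm:LeastFavorableF}) and transfers to the fixed point via the identity $\F_V(\T_V(\beta))=\T_V(\beta)/\beta$ and the strict monotonicity of $x\mapsto \F_V(x)/x$ (Corollary~\ref{coro:LeastFavorableT}); your plan of varying the composite functional directly must additionally track the dependence of $\T_V(\beta)$ on $\mu_V$, making the already delicate analysis harder, and as sketched it would not close.

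Two further points. Your justification for attainment of $\eps_*$ (continuity on $(0,\beps]$ plus a strictly positive limit as $\eps\downarrow 0$) is insufficient: the infimum could be approached only as $\eps\to 0$ and never attained; the paper rules this out by showing $\eps\mapsto\F_{\eps}(x)$ is decreasing for small $\eps$ (hence $\T_{\eps}(\beta)$ decreasing near $0$), see the end of the proof of Corollary~\ref{coro:LeastFavorableT}, and attainment matters because otherwise no $\eps_*$ gives the claimed equality case. Also, your truncation device for second-moment escape does not work as stated: for a sequence such as $\bvz(n)=e_1$ (which satisfies $\|\bvz(n)\|_0\le n\beps$) no fixed threshold $M$ makes the tail piece $\ell_2$-small uniformly in $n$, since the tail can carry essentially all of the norm, so you cannot dismiss its contribution to $\langle\bv^+,\bvz\rangle$ by Cauchy--Schwarz; the paper instead treats arbitrary sequences by the contradiction/tightness argument built on Corollary~\ref{coro:LeastFavorableT}.
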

We defer this proof to Section \ref{sec:worstCase}. The worst case mixture $\eps_{\#}(\beta)$ as well as the function
$\F_{V_{*}}(\T_{V_{*}}(\beta))$ can be expressed explicitly in terms
of the Gaussian distribution function, see Section \ref{sec:worstCase}.%\am{?????????}.

\subsection{Rectangular matrices}

We develop a very similar theory for the case of rectangular matrices.
Our first result characterizes the value of the \ref{eq:PositivePCA} problem,  and the estimation error, in
analogy with Theorem \ref{th:mainSym}. The proof can be found in Appendix \ref{sec:ProofOfMainTheorems}.
\begin{theorem}\label{th:mainRec}
Let $\bX = \sqrt \beta\buz\bvz^{\sT}+\bZ$ be a rank-one deformation of the
Gaussian matrix $\bZ$ with $\bZ_{ij}\sim\normal(0,1/n)$
independent, and $\|\buz\|_2=\|\bvz\|_2 =1$. Further let $\sigma^+=\sigma^+(\bX)$ be the
expected value of the  \ref{eq:PositivePCA} problem, and
$\bv^+=\bv^+(\bX)$ be any of the optimizers.

 Assume that $n,p\to\infty$ with  convergent aspect ratio 
 $p/n\to\alpha \in(0,\infty)$,
 and that $\bv_0=\bv_0(p)\in \reals^p$ converges in empirical
 distribution to $\mu_V$.

Then (with the notation introduced in Definition
\ref{def:functionsFGRTS}), we have almost surely
\begin{align}
\lim_{n\to\infty}\sigma^+(\bX) &= \R\rec(\S_V(\beta, \alpha),\alpha)\,
,\label{eq:MainRrec}\\
\lim_{n\to\infty}\<\bv^+,\bv_0\>&= \F_V(\S_V(\beta,\alpha) / \sqrt \alpha)\, .\label{eq:MainFrec}
\end{align}
Further, uniformly over $\cP$,
\begin{align}\label{eq:RlimitEpsZeroNonSym}
\lim_{\eps(V)\to 0}\R\rec(\S_V(\beta,\alpha),\alpha) = 
\begin{cases}
1+\sqrt {\alpha/2} &\mbox{if }\beta \le \sqrt{\alpha / 2}\, ,\\
 \sqrt{ \left (\sqrt\beta +\frac{\alpha}{ 2  \sqrt\beta } \right )
   \left (\sqrt\beta +\frac{1}{   \sqrt\beta } \right ) }
& \mbox{otherwise,}
\end{cases}
\end{align}
and
\begin{align}\label{eq:FlimitEpsZeroNonSym}
\lim_{\eps(V)\to 0} \F_V(\S_V(\beta, \alpha) / \sqrt \alpha)&= 
\begin{cases}
0 & \mbox{ if} \quad \beta\le \sqrt{\alpha / 2},\\
 \sqrt{ (\beta^2 - \alpha / 2)(\beta^2 + \beta \alpha / 2)^{-1}} &\mbox{ otherwise.}
\end{cases}. 
\end{align}
\end{theorem}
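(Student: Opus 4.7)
The plan follows the two-pronged strategy outlined in the introduction: establish a Sudakov–Fernique upper bound on $\sigma^+(\bX)$ that matches a lower bound obtained by tracking an approximate message passing iteration, and then leverage tightness to pin down $\<\bv^+,\bvz\>$. I expect the structure to parallel the symmetric case (Theorem \ref{th:mainSym}) but with an extra variational layer reflecting the two-sided structure $\|\bX\bv\|_2 = \max_{\|\bu\|_2=1}\<\bu,\bX\bv\>$.

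\textbf{Upper bound via Sudakov--Fernique.} Write the value as the saddle-point problem
\begin{equation*}
\sigma^+(\bX) \;=\; \max_{\substack{\bv\ge 0,\,\|\bv\|_2=1\\ \|\bu\|_2=1}} \sqrt{\beta}\,\<\bu,\buz\>\<\bv,\bvz\> + \<\bu,\bZ\bv\>.
\end{equation*}
The Gaussian process $X_{\bu,\bv}=\<\bu,\bZ\bv\>$ on the product index set has covariance $\<\bu,\bu'\>\<\bv,\bv'\>/n$. Sudakov–Fernique (in the Gordon form) lets me replace it by the decoupled proxy
$Y_{\bu,\bv}= \<\bg,\bu\>\|\bv\|_2/\sqrt n + \<\bh,\bv\>\|\bu\|_2/\sqrt n$ with $\bg\in\reals^n$, $\bh\in\reals^p$ independent standard Gaussians. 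Maximizing first over $\|\bu\|_2=1$ collapses the $\bu$ variable into a norm, yielding the scalar program
\begin{equation*}
\max_{\bv\ge 0,\,\|\bv\|_2=1} \bigl\|\sqrt{\beta}\<\bv,\bvz\>\buz + \bg/\sqrt n\bigr\|_2 \;+\; \<\bh,\bv\>/\sqrt n.
\end{equation*}
Parametrizing by the scalar $s=\<\bv,\bvz\>\in[0,1]$ and optimizing the inner $\bv$-problem via a Lagrangian $x\ge 0$ dual to the $\bvz$-constraint decouples coordinates: the empirical distribution of $\bvz$ enters through the function $\E\bV(x\bV+\bG)_+$ appearing in $\F_V$ and $\G_V$. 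The stationarity conditions reduce to the fixed-point $x^2(1+\beta\F_V(x/\sqrt\alpha)^2) = \beta^2\F_V(x/\sqrt\alpha)^2$, whose unique positive solution is $\S_V(\beta,\alpha)$, and the optimal value equals $\R\rec(\S_V(\beta,\alpha),\alpha)$. Gaussian concentration upgrades the expectation bound to an almost-sure upper bound $\limsup \sigma^+(\bX)\le \R\rec(\S_V,\alpha)$.

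\textbf{Lower bound via AMP.} I invoke the iterative algorithm developed in Section \ref{sec:AMP}: its state evolution produces deterministic scalars such that $\lim_{n\to\infty}\|\bX\bv^t\|_2 = \val(t)$ and $\lim_{n\to\infty}\<\bv^t,\bvz\>=s(t)$ almost surely, with $\bv^t\ge 0$ and $\|\bv^t\|_2=1$ for every $t$. As asserted in properties (1)–(2) of the introduction, $\val(t)\nearrow \R\rec(\S_V,\alpha)$ exponentially fast; taking $t\to\infty$ gives $\liminf\sigma^+(\bX)\ge \R\rec(\S_V,\alpha)$. Combined with the upper bound this proves \eqref{eq:MainRrec}.

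\textbf{Identification of $\<\bv^+,\bvz\>$.} To pin down the overlap I add the constraint $|\<\bv,\bvz\>-s_\star|\ge\delta$ with $s_\star=\F_V(\S_V/\sqrt\alpha)$ and rerun the Sudakov–Fernique argument on this restricted feasible set; the resulting scalar optimization produces a strictly smaller limiting value for any $\delta>0$ (this is property (3) in the introduction and amounts to the strict concavity at the maximizer of the one-dimensional program in $s$). Any subsequence along which $\<\bv^+,\bvz\>$ stayed at distance $\delta$ from $s_\star$ would therefore contradict the AMP lower bound, proving \eqref{eq:MainFrec}.

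\textbf{Sparse limits and main obstacle.} The limits \eqref{eq:RlimitEpsZeroNonSym}–\eqref{eq:FlimitEpsZeroNonSym} reduce to analyzing $\F_V, \G_V$ and the defining equation of $\S_V$ when $\mu_V=(1-\eps)\delta_0+\eps\tilde\mu$ with $\eps\to 0$ (so the nonzero atoms concentrate on scale $1/\sqrt\eps$ to preserve $\E V^2=1$). For the $(1-\eps)$ fraction of zero atoms, $(xV+G)_+=G_+$ contributes a fixed Gaussian half-moment; the remaining $\eps$ fraction behaves like $xV$ for $x$ above threshold. Solving the fixed-point equation $x^2(1+\beta \F_V(x/\sqrt\alpha)^2)=\beta^2\F_V(x/\sqrt\alpha)^2$ in this limit produces either the trivial branch $x=0$ (giving the noise floor $1+\sqrt{\alpha/2}$ and $\F_V=0$) or the informative branch active once $\beta>\sqrt{\alpha/2}$. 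The main technical obstacle is establishing these limits \emph{uniformly} over $\cP_\eps$: controlling $\F_V, \G_V$ uniformly requires handling arbitrary mass distributions within $\cP_\eps$, so I would bound the optimum of $\R\rec(\cdot,\alpha)$ by dominating $\F_V(x/\sqrt\alpha)$ between the two extreme distributions (all $\eps$ mass on a single atom versus spread out), using monotonicity in $x$ and concavity properties of $\F_V$ proved in Lemma \ref{lem:Rvariations}. The phase transition threshold $\beta=\sqrt{\alpha/2}$ emerges from the derivative condition $\partial_x[\beta\F_V(x/\sqrt\alpha)]\big|_{x=0}=\beta/\sqrt{2\alpha}=1$ in the sparse limit, which is the crux of the transition.
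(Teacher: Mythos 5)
Your proposal follows essentially the same route as the paper's proof: a Sudakov--Fernique comparison with the decoupled Gaussian process, parametrized by the overlap $\mu=\<\bv,\bvz\>$ and a dual parameter fixed at $\S_V(\beta,\alpha)$, giving the upper bound $\R\rec(\S_V(\beta,\alpha),\alpha)$ whose strict concavity in $\mu$ identifies $\<\bv^+,\bvz\>=\F_V(\S_V(\beta,\alpha)/\sqrt{\alpha})$; the AMP state-evolution lower bound matching it; and a uniform-over-$\cP_{\eps}$ analysis for the sparse limits (the paper's Lemma \ref{rk:sparseRegimeFG} handles the uniformity by bounding $|\D_V(x)-\tfrac12-x^2|\le \|f\|_\infty\,\eps$ and transferring to $\F_V$ via convexity of $\sqrt{\D_V}$, which is simpler and more direct than your domination-between-extreme-distributions sketch). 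One cosmetic slip: the linearized threshold condition is $\beta\,\F_0'(0)/\sqrt{\alpha}=\beta\sqrt{2/\alpha}=1$ (not $\beta/\sqrt{2\alpha}=1$), which indeed yields $\beta=\sqrt{\alpha/2}$ as you state.
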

Finally, in the same fashion as Theorem \ref{th:Fworstcase}, 
we can characterize the worst case signals $\bvz$. 
\begin{theorem}\label{th:FworstcaseRec}
Consider  the \ref{eq:SpikedMatrix}, with the \ref{eq:PositivePCA}
estimator.

If $\beta \le \sqrt{\alpha/ 2}$, then there exists a sequence of vectors $\{\bv_0(p)\}_{p \geq 1}$
such that $\lim_{p \to \infty}\|\bvz(p)\|_0 /p  = 0$ and, almost surely,
\begin{align}
\lim_{p \to \infty } \langle \bv^+ , \bvz\rangle  = 0~.
\end{align}
For any $\beta> \sqrt{\alpha/2}$, there exists
$\eps_{rec,*}(\alpha,\beta,\beps)\in (0,\beps]$
such that the following is true. Let  $\bV_{*}$ be the random variable
with law $(1-\eps_{rec,*})\delta_0 + \eps_{rec,*}\,
\delta_{1/\sqrt{\eps_{rec,*}}}$.
Then for any sequence of vectors $\{\bv_0(p)\}_{p\geq 1}$, $\|\bv_0(p)\|_0\le
p\beps$, we have. almost surely,
\begin{align}
\lim\inf_{p \to \infty } \langle \bv^+ , \bvz\rangle \geq
\F_{V_{*}}(\S_{V_{*}}(\beta,\alpha) / \sqrt \alpha)~.\label{eq:WorstCaseRecS}
\end{align}
Equality holds if $\bvz(p)$ is the vector with $p\eps_{*}$ non-zero
entries, all equal to $1/\sqrt{p\eps_{*}}$.
\end{theorem}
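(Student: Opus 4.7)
The plan is to convert the probabilistic statement into a deterministic variational problem using Theorem \ref{th:mainRec}. For any sequence $\{\bvz(p)\}_{p\ge 1}$ with $\|\bvz(p)\|_0 \le p\beps$, the rescaled empirical law of $\sqrt{p}\bvz(p)$ is supported on $\reals_{\ge 0}$, has second moment equal to $1$, and puts mass at least $1-\beps$ on the origin. After passing to a subsequence (using tightness, controlled by the second moment constraint together with the fact that the non-zero mass is at most $\beps$), we may assume $\sqrt{p}\bvz(p)$ converges in empirical distribution to some $V \in \cP_{\beps}$. Theorem \ref{th:mainRec} then yields $\langle \bv^+, \bvz \rangle \to \F_V(\S_V(\beta,\alpha)/\sqrt{\alpha})$ almost surely, so it suffices to analyze
\begin{equation*}
\val_* \;\equiv\; \inf_{\mu_V \in \cP_{\beps}} \F_V\bigl(\S_V(\beta,\alpha)/\sqrt{\alpha}\bigr)\, .
\end{equation*}

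For the low-SNR regime $\beta \le \sqrt{\alpha/2}$, equation \eqref{eq:FlimitEpsZeroNonSym} of Theorem \ref{th:mainRec} asserts that $\F_V(\S_V(\beta,\alpha)/\sqrt{\alpha}) \to 0$ uniformly over $\cP$ as $\eps(V) \to 0$. Pick a decreasing sequence $\eps_k \downarrow 0$ and laws $\mu^{(k)} \in \cP_{\eps_k}$ with $\F_{V_k}(\S_{V_k}(\beta,\alpha)/\sqrt{\alpha}) < 1/k$; a diagonal construction then produces vectors $\bvz(p)$ whose rescaled empirical law approximates $\mu^{(k)}$ on an appropriate range of $p$, so that $\|\bvz(p)\|_0/p \to 0$ and, by a further application of Theorem \ref{th:mainRec} to each block, $\langle \bv^+, \bvz \rangle \to 0$ almost surely.

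For the high-SNR regime $\beta > \sqrt{\alpha/2}$, first consider the restricted family of two-atom laws $\mu^\eps = (1-\eps)\delta_0 + \eps\,\delta_{1/\sqrt{\eps}}$; the unit second-moment constraint makes this the unique element of $\cP$ of the form $(1-\eps)\delta_0 + \eps\,\delta_c$. The map $\eps \mapsto \F_{V_\eps}(\S_{V_\eps}(\beta,\alpha)/\sqrt{\alpha})$ is continuous and strictly positive on $(0, \beps]$, and cannot approach zero as $\eps \to 0$: that would contradict the uniform limit in \eqref{eq:FlimitEpsZeroNonSym} under the assumption $\beta > \sqrt{\alpha/2}$. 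Hence a strictly positive minimum is attained at some $\eps_{rec,*} \in (0, \beps]$, which we take as the definition of $\eps_{rec,*}(\alpha,\beta,\beps)$. The equality claim is then immediate: for the explicit vector with $p\eps_{rec,*}$ entries equal to $1/\sqrt{p\eps_{rec,*}}$, the rescaled empirical law equals $\mu^{\eps_{rec,*}}$ for every $p$, so Theorem \ref{th:mainRec} yields $\langle \bv^+, \bvz\rangle \to \F_{V_*}(\S_{V_*}(\beta,\alpha)/\sqrt{\alpha})$ almost surely.

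The main obstacle is to rule out any $\mu_V \in \cP_{\beps}$ outside the two-atom family achieving a strictly smaller value. The strategy, parallel to Section \ref{sec:worstCase} for the symmetric case, is to decouple the implicit equation for $\S_V$ by first treating the auxiliary problem $\mu_V \mapsto \F_V(x)$ at a fixed $x>0$ and minimizing over $\cP_{\beps}$; a Lagrangian / extreme-point analysis incorporating the non-negativity, unit-second-moment, and sparsity constraints should force the minimizer to be supported on $\{0\}$ together with at most one positive atom, and a self-consistency step then identifies $x = \S_{V_*}(\beta,\alpha)/\sqrt{\alpha}$ at the optimum. The delicate part is verifying the sign of the second variation — namely that $V \mapsto \F_V(x)$ behaves in the right way along two-point perturbations, in the presence of the implicit constraint linking $x$ back to $V$ through the defining equation of $\S_V$. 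This computation is heavier here than in the symmetric case because $\S_V$ is defined by an implicit fixed-point equation rather than a direct one like $\T_V$, and is the step where the most care will be required.
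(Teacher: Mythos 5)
Your overall architecture matches the paper's: reduce to the deterministic variational problem via Theorem \ref{th:mainRec} and a tightness/subsequence argument, handle $\beta\le\sqrt{\alpha/2}$ through the uniform limit (\ref{eq:FlimitEpsZeroNonSym}) with sparsity $\eps(p)\to 0$, and get the equality case by plugging the two-atom empirical law into Theorem \ref{th:mainRec}. The low-SNR part and the equality claim are fine (your existence argument for $\eps_{rec,*}$ is a bit loose — positivity of the limit as $\eps\to 0$ rules out the infimum being $0$, but not the infimum being approached, rather than attained, as $\eps\to 0$; the paper handles this, in the symmetric analogue Corollary \ref{coro:LeastFavorableT}, by showing the two-atom functional is monotone in $\eps$ near $0$, and the same fix is needed here).

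The genuine gap is the central inequality (\ref{eq:WorstCaseRecS}) for general $\mu_V\in\cP_{\beps}$: you only outline a Lagrangian/extreme-point analysis and explicitly defer the "second variation'' computation, worrying that the implicit definition of $\S_V$ makes it heavier than the symmetric case. In fact no new variational computation is required, and this is exactly how the paper closes the argument. The fixed-$x$ reduction to two-point mixtures, Theorem \ref{thm:LeastFavorableF}, concerns only $\mu_V\mapsto\F_V(x)$ at a \emph{fixed} argument and is already proved; it is applied at the specific point $x=\S_V(\beta,\alpha)/\sqrt{\alpha}$. The implicit equation is then handled purely by monotonicity: by Lemma \ref{lem:Twelldefined} the map $x\mapsto \F_V(x/\sqrt{\alpha})/\bigl\{x\sqrt{1+\beta\F_V(x/\sqrt{\alpha})^2}\bigr\}$ is strictly decreasing and equals $1/\beta$ at $x=\S_V(\beta,\alpha)$, and since $t\mapsto t/\sqrt{1+\beta t^2}$ is increasing, the pointwise bound $\F_V(x)\ge\F_{\eps_0}(x)$ at that one point yields, by the same contradiction argument as in Corollary \ref{coro:LeastFavorableT}, that $\S_V(\beta,\alpha)\ge\inf_{\eps\in(0,\beps]}\S_\eps(\beta,\alpha)$. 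Finally, at the fixed point one has $\F_V(\S_V/\sqrt{\alpha})^2=\S_V^2/(\beta^2-\beta\S_V^2)$, an increasing function of $\S_V$, so the comparison of the $\S$'s transfers directly to the scalar products, which is the inequality you were missing. Until you either carry out this monotonicity route or actually perform the constrained minimization of $\mu_V\mapsto\F_V(\S_V(\beta,\alpha)/\sqrt{\alpha})$, the main claim of the theorem is not established by your proposal.
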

For the proof we refer to  Section \ref{sec:worstCase} which also
contains explicit expressions to compute  $\eps_{\rec,\#}$.

\subsection{Additional notations}\label{sec:Notations}

We use capital  boldface for matrices, e.g. $\bX$, $\bZ$,\dots and
lowercase boldface for vectors,
e.g. $\bx$ or $\by$. The ordinary scalar product between
$\bx,\by\in\reals^m$ is denoted by $\<\bx,\by\> =
\sum_{i=1}^m\bx_i\by_i$.  The $\ell_p$ norm of a vector is denoted by
$\|\bx\|_p$, and we will occasionally omit the subscript for the case
$p=2$.
The $\ell_2$ operator norm of the matrix $\bX$ is denoted by
$\|\bX\|_2$.

As usual, we write $\phi(x) = e^{-x^2/2}/\sqrt{2\pi}$ for the standard
Gaussian density, and $\Phi(x) = \int_{-\infty}^x \phi(z) \,\de z$ for
the Gaussian distribution function. 
Finally we will say that a function $\psi : \reals^d \to \reals$ is
pseudo-Lipschitz if there exists a constant $L>0$ such that 
\begin{align}
 \big |\psi(\bx) - \psi(\by) \big| \leq L (1+\|\bx\|_2 + \|\by\|_2)
 \|\bx-\by\|_2\, .
\end{align}
%
%%%%%%%%%
%% UPPER BOUD SECTION
%%%%%%%%%

\section{Upper bounds on non-negative PCA values}\label{sec:UpperBoundSlepianNonnegative}

As mentioned above, Theorems \ref{th:mainSym} and \ref{th:mainRec} 
are proved in two steps. We establish an upper bound on the value of
the optimization problem by using Sudakov-Fernique inequality and prove that the
bound is tight by analyzing an iterative algorithm that solves the optimization
problem.

The first statement concerns the \ref{eq:SymmetricModel}.
\begin{lemma}\label{th:upperBoundsSymmetric}
Consider the \ref{eq:SymmetricModel}, and let $\bv^+=\bv^+(\bX)$ be the
\ref{eq:PositivePCASymm} estimator, with $\lambda^+=\lambda^+(\bX)$ the 
value
of the corresponding optimization problem.

Then, under the assumptions of Theorem \ref{th:mainSym}, we have
\begin{align} \label{eq:upperBoundSymmetricCase}
\lim\sup_{n \to \infty}  \E\,\lambda^+(\bX) \leq  \R\sym(\T_V(\beta))\, .
\end{align}
Further, there exists a deterministic function  $\Delta:\reals_{\ge
  0}\to \reals$, with $\lim_{x\to 0}\Delta(x) = 0$  such that, almost surely,
\begin{equation}\label{eq:vplusvzeroCloseToFTbeta}
\lim\sup_{n\to\infty}\left | \langle \bv^+, \bvz \rangle -
  \F_V(\T_V(\beta)) \right | \leq \Delta\left(
\R\sym(\T_V(\beta)) -
\lim\inf_{n \to \infty} \lambda^+(\bX)\right )\, .
\end{equation}
\end{lemma}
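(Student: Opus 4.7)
My plan is to prove both parts of the lemma by Gaussian comparison via Sudakov-Fernique's inequality, applied first to the full non-negative sphere (for the value upper bound) and then to restricted feasible sets (for the stability of the maximizer).

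\textbf{Upper bound on the value.} On the non-negative unit sphere I would introduce two centered Gaussian processes, $X_\bv \equiv \langle \bv, \bZ\bv\rangle$ and $Y_\bv \equiv (2/\sqrt{n})\langle \bh, \bv\rangle$ with $\bh \sim \normal(0, \id_n)$. A direct computation using $\|\bv\bv^\sT - \bv'\bv'^\sT\|_F^2 = 2(1 - \langle \bv, \bv'\rangle^2)$ yields
\[
\E\bigl[(X_\bv - X_{\bv'})^2\bigr] = \frac{4}{n}\bigl(1 - \langle \bv, \bv'\rangle^2\bigr) \le \frac{8}{n}\bigl(1 - \langle \bv, \bv'\rangle\bigr) = \E\bigl[(Y_\bv - Y_{\bv'})^2\bigr].
\]
Sudakov-Fernique, in the form that allows a common deterministic shift $\beta\langle \bv, \bvz\rangle^2$ on both processes, then gives $\E\,\lambda^+(\bX) \le \E\max_{\bv \ge 0,\,\|\bv\|=1} \{\beta\langle \bv, \bvz\rangle^2 + (2/\sqrt{n})\langle \bh, \bv\rangle\}$. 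The auxiliary problem on the right decouples: rescaling $w_i = \sqrt{n}\,v_i$ with $V_i = \sqrt{n}\,v_{0,i}$, KKT produces the explicit optimum $w_i^\star = (h_i + xV_i)_+/\mu$, with $x = \beta m$, $m = (1/n)\sum_i w_i V_i$, and $\mu$ the norm multiplier. Empirical-distribution convergence of $\sqrt{n}\,\bvz$ combined with a law-of-large-numbers argument gives $m \to \F_V(x)$ and $\mu \to \sqrt{\E(G + xV)_+^2}$, so the fixed point $x = \beta \F_V(x)$ pins $x = \T_V(\beta)$ (Lemma \ref{lem:Twelldefined}) and the value at the optimum converges to $\beta\F_V(x)^2 + 2\G_V(x) = \R\sym(\T_V(\beta))$, proving (\ref{eq:upperBoundSymmetricCase}).

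\textbf{Stability of the optimizer.} For (\ref{eq:vplusvzeroCloseToFTbeta}), I would rerun the comparison on the restricted index set $S_\delta \equiv \{\bv \ge 0,\,\|\bv\|=1 :\, |\langle \bv, \bvz\rangle - \F_V(\T_V(\beta))| \ge \delta\}$, producing an upper bound on the expected value $\E\,\lambda^+_\delta(\bX)$ of the maximization restricted to $S_\delta$. The same Lagrangian analysis, now with the inner-product equality $\langle \bv, \bvz\rangle = s$ active at the boundary $s = \F_V(\T_V(\beta)) \pm \delta$, gives limit value $\R\sym(x(s))$ with $\F_V(x(s)) = s$. The strict uniqueness of the maximizer of $\R\sym$ at $\T_V(\beta)$ (Lemma \ref{lem:Rvariations}), combined with continuity of $\F_V$ near $\T_V(\beta)$, then yields
\[
\limsup_{n \to \infty}\E\,\lambda^+_\delta(\bX) \le \R\sym(\T_V(\beta)) - c(\delta),
\]
with $c(\delta) > 0$ and $c(\delta) \to 0$ as $\delta \to 0^+$. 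Gaussian concentration (both $\bZ \mapsto \lambda^+(\bX)$ and $\bZ \mapsto \lambda^+_\delta(\bX)$ are $1$-Lipschitz in Frobenius norm, with entries of $\bZ$ of variance $O(1/n)$) transfers both bounds to almost-sure statements via Borel-Cantelli. On the full-measure event where both concentrations hold: if $\liminf_n \lambda^+(\bX) > \R\sym(\T_V(\beta)) - c(\delta)$, then eventually $\lambda^+(\bX) > \lambda^+_\delta(\bX)$, so the maximizer $\bv^+$ must lie outside $S_\delta$, i.e.\ $|\langle \bv^+, \bvz\rangle - \F_V(\T_V(\beta))| < \delta$. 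Setting $\Delta(\eta) \equiv \inf\{\delta > 0 :\, c(\delta) > \eta\}$ yields the required deterministic function with $\Delta(x) \to 0$ as $x \to 0^+$.

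\textbf{Main obstacle.} The most delicate step is the analysis of the constrained auxiliary problem on $S_\delta$ and the derivation of a strictly positive gap $c(\delta)$. Unlike in the unconstrained problem, one must argue that the effective KKT parameter $x$ is bounded away from $\T_V(\beta)$ whenever $|s - \F_V(\T_V(\beta))| \ge \delta$, which combines strict uniqueness of the maximizer of $\R\sym$ (Lemma \ref{lem:Rvariations}) with a quantitative continuity modulus for $\F_V$ near $\T_V(\beta)$. A secondary technicality is that $S_\delta$ is non-convex, splitting into two pieces according to the sign of $\langle \bv, \bvz\rangle - \F_V(\T_V(\beta))$, so one bounds each piece separately and takes the worse of the two.
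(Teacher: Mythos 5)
Your proposal follows essentially the same route as the paper: a Sudakov--Fernique comparison of $\langle \bv,\bZ\bv\rangle$ with the decoupled linear process $\tfrac{2}{\sqrt n}\langle \bh,\bv\rangle$, explicit evaluation of the auxiliary maximum in terms of $\F_V,\G_V$ with the fixed point $x=\beta\F_V(x)$ identifying $\T_V(\beta)$ and the limiting value $\R\sym(\T_V(\beta))$, and then an overlap-restricted comparison combined with Lemma \ref{lem:Rvariations}, Gaussian concentration and Borel--Cantelli to get the stability statement and the function $\Delta$. The paper proves the rectangular analogue (Lemma \ref{th:upperBounds}) and declares the symmetric case identical, organizing the same argument through the per-overlap quantity $\oM(\mu)$ with a Lagrangian dual parameter $\vartheta$; your per-$s$ optimization of the dual parameter and your restricted-set formulation are equivalent to that, so the content matches.
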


The second statement concern the (non-symmetric) \ref{eq:SpikedMatrix}.
\begin{lemma}\label{th:upperBounds}
Consider the \ref{eq:SpikedMatrix} and let $\bv^+=\bv^+(\bX)$ be the
\ref{eq:PositivePCA} estimator, with $\sigma^+=\sigma^+(\bX)$ the 
value
of the corresponding optimization problem.

Then, under the assumptions of Theorem \ref{th:mainRec}, we have
\begin{align} \label{eq:upperBoundGeneralProposition}
\lim\sup_{n \to \infty} \E\, \sigma^+(\bX) \leq  \R\rec(\S_V(\beta,\alpha) ,
\alpha)\, .
\end{align}
Further, there exists a deterministic function  $\Delta:\reals_{\ge
  0}\to \reals$, with $\lim_{x\to 0}\Delta(x) = 0$  such that, almost surely,
\begin{align}\label{eq:upperBoundNNinnerProductNonSymDense}
\lim\sup_{n \to \infty} \left |\langle \bv^+, \bvz \rangle  - \F_V
  (\S_V(\beta,\alpha) / \sqrt \alpha)\right | \le \Delta\left(
\R\rec(\S_V(\beta,\alpha) ,\alpha) -\lim\inf_{n \to \infty}  \sigma^+(\bX)  \right)\,.
\end{align}
\end{lemma}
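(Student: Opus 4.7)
The plan is to adapt the Sudakov-Fernique argument used for Lemma \ref{th:upperBoundsSymmetric} to the bilinear rectangular case. First I would dualize the spectral norm in the objective,
\begin{equation*}
\sigma^+(\bX) \;=\; \max_{\|\bu\|_2=1}\;\max_{\bv\ge 0,\,\|\bv\|_2=1}\; \sqrt{\beta}\,\<\bu,\buz\>\<\bv,\bvz\> \;+\; \<\bu,\bZ\bv\>\, ,
\end{equation*}
so that $\sigma^+$ is the supremum of a Gaussian field indexed by $(\bu,\bv)$ whose mean is $\sqrt{\beta}\<\bu,\buz\>\<\bv,\bvz\>$ and whose centered part $\<\bu,\bZ\bv\>$ has covariance $\tfrac{1}{n}\<\bu,\bu'\>\<\bv,\bv'\>$.

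Next I would introduce the comparison process $Y_{\bu,\bv}=\sqrt{\beta}\<\bu,\buz\>\<\bv,\bvz\>+\tfrac{1}{\sqrt n}\<\bu,\bg\>+\tfrac{1}{\sqrt n}\<\bv,\bh\>$ with $\bg\in\reals^n$, $\bh\in\reals^p$ independent standard Gaussians. The two processes share the same mean, and on the unit spheres the comparison of centered increments collapses to the elementary identity $(1-\<\bu,\bu'\>)(1-\<\bv,\bv'\>)\ge 0$. Sudakov-Fernique then gives
\begin{equation*}
\E\,\sigma^+(\bX) \;\le\; \E \max_{\bu,\bv}\Bigl\{\sqrt{\beta}\,\<\bu,\buz\>\<\bv,\bvz\> + \tfrac{1}{\sqrt{n}}\<\bu,\bg\>+\tfrac{1}{\sqrt{n}}\<\bv,\bh\>\Bigr\}\, .
\end{equation*}
Maximizing over $\|\bu\|=1$ gives $\bigl\|\sqrt{\beta}\<\bv,\bvz\>\buz+\bg/\sqrt n\bigr\|_2$, which concentrates around $\sqrt{\beta\<\bv,\bvz\>^2+1}$. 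Parametrizing $\bv$ by $s=\<\bv,\bvz\>$ and solving the inner linear program $\max\{\<\bv,\bh\>:\bv\ge 0,\|\bv\|=1,\<\bv,\bvz\>=s\}$ by Lagrangian duality, the KKT conditions produce componentwise positive-part expressions of the form $v_i\propto(x\sqrt n(\bvz)_i+\bh_i)_+$. Using the empirical convergence of $\bvz$ to $\mu_V$ and standard concentration, the limiting upper bound equals $\R\rec(x,\alpha)$ with $x$ the Lagrange multiplier; since $\R\rec(\cdot,\alpha)$ is uniquely maximized at $\S_V(\beta,\alpha)$ by Lemma \ref{lem:Rvariations}, this yields \eqref{eq:upperBoundGeneralProposition}.

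For the stability statement \eqref{eq:upperBoundNNinnerProductNonSymDense} I would repeat the Sudakov-Fernique computation on the restricted domain $\{\bv:|\<\bv,\bvz\>-\F_V(\S_V(\beta,\alpha)/\sqrt\alpha)|\ge\delta\}$. Because $\R\rec(\cdot,\alpha)$ attains its maximum uniquely at $\S_V(\beta,\alpha)$, the restricted bound is smaller than $\R\rec(\S_V(\beta,\alpha),\alpha)$ by a gap $\eta(\delta)>0$ that vanishes only as $\delta\to 0$. Setting $\Delta(x)=\sup\{\delta\ge 0:\eta(\delta)\le x\}$ converts a value gap into an inner-product gap, giving \eqref{eq:upperBoundNNinnerProductNonSymDense}.

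The main obstacle is the rigorous passage from the Gaussian min-max bound to the explicit formula $\R\rec(\S_V(\beta,\alpha),\alpha)$: one must verify that the inner program concentrates around a deterministic function of $s$ uniformly in $s$, strongly enough to interchange the limit $n\to\infty$ with the outer maximum, and that the Lagrange multiplier is correctly matched with the argument of $\F_V,\G_V$. A secondary difficulty is to extract an effective modulus $\eta(\delta)$ from the uniqueness of the maximum of $\R\rec(\cdot,\alpha)$, which a priori depends on $\mu_V$ and must be controlled to ensure $\Delta$ is deterministic.
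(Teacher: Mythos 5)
Your proposal follows essentially the same route as the paper: the same Sudakov--Fernique comparison with the decoupled process $\sqrt{\beta}\<\bu,\buz\>\<\bv,\bvz\>+n^{-1/2}(\<\bg,\bu\>+\<\bh,\bv\>)$ (your identity $(1-\<\bu,\bu'\>)(1-\<\bv,\bv'\>)\ge 0$ is exactly the paper's $1-ab\le 2-a-b$), the same parametrization by the overlap $s=\<\bv,\bvz\>$ with a Lagrange multiplier producing positive-part maximizers and the limit $\sqrt{1+\beta s^2}+\sqrt{\alpha\D_V(\vartheta/\sqrt{\alpha})}-\vartheta s$ optimized at $\vartheta=\S_V(\beta,\alpha)$, and the same quantitative-uniqueness idea for the inner-product stability. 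The obstacles you flag are handled in the paper by standard means you essentially anticipate: Lipschitz/Gaussian concentration of the overlap-constrained value uniformly over a grid in $s$, weak (rather than exact KKT) duality for the overlap constraint, and strict concavity in $s$ giving the explicit modulus $H(x)=c_*x^2$, $\Delta=H^{-1}$.
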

The proof of Lemma \ref{th:upperBounds} can be found in Section
\ref{sec:ProofUpperBound}. The proof for the case of symmetric
matrices, cf. Lemma \ref{th:upperBoundsSymmetric},
  is completely analogous and we omit it.

\begin{remark} 
While the above upper bounds are stated in asymptotic form, 
the proofs in Section \ref{sec:proofUpperBounds} imply non-asymptotic
upper bounds. Roughly speaking, the above upper bounds hold
non-asymptotically up to an additive correction of order  $1 / \sqrt{n}$. 
\end{remark}

%%%%%%%%%
%% AMP SECTION
%%%%%%%%%

\section{Approximate message passing algorithm}\label{sec:AMP}

We use an algorithmic approach to prove a lower bound that matches the upper
bound in Lemmas \ref{th:upperBoundsSymmetric}, \ref{th:upperBounds}.
The algorithm is close in spirit to the usual power method that
computes the leading eigenvector of a symmetric matrix $\bX$ by iterating
\begin{align}
\bv^{t+1} = \bX \, \, \bv^t\, ,
\end{align}
from an arbitrary initialization $\bv^0\in\reals^n$. Of course the
power method is not well suited for the present problem, since it does
not enforce the non-negativity constraint $\bv\ge 0$.  We will enforce
this constraint iteratively by projecting on the feasible set. Similar
non-linear power methods were studied previously, for instance in the
context of sparse PCA
\cite{journee2010generalized,yuan2013truncated} and a statistical
analysis of a method of this type was developed in \cite{ma2013sparse}.

Our approach differs substantially from this line of work. We develop
an approximate message passing (AMP) algorithm that builds on ideas
from statistical physics and graphical models 
\cite{DMM09,MontanariChapter}. Remarkably, exact high-dimensional
asymptotics for these algorithms have been characterized in some
generality  using a method known as \emph{state evolution}
\cite{BM-MPCS-2011,BM-Universality}. We establish the desired lower
bounds by applying this theory
to our problem.

As before, we will start by considering the case of symmetric matrices
and then move to rectangular matrices. 
 
%%%%%%%%%%%%%%%%%%%%%%%%%%%%%%%%
%
%     SYMMETRIC CASE
%
%%%%%%%%%%%%%%%%%%%%%%%%%%%%%%%

\subsection{Symmetric matrices}

\subsubsection{Algorithm definition}

The AMP algorithm is iterative and, after $t$ iterations, mantains a
state $\bv^t\in\reals^n$. We initialize it with
$\bv^0=(1,1,\dots,1)^{\sT}$, $\bv^{-1}=(0,0,\dots,0)^{\sT}$,  and use
the update rule, for $t\ge 0$,
\begin{align}\tag*{AMP-sym}\label{eq:AMPsym}
 \bv^{t+1}  & = \bX 
  f(\bv^t)   -
\ons_t\,f(\bv^{t-1})\, ,
\end{align}
where  $\ons_t\equiv \|(\bv^t)_+\|_0/\{\sqrt n \|(\bv^t)_+\|_2\}$ 
and
$f:\reals^n\to\reals^n$ is the normalized projection on the
positive orthant:
\begin{align}
f(\bx) = \sqrt n \,\frac{(\bx)_+}{\|(\bx)_+\|_2}\, .
\end{align}
(The factor $\sqrt{n}$ is introduced here for future convenience.)

If we neglect the memory term  $-\ons_t\,f(\bv^{t-1})$, the
algorithm \ref{eq:AMPsym} is extremely simple: It alternates between a
power iteration, and an orthogonal projection onto the constraint set
$\{ \bv~:~\bv\ge0~,~\|\bv\|\leq 1\} $. As proved in \cite{BM-MPCS-2011,BM-Universality} the
memory term (`Onsager term') plays a crucial role in allowing for an
exact high-dimensional characterization.

Note that $\bv^t$ does not satisfy --in general-- the positivity
constraint. Indeed it is not the algorithm estimate of $\bvz$. After any number $t$ of iteration we construct the estimate
\begin{align}
\hbv^t = \frac{(\bv^t)_+}{\|(\bv^t)_+\|_2}\, . \label{eq:AMP_Estimates_Symm}
\end{align}

\subsubsection{Asymptotic analysis}

State evolution \cite{DMM09,BM-MPCS-2011,javanmard2013state,BM-Universality}  is a
mathematical technique that provides an exact
distributional characterization of a class of algorithms that includes
\ref{eq:AMPsym}, under suitable probabilistic models for the matrix
$\bX$. In the present case, we will assume the
\ref{eq:SymmetricModel}, with $\sqrt{n}\bvz$ converging in empirical
distribution to a random variable $\bV$.

Informally, state evolution predicts that as $n\to\infty$, for any
fixed $t\ge 1$, the state vector $\bv^t$ is approximately normal with
mean $\sqrt{n}\tau_t\, \,\bvz$ and covariance $\id_{n\times n}$. In other words,
it can be viewed as a noisy version of the signal $\bvz$:
\begin{align}
\bv^t \approx \sqrt{n}\tau_t\, \bvz +\,\bg\, , \;\;\;\;\;
\bg\sim\normal(0,\id_{n\times n})\, .
\end{align}
The signal-to-noise ratio $\tau_t$ is determined recursively by
letting $\tau_1=\beta\E \bV$ and for all $t\ge 1$, $\tau_{t+1} =
\F_{\bV}(\tau_t)$. Explicitly:
\begin{align}
\tau_{t+1} = \beta\; \frac{ \E~ \bV \left ( \tau_t \bV +\bG\right
  )_+}{\sqrt{\E \left ( \tau_t\bV+\bG\right
    )_+^2}}\, , \label{eq:SymmetricStateEvolution}
\end{align}
with $\bG\sim\normal(0,1)$ independent of $\bV$. A formal statement is
given below.
\begin{proposition}\label{prop:AMPsym}
Consider  the \ref{eq:SymmetricModel}, and assume that $\left \{\sqrt
  n \bvz(n) \right \}_{n \geq 0} $ converges in  empirical
distribution to a random variable  $V$. Further, let $\{\tau_t\}_{t\ge
1}$ be defined by the state evolution recursion (\ref{eq:SymmetricStateEvolution}).

Then, for any pseudo-Lipschitz function  $\psi : \reals^{2} \to
\reals$ and any $t\ge 1$ we have, almost surely,
\begin{equation}\label{eq:limitPsiPseudoLipSym}
 \lim_{n \to \infty} \frac 1n \sum_{i=1}^n \psi( \bv_i^t ,\sqrt n
 (\bvz)_i) = \E \left \{ \psi ( \tau_t V + G , V)  \right \} \, ,
\end{equation}
where $G \sim \normal(0,1)$ is independent of $V$. 
Further,
the convergence in  Eq.~(\ref{eq:limitPsiPseudoLipSym}) also holds for
$\psi(x,y) = \ind(x\le a)$ and any $a\in\reals$.
\end{proposition}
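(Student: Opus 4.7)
The plan is to reduce Proposition \ref{prop:AMPsym} to the state-evolution theorem for AMP with non-separable denoisers on symmetric Gaussian matrices established in \cite{BM-MPCS-2011,BM-Universality}. Using the spiked structure $\bX=\beta\,\bvz\bvz^{\sT}+\bZ$, split the update as
\begin{equation*}
\bv^{t+1}\;=\;\beta\,\langle \bvz,f(\bv^t)\rangle\,\bvz\;+\;\bZ\, f(\bv^t)\;-\;\ons_t\, f(\bv^{t-1}),
\end{equation*}
and argue by induction on $t$ that \eqref{eq:limitPsiPseudoLipSym} holds jointly for all iterates up to time $t$, with $\{\tau_s\}_{s\le t}$ given by the state-evolution recursion \eqref{eq:SymmetricStateEvolution}.

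For the signal part, I would apply the induction hypothesis to the pseudo-Lipschitz test functions $\psi_1(x,y)=y\,(x)_+$ and $\psi_2(x,y)=(x)_+^2$ to obtain
\begin{equation*}
\tfrac1n\,\langle\sqrt n\,\bvz,(\bv^t)_+\rangle\;\to\;\E\bigl[V(\tau_tV+G)_+\bigr],\qquad \tfrac1n\|(\bv^t)_+\|_2^2\;\to\;\E\bigl[(\tau_tV+G)_+^2\bigr].
\end{equation*}
Their ratio yields $\langle\bvz,f(\bv^t)\rangle\to\sqrt n\,\F_V(\tau_t)$, so the signal contribution to $\bv^{t+1}$ equals $\beta\,\F_V(\tau_t)\sqrt n\,\bvz=\tau_{t+1}\sqrt n\,\bvz$, as required. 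For the noise contribution, a direct computation gives $\tfrac{1}{n}\sum_i\partial_i f_i(\bv^t)=(\|(\bv^t)_+\|_0-1)/(\sqrt n\,\|(\bv^t)_+\|_2)$, which matches $\ons_t$ up to an $O(1/n)$ error; hence $\bZ f(\bv^t)-\ons_t f(\bv^{t-1})$ is a genuine Onsager-corrected Wigner iteration. The state-evolution theorem then forces its limiting joint empirical distribution (together with $\sqrt n\,\bvz$ and the earlier iterates) to be Gaussian, with variance $\|f(\bv^t)\|_2^2/n=1$ by construction of $f$. Combining the two pieces gives $\bv^{t+1}\approx\tau_{t+1}\sqrt n\,\bvz+\bg_{t+1}$ with $\bg_{t+1}\sim\normal(0,\id)$, which is exactly \eqref{eq:limitPsiPseudoLipSym} at time $t+1$.

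The main obstacle is that $f(\bx)=\sqrt n\,(\bx)_+/\|(\bx)_+\|_2$ is non-separable and fails to be globally Lipschitz, while its Jacobian has jumps on $\{x_i=0\}$. I would handle this by a regularisation argument: replace $(\cdot)_+$ by a $C^1$ surrogate $\rho_\eta(x)=\eta^{-1}\log(1+e^{\eta x})$ and floor the normaliser by $\max(\|(\bx)_+\|_2,c)$ with a small $c>0$. The resulting denoiser satisfies the uniform Lipschitz hypotheses of \cite{BM-Universality}, so state evolution applies verbatim. The truncation at $c$ becomes inactive for large $n$, because the inductive analysis gives $\|(\bv^t)_+\|_2^2/n\to\E[(\tau_tV+G)_+^2]>0$ (the degenerate case $\mu_V=\delta_0$ with $\tau_t=0$ being trivial). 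The smoothing error vanishes as $\eta\to 0$ thanks to the bounded density of $G$, which forces the empirical fraction of coordinates with $|\bv^t_i|<\eta$ to be $O(\eta)$ uniformly in $n$.

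Finally, the extension from pseudo-Lipschitz $\psi$ to the indicator $\psi(x,y)=\ind(x\le a)$ is routine: sandwich the indicator between two pseudo-Lipschitz $\eps$-approximations and send $\eps\to 0$, using that $\tau_tV+G$ has a density (as the convolution of any law with $\normal(0,1)$) so no mass is concentrated at the threshold $a$.
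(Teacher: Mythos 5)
Your high-level picture (split the update into a signal part $\beta\<\bvz,f(\bv^t)\>\bvz$ and a noise part, induct on $t$, and control the signal coefficient via the induction hypothesis) is the right intuition, but the step on which everything hinges does not go through as written. You invoke ``the state-evolution theorem for AMP with non-separable denoisers'' from \cite{BM-MPCS-2011,BM-Universality}; no such theorem is available there (nor in \cite{javanmard2013state}): those results require the denoiser to act coordinate-wise (separably, possibly with side information and deterministic, $n$-independent coefficients). Your denoiser $f(\bx)=\sqrt{n}\,(\bx)_+/\|(\bx)_+\|_2$ is non-separable precisely because of the empirical normalization, and your regularisation (smoothing $(\cdot)_+$ and flooring the normaliser) repairs Lipschitzness but not separability, so the hypotheses still fail and state evolution does not ``apply verbatim.'' A second, related gap is that the quantities you feed into the theorem are data-dependent: the signal coefficient $\beta\<\bvz,f(\bv^t)\>$ and the Onsager coefficient $\ons_t$ are random functionals of the iterates, whereas the theorems require deterministic denoisers; replacing them by their limits ($\tau_{t+1}$, resp.\ the state-evolution value) produces error terms that must be propagated through the iteration, which your induction asserts but never sets up.

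This is exactly the work the paper's proof does, and it is the part your proposal is missing. The paper first observes that the normalization is a global scalar, so the \ref{eq:AMPsym} iterates are an exact deterministic rescaling of an unnormalized iteration $\bw^{t+1}=\bX(\bw^t)_+-\tons_t(\bw^{t-1})_+$ with the separable denoiser $(\cdot)_+$ (Lemma \ref{lemma:VvsW}); it then centers, $\bs^t\approx\bw^t-\mu_t\sqrt{n}\,\bvz$, to obtain an iteration driven by $\bZ$ alone with separable denoisers $h_t(x,y)=(x+\mu_t y)_+$ and \emph{deterministic} coefficients $\mu_t$, to which \cite{javanmard2013state} applies directly (Lemma \ref{lemma:StateEvolutionS}). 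The mismatch between the true (random) signal coefficient and $\mu_{t+1}$, and between the random Onsager coefficient and its limit (the latter requiring a smoothed-indicator argument because of the $\ell_0$ norm), is absorbed into a perturbation $\bDelta^t$ controlled by explicit stability lemmas (Lemmas \ref{lemma:PerturbedAMP}, \ref{lemma:Perturb}, \ref{lemma:StateEvolutionW}). If you want to salvage your route, you should replace the appeal to a non-separable state-evolution theorem by this rescale-and-center reduction (or prove a perturbation lemma of the same strength yourself). Your final step, sandwiching $\ind(x\le a)$ between Lipschitz approximations and using that $\tau_tV+G$ has a density, is fine and coincides with the paper's argument.
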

The proof of this result is a direct application of the results of
\cite{BM-MPCS-2011,javanmard2013state}
and can be found in Appendix \ref{app:AMPsym}.

A second important result that follows from state evolution is
that the sequence $\{\bv^t\}_{t\ge 0}$ converges in the following
asymptotic sense.
 \begin{proposition}\label{lem:diffGoesToZeroSymm}
Under the assumptions of Proposition \ref{prop:AMPsym}, fix any
$\ell\ge 0$. Then, we have almost surely
\begin{align}
\lim_{t \to \infty} \lim_{n \to \infty} \frac 1n \|\bv^t -
\bv^{t+\ell}\|^2_2 = 0\, .
\end{align}
\end{proposition}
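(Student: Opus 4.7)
\textbf{Proof plan for Proposition \ref{lem:diffGoesToZeroSymm}.}

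The strategy is to first evaluate the inner limit $\lim_{n \to \infty} \frac{1}{n}\|\bv^t - \bv^{t+\ell}\|_2^2$ via a joint state-evolution argument, obtaining an expression in terms of the deterministic scalar parameters $\tau_t, \tau_{t+\ell}$ and the covariance of the associated Gaussian noise process; then show that, as $t \to \infty$, every term in this expression vanishes because the scalar state-evolution iterates converge to the fixed point $\T_V(\beta)$ and the joint Gaussian covariance converges to the rank-one covariance of perfectly correlated variables.

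\emph{Step 1 (joint state evolution).} Proposition \ref{prop:AMPsym} as stated gives a one-time marginal. The argument I need is the standard multi-time extension of \cite{BM-MPCS-2011,BM-Universality}: for every $k \ge 1$ and every pseudo-Lipschitz $\psi: \reals^{k+1} \to \reals$, the empirical joint distribution of $(\bv^{t_1}_i, \dots, \bv^{t_k}_i, \sqrt{n}(\bvz)_i)$ satisfies, almost surely,
\begin{equation*}
\lim_{n\to\infty}\frac{1}{n}\sum_{i=1}^n \psi(\bv^{t_1}_i,\dots,\bv^{t_k}_i,\sqrt n(\bvz)_i) = \E\bigl[\psi(\tau_{t_1} V + G_{t_1},\dots,\tau_{t_k} V + G_{t_k}, V)\bigr],
\end{equation*}
where $(G_{t_1},\dots,G_{t_k})$ is a centered Gaussian vector independent of $V$ with a covariance matrix $\Sigma$ produced by the two-time state-evolution recursion associated with the non-linearity $f$. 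Applied to the pseudo-Lipschitz function $\psi(x,y,v) = (x-y)^2$ this yields
\begin{equation*}
\lim_{n\to\infty}\frac{1}{n}\|\bv^t - \bv^{t+\ell}\|_2^2 = (\tau_t - \tau_{t+\ell})^2\,\E[V^2] + \Sigma_{t,t} + \Sigma_{t+\ell,t+\ell} - 2\,\Sigma_{t,t+\ell}.
\end{equation*}

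\emph{Step 2 (convergence of $\tau_t$).} The scalar iteration is $\tau_{t+1} = \beta \F_V(\tau_t)$ with $\tau_1$ determined by the initialization $\bv^0 = \mathbf{1}$. By Lemma \ref{lem:Twelldefined} and Lemma \ref{lem:Rvariations}, the fixed-point equation $x = \beta \F_V(x)$ has the unique non-negative solution $\T_V(\beta)$, which coincides with the unique maximizer of $\R\sym$. Together with the monotonicity/boundedness of $\F_V$ on $\reals_{\ge 0}$ (which are easily read off from its definition by Gaussian integration by parts, giving $\F_V$ bounded by $\sqrt{\E V^2}$ and non-decreasing in $x$), the iteration $\tau_t$ is monotone and bounded, hence converges to a fixed point, which must be $\T_V(\beta)$. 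Thus $(\tau_t - \tau_{t+\ell})^2 \E[V^2] \to 0$ as $t \to \infty$.

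\emph{Step 3 (convergence of the covariance $\Sigma$).} The diagonal entries of $\Sigma$ are constantly equal to $1$ due to the normalization in $f$ (since $\|f(\bv^t)\|_2^2 = n$, the effective Gaussian variance is $1$). The off-diagonal entry $\Sigma_{t,t+\ell}$ is given by a second-moment of $f$ evaluated at two jointly Gaussian inputs; by the state-evolution recursion it can be written as
\begin{equation*}
\Sigma_{t,t+\ell} = \frac{\E\bigl[(\tau_{t-1} V + G_{t-1})_+\,(\tau_{t+\ell-1} V + G_{t+\ell-1})_+\bigr]}{\sqrt{\E[(\tau_{t-1} V + G_{t-1})_+^2]\,\E[(\tau_{t+\ell-1} V + G_{t+\ell-1})_+^2]}}.
\end{equation*}
As both $\tau$-parameters and the inner covariance converge (by induction on $\ell$ combined with Step 2) to their common limits, the continuity of this expectation in the joint-Gaussian covariance implies $\Sigma_{t,t+\ell} \to 1$ as $t \to \infty$. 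Hence $\Sigma_{t,t} + \Sigma_{t+\ell,t+\ell} - 2\Sigma_{t,t+\ell} \to 0$, which combined with Step 2 gives the claim.

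\emph{Main obstacle.} The bookkeeping of Step 1 (invoking the multi-time version of state evolution for our specific non-linearity $f$ and Onsager term) and the propagation-of-convergence argument in Step 3 — i.e.\ showing by induction on $\ell$ that if $\tau_s \to \T_V(\beta)$ then $\Sigma_{s, s+\ell} \to 1$ for every fixed $\ell$, using continuity of a ratio of Gaussian expectations — is where the real work lies. The scalar monotone convergence of $\tau_t$ in Step 2 is standard given Lemma \ref{lem:Rvariations}, but one must make sure the initialization $\tau_1$ lies in the basin of attraction of $\T_V(\beta)$ and that the iteration is never stuck at $\tau_t = 0$ (this is where having $\bv^0 = \mathbf{1}$ with $\E V > 0$ matters).
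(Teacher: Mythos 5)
Your overall architecture is the same as the paper's (a multi-time state evolution applied to $\psi(x,y)=(x-y)^2$, plus convergence of $\tau_t$ to $\T_V(\beta)$, which is the paper's Lemma \ref{lemma:TwoTimes} and Lemma \ref{lem:convergenceFixedPointSymm}), but Step 3 has a genuine gap. The off-diagonal covariance obeys its own recursion, $Q_{t+1,s+1}=\H_V(Q_{t,s};\tau_t,\tau_s)$, and once $\tau_t\to\tau_*=\T_V(\beta)$ all that ``continuity of the Gaussian expectation in the covariance'' can give you is that any limit point of $Q_{t,t+\ell}$ is a fixed point of the limiting map $Q\mapsto \H_V^*(Q)\equiv\H_V(Q;\tau_*,\tau_*)$. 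It does not tell you that the sequence converges, nor that the limit is $1$. If $\H_V^*$ had an attracting fixed point $Q^\star<1$, then the asymptotic correlation between $\bv^t$ and $\bv^{t+\ell}$ would be $Q^\star$, consecutive iterates would remain macroscopically different, and the proposition would be false; ruling this out is precisely the content of the statement, so it cannot be absorbed into a continuity remark or an unspecified ``induction on $\ell$''.

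The paper closes this gap with a structural argument about the two-point map (Lemmas \ref{lemma:ConvexityTwoTimes} and \ref{lemma:TwoTimesTinfty}): the function $Q\mapsto\H_V^*(Q)$ is non-decreasing and convex on $[0,1]$ (proved via an Ornstein--Uhlenbeck/Hermite spectral decomposition of $Q\mapsto\E\{h(Z_1,W)h(Z_2,W)\}$), it satisfies $\H_V^*(1)=1$, and its derivative at $Q=1$ equals $\prob(\tau_*V+G\ge 0)/\E\{(\tau_*V+G)_+^2\}\le 1$, the inequality using Stein's lemma together with $V\ge 0$. Convexity then forces $\H_V^*(Q)>Q$ for all $Q\in[0,1)$, and a liminf/subsequence argument yields $Q_{t,t+1}\to 1$; the case of general $\ell$ is reduced to $\ell=1$ by the triangle inequality (your induction would also work, but each step needs the same lemma). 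You would need to supply an argument of this quantitative kind --- some property of $\H_V^*$ such as the derivative bound at $Q=1$ plus convexity, or a contraction estimate --- for Step 3; Steps 1 and 2 are correct and essentially identical to the paper's treatment.
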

The proof of this statement is deferred to Appendix \ref{app:DiffGoesToZeroSymm}.

As $t\to\infty$, $\tau_t\to \tau$, with $\tau$ the unique positive
solution of  the fixed point equation $\tau=\beta\F_{\bV}(\tau)$.
By using the above two propositions, we then obtain the following
lower bound, whose proof can be found in Section
\ref{sec:ProofLowerBounds}.
\begin{theorem}\label{th:lowerBoundsAMPsymmetric} 
Consider  the \ref{eq:SymmetricModel}, and assume that $\left \{\sqrt
  n \bvz(n) \right \}_{n \geq 0} $ converges in  empirical
distribution to a random variable  $V$. Further, let $\{\hbv^t\}_{t\ge
  0}$ be the AMP iterates as defined by
\ref{eq:AMPsym}  and Eq.~(\ref{eq:AMP_Estimates_Symm}).
Finally, let $\tau$ be the
unique positive solution of the fixed point equation $\tau =
\beta\F_{\bV}(\tau)$ (equivalently $\tau= \T_{\bV}(\beta)$). 

Then we have, almost surely, 
\begin{align}\label{eq:limAMPRayleighSym} 
\lim_{t \to \infty}  \lim_{n \to \infty}  \langle \hbv^t , \bX \hbv^t
\rangle &=\R\sym(\tau )\, ,\\
\lim_{t \to \infty}
\lim_{n \to \infty}  \langle \hbv^t,\bvz\rangle &=  \F_\bV( \tau ) \label{eq:limAMPinnerFsym}~.
\end{align}
\end{theorem}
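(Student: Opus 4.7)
\textbf{Proof plan for Theorem \ref{th:lowerBoundsAMPsymmetric}.}

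The strategy is to express both quantities in~\eqref{eq:limAMPRayleighSym}--\eqref{eq:limAMPinnerFsym} as empirical averages that fall under Proposition~\ref{prop:AMPsym}, then use Proposition~\ref{lem:diffGoesToZeroSymm} to pass from a single iterate to the fixed point. Throughout, I write $\hbv^t = f(\bv^t)/\sqrt{n}$ where $f(\bv)=\sqrt{n}(\bv)_+/\|(\bv)_+\|_2$, and exploit the fact that $\tau_t\to\tau=\T_V(\beta)$ as $t\to\infty$ (the fixed point being characterized, e.g., in Lemma \ref{lem:Rvariations}).

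For \eqref{eq:limAMPinnerFsym}, I would split
$\langle \hbv^t,\bvz\rangle = \big(n^{-1}\sum_i (\bv^t_i)_+\sqrt{n}(\bvz)_i\big)/\big(n^{-1}\sum_i ((\bv^t_i)_+)^2\big)^{1/2}$,
and apply Proposition~\ref{prop:AMPsym} to the pseudo-Lipschitz test functions $\psi_1(x,y)=(x)_+y$ and $\psi_2(x,y)=((x)_+)^2$ (both are easily checked to satisfy the pseudo-Lipschitz bound, since $x\mapsto (x)_+$ is $1$-Lipschitz and the products grow at most quadratically). The numerator tends to $\E[V(\tau_t V+G)_+]$ and the denominator to $\sqrt{\E[(\tau_t V+G)_+^2]}$, hence the ratio is exactly $\F_V(\tau_t)$; continuity of $\F_V$ then yields $\F_V(\tau)$ as $t\to\infty$.

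For \eqref{eq:limAMPRayleighSym}, I use the AMP identity $\bX f(\bv^t)=\bv^{t+1}+\ons_t f(\bv^{t-1})$, which gives
\begin{align*}
\langle \hbv^t,\bX\hbv^t\rangle = \frac{1}{n}\langle f(\bv^t),\bv^{t+1}\rangle + \frac{\ons_t}{n}\langle f(\bv^t),f(\bv^{t-1})\rangle.
\end{align*}
For the first summand, Cauchy--Schwarz combined with $\|f(\bv^t)\|_2=\sqrt{n}$ and Proposition~\ref{lem:diffGoesToZeroSymm} shows that replacing $\bv^{t+1}$ by $\bv^t$ costs a vanishing error in the iterated limit; the resulting quantity collapses to $\|(\bv^t)_+\|_2/\sqrt{n}\to \sqrt{\E[(\tau V+G)_+^2]}$. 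For the second summand, the normalized projection $f$ is Lipschitz on $\{\bv:\|(\bv)_+\|_2\ge c\sqrt{n}\}$ (with $c>0$ available because $\E[(\tau_t V+G)_+^2]>0$ strictly); combined with Proposition~\ref{lem:diffGoesToZeroSymm} this lets me replace $f(\bv^{t-1})$ by $f(\bv^t)$, reducing the second summand to $\ons_t$. Writing $\ons_t = (\|(\bv^t)_+\|_0/n)/(\|(\bv^t)_+\|_2/\sqrt{n})$ and applying Proposition~\ref{prop:AMPsym} (extended to indicators, as stated there) gives $\ons_t\to \prob(\tau V+G>0)/\sqrt{\E[(\tau V+G)_+^2]}=\G_V(\tau)$. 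Finally, Stein's lemma (Gaussian integration by parts) gives the identity $\E[(\tau V+G)_+^2]=\tau\E[V(\tau V+G)_+]+\prob(\tau V+G>0)$, and using $\tau=\beta\F_V(\tau)$ this rewrites as $\sqrt{\E[(\tau V+G)_+^2]}=\beta\F_V(\tau)^2+\G_V(\tau)$, so the two summands add to $\beta\F_V(\tau)^2+2\G_V(\tau)=\R\sym(\tau)$.

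The main technical obstacle is the pair of replacements $\bv^{t+1}\leadsto \bv^t$ and $f(\bv^{t-1})\leadsto f(\bv^t)$ inside the limits. The first is a clean Cauchy--Schwarz argument against Proposition~\ref{lem:diffGoesToZeroSymm}, but the second requires a uniform-in-$n$ lower bound on $\|(\bv^t)_+\|_2/\sqrt{n}$ in order to upgrade the non-Lipschitz map $\bv\mapsto (\bv)_+/\|(\bv)_+\|_2$ to a locally Lipschitz one on the relevant almost-sure event; this is where the strict positivity $\E[(\tau_t V+G)_+^2]>0$ (which holds because $G$ is non-degenerate Gaussian) is needed, and some care must be taken to make the bound uniform as $t$ grows using Proposition~\ref{lem:diffGoesToZeroSymm} and the convergence $\tau_t\to\tau$.
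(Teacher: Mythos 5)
Your plan is correct and follows essentially the same route as the paper: the same identity $\bX f(\bv^t)=\bv^{t+1}+\ons_t f(\bv^{t-1})$, the same splitting into $\frac1n\<f(\bv^t),\bv^t\>+\ons_t$ plus error terms controlled by Propositions \ref{prop:AMPsym} and \ref{lem:diffGoesToZeroSymm}, the same state-evolution/Stein computation of the limits, and the same treatment of \eqref{eq:limAMPinnerFsym} via pseudo-Lipschitz test functions. The uniform lower bound you flag as the main technical obstacle is immediate, since $V\ge 0$ and $\tau_t\ge 0$ give $(\tau_t V+G)_+\ge (G)_+$ pointwise, hence $\lim_n\|(\bv^t)_+\|_2^2/n=\D_V(\tau_t)\ge \E\{(G)_+^2\}=1/2$ for all $t$.
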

This provides the necessary lower bound that complements the upper
bound based on Sudakov-Fernique inequality, cf. Section \ref{sec:UpperBoundSlepianNonnegative}.

\subsection{Rectangular matrices}

\subsubsection{Algorithm definition}
In this case the algorithm keeps track --after $t$ iterations--  of
$\bu^t \in \reals^n$ 
and $\bv^t\in\reals^p$. These are initialized by setting 
 $\bv^0=(1,1,\dots,1)^{\sT}$, $\bu^{-1} =0$,
and updated by letting, for $t\ge 0$, 
\begin{equation}
\tag*{AMP-rec}\label{eq:AMPnonsym}
\left \{ \begin{aligned} 
\bu^{t} =& \bX   f(\bv^t) -  \ons_t    g(\bu^{t-1}) \, ,\\
\bv^{t+1} =&  \bX\trans  g(\bu^{t}) -\onsd_{t}   f(\bv^{t}) \, ,  
\end{aligned}\right.
\end{equation}
where $\onsd_{t} =\sqrt{n}/\|\bu^t\|_2$ and $\ons_t= \|(\bv^t)_+\|_0/( \sqrt n\|(\bv^t)_+\|_2)$
and  $f:\reals^p\to\reals^p$ and $g:\reals^n\to\reals^n$ are defined by:
\begin{align}
 f(\bx) = \sqrt { n} ~\frac{(\bx)_+ } { \|(\bx)_+\|_2}\, ,\quad \quad
g(\bx) =\sqrt n\frac{\bx}{\|\bx\|_2}\, .
\end{align} 

After any number $t$ of iteration we construct the estimates
\begin{align}
\hbu^t = \frac{\bu^t}{\|\bu^t\|_2}\, ,\;\;\;\;\;\;\;\;\;
\hbv^t = \frac{(\bv^t)_+}{\|(\bv^t)_+\|_2}\, . \label{eq:AMP_Estimates_Rec}
\end{align}
These satisfy the normalization and positivity constraints and are
used as estimates of $\buz$, $\bvz$. 

\subsubsection{Asymptotic analysis}

We consider the high dimensional setup where $n\to \infty$, and $p=p(n)\to\infty$ with
converging  aspect ratio 
$p/n\to\alpha \in (0,1)$. We  assume that $\ \{ \sqrt n~ \buz(n)  \}_{n \geq 0} $ converges in empirical distribution to $U$ and $ \{\sqrt {p }~ \bvz(p)  \}_{p \geq 0} $ converges in empirical distribution to $V$.

The high dimensional asymptotics of $\bu^t$, $\bv^t$ is characterized
--as in the symmetric case-- through state evolution. We introduce 
the real-valued \emph{state evolution sequences} $\{ \vartheta_t\}_{t \geq 0}$ and $\{ \mu_t
\}_{t \geq 1}$ through  the following recursion for $t\ge 0$
\begin{equation}\tag*{SE-rec}\label{eq:seqThetaMu}
\left \{ \begin{aligned}
 &  \mu_{t} = \sqrt{  \beta  }~\F_V\left ( \frac {\vartheta_t} {\sqrt
     \alpha} \right )\, ,\\
& \vartheta_{t+1} = \sqrt \beta \frac{\mu_t }{\sqrt{1+ \mu_t^2}}\, , 
\end{aligned}
\right .
\end{equation}
with initial conditions $\mu_0 = \sqrt \beta \E\bV$.
We refer to these as to  the \emph{  state evolution equations}. 
Roughly speaking, state evolution establishes that 
$\bu^t$ is approximately normal with mean $\sqrt{n}\, \mu_t\, \buz$
and unit covariance, and  $\bv^t$ is approximately normal with mean 
$\sqrt{n}\, \vartheta_t\, \bvz$ and unit covariance. This is
formalized below.
\begin{proposition}\label{prop:SEgeneral}
Consider the \ref{eq:SpikedMatrix} and assume that
$\ \{ \sqrt n~ \buz(n)  \}_{n \geq 0} $ converges in empirical
distribution to a random variable  $U$ and $ \{\sqrt {p }~ \bvz(p)
\}_{p \geq 0} $ converges in empirical
distribution to a random variable $V$. Further, let $\{\mu_t\}_{t\ge
  0}$, $\{\vth_t\}_{t\ge 1}$ be defined by the state evolution
recursion \ref{eq:seqThetaMu}.

Then, for any pseudo-Lipshitz function $\psi : \reals^2 \to \reals$
and any  $t\ge 1$ we have, almost surely 
\begin{equation}\label{eq:limitPsiPseudoLipRec}
\left \{\begin{aligned}
& \lim_{n \to \infty} \frac 1n \sum_{i=1}^n \psi( \bu_i^t ,\sqrt n (\buz)_i) = \E \left \{ \psi ( \mu_t U + G , U)  \right \} \\
& \lim_{p \to \infty} \frac 1p \sum_{i=1}^p \psi(\bv_i^t ,\sqrt p (\bvz)_i) = \E \left \{ \psi ( \vartheta_t/\sqrt{\alpha}\, V + G , V)  \right \}
\end{aligned}
\right .
\end{equation}
where $G \sim \normal(0,1)$ is independent of $U$ and $V$. Further,
the convergence in  Eq.~(\ref{eq:limitPsiPseudoLipRec}) also holds for
$\psi(x,y) = \ind(x\le a)$ and any $a\in\reals$.
 \end{proposition}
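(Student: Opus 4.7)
The proposition is the rectangular analogue of Proposition \ref{prop:AMPsym}, and my plan is to derive it from the general asymmetric AMP state evolution theorem of \cite{BM-MPCS-2011,javanmard2013state,BM-Universality} in essentially the same way. First I would split the data matrix as $\bX = \sqrt{\beta}\,\buz\bvz\trans + \bZ$ and substitute into \ref{eq:AMPnonsym} to obtain
\begin{align*}
\bu^{t} &= \bZ\, f(\bv^t) \;+\; \sqrt{\beta}\,\buz\,\langle \bvz, f(\bv^t)\rangle \;-\; \ons_t\, g(\bu^{t-1}),\\
\bv^{t+1} &= \bZ\trans g(\bu^t) \;+\; \sqrt{\beta}\,\bvz\,\langle \buz, g(\bu^t)\rangle \;-\; \onsd_{t}\, f(\bv^t).
\end{align*}
This fits the rectangular AMP template, driven by the pure-Gaussian matrix $\bZ$, with separable non-linearities that depend on the state and on the side information $\buz,\bvz$; the rank-one terms act as deterministic mean shifts that propagate through state evolution. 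A direct computation shows that the Onsager coefficients $\ons_t=\|(\bv^t)_+\|_0/(\sqrt{n}\|(\bv^t)_+\|_2)$ and $\onsd_t=\sqrt n/\|\bu^t\|_2$ defined in \ref{eq:AMPnonsym} coincide with the empirical averages of the (distributional) derivatives of $f(\bv)=\sqrt n(\bv)_+/\|(\bv)_+\|_2$ and $g(\bu)=\sqrt n\bu/\|\bu\|_2$ required by \cite{BM-MPCS-2011} to cancel the cross-correlations between successive iterates.

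Second, with the framework in place I would plug the Gaussian ansatz $\bu^t\approx \sqrt{n}\,\mu_t\,\buz+\bg_n$ and $\bv^t\approx \sqrt{n}\,\vth_t\,\bvz+\bg_p$ (with $\bg_n\sim\normal(0,\id_n)$, $\bg_p\sim\normal(0,\id_p)$ asymptotically independent of the signal) into the expanded recursion and read off \ref{eq:seqThetaMu}. The identity $\|\bu^t\|_2^2/n\to 1+\mu_t^2$ yields $\langle\buz,g(\bu^t)\rangle\to \sqrt{n}\,\mu_t/\sqrt{1+\mu_t^2}$, giving the $\vartheta_{t+1}$ update; computing $\langle\bvz,f(\bv^t)\rangle$ using $\sqrt{p}\,\bvz\to V$ in empirical distribution and the definition of $\F_V$ yields the $\mu_t$ update. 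The $\sqrt{\alpha}$ factors in \ref{eq:seqThetaMu} arise precisely from the aspect-ratio conversion $\sqrt{p/\alpha}\to\sqrt{n}$ when moving between the $\sqrt p$-scaled empirical distribution of $\bvz$ and the $\sqrt n$-scaled state evolution variable.

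Third, the extension of (\ref{eq:limitPsiPseudoLipRec}) from pseudo-Lipschitz $\psi$ to the indicator $\psi(x,y)=\ind(x\le a)$ is a soft consequence of the pseudo-Lipschitz statement: convergence for all pseudo-Lipschitz test functions implies weak convergence of the joint empirical law to the law of $(\mu_t U+G,U)$ (respectively $(\vartheta_t V/\sqrt\alpha+G,V)$), whose first marginal is absolutely continuous (a Gaussian convolution), so every half-line $\{x\le a\}$ is a continuity set and the portmanteau theorem delivers the indicator convergence.

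The main obstacle is that the non-linearities $f$ and $g$ are \emph{not} uniformly Lipschitz, because of the normalizing factors $\|(\bv)_+\|_2^{-1}$ and $\|\bu\|_2^{-1}$, so one cannot invoke \cite{BM-MPCS-2011} off the shelf. The resolution---already used in \cite{BM-Universality} and in the symmetric proof given in Appendix \ref{app:AMPsym}---is a truncation argument: replace $f,g$ by globally Lipschitz approximations $f^{(K)},g^{(K)}$ that coincide with $f,g$ whenever the relevant norms lie in a compact interval $[1/K,K]$, apply state evolution to the truncated iteration, and show inductively in $t$ that the deterministic state evolution limits of those norms lie strictly inside $[1/K,K]$ for all sufficiently large $K$. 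The truncation can then be removed and the state evolution limit for the original iteration identified. I would relegate the technical details of this approximation to the appendix, as is done for the symmetric case.
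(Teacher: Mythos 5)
Your overall architecture (splitting $\bX$ into spike plus noise, identifying the state evolution recursion \ref{eq:seqThetaMu}, and upgrading from pseudo-Lipschitz test functions to indicators via absolute continuity of the Gaussian-convolved marginal) matches the intended argument, and the paper indeed proves this proposition by declaring it analogous to Proposition \ref{prop:AMPsym}, whose proof is given in Appendix \ref{app:AMPsym}. The gap is in your treatment of the normalizations. The difficulty with $f(\bv)=\sqrt n\,(\bv)_+/\|(\bv)_+\|_2$ and $g(\bu)=\sqrt n\,\bu/\|\bu\|_2$ is not only that they fail to be Lipschitz; it is that they are \emph{non-separable}: each output coordinate depends on the whole vector through the norm. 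The state evolution theorems you invoke (\cite{BM-MPCS-2011,javanmard2013state}) apply to iterations whose non-linearities are Lipschitz functions applied componentwise (possibly with side information), with Onsager coefficients given by empirical averages of scalar derivatives. Truncating the norms to $[1/K,K]$ makes $f^{(K)},g^{(K)}$ Lipschitz but leaves them non-separable, so the step ``apply state evolution to the truncated iteration'' is precisely the step not licensed by the cited results; your observation that $\ons_t,\onsd_t$ are (approximately) normalized divergences of $f,g$ points toward a non-separable SE framework that is not what these references provide. Your claim that the truncation device is what Appendix \ref{app:AMPsym} does is also inaccurate: no truncation appears there.

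What the paper actually does, and what you would need to replicate in the rectangular case, is to remove the normalization rather than truncate it: pass to unnormalized iterates (the analogue of $\bw^t$ in Lemma \ref{lemma:VvsW}), for which the recursion involves only the fixed separable map $x\mapsto(x)_+$ (and essentially the identity on the $\bu$-side), the empirical normalization factors entering only as scalar prefactors relating the two sequences. One then recentres by the deterministic SE means (the analogue of $\bts^t=\bw^t-\mu_t\bvz\sqrt n$), so the iteration is driven by $\bZ$ alone with componentwise Lipschitz non-linearities of the form $h_t(x,y)=(x+\mu_t y)_+$ with deterministic coefficients; this is in the form covered by \cite[Theorem 1]{javanmard2013state}. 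The rank-one spike and the mismatch between the empirical Onsager coefficient and its deterministic counterpart are then controlled as vanishing perturbations via Lemmas \ref{lemma:PerturbedAMP} and \ref{lemma:Perturb} (the Onsager discrepancy requiring the $\psi_\delta$ sandwich, since $\|(\cdot)_+\|_0$ is not continuous), and only at the end does one convert back to $\bv^t$, using that $\|(\bw^t)_+\|_2/\sqrt n$ converges almost surely to a strictly positive deterministic limit. Your identification of \ref{eq:seqThetaMu} and your indicator argument are fine once this reduction is in place, but as written the central reduction to the cited SE theorems does not go through.
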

The proof is very similar to the one of Proposition \ref{prop:AMPsym}
and is again a direct application of the results of
\cite{BM-MPCS-2011,javanmard2013state}. We omit it to avoid redundancy.

We also have an analogous of Proposition \ref{lem:diffGoesToZeroSymm}.
 \begin{proposition}\label{lem:diffGoesToZero}
Under the assumptions of Proposition \ref{prop:SEgeneral}, let
$\ell\ge 0$ be a fixed integer. Then we have, almost surely,
\begin{align}
\lim_{t \to \infty} \lim_{n \to \infty} \frac 1n \|\bv^t -
\bv^{t+\ell}\|_2 = 0,\quad
\quad\lim_{t \to \infty} \lim_{n \to \infty} \frac 1n \|\bu^t - \bu^{t+\ell}\|_2 = 0\,.
\end{align}
\end{proposition}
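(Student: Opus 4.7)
The plan is to mirror the proof of Proposition~\ref{lem:diffGoesToZeroSymm}, adapting it to the (structurally more involved) two-block rectangular setting \ref{eq:AMPnonsym}. As in the symmetric case, the natural quantity to control is the $L^2$ average $\frac{1}{n}\|\bv^t - \bv^{t+\ell}\|_2^2$ (and analogously for $\bu$). The argument has two ingredients: a joint state-evolution computation identifying the limit as a deterministic function of state-evolution parameters at times $t$ and $t+\ell$, and a dynamical-systems argument forcing this function to vanish as $t\to\infty$.

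For the first ingredient, one invokes the joint version of Proposition~\ref{prop:SEgeneral}, available within the framework of \cite{BM-MPCS-2011,javanmard2013state}: for any pseudo-Lipschitz $\psi:\reals^3\to\reals$ and fixed $t_1,t_2\ge 1$, almost surely
\begin{align*}
\lim_{p\to\infty}\frac{1}{p}\sum_{i=1}^p\psi\bigl(\bv^{t_1}_i,\, \bv^{t_2}_i,\, \sqrt{p}\,(\bvz)_i\bigr) = \E\bigl\{\psi\bigl(\vth_{t_1}V/\sqrt{\alpha} + G^v_{t_1},\, \vth_{t_2}V/\sqrt{\alpha} + G^v_{t_2},\, V\bigr)\bigr\},
\end{align*}
where $(G^v_{t_1}, G^v_{t_2})$ are jointly Gaussian, centered, with unit variances and cross-covariance $\Sigma^v_{t_1,t_2}$ produced by the state-evolution covariance recursion; an analogous statement holds for the $\bu$ iterates with parameters $\mu_t$ and cross-covariance $\Sigma^u_{t_1,t_2}$. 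Applying this with $\psi(x,y,v)=(x-y)^2$ and using $p/n\to\alpha$ yields
\begin{align*}
\lim_{n\to\infty}\frac{1}{n}\|\bv^t - \bv^{t+\ell}\|_2^2 = (\vth_t - \vth_{t+\ell})^2\,\E[V^2] + 2\alpha\bigl(1 - \Sigma^v_{t, t+\ell}\bigr),
\end{align*}
and the analogous formula for $\|\bu^t - \bu^{t+\ell}\|_2^2/n$ involving $\mu_t - \mu_{t+\ell}$ and $\Sigma^u_{t,t+\ell}$.

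For the second ingredient, one must show $\vth_t \to \S_V(\beta,\alpha)$, $\mu_t \to \sqrt{\beta}\,\F_V(\S_V(\beta,\alpha)/\sqrt{\alpha})$, and $\Sigma^v_{t,t+\ell},\, \Sigma^u_{t,t+\ell}\to 1$. Eliminating $\mu_t$ from \ref{eq:seqThetaMu} reduces the first two claims to convergence of the scalar recursion $\vth_{t+1} = H(\vth_t)$ with $H(x) = \beta\F_V(x/\sqrt{\alpha})/\sqrt{1 + \beta\F_V(x/\sqrt{\alpha})^2}$. The fixed-point equation $x = H(x)$ is precisely the defining relation of $\S_V(\beta,\alpha)$, which is unique on $\reals_{>0}$ by Lemma~\ref{lem:Twelldefined}; combined with the monotonicity of $\F_V$ needed for Lemma~\ref{lem:Rvariations}, this forces monotone convergence from the prescribed initialization. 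Convergence of the joint covariances then follows because the covariance update is a continuous function of $(\F_V,\G_V)$ evaluated at $\vth_t/\sqrt{\alpha}$ and $\vth_{t+\ell}/\sqrt{\alpha}$; once both arguments share the same limit $\S_V(\beta,\alpha)/\sqrt{\alpha}$, the off-diagonal entry converges to the on-diagonal value $1$.

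The principal technical hurdle is the scalar dynamical analysis of $H$ (equivalently, of the coupled pair $(\mu_t,\vth_t)$): ruling out oscillation and showing that the iterates stay in the basin of attraction of $\S_V(\beta,\alpha)$ requires leveraging monotonicity and boundedness properties of $\F_V$ that go slightly beyond what is explicitly recorded for the variational characterization of $\S_V$. This is structurally analogous to the symmetric case treated in Appendix~\ref{app:DiffGoesToZeroSymm}, where the simpler recursion $\tau_{t+1} = \beta\F_V(\tau_t)$ is analyzed using exactly these tools together with the variational interpretation via $\R\rec(\cdot,\alpha)$, so the adaptation to the present two-step recursion should be routine.
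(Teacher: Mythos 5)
Your overall architecture matches the paper's: a two-time state evolution giving the joint law of $(\bv^t,\bv^{t+\ell})$, the choice $\psi(x,y,v)=(x-y)^2$ to express $\lim_n \frac1n\|\bv^t-\bv^{t+\ell}\|_2^2$ in terms of $(\vth_t-\vth_{t+\ell})^2$ and the cross-covariance, and then sending $t\to\infty$. The convergence $\vth_t\to\S_V(\beta,\alpha)$, $\mu_t\to\sqrt{\beta}\F_V(\S_V/\sqrt{\alpha})$, which you flag as the principal hurdle, is in fact the easy part and is already recorded in the paper as Lemma \ref{lem:convergenceFixedPoint} (proved exactly as in the symmetric case, via monotonicity of $x\mapsto \F_V(x)/x$ and $\beta\F_V'<1$ at the fixed point).

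The genuine gap is your treatment of the cross-covariances. You assert that $\Sigma^v_{t,t+\ell},\Sigma^u_{t,t+\ell}\to 1$ ``by continuity'' once $\vth_t$ and $\vth_{t+\ell}$ share the same limit. But the off-diagonal covariance is not a function of the two SNR parameters alone: it satisfies its own recursion of the form $\Sigma_{t+1,s+1}=\cH(\Sigma_{t,s};\cdot)$ (in the symmetric case, $Q_{t+1,s+1}=\H_V(Q_{t,s};\tau_t,\tau_s)$), driven through the nonlinearity $(\cdot)_+$. Even with all SNR parameters frozen at their fixed point, $Q=1$ is merely \emph{a} fixed point of the map $Q\mapsto\H^*_V(Q)$; without further structure the iteration could converge to a fixed point strictly below $1$, in which case $\lim_n\frac1n\|\bv^t-\bv^{t+\ell}\|_2^2$ would not vanish. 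This is precisely the step the paper's proof (for the symmetric case, which Proposition \ref{lem:diffGoesToZero} is declared to follow) devotes its two key lemmas to: Lemma \ref{lemma:ConvexityTwoTimes} shows, via an Ornstein--Uhlenbeck spectral decomposition, that $Q\mapsto\cH(Q)$ is nondecreasing and convex with derivative at $Q=1$ equal to $\E\{[\partial_1 h(Z,W)]^2\}$, and Lemma \ref{lemma:TwoTimesTinfty} uses Stein's lemma together with the non-negativity of $V$ to bound that derivative by $1$, concluding $\H^*_V(Q)>Q$ on $[0,1)$ and hence $Q_{t,t+1}\to1$. In the rectangular setting the composite covariance update over one $u$--$v$ cycle still passes through the positive-part nonlinearity (the $g$ step is linear after normalization, but the $f$ step is not), so an analogous convexity-plus-derivative-at-one argument is indispensable; a bare continuity appeal does not supply it.
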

We omit the proof, as it is very similar to the one of Proposition \ref{lem:diffGoesToZero}.

In the limit $t\to\infty$ (and assuming $\eps>0$), the sequence
defined in \ref{eq:seqThetaMu} converges to a nonzero  fixed point
$(\mu,\vartheta)$ satisfying the fixed point equations
\begin{equation}
\left \{ \begin{aligned}
 &  \mu = \sqrt{  \beta  }~\F_V\left ( \frac {\vartheta} {\sqrt
     \alpha} \right )\, ,\\
& \vartheta= \sqrt \beta \frac{\mu}{\sqrt{1+ \mu^2}}\, .
\end{aligned}
\right .\label{eq:FixedPointRectangular}
\end{equation}
We will prove that these equations admit a unique positive solution. 

Considering $t\to\infty$ (after $n\to\infty$) we can thus prove the following.
\begin{theorem}\label{th:lowerBoundsAMPnonsymmetric} 
Consider the \ref{eq:SpikedMatrix} and assume that
$\ \{ \sqrt n~ \buz(n)  \}_{n \geq 0} $ converges in empirical
distribution to a random variable  $U$ and $ \{\sqrt {p }~ \bvz(p)
\}_{p \geq 0} $ converges in empirical
distribution to a random variable $V$. 
Further, let $\{\hbu^t\}_{t\ge 0}$, $\{\hbv^t\}_{t\ge 0}$ be the AMP
estimates as defined by \ref{eq:AMPnonsym} and Eq.~(\ref{eq:AMP_Estimates_Rec})
Finally, let $(\mu,\vth)$ be the only positive solution of  
the fixed point equations (\ref{eq:FixedPointRectangular}).

Then we have, almost surely, 
\begin{align}\label{eq:limAMPRayleighNonSym}%
\lim_{t \to \infty}  \lim_{n \to \infty} \langle \hbu^t , \bX \hbv^t
\rangle &= \R\rec(\vartheta,\alpha)\, ,\\
\lim_{t \to \infty}  \lim_{n \to \infty}  \langle \hbu^t,\buz\rangle &
= \frac{\mu }{\sqrt{1+\mu ^2}} \, ,\label{eq:AMPScalProdURect}\\
 \lim_{t \to \infty}  \lim_{n \to \infty}  \langle \hbv^t,\bvz\rangle
 &=  \F_V\left ( \frac{\vartheta }{\sqrt \alpha} \right)\, . \label{eq:AMPScalProdVRect}
\end{align}
\end{theorem}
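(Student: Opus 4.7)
The three claims will follow from Proposition \ref{prop:SEgeneral}, Proposition \ref{lem:diffGoesToZero}, and the scalar convergence $(\mu_t,\vartheta_t)\to(\mu,\vartheta)$. I will first verify this last point: substituting the first line of \eqref{eq:FixedPointRectangular} into the second yields the scalar equation $\vartheta^2(1+\beta\F_V(\vartheta/\sqrt\alpha)^2)=\beta^2\F_V(\vartheta/\sqrt\alpha)^2$, which is exactly the one defining $\S_V(\beta,\alpha)$ in Definition \ref{def:functionsFGRTS}; uniqueness of a positive root then follows from the monotonicity/concavity properties of $\F_V$ stated in Lemma \ref{lem:Rvariations}. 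The same monotonicity makes the scalar map $h(x)=\beta\F_V(x/\sqrt\alpha)/\sqrt{1+\beta\F_V(x/\sqrt\alpha)^2}$ increasing on $\reals_{\ge 0}$, and a sign analysis of $h(x)-x$ gives $\vartheta_t\to\vartheta$, hence $\mu_t\to\mu$.

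\textbf{Overlaps with the planted signals.} Apply Proposition \ref{prop:SEgeneral} on the $n$-dimensional side with the pseudo-Lipschitz test functions $\psi(x,y)=xy$ and $\psi(x,y)=x^2$. Since $\E[U^2]=1$, one gets $\langle\bu^t,\buz\rangle/\sqrt n\to\mu_t$ and $\|\bu^t\|_2^2/n\to 1+\mu_t^2$, so
\begin{equation*}
\langle\hbu^t,\buz\rangle=\frac{\langle\bu^t,\buz\rangle}{\|\bu^t\|_2}\;\to\;\frac{\mu_t}{\sqrt{1+\mu_t^2}}\, ,
\end{equation*}
and passing $t\to\infty$ yields \eqref{eq:AMPScalProdURect}. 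An analogous computation on the $p$-dimensional side with $\psi(x,y)=(x)_+y$ and $\psi(x,y)=(x)_+^2$ gives
\begin{equation*}
\langle\hbv^t,\bvz\rangle=\frac{\langle(\bv^t)_+,\bvz\rangle}{\|(\bv^t)_+\|_2}\;\to\;\frac{\E[(\tau_t V+G)_+ V]}{\sqrt{\E[(\tau_t V+G)_+^2]}}=\F_V(\vartheta_t/\sqrt\alpha)\, ,
\end{equation*}
with $\tau_t=\vartheta_t/\sqrt\alpha$; passing $t\to\infty$ proves \eqref{eq:AMPScalProdVRect}.

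\textbf{The Rayleigh quotient.} Rewriting \ref{eq:AMPnonsym} as $\bX f(\bv^t)=\bu^t+\ons_t g(\bu^{t-1})$ and using $\hbu^t=\bu^t/\|\bu^t\|_2$, $\hbv^t=f(\bv^t)/\sqrt n$, one obtains the exact identity
\begin{equation*}
\langle\hbu^t,\bX\hbv^t\rangle=\frac{\|\bu^t\|_2}{\sqrt n}+\ons_t\cdot\frac{\langle\bu^t,\bu^{t-1}\rangle}{\|\bu^t\|_2\,\|\bu^{t-1}\|_2}\, .
\end{equation*}
The first term converges to $\sqrt{1+\mu_t^2}$ by the computation above. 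For the Onsager coefficient $\ons_t=\|(\bv^t)_+\|_0/(\sqrt n\|(\bv^t)_+\|_2)$, apply Proposition \ref{prop:SEgeneral} with the indicator $\psi(x,y)=\ind(x>0)$ (allowed by its last sentence) and with $\psi(x,y)=(x)_+^2$, combine with $\sqrt{p/n}\to\sqrt\alpha$, and use Stein's identity $\E[G(\tau V+G)_+]=\E[\Phi(\tau V)]=\prob(\tau V+G>0)$ to conclude that $\ons_t\to\sqrt\alpha\,\G_V(\vartheta_t/\sqrt\alpha)$. The remaining ratio is handled by polarization, $2\langle\bu^t,\bu^{t-1}\rangle=\|\bu^t\|_2^2+\|\bu^{t-1}\|_2^2-\|\bu^t-\bu^{t-1}\|_2^2$, combined with Proposition \ref{lem:diffGoesToZero} applied with $\ell=1$: taking $n\to\infty$ first and then $t\to\infty$ shows that the ratio tends to $1$. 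Combining the pieces and using $1+\mu^2=1+\beta\F_V(\vartheta/\sqrt\alpha)^2$ gives $\lim_{t}\lim_{n}\langle\hbu^t,\bX\hbv^t\rangle=\sqrt{1+\mu^2}+\sqrt\alpha\,\G_V(\vartheta/\sqrt\alpha)=\R\rec(\vartheta,\alpha)$.

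\textbf{Main obstacle.} The delicate part is \eqref{eq:limAMPRayleighNonSym}, since Proposition \ref{prop:SEgeneral} only controls each iterate marginally in $t$, while the Rayleigh quotient naturally produces the cross term $\langle\bu^t,\bu^{t-1}\rangle$ mixing two iterations. Proposition \ref{lem:diffGoesToZero} is exactly what bridges this gap: it forces the cosine between consecutive iterates to $1$ in the iterated limit, removing any need for a joint state-evolution argument over several indices. A secondary non-routine step is identifying $\lim_n\ons_t$ with $\sqrt\alpha\,\G_V(\vartheta_t/\sqrt\alpha)$, which requires extending state evolution to the (non-pseudo-Lipschitz) indicator function and then invoking Stein's lemma to rewrite $\prob(\tau_t V+G>0)$ in the form dictated by Definition \ref{def:functionsFGRTS}.
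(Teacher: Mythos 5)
Your proposal is correct and follows essentially the same route as the paper: decompose $\langle\hbu^t,\bX\hbv^t\rangle$ via the \ref{eq:AMPnonsym} update, identify $\|\bu^t\|_2/\sqrt{n}$ and the Onsager term $\ons_t$ through Proposition \ref{prop:SEgeneral} (with the indicator extension and Stein's lemma), kill the cross term between consecutive iterates via Proposition \ref{lem:diffGoesToZero}, and pass $t\to\infty$ using convergence of the state evolution sequence. The only cosmetic difference is that you control the cross term by polarization of $\langle\bu^t,\bu^{t-1}\rangle$ while the paper bounds the error term $E_t(n)$ directly by $\|\bu^t-\bu^{t-1}\|_2$; both rest on the same proposition.
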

The proof of this theorem can be found in Section \ref{sec:ProofLowerBounds}.

\subsection{Computational complexity}

As a direct consequence of the characterization of AMP established
in Propositions \ref{prop:AMPsym} and \ref{prop:SEgeneral}, 
we can upper bound the number of iterations needed for Algorithms
\ref{eq:AMPnonsym} and \ref{eq:AMPsym} to converge.
We point out that the cost of each step of the AMP algorithms is dominated by a matrix vector multiplication. This operation can easily be parallelized and performed efficiently.

To be definite, we state the next result in the case of symmetric
matrices. A completely analogous statement holds for rectangular
matrices.
\begin{proposition} \label{prop:geometricConvergence} 
For any law $\mu_V\in\cP$ and any $\delta>0$ there exists a constant $t_0(V,\delta)<\infty$ such that the following holds true.
Under the assumptions of Proposition \ref{prop:AMPsym}, let
$\{\hbv^t\}_{t\ge 0}$ be the sequence of estimates produced by
AMP. Then,  for all fixed $t\ge t_0$ we have
\begin{align}
\lim_{n\to\infty}\prob\Big(\<\hbv^t,\bX\hbv^t\>\ge
(1-\delta)\max_{\bv\ge 0, \|\bv\|=1}\<\bv,\bX\bv\>\Big) = 1\, .
\end{align}
\end{proposition}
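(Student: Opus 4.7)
The proof combines the upper bound of Lemma \ref{th:upperBoundsSymmetric} with the algorithmic lower bound of Theorem \ref{th:lowerBoundsAMPsymmetric}. By Theorem \ref{th:lowerBoundsAMPsymmetric}, for each fixed $t$ the limit $r(t) := \lim_{n \to \infty} \<\hbv^t, \bX\hbv^t\>$ exists almost surely, is deterministic, and satisfies $\lim_{t\to\infty} r(t) = \R\sym(\T_V(\beta))$. Hence, given $\delta > 0$, I choose a deterministic $t_0 = t_0(V, \delta)$ such that $r(t) \geq (1 - \delta/4)\R\sym(\T_V(\beta))$ for all $t \geq t_0$; for each such $t$, the almost-sure convergence gives $\<\hbv^t, \bX\hbv^t\> \geq (1 - \delta/3)\R\sym(\T_V(\beta))$ with probability tending to $1$ as $n \to \infty$.

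Next I upper bound $\lambda^+(\bX)$. Lemma \ref{th:upperBoundsSymmetric} already yields $\limsup_{n\to\infty} \E \lambda^+(\bX) \leq \R\sym(\T_V(\beta))$. To turn this into a high-probability bound I invoke Gaussian concentration: writing $\bZ = \bW/\sqrt{n}$ with $\bW$ having unit-variance symmetric Gaussian entries, the map $\bW \mapsto \max_{\bv \geq 0, \|\bv\|=1} \<\bv, (\beta\bvz\bvz^{\sT} + \bW/\sqrt{n})\bv\>$ is $1/\sqrt{n}$-Lipschitz with respect to the Frobenius norm of $\bW$, since $\max_{\|\bv\|=1}|\<\bv, \bA\bv\>|$ is bounded by the operator norm and hence by the Frobenius norm of $\bA$. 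Borell--TIS therefore gives $\lambda^+(\bX) = \E\lambda^+(\bX) + O_P(n^{-1/2})$, so that with probability tending to $1$, $\lambda^+(\bX) \leq (1 + \delta/4) \R\sym(\T_V(\beta))$.

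Combining the two bounds for $t \geq t_0$, with probability tending to $1$ we have $\<\hbv^t, \bX\hbv^t\>/\lambda^+(\bX) \geq (1 - \delta/3)/(1 + \delta/4) \geq 1 - \delta$ after shrinking the initial tolerances if necessary. The rectangular case proceeds identically, using Lemma \ref{th:upperBounds} and Theorem \ref{th:lowerBoundsAMPnonsymmetric} in place of their symmetric analogues, and noting that $(\bu, \bv) \mapsto \<\bu, \bX\bv\>$ is again $1$-Lipschitz in $\bW$ up to the $1/\sqrt{n}$ scaling, so the same Borell--TIS argument applies to $\sigma^+(\bX)$.

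The main (though mild) technical subtlety is that Theorem \ref{th:lowerBoundsAMPsymmetric} is stated as the iterated limit $\lim_{t\to\infty}\lim_{n\to\infty}$, whereas the proposition requires $t$ fixed (but large) with $n \to \infty$. This is handled by freezing $t$ at a value where $r(t)$ is close enough to $\R\sym(\T_V(\beta))$ \emph{before} taking the limit in $n$, which is legitimate because $r(t)$ is a deterministic function of $t$ and $\mu_V$ alone. The remaining ingredient is the tight concentration of $\lambda^+$ around its mean; this is standard Gaussian concentration once the Lipschitz dependence of $\lambda^+$ on the noise matrix is observed.
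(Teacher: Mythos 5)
Your proposal is correct and follows essentially the same route as the paper: the paper's proof simply combines Theorem \ref{th:mainSym} (almost-sure convergence of $\lambda^+(\bX)$ to $\R\sym(\T_V(\beta))$) with the AMP lower bound of Theorem \ref{th:lowerBoundsAMPsymmetric}, freezing $t$ at a large deterministic value exactly as you do. Your only deviation is cosmetic: instead of citing Theorem \ref{th:mainSym} for the upper bound you re-derive it from Lemma \ref{th:upperBoundsSymmetric} plus Gaussian concentration for the $1$-Lipschitz map $\bX\mapsto\lambda^+(\bX)$, which is precisely the argument used inside the paper's proof of that theorem.
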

The proof of this statement follows immediately from Theorem
\ref{th:mainSym} and
\ref{th:lowerBoundsAMPsymmetric}. A more careful treatment of error
terms in the latter can be used to show that --indeed-- $t_0(V,\delta)
\le C(V)\log(1/\delta)$ for some finite constant $C(V)$.

Notice that the computational cost of AMP is dominated by the one of matrix
vector multiplications, call it $T_\text{mult}$. 
The above discussion indicates that  the average-case complexity of the algorithms
\ref{eq:AMPnonsym} and \ref{eq:AMPsym} is
 $O(T_\text{mult} \log 1/\delta)$.

\section{Numerical illustration}
\label{sec:Numerical}

We carried out numerical simulations on synthetic data generated
following \ref{eq:SymmetricModel}. 
We use a signal $\bvz$ that takes two values:
\begin{align}
(\bvz)_i = \begin{cases}
\frac{1}{\sqrt{n\eps}} & \mbox{ if $i\in S$,}\\
0 & \mbox{otherwise,}
\end{cases}
\end{align}
where $S\subseteq [n]$ is of size $|S|=n\eps$. 
It is immediate to see that the sequence $\{\sqrt{n}\bvz(n)\}_{n\ge
  0}$ converges in empirical distribution to a random variable with
distribution
\begin{align}
 V = \begin{cases}
\varepsilon^{-1/2}\quad & \text{with probability}\quad \varepsilon, \\ 
 0  \quad&  \text{with probability}\quad 1-\varepsilon\, \end{cases} 
\end{align}
In other words $\mu_V$ is the 2-points mixture
$\mu_V= (1-\eps)\delta_0 +\eps\, \delta_{1/\sqrt{\eps}}$.

The predictions of Theorem  \ref{th:mainSym} are stated in terms of
the function $\F_V\equiv\F_{\eps}$ that is rather explicit in this case. We have
\begin{align}\label{eq:FVepsilon} 
\F_{\eps}(x) &= 
\frac{\eps\,  B(x/\sqrt{\eps})/x}
{\sqrt{(1-\eps)/2+\eps(B(x/\sqrt{\eps})+\Phi(x/\sqrt{\eps}))} }\, ,\\
B(w) & \equiv w^2\Phi(w)+w\,\phi(w)\, .
\end{align}

\subsection{Comparison with classical PCA}

We implemented the algorithm \ref{eq:AMPsym}, and report in Figure
\ref{fig:symSpik} the results of numerical simulations with $n =10\,000$,
 sparsity level $\varepsilon \in\{ 0.001, 0.1,
0.8\}$, and signal-to-noise ratio $\beta\in\{0.05,0.10,\dots,1.5\}$.
In each case we run AMP for $t=50$ iterations and plot the empirical
average of $\<\hbv^t,\bvz\>$ over $32$ instances. 
The algorithm convergence is fast and --for our purposes-- this value
of $t$ is large enough so that $\tau_t\approx \T_V(\beta)$ and
$\hbv^t\approx \bv^+$. (See below for further evidence of this point.)

The results agree well with the asymptotic predictions of
Theorem \ref{th:mainSym}, namely with the curves reporting
$\F_V(\T_V(\beta))$.
The figure also illustrates that sparse vectors (small $\eps$)
correspond to the least favorable signal in small signal-to-noise
ratio.
The value $\beta = 1/\sqrt{2}$ corresponds to the phase transition.
\begin{figure}\label{fig:symSpik}
\begin{center}
\includegraphics[width = 12cm]{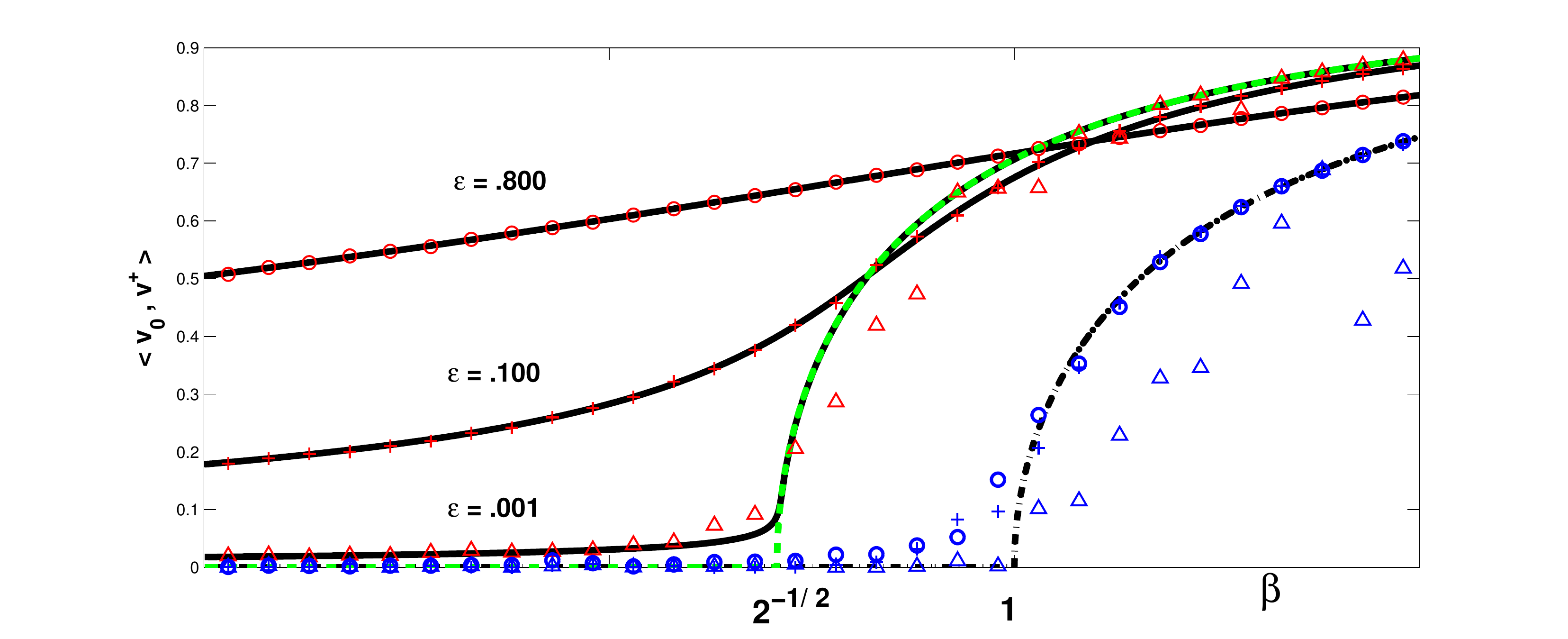}
\caption{Numerical simulations  with the Symmetric Spiked Model. Black
  lines represent the theoretical predictions of Theorem
  \ref{th:mainSym}, and dots represent empirical values of 
$\langle \hbv^t , \bvz \rangle$ for the AMP estimator  (in red) and
$\langle \bv_1 , \bvz \rangle$ for \ref{eq:SamplePCA} (in blue). The dashed
red line corresponds to the limit behavior as $\varepsilon \to 0$. 
 In the right hand side of the plot, blue dots and dashed black line
 correspond to $\langle \bv_1 , \bv_0 \rangle$ and the theoretical
 prediction }
\end{center}
\end{figure}

\subsection{Deviation from the asymptotic behavior}

Theorem \ref{th:mainSym} and
Proposition \ref{prop:AMPsym} predict the value of $\< \bvz , \bv^+\>$
and $\< \bvz , \hbv^t\>$ in the limit $n
\to \infty$. It is natural to question the  validity of the prediction for
moderate values of  $n$. 

In order to investigate this point, we performed numerical experiments
with AMP
by generating instances of the problem for several values of $n$
and compared the results  with the asymptotic
prediction of Eq. (\ref{eq:FVepsilon}).
The top left-hand frame   in Figure \ref{fig:asymptotic}
is obtained with $n = 50, 500, 5000$, $\eps = 0.05$ and several value
of $\beta$. For each point we plot the average of $\<\hbv^t,\bvz\>$
after $t=60$  iterations, over 32 instances.% \am{How many}.
 Already at $n=500$ the agreement is good, and improving with $n$. 

In the top-right plot we plot the deviation between the empirical
averages of $\<\hbv^t,\hbv\>\approx \<\hbv^+,\bvz\>$ (over 32 instances) and the
asymptotic prediction $\F_V(\T_V(\beta))$.
The data  suggest
\begin{align}
 \<\hbv^+,\bvz\> \approx \F_V(\T_V(\beta)) +\, A\, n^{-b}\, ,
\end{align}
with $b\approx 0.5$.

In the bottom frames we plot the theoretical and empirical (for $n =
1000$) values of $\<\hbv^t,\bvz\>$ for a grid of parameters $\beta,\eps$. 
The difference between the two has average $1\cdot10^{-3}$ and standard deviation $3\cdot 10^{-2}$. 
 
\begin{figure}
\begin{center}
\includegraphics[width = 8cm]{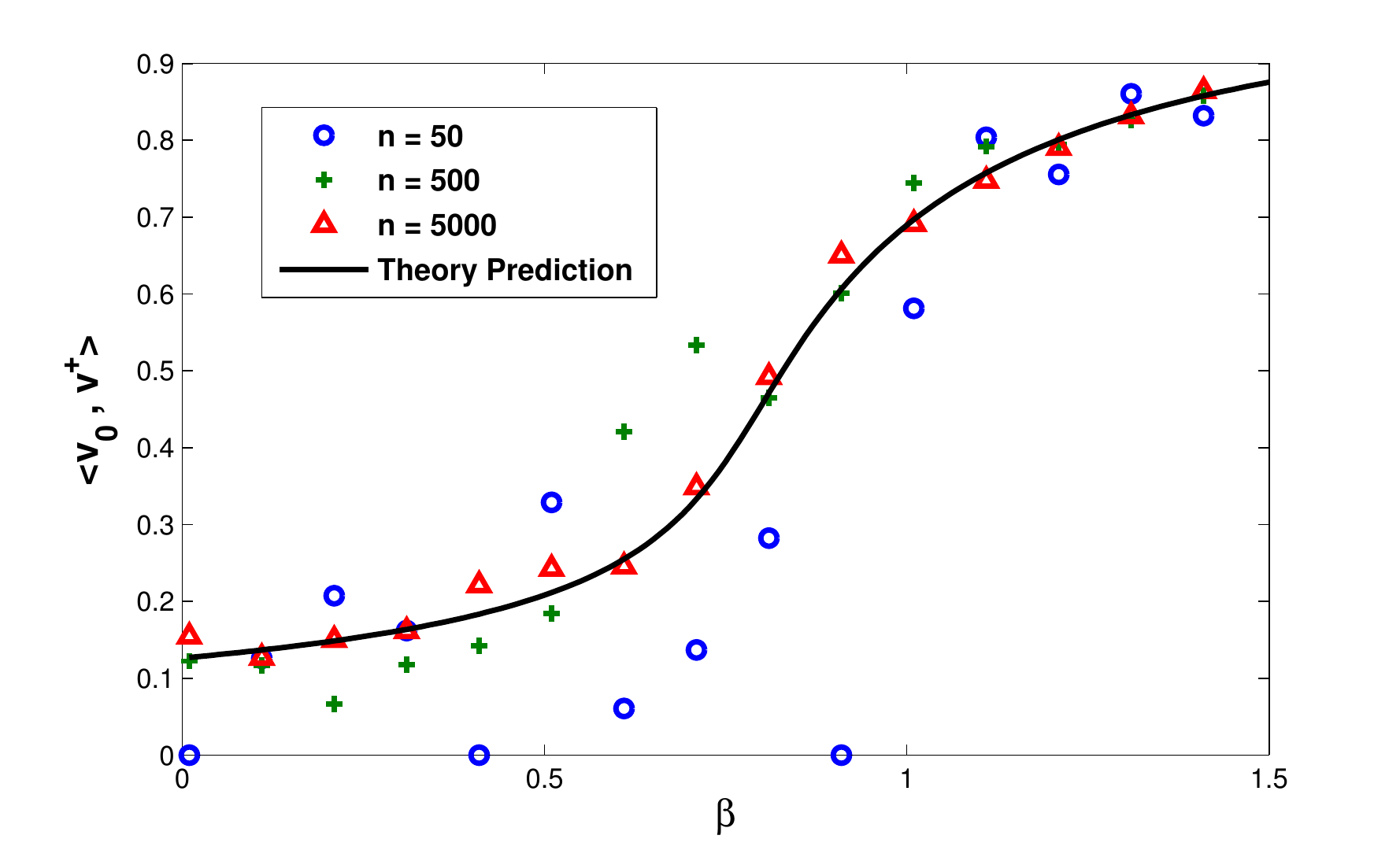} \includegraphics[width = 8cm]{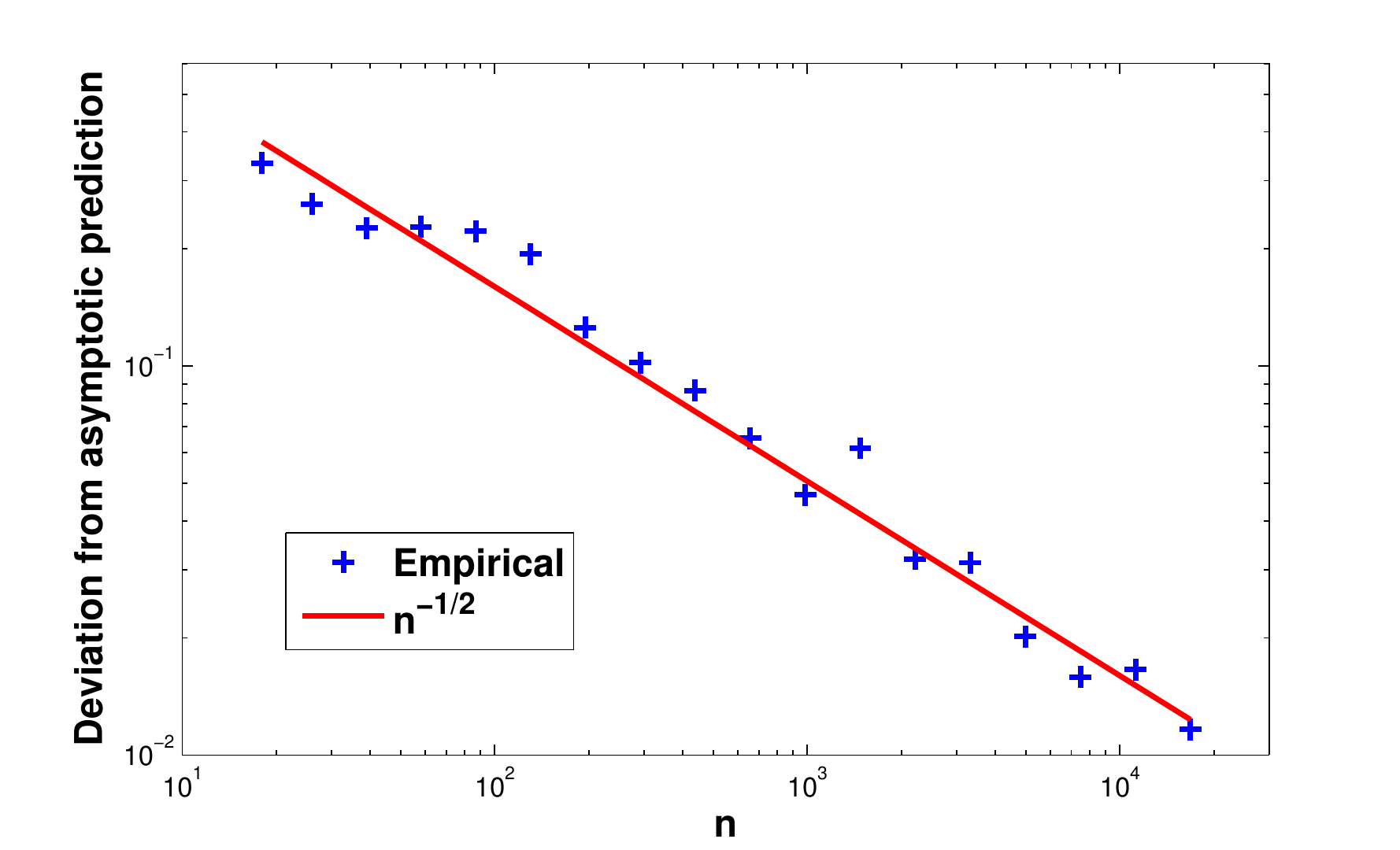}\\
\includegraphics[width = 8cm]{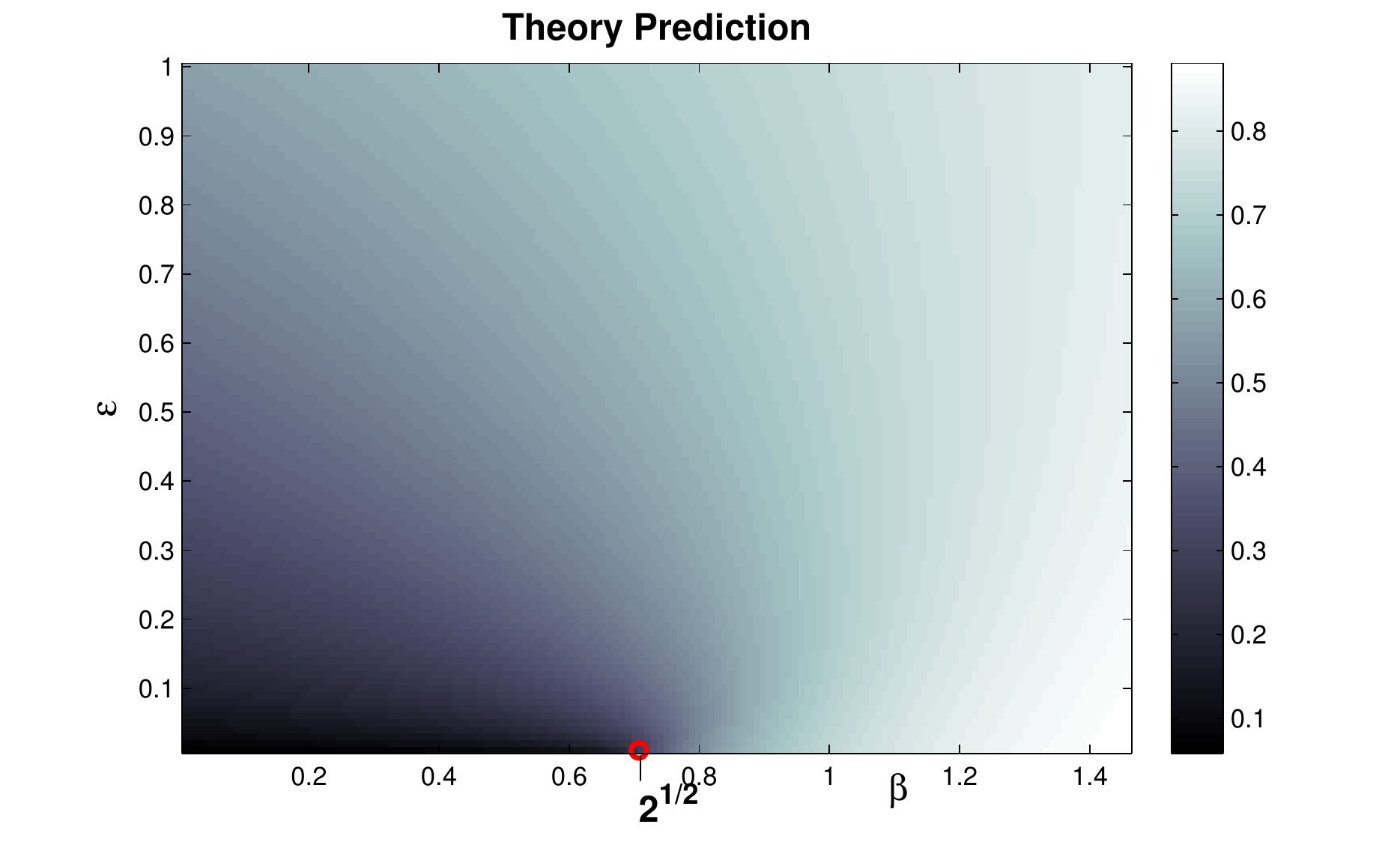} \includegraphics[width = 8cm]{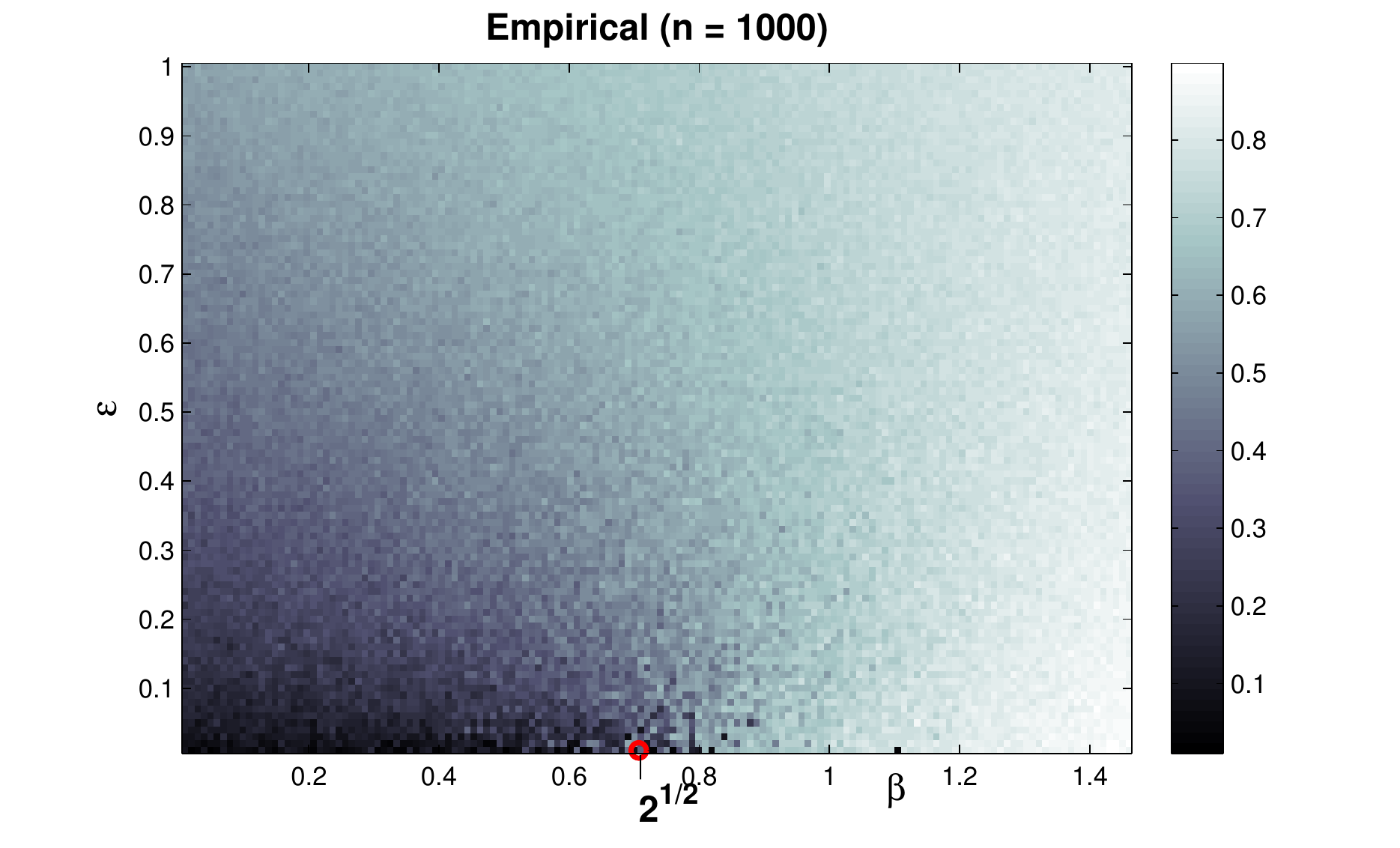}
\caption{Comparison of theoretical prediction and empirical results
  for $\<\hbv^t,\bvz\>\approx \<\bv^+,\bvz\>$ for moderate values of
  $n$ (see main text).}\label{fig:asymptotic}
\end{center}
\end{figure}

\subsection{Comparison with a convex relaxation}

A natural convex relaxation for the \ref{eq:PositivePCASymm} problem
is the semi-definite program 
\begin{equation}\tag*{SDP}\label{eq:SDPrelaxation}
\begin{aligned}
\text{maximize}&~~ \< \bX , \bW\> \, ,\\
 \text{subject to} &~~ \bW \succeq 0\, , \\
  & ~~\text{Trace}(\bW) = 1 \, ,\\
   &~~ \bW \geq 0\, .
\end{aligned}
\end{equation}
It is known \cite{burer09} that for $n \geq 5$ the completely positive cone is strictly included in the doubly non-negative cone 
$$\text{conv} \left \{ \bv \bv^\sT~:~\bv\in \reals_{\geq 0}^n\right \}
\subsetneq \{ \bW~:~\bW \geq 0~,~\bW \succeq 0\}~.$$
Hence in general this relaxation is not tight.
The solution is a symmetric non-negative matrix $\hat \bW$. We extract
the leading eigenvector $\bv_1(\hat \bW)$ and use its positive part as
our approximation for $\bv^+$.

In simulations we use CVX \cite{CVX}  to solve
\ref{eq:SDPrelaxation}, and compare the result to the output of AMP
stopped after $t =
50$ iterations. 
The interior point solver  of CVX forces us to consider small
problems. We use $n= 50$, $\beta = 1/\sqrt 2$, $\eps = 0.3$, and
average over $50$ instances.

On a 2.8 GHz Core 2 Duo with 8GB of RAM, CVX stops after about $40$ seconds and a Matlab
implementation of AMP after $2$ ms.  
On average, the convex relaxation method achieves scalar product
$\E\<\bvz,\bv_1(\hat \bW)_+ \>  = 0.54\pm 0.02$,
while denoting by $\bv^+_{\text{AMP}}$ the output of AMP, we obtain
$\E\<\bvz,\bv^+_{\text{AMP}}\> = 0.55 \pm 0.02$.  
In Figure \ref{fig:AMPvsCVX} we compare the values reached by each
algorithm over the 50 instances of the experiment with the 
predicted asymptotic value value $\F_V(\T_V(1/\sqrt 2)) \approx 0.53$.
 The plot suggests that indeed both methods solve to high accuracy the same problem.
\begin{figure}
\begin{center}
\includegraphics[width = 12cm]{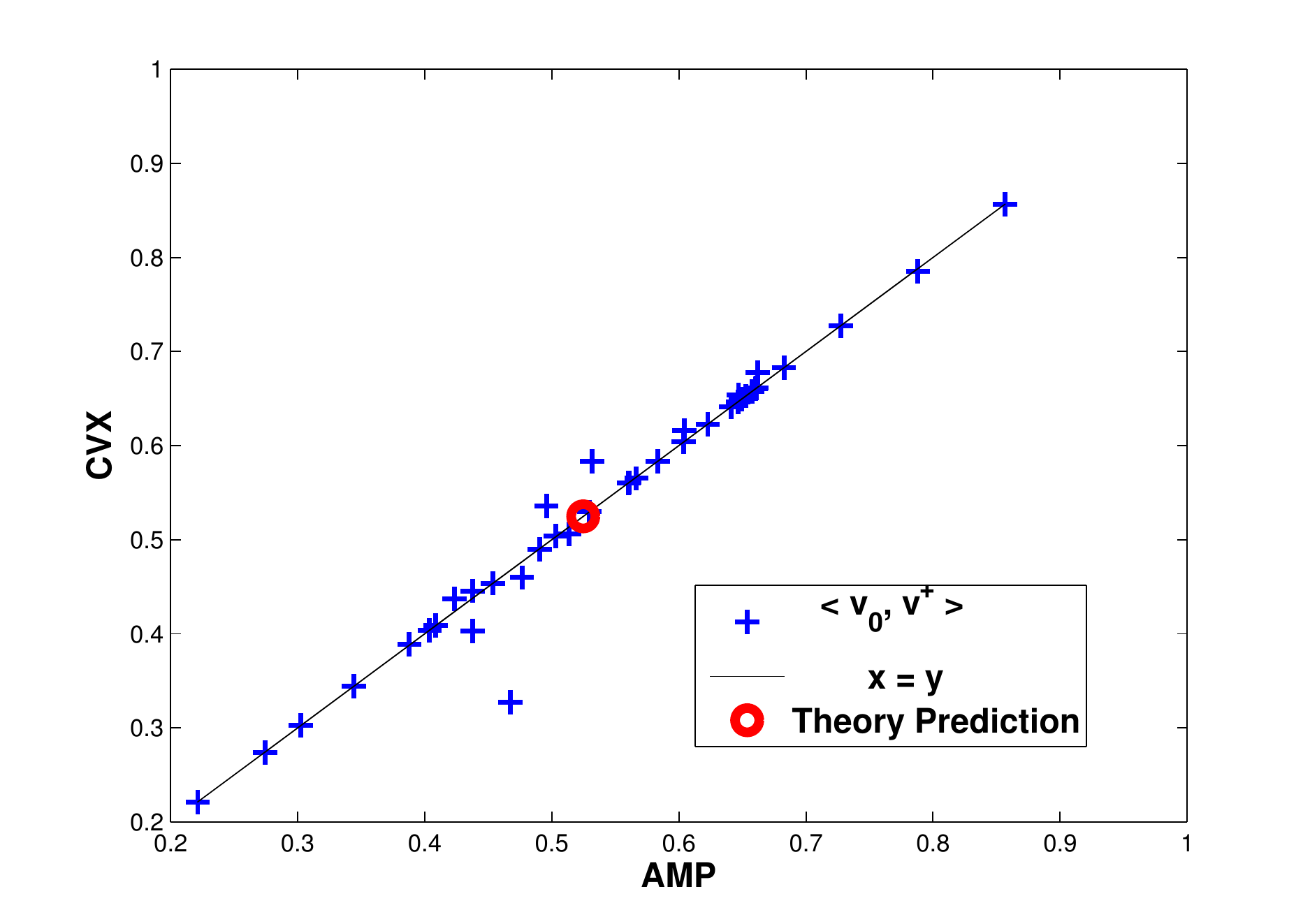}
\caption{Comparing the AMP estimator with the estimator obtained by
  convex relaxation. We plot $\<\bvz,\bv_1(\hat \bW)_+ \>$  (the
  correlation achieved by convex optimization) versus $\<\bvz,\bv^+_{\text{AMP}} \>$  (the
  correlation achieved by AMP), for $50$ random instances.}\label{fig:AMPvsCVX}
\end{center}
\end{figure}

%
%
%*******************************************
%

%%%%%%%%%
%% PROOF  SECTION
%%%%%%%%%

\section{Proofs }\label{sec:proofUpperBounds}

Given a random variable $V$, with $\E(V^2)<\infty$, it  is useful to
define the function 
$\D_V:\reals\to\reals$, by
\begin{align}
\D_V(x) = \E\{(xV+G)_+^2\}\, .\label{eq:DDefinition}
\end{align}

\subsection{Preliminaries}\label{sec:preliminaryProof}

In this section we establish several useful properties of 
the functions $\F_V$, $\G_V$, $\R\sym$, $\R\rec$  introduced in 
Definition \ref{def:functionsFGRTS}. Throughout $V$ is a random
variable with law $\mu_V$ supported on $\reals_{\ge 0}$ and such that 
$\E(V^2) =\int x^2 \,\mu_V(\de x) = 1$. 
Note that, in particular, $V\neq 0$ with strictly positive probability. As before, we let $\eps = \eps(V) =
\prob(V\neq 0)$.

All statements concern
these functions in their domain, namely 
$\F_V$, $\G_V$, $\R\sym:\reals_{\ge 0}\to \reals$, $\R\rec:\reals_{\ge
0}\times\reals_{>0}\to\reals$. Given a function $x\mapsto f(x)$, we
will use $f'(x)$, $f''(x)$ to indicate its first and second derivatives.
\begin{lemma}\label{ref:FGvaluesMonotony} 
Both $\F_\bV$ and $\G_\bV$ are strictly positive, differentiable   and upper bounded by 1.
Further $\F_V$ is strictly  increasing on $\reals$, with $\F_V'(x)>0$ for all $x\in \reals $,   $\G_V$ strictly decreasing on $\reals_{\geq 0}$, with $\G_V'(x)<0$ for all
$x\ge 0$, and $\D_V$ is strictly convex on $\reals$.

Finally  $\F_V(0) = \E V /\sqrt{\pi}$.
and therefore $\F_V(0) \in (0, \sqrt {(\eps/\pi)}]$ and $\G_V(0) = 1 /
\sqrt 2$, $\lim_{x \to +\infty}\F_V(x) = 1$,
$\lim_{x\to-\infty}\F_V(x) = 0$, and $\lim_{x \to \infty} \G_V(x) = 0$. 
  \end{lemma}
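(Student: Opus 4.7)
The plan is to reduce everything to two main ingredients: (a) differentiation under the integral sign of $\D_V(x)=\E[(xV+G)_+^2]$, justified by the fact that $y\mapsto y_+$ is $1$-Lipschitz (so dominated convergence applies), and (b) Gaussian integration by parts, which for each fixed $v\ge 0$ gives
\begin{equation*}
\E\bigl[G(xv+G)_+\bigr]=\E\bigl[\mathbb{I}(xv+G>0)\bigr]=\Phi(xv),
\end{equation*}
and hence $\E[G(xV+G)_+]=\E[\Phi(xV)]>0$. Combined with the elementary bound $V(xV+G)_+\ge 0$, item (b) immediately yields strict positivity of both $\F_V$ and $\G_V$, while Cauchy--Schwarz applied to $\E[V(xV+G)_+]^2\le \E[V^2]\,\D_V(x)=\D_V(x)$ and to $\E[G(xV+G)_+]^2\le \E[G^2]\,\D_V(x)$ yields the upper bound $1$.

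For the derivatives, differentiation under the expectation shows $\D_V'(x)=2\E[V(xV+G)_+]$ (so $\D_V'(x)=2\sqrt{\D_V(x)}\,\F_V(x)$) and $\D_V''(x)=2\E[V^2\mathbb{I}(xV+G>0)]$, which is strictly positive since $\prob(V>0)=\eps>0$ and $\prob(xV+G>0\mid V=v)>0$ for any $v>0$; this gives strict convexity of $\D_V$. The key observation I would use to handle $\G_V$ cleanly is the algebraic identity
\begin{equation*}
(xV+G)_+^2=xV\,(xV+G)_+ + G\,(xV+G)_+,
\end{equation*}
which after taking expectations and using the Stein identity becomes
\begin{equation*}
\sqrt{\D_V(x)}\;=\;x\,\F_V(x)+\G_V(x).
\end{equation*}
Differentiating this identity (and using $\tfrac{d}{dx}\sqrt{\D_V(x)}=\F_V(x)$) collapses everything to the relation $\G_V'(x)=-x\,\F_V'(x)$, so strict monotonicity of $\G_V$ on $(0,\infty)$ follows from strict positivity of $\F_V'$.

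The main technical point, then, is showing $\F_V'(x)>0$ strictly. With $N(x)=\E[V(xV+G)_+]$, a direct computation gives
\begin{equation*}
\F_V'(x)=\frac{N'(x)\,\D_V(x)-N(x)^2}{\D_V(x)^{3/2}},\qquad N'(x)=\E\bigl[V^2\mathbb{I}(xV+G>0)\bigr].
\end{equation*}
Writing $N(x)=\E[V\mathbb{I}(xV+G>0)\cdot(xV+G)]$ and applying Cauchy--Schwarz to the pair $A=V\mathbb{I}(xV+G>0)$, $B=(xV+G)\mathbb{I}(xV+G>0)$ yields $N(x)^2\le N'(x)\,\D_V(x)$; strictness requires ruling out proportionality $A=cB$ on the event $\{xV+G>0\}$, which fails because conditional on any $V=v>0$ the variable $B$ is non-degenerate (it has a density from $G$) while $A=v$ is constant. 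This is the one step that needs care.

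Finally, the explicit values and limits are a direct unwinding. At $x=0$ we get $\D_V(0)=\E[G_+^2]=1/2$, so $\G_V(0)=\Phi(0)/\sqrt{1/2}=1/\sqrt{2}$ and $\F_V(0)=\E[V]\E[G_+]/\sqrt{1/2}=\E[V]/\sqrt{\pi}$; the bound $\F_V(0)\le\sqrt{\eps/\pi}$ follows from $\E[V]^2\le \E[V^2]\prob(V\neq 0)=\eps$. For $x\to+\infty$ I would rescale by $x$: $V(xV+G)_+/x\to V^2$ and $(xV+G)_+^2/x^2\to V^2$, with integrable dominating functions $V^2+V|G|$ and $(V+|G|)^2$, giving $\F_V(x)\to 1$. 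For $x\to-\infty$ the Cauchy--Schwarz bound $\F_V(x)\le \sqrt{\E[V^2\mathbb{I}(xV+G>0)]}$ together with $V^2\mathbb{I}(V=0)=0$ and $\mathbb{I}(xV+G>0)\to 0$ a.s.\ on $\{V>0\}$ gives $\F_V(x)\to 0$; and $\G_V(x)=\E[\Phi(xV)]/\sqrt{\D_V(x)}\to 0$ as $x\to+\infty$ since the numerator is bounded by $1$ while $\sqrt{\D_V(x)}\sim x$.
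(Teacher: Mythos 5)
Your proof is correct and follows essentially the same route as the paper's: writing $\F_V(x)=\E\{V(xV+G)_+\}/\sqrt{\D_V(x)}$, using $\F_V=\frac{\de}{\de x}\sqrt{\D_V}$ and strict Cauchy--Schwarz to get $\F_V'>0$, deducing $\G_V'(x)=-x\F_V'(x)$ from the identity $x\F_V(x)+\G_V(x)=\sqrt{\D_V(x)}$, and dominated convergence for the limits. Your treatment of the equality case in Cauchy--Schwarz and of the $x\to-\infty$ limit is in fact slightly more explicit than the paper's, and is fine.
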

\begin{proof}[Proof of Lemma \ref{ref:FGvaluesMonotony}] 
Positivity is immediate from the definition. The upper bound $1$
follows Cauchy-Schwarz inequality. 
To prove differentiability, we write $\F_V(x) = \Y(x) / \sqrt{\D_V(x)}$ 
with
\begin{align}
\Y(x) = \E V(xV+G)_+\, .
\end{align}
both differentiable (by dominated convergence) since  $V$ and $G$ have
bounded second moments, and strictly positive. Therefore  $\F_V$ is
differentiable. 

A direct calculation yields the following relations
\begin{align}
\frac{\de\D_V}{\de x}(x) &= 2\Y(x)\, , \\
\F_V(x) &= \frac{\de \phantom{x}}{\de x}\sqrt{\D_V(x)}\, ,\label{eq:FDerivative}\\
\frac{\de \Y}{\de x}(x)&= \E\{V^2 \one_{x V+G>0} \}\, ,\\
\D_V(x)\frac{\de \F_V}{\de x}(x) & = \D_V(x) \frac{\de\Y}{\de
  x} (x) -\frac{1}{2}\Y(x) \frac{\de}{\de x} \D_V(x)\, .
\end{align}
Using the last expression (and substituting the previous ones), we see that, 
to prove that $\F_V$ is increasing, it is sufficient to prove that 
\begin{align}
\left \{ \E V(xV+G)_+ \right \}^2 < \left (\E V^2 \one_{x V+G>0}
\right )  \left ( \E(xV+G)_+^2 \right )~~,
\end{align}
which directly follows from Cauchy-Schwarz inequality, even for $x<0$, and 
equality can not hold as $V$ and $G$ are independent. 

In order to show that $\G_V$ is decreasing on $\reals_{\geq 0}$ first observe that  for any
$x>0 $, $x~\F_\bV(x) + \G_\bV(x) = \sqrt{\D_V(x)}$.
Differentiating with respect to $x$ and using
Eq.~(\ref{eq:FDerivative}), we get
\begin{align}
x  \frac{\de\F_V}{\de x}(x) = - \frac{\de\G_V}{\de x} (x)\, .
\end{align}
Since $\F_V$ is strictly increasing, it follows that  $ \G_V$ is
strictly decreasing.

Finally, the values at $x=0$ are obtained by simple calculus. The
limits as $x\to \pm\infty$ follow by applying dominated convergence
both to the numerator and to the denominator of $\F_V(x)$ (or
$\G_V(x)$),  after dividing both by $x$.
\end{proof}

\begin{lemma}\label{lem:FGinnerproducts} 
Let $n,p \in \naturals$, $\bg \sim  \normal(0,\id_n), \bh \sim  \normal(0,\id_p)$ and, for each integer $p$, let $\bvz(p)\in
\reals^p$ be a deterministic vector with $\|\bvz(p)\|_2=1$ and such that $\{\sqrt{p}\,
\bvz(p)\}_{p\ge 0}$ converges  in empirical distribution to $\bV \in
\cP$. Similarly, for an integer $n$ let $\buz(n)\in
\reals^n$ be a deterministic vector such that $\|\buz(n)\|_2=1$ and $\{\sqrt{n}\,
\buz(n)\}_{n\ge 0}$ converges  in empirical distribution to $U$ with
$\E~ U^2 = 1$. Then, for any $b\in \reals$ there exists a sequence
$\{\delta_n(b)\}$, with $\delta_n(b)\to 0$ as $n\to\infty$ such that 
\begin{align}
\E\Big\{\Big\|\frac{1}{\sqrt{n}}\bg+b\,\buz
   \Big\|_2\Big\} &\le \sqrt{1+b^2}\, ,\label{eq:FirstScalar}\\
\E\Big\{\Big\|\Big(\frac{1}{\sqrt{p}}\bh+b\bvz\Big)_+\Big\|_2\Big\}& \le \sqrt{\D_V(b)}+\delta_n\, . \label{eq:SecondScalar}
\end{align}
\end{lemma}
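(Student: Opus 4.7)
The plan is to reduce both bounds to computations of the expected squared norm via Jensen's inequality, and then to invoke the definition of convergence in empirical distribution (which, crucially, includes convergence of second moments) to evaluate the resulting limit in case (\ref{eq:SecondScalar}).

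For (\ref{eq:FirstScalar}), by Jensen's inequality applied to the concave function $\sqrt{\,\cdot\,}$,
\begin{align*}
\E\Big\|\frac{1}{\sqrt n}\bg + b\,\buz\Big\|_2
\;\le\; \sqrt{\,\E\Big\|\frac{1}{\sqrt n}\bg + b\,\buz\Big\|_2^{2}\,}\, .
\end{align*}
Expanding the squared norm and using $\E\bg = 0$, $\E\|\bg\|_2^{2} = n$, and $\|\buz(n)\|_2 = 1$, the right hand side equals $\sqrt{1+b^2}$, which yields the claim. Note this part does not require the empirical convergence hypothesis on $\buz$.

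For (\ref{eq:SecondScalar}), I would again apply Jensen's inequality to bound $\E\|(\bh/\sqrt p + b\bvz)_+\|_2$ by $\sqrt{\E\|(\bh/\sqrt p + b\bvz)_+\|_2^{2}}$. Writing $w_i \equiv \sqrt p\,(\bvz(p))_i$ and using that the entries $h_i$ are i.i.d. $\normal(0,1)$,
\begin{align*}
\E\Big\|\Big(\frac{\bh}{\sqrt p} + b\bvz\Big)_+\Big\|_2^{2}
\;=\; \frac{1}{p}\sum_{i=1}^{p}\E\bigl(G + b\,w_i\bigr)_+^{2}
\;=\; \frac{1}{p}\sum_{i=1}^{p}\psi_b(w_i)\, ,
\end{align*}
where $\psi_b(w) \equiv \E(bw+G)_+^{2}$. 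The function $\psi_b$ is continuous and satisfies $0\le \psi_b(w)\le 1+b^{2}w^{2}$, hence has at most quadratic growth.

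The key step is the following: since $\{\sqrt p\,\bvz(p)\}_{p\ge 0}$ converges in empirical distribution to $V$ and $\|\bvz(p)\|_2^{2} = 1 = \E V^{2}$ (so the second moment also converges per Definition \ref{def:convergesEmpirical}), the empirical measures $\mu_{\sqrt p\,\bvz(p)}$ are uniformly integrable against $w\mapsto w^{2}$. Consequently, for any continuous $\psi$ with $|\psi(w)|\le C(1+w^{2})$ one has $p^{-1}\sum_{i=1}^{p}\psi(w_i)\to \E\psi(V)$ (standard truncation argument). Applying this to $\psi_b$ gives
\begin{align*}
\lim_{p\to\infty}\frac{1}{p}\sum_{i=1}^{p}\psi_b(w_i) \;=\; \E\bigl(bV+G\bigr)_+^{2} \;=\; \D_V(b)\, .
\end{align*}
Thus $\E\|(\bh/\sqrt p + b\bvz)_+\|_2^{2} = \D_V(b) + o_p(1)$, and taking square roots yields (\ref{eq:SecondScalar}) with $\delta_n(b)\to 0$. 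The only nontrivial point is this last convergence of a quadratically growing test function, which is precisely why Definition \ref{def:convergesEmpirical} requires second-moment convergence in addition to weak convergence.
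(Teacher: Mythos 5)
Your proposal is correct and follows essentially the same route as the paper: Jensen's inequality reduces both bounds to second-moment computations, the first of which is exact, and the empirical-distribution hypothesis gives the limit $\frac1p\sum_i\E(bw_i+G)_+^2\to\D_V(b)$ needed for (\ref{eq:SecondScalar}). The only (minor) difference is in that last step: the paper peels off the quadratic part exactly, writing $\E\{(x+G)_+^2\}=x_+^2+H(x)$ with $H$ bounded and Lipschitz so that plain weak convergence plus $\|\bvz\|_2=1$ suffices, whereas you integrate the quadratically growing test function directly via uniform integrability of $w^2$ under the empirical measures (guaranteed by the second-moment convergence in Definition \ref{def:convergesEmpirical}) — both arguments are valid and of comparable length.
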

\begin{proof}[Proof of Lemma \ref{lem:FGinnerproducts}]
For Eq.~(\ref{eq:FirstScalar}) note that
\begin{align}
\E\Big\{\Big\| \frac{1}{\sqrt{n}} \bg+b\,\buz
\Big\|_2\Big\}&\le 
\sqrt{\E\Big\{\Big\|\frac{1}{\sqrt{n}}\bg+b\,\buz
\Big\|_2^2\Big\}}\\
&= \sqrt{1+b^2\|\buz\|_2^2} = \sqrt{1+b^2}\, .
\end{align}

In order to prove Eq.~(\ref{eq:SecondScalar}), first note that
\begin{align}
\E\Big\{\Big\|\Big(\frac{1}{\sqrt{p}}\bh+b\bvz\Big)_+\Big\|_2\Big\}^2
& \le
\E\Big\{\Big\|\Big(\frac{1}{\sqrt{p}}\bh+b\bvz\Big)_+\Big\|_2^2\Big\}\, .
\end{align}
We then introduce the notation
$K(x) = \E\{(x+G)_+^2\}=(1+x^2)\Phi(x)+x\phi(x)$ and $H(x) = K(x)
-x_+^2$, and $\mu_p = \mu_{\bvz\sqrt{p}}$ for the empirical distribution
of $\{(\bvz))_i\sqrt{p}\}$.
Note that we get
\begin{align}
\E\Big\{\Big\|\Big(\frac{1}{\sqrt{p}}\bh+b\bvz\Big)_+\Big\|_2^2\Big\}& = \frac{1}{p}\sum_{i=1}^pK\big(b(\bvz)_i\sqrt{p}\big)\\
& = b^2 + \int H(b\,v) \mu_p(\de v)\, ,
\end{align}
and
\begin{align}
\Big|\E\Big\{\Big\|\Big(\frac{1}{\sqrt{p}}\bh+b\bvz\Big)_+\Big\|_2^2\Big\}-\D_V(b)\Big|&=
\Big|\int H(b\, v)\, \mu_p(\de v) - \int H(b\, v)\, \mu_V(\de v)
\Big|\, .
\end{align}
Since $x\mapsto H(x)$ is bounded and Lipschitz
continuous on $\reals$, and by assumption $\mu_p$ converges weakly to
$\mu$, the last expression tends to $0$ as $p\to\infty$, which proves
our claim.
\end{proof}

\begin{lemma}\label{lem:Twelldefined}
Each of the equations 
\begin{align}
\beta &= \frac{x}{\F_V(x)} \, ,\label{eq:FixPointSymm} \\
\beta & = \frac{x \sqrt{ 1+  \beta  \F_\bV(x / \sqrt \alpha )^2} }{
  \F_\bV( x / \sqrt \alpha )} \, ,\label{eq:FixPointRec} 
\end{align}
admits a unique non-negative solution for each $\alpha,\beta>0$,
which we denote by $ \T_V(\beta)$ (for Eq.~(\ref{eq:FixPointSymm}))
and $\S_V(\beta,\alpha)$ (for Eq.~(\ref{eq:FixPointSymm})).

Further, we have
\begin{align}
\frac{\de\F_V}{\de x} (\T_V(\beta)) \in (0,1 / \beta)  \,.\label{eq:ClaimDerivative}
\end{align}
\end{lemma}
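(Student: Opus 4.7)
The plan is to reduce both existence/uniqueness statements to a single monotonicity fact, and to read off the derivative bound (\ref{eq:ClaimDerivative}) as a by-product of the same computation. The main intermediate claim I would prove is
\[
\F_V(x) - x\,\F_V'(x) \;>\; 0 \qquad \text{for every } x \ge 0 , \qquad (\ast)
\]
which I will use three times. Granted $(\ast)$, set $\Psi(x) := x/\F_V(x)$ on $[0,\infty)$. Then $\Psi(0)=0$, $\Psi(x)\to\infty$ as $x\to\infty$ (because $\F_V\le 1$, Lemma \ref{ref:FGvaluesMonotony}), and $\Psi'(x) = (\F_V(x) - x\F_V'(x))/\F_V(x)^2 > 0$ on $(0,\infty)$. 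Hence $\Psi:[0,\infty)\to[0,\infty)$ is a strictly increasing continuous bijection, which immediately yields existence and uniqueness of a non-negative $\T_V(\beta)$ solving (\ref{eq:FixPointSymm}). The derivative bound (\ref{eq:ClaimDerivative}) is then a one-liner: strict positivity of $\F_V'$ at $\T_V(\beta)$ is Lemma \ref{ref:FGvaluesMonotony}, and the defining identity $\F_V(\T_V(\beta)) = \T_V(\beta)/\beta$ combined with $(\ast)$ at $x = \T_V(\beta)$ gives $\F_V'(\T_V(\beta)) < \F_V(\T_V(\beta))/\T_V(\beta) = 1/\beta$.

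For the rectangular equation (\ref{eq:FixPointRec}), squaring both sides and setting $y = x/\sqrt{\alpha}$ rewrites it as
\[
\beta^2 \;=\; \frac{x^2}{\F_V(y)^2} + \beta\,x^2 \;=\; \alpha\,\Psi(y)^2 + \alpha\beta\,y^2 .
\]
This is a sum of two non-negative continuous functions of $y \ge 0$, each strictly increasing on $[0,\infty)$ (the first by the monotonicity of $\Psi$ established just above, the second trivially), both vanishing at $y=0$ and diverging as $y\to\infty$. Hence the equation has a unique positive root $y_\star$, giving a unique $\S_V(\beta,\alpha) = \sqrt{\alpha}\,y_\star$.

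The analytic heart of the argument is therefore $(\ast)$. I would use the representation $\F_V = \Y/\sqrt{\D_V}$ together with the identity $\D_V'(x) = 2\Y(x)$ (both already established in the proof of Lemma \ref{ref:FGvaluesMonotony}) to compute
\[
\F_V(x) - x\,\F_V'(x) \;=\; \frac{\big(\Y(x) - x\,\Y'(x)\big)\,\D_V(x) + x\,\Y(x)^2}{\D_V(x)^{3/2}} .
\]
Since $x\,\Y(x)^2 \ge 0$ for $x \ge 0$ and $\D_V > 0$, it is enough to prove $\Y(x) - x\Y'(x) > 0$. Writing $(xV+G)_+ = (xV+G)\,\ind\{xV+G>0\}$ inside the definition of $\Y$ and using $\Y'(x) = \E[V^2\,\ind\{xV+G>0\}]$, the $xV^2$ contributions cancel and leave
\[
\Y(x) - x\,\Y'(x) \;=\; \E\big[V\,G\,\ind\{xV+G>0\}\big].
\]
Conditional Gaussian integration by parts in $G$ gives $\E[G\,\ind\{G > -xV\} \mid V] = \phi(xV)$, and hence $\Y(x) - x\Y'(x) = \E[V\,\phi(xV)]$, which is strictly positive because $V \ge 0$ and $\prob(V > 0) > 0$.

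The only genuinely non-routine step is this last integration-by-parts computation, and it is precisely where the structural non-negativity of $V$ enters the argument: if $V$ could take negative values, $\E[V\phi(xV)]$ could vanish or change sign, and both the uniqueness of $\T_V(\beta)$ and the upper bound in (\ref{eq:ClaimDerivative}) could fail. Everything else is algebraic manipulation together with the continuity, boundedness, and limiting properties of $\F_V$ already supplied by Lemma \ref{ref:FGvaluesMonotony}.
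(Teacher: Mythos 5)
Your proposal is correct, and it reaches the same structural conclusions as the paper (strict monotonicity of the rearranged fixed-point map, limits at $0$ and $\infty$, and the derivative bound read off at the fixed point), but the analytic core is established by a genuinely different computation. The paper proves that $\q(x)=\F_V(x)/x$ is strictly decreasing by showing that $z\mapsto \D_V(\sqrt z)$ is concave, which it reduces (using $V\ge 0$ to write $\E\{(\sqrt z\,v+G)_+^2\}=K(\sqrt{v^2z})$) to concavity of $z\mapsto K(\sqrt z)$, verified in Lemma \ref{lem:KsqrtcCCV} by an explicit second-derivative computation; you instead differentiate $\F_V=\Y/\sqrt{\D_V}$ directly and identify $\Y(x)-x\Y'(x)=\E\{VG\,\one_{xV+G>0}\}=\E\{V\phi(xV)\}>0$ by Gaussian integration by parts, which gives $\F_V(x)-x\F_V'(x)>0$ and hence the equivalent monotonicity of $\Psi=1/\q$. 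Your identity has the merit of isolating exactly where non-negativity of $V$ enters (in the paper it enters more quietly in the reduction to $K(\sqrt{\cdot})$), and it yields $\F_V'(\T_V(\beta))<1/\beta$ in one line, whereas the paper's route produces the concavity of $\D_V(\sqrt z)$ as a reusable structural fact. For the rectangular equation you also deviate: the paper views it as equating $\sqrt{\alpha}/\beta$ to a product of positive decreasing functions of $x$, while you square it and observe that $\alpha\Psi(y)^2+\alpha\beta y^2$ is continuous, strictly increasing from $0$ to $\infty$ in $y=x/\sqrt{\alpha}$, which is arguably more elementary and matches verbatim the defining equation for $\S_V$ in Definition \ref{def:functionsFGRTS}. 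All ingredients you import ($\F_V>0$, $\F_V\le 1$, $\F_V'>0$, $\D_V'=2\Y$, $\Y'(x)=\E\{V^2\one_{xV+G>0}\}$) are indeed supplied by Lemma \ref{ref:FGvaluesMonotony}, so the argument is complete.
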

\begin{proof}[Proof of Lemma \ref{lem:Twelldefined}]
Let us define the function $\q : x \mapsto \q(x) = \F_V(x)/x$. We
already know (by Lemma \ref{ref:FGvaluesMonotony}) that $\F_V(0)>0$,
so $\lim_{x \to 0} \q(x) = \infty$. 
Also, since  $\F_V(x)\le 1$, we have $\lim_{x \to \infty}\q(x) =
0$. Further $\F_V$ is differentiable and hence so is $\q$ on
$(0,\infty)$. It is therefore sufficient to prove that $\q$ is
strictly decreasing to prove existence and uniqueness of the solution
of Eq.~(\ref{eq:FixPointSymm}).

Recall that  (cf. Eq.~(\ref{eq:FDerivative})):
\begin{align}
\F_V(x) = \frac{\de\phantom{x}}{\de x} \sqrt{\D_V(x)}\quad\text{where}\quad \D_V(x) = \E\{
(xV +G)_+^2\}~. 
\end{align}
We will prove that $z \mapsto \D_V(\sqrt z)$ is concave. This
implies that $\q$ is decreasing: indeed, by the last equation we have
 $$ \q(x) = 2 \frac{ \de\phantom{x^2}}{ \de (x^2)} \sqrt{\D_V(x)}~. $$
Applying the change of variable $x = \sqrt z$, we get 
\begin{align}
\frac{\de\phantom{x}}{\de x}\q(x) = 4x
\frac{\de^2\phantom{x^2}}{\de(x^2)^2} \sqrt{\D_V(x)} = 
4 \sqrt z \frac{\de^2\phantom{z}}{\de z^2} \sqrt{\D_V(\sqrt z)}  =
2\sqrt z \left ( \frac{\frac{\de^2\phantom{z}}{\de z^2}\D_V(\sqrt z)}{\sqrt{\D_V( \sqrt z)}} - 
\frac{(\frac{\de\phantom{z}}{\de z}\D_V(\sqrt z))^2}{2\D_V(\sqrt z)^{3/2}}  \right ) ~~. 
\end{align}
%
% 4 \sqrt z 
This shows that the derivative of $\q(x)$ is strictly negative
provided that $\frac {\de^2\phantom{z}}{\de z^2} \D_V(\sqrt z)$ 
is non-positive, or $z \mapsto \D_V(\sqrt z)$ concave. Indeed the second
term in the last expression is strictly negative because $\frac
{\de \phantom{x}}{\de x} \D_V(x)>0$, cf. Lemma
\ref{ref:FGvaluesMonotony} and Eq.~(\ref{eq:FDerivative})

 We can write $\D_V(\sqrt z)$ as
$$\D_V(\sqrt z) =  \int \E_G \left \{ (\sqrt z v + G)_+^2\right \} \, 
\de\mu_V(v)~,$$
(where $\E_G$ denotes expectation with respect to $G\sim\normal(0,1)$) 
which, since $v\ge 0$ shows that our claim follows from concavity of
$z \mapsto \E_G \left \{ (\sqrt z  + G)_+^2\right \} \equiv K(\sqrt z)$, see Lemma \ref{lem:KsqrtcCCV}.

\begin{lemma}\label{lem:KsqrtcCCV}
The function $z\mapsto K(\sqrt{z})$ is concave on $\reals_{\ge 0}$.
\end{lemma}
\begin{proof} We have $ \E\{(G+x)_+^2\} = (x^2 + 1) \Phi(x) + x \phi(x)$ and $K(\sqrt z) = (z+1) \Phi(\sqrt z) + \sqrt z \phi(\sqrt z)$ so 
$$\frac{\de}{\de z} K(\sqrt z) = \Phi(\sqrt z) + \frac{1}{\sqrt z} \phi(\sqrt z)\quad \Rightarrow \quad \frac{\de^2}{\de z^2} K(\sqrt z) = - \frac 12 z^{-3/2} \phi(\sqrt z) < 0~.$$
\end{proof}
This concludes the proof that Eq.~(\ref{eq:FixPointSymm}) admits a
unique positive solution.

Consider now existence and uniqueness of solutions of Eq.~(\ref{eq:FixPointRec}). 
Note that this is equivalent to proving that for every $\beta>0$  there exists a unique $x>0$ such that 
$$ \frac {\sqrt{\alpha}}{\beta} = \q(x / \sqrt \alpha)\,  \frac{1}{\sqrt{1+\beta \F_V(x / \sqrt \alpha)^2}}~~.$$
We know that $\F_V$ is an increasing function, so $x \mapsto
1/\sqrt{1+\beta \F_V(x / \sqrt \alpha)^2}$ is a decreasing function
taking positive values. The result follows by using monotonicity of $\q$.

In order to prove Eq.~(\ref{eq:ClaimDerivative}), notice that the
lower bound follows from Lemma \ref{ref:FGvaluesMonotony}. For the
upper bound, observe that
$\q'(x) = \left (\F_V'(x) - \q(x)\right ) /x$. By evaluating it at
$\T_V(\beta)$, and using $\q(\T_V(\beta)) = 1 / \beta$, $\q'(x)\le 0$, we get $\beta \F_V'(\T_V(\beta)) < 1$. 
\end{proof}

\begin{lemma}\label{lem:Rvariations}
Let $\T_V(\beta)$ and $\S_V(\beta,\alpha)$ be defined as per
Eq.~(\ref{lem:Twelldefined}).
 
Then the function $x\mapsto \R\sym(x)$ is strictly increasing on
$(0,\T_V(\beta))$ and  strictly decreasing on
$(\T_V(\beta),+\infty)$. 
Similarly $ x\mapsto\R\rec(x ,\alpha)$ is  strictly increasing on
$(0,\S_V(\beta,\alpha))$  and 
strictly decreasing on $(\S_V(\beta,\alpha),+\infty)$. 
\end{lemma}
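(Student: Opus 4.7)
The plan is to compute $\R\sym{}'$ and $\partial_x\R\rec(\cdot,\alpha)$ directly, factor out a strictly positive quantity using the identity $\G_V'(x)=-x\,\F_V'(x)$ (which was derived inside the proof of Lemma~\ref{ref:FGvaluesMonotony}), and then identify the remaining factor with exactly the defining equation of $\T_V(\beta)$ (resp.\ $\S_V(\beta,\alpha)$) from Lemma~\ref{lem:Twelldefined}. Uniqueness of the critical point is then handed to us by Lemma~\ref{lem:Twelldefined}, and the sign on each side of the critical point is determined by a simple continuity/boundary argument.

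For the symmetric case, differentiating $\R\sym(x)=\beta\F_V^2(x)+2\G_V(x)$ and substituting $\G_V'(x)=-x\F_V'(x)$ yields
$$
\R\sym{}'(x) \;=\; 2\F_V'(x)\bigl(\beta\F_V(x)-x\bigr).
$$
Since $\F_V'(x)>0$ by Lemma~\ref{ref:FGvaluesMonotony}, the sign of $\R\sym{}'(x)$ equals that of $\tilde h(x)\equiv\beta\F_V(x)-x$. Now $\tilde h(0)=\beta\F_V(0)>0$, $\tilde h(x)\to-\infty$ as $x\to\infty$ (since $\F_V\le 1$), $\tilde h$ is continuous, and by Lemma~\ref{lem:Twelldefined} it has a unique positive zero, located at $\T_V(\beta)$. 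By the intermediate value theorem $\tilde h>0$ on $(0,\T_V(\beta))$ and $\tilde h<0$ on $(\T_V(\beta),\infty)$, which gives the claimed monotonicity of $\R\sym$.

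For the rectangular case, write $y=x/\sqrt{\alpha}$ and differentiate
$\R\rec(x,\alpha)=\sqrt{1+\beta\F_V(y)^2}+\sqrt{\alpha}\,\G_V(y)$. The chain rule combined with $\G_V'(y)=-y\F_V'(y)$ gives
$$
\partial_x\R\rec(x,\alpha) \;=\; \frac{\F_V'(y)}{\sqrt{\alpha}}\left(\frac{\beta\F_V(y)}{\sqrt{1+\beta\F_V(y)^2}}-x\right).
$$
Again $\F_V'(y)>0$, so the sign is controlled by $h(x)\equiv\beta\F_V(y)/\sqrt{1+\beta\F_V(y)^2}-x$. Squaring the bracket, the equation $h(x)=0$ is exactly $x^2(1+\beta\F_V(y)^2)=\beta^2\F_V(y)^2$, whose unique positive solution is $\S_V(\beta,\alpha)$ by Lemma~\ref{lem:Twelldefined}. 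Since $h(0)>0$ and $h(x)\to-\infty$ as $x\to\infty$ (the first term is uniformly bounded by $\sqrt{\beta}$), continuity and uniqueness of the zero again force $h>0$ on $(0,\S_V(\beta,\alpha))$ and $h<0$ on $(\S_V(\beta,\alpha),\infty)$.

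The only subtle point is the factorization of the derivative: one must make sure to use the identity $\G_V'(y)=-y\F_V'(y)$ that appeared in the proof of Lemma~\ref{ref:FGvaluesMonotony}, and then recognize the resulting bracket as the square-root form of the defining equation of $\S_V(\beta,\alpha)$ (for the rectangular case). Once this identification is made, the monotonicity statements reduce to a one-variable IVT argument for $\tilde h$ and $h$, using only Lemmas~\ref{ref:FGvaluesMonotony} and~\ref{lem:Twelldefined}; no new analytical estimate is required.
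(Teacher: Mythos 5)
Your proof is correct and follows essentially the same route as the paper: differentiate $\R\sym$ and $\R\rec(\cdot,\alpha)$, use the identity $\G_V'(x)=-x\,\F_V'(x)$ to factor out the strictly positive $\F_V'$, and identify the remaining bracket's zero with $\T_V(\beta)$, resp.\ $\S_V(\beta,\alpha)$. The only (harmless) difference is in the last step: the paper reads off the sign of the bracket from the strict monotonicity of $x\mapsto \F_V(x)/x$ and its rectangular analogue established inside the proof of Lemma~\ref{lem:Twelldefined}, whereas you invoke only the uniqueness of the fixed point from the statement of that lemma together with the boundary values and the intermediate value theorem, which is equally valid.
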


\begin{proof}[Proof of Lemma \ref{lem:Rvariations}]
Recall that letting $\D_V(x) \equiv \E \{ (x\bV+\bG)_+^2\}$ we
have, for any  $x\ge 0 $, $x\F_\bV(x) + \G_\bV(x) = \sqrt{\D_V(x)}$ and
$\F_V(x) = \frac{\de\phantom{x}}{\de x} \sqrt{\D_V(x)}$. As a
consequence $x  \F'_\bV(x) = -\G'_\bV(x)$, and therefore,
for all $\beta,x>0$, 
\begin{equation}\label{eq:derivativeRayleigh}
 \frac{\de\phantom{x}}{\de x} \R\sym(x)  = 2 x \left ( \beta~
   \frac{\F_\bV(x)}{x} - 1 \right ) \frac{\de\phantom{x}}{\de x}
 \F_\bV(x)\, .
 \end{equation}
Recall that, by Lemma \ref{ref:FGvaluesMonotony}, $\F_V'(x)>0$. 
Further, as per the proof of Lemma \ref{lem:Twelldefined}, 
$x\mapsto \q(x) = \F_V(x)/x$ is strictly decreasing with
$\q(\T_V(\beta))=1/\beta$. This immediately implies the claim for
$\R\sym$.

 The argument for  $\R\rec(\,\cdot\, ,\alpha)$ is completely
 analogous. 
We write the derivative of $\R\rec(x,\alpha)$ with respect to $x$: 
\begin{equation*}
 \frac{\partial\phantom{x}}{\partial x} \R\rec(x,\alpha)  =  x \left
    ( \beta~ \frac{\F_\bV(x / \sqrt \alpha )}{x \sqrt{1+\beta ~\F_V(x
        / \sqrt \alpha)^2}} - 1 \right ) \frac{\de\phantom{x}}{\de x}
  \F_\bV(x / \sqrt \alpha) ~.
\end{equation*}
The claim follows again from $\F_V'>0$, and using  the properties of $x \mapsto  {\F_\bV(x / \sqrt \alpha )} / \{x \sqrt{1+\beta~ \F_V(x / \sqrt \alpha)^2}\}$ already discussed in the proof of Lemma \ref{lem:Twelldefined}.
\end{proof}

\begin{lemma}\label{lem:convergenceFixedPointSymm} 
Let the state evolution sequence $\{ \tau_t\}_{t  \geq
  0}$ defined by $\tau_1 = \beta\E V$ and $\tau_{t+1} = \beta \F_V(\tau_t)$
for all $t\ge 1$. 
Then, for  any law $\mu_V$, there exist constants $c_0,c_1 >0, \gamma_0 \in
(0,1)$ such that, for all $t\ge 1$
\begin{align}
| \T_V(\beta) - \tau_t | \leq c_0 \gamma_0^t
\quad\quad \text{and}\quad\quad | \R\sym(\T_V(\beta)) -
\R\sym(\tau_t)| \leq c_1 \gamma_0^{2t} ~~.
\end{align}
\end{lemma}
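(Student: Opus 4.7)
The plan is to view the recursion as the fixed-point iteration $\tau_{t+1} = G(\tau_t)$ for the map $G(\tau) \equiv \beta\F_V(\tau)$, whose unique non-negative fixed point is exactly $\tau^{*} \equiv \T_V(\beta)$ by Lemma \ref{lem:Twelldefined}. The first claim will follow from standard cobweb-type monotone convergence combined with a local contraction argument at $\tau^{*}$; the second claim will follow from the first together with the fact (Lemma \ref{lem:Rvariations}) that $\R\sym$ attains its maximum at $\tau^{*}$, so its first derivative vanishes there and Taylor expansion upgrades geometric rate to squared geometric rate.

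First I would establish monotone convergence $\tau_t \to \tau^{*}$. By Lemma \ref{ref:FGvaluesMonotony}, $G$ is strictly increasing on $\reals_{\ge 0}$ and takes values in $(0,\beta)$. By the argument in Lemma \ref{lem:Twelldefined}, the function $\q(\tau) = \F_V(\tau)/\tau$ is strictly decreasing with $\q(\tau^{*}) = 1/\beta$, which translates to $G(\tau) > \tau$ for $\tau \in (0,\tau^{*})$ and $G(\tau) < \tau$ for $\tau > \tau^{*}$. Monotonicity of $G$ then keeps iterates on a single side of $\tau^{*}$ and the sequence is monotone and bounded, hence converges, necessarily to $\tau^{*}$. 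Next I would upgrade to a geometric rate using Eq.~(\ref{eq:ClaimDerivative}): $G'(\tau^{*}) = \beta\F_V'(\tau^{*}) \in (0,1)$. Pick any $\gamma_0 \in (G'(\tau^{*}), 1)$; by continuity of $G'$ there is an open neighborhood $I$ of $\tau^{*}$ on which $G' \le \gamma_0$. The monotone convergence guarantees a finite time $t_1 = t_1(V,\beta)$ with $\tau_t \in I$ for all $t \ge t_1$, and the mean value theorem gives $|\tau_{t+1} - \tau^{*}| \le \gamma_0 |\tau_t - \tau^{*}|$ on $I$. Telescoping from $t_1$ and absorbing the bounded finite transient $t < t_1$ by inflating the constant produces $|\tau_t - \tau^{*}| \le c_0 \gamma_0^t$ for all $t \ge 1$.

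For the bound on $\R\sym$, Lemma \ref{lem:Rvariations} gives $(\R\sym)'(\tau^{*}) = 0$, and $\R\sym = \beta \F_V^2 + 2\G_V$ is twice continuously differentiable on $(0,\infty)$ since $\F_V,\G_V$ are smooth (being smooth functionals of the Gaussian-expectation integrals $\D_V,\Y$). Second-order Taylor expansion at $\tau^{*}$ yields
\begin{equation}
\R\sym(\tau^{*}) - \R\sym(\tau_t) \;=\; -\tfrac{1}{2}(\R\sym)''(\xi_t)\,(\tau_t - \tau^{*})^2
\end{equation}
for some $\xi_t$ between $\tau_t$ and $\tau^{*}$. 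Since $(\R\sym)''$ is continuous and $\{\xi_t\}$ lies in a compact neighborhood of $\tau^{*}$ for $t$ large, $|(\R\sym)''(\xi_t)|$ is bounded by a constant depending only on $\mu_V$, and combining with the first part gives $|\R\sym(\tau^{*}) - \R\sym(\tau_t)| \le c_1 \gamma_0^{2t}$, with finitely many small-$t$ terms absorbed into $c_1$.

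The only genuine subtlety is in the first step: making sure the iterates cannot overshoot $\tau^{*}$ before entering the contraction window, and that the pre-contraction transient has a length that can be folded into $c_0$. Both points rest cleanly on the strict monotonicity of $\q$ and of $G$ together with the strict inequality $\beta\F_V'(\tau^{*}) < 1$ from Eq.~(\ref{eq:ClaimDerivative}), all of which are already available. The Rayleigh estimate is then a routine Taylor argument.
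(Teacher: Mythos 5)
Your proposal is correct and follows essentially the same route as the paper: monotone convergence of $\tau_t$ to $\T_V(\beta)$ from the sign structure of $\beta\F_V(x)-x$ and monotonicity of $\F_V$, geometric rate from $\beta\F_V'(\T_V(\beta))\in(0,1)$ (Eq.~(\ref{eq:ClaimDerivative})), and the squared rate for $\R\sym$ from its quadratic behavior at the maximizer $\T_V(\beta)$ (Lemma \ref{lem:Rvariations}). You merely spell out the local-contraction and Taylor-expansion details that the paper's terse proof leaves implicit.
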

\begin{proof}[Proof of Lemma \ref{lem:convergenceFixedPoint}]
We proved in Lemma \ref{lem:Twelldefined} that $x\mapsto \beta
\F_V(x)$
is monotone increasing with $\beta\F_V(x)>x$ if $x<\T_V(\beta)$ and
$\beta\F_V(x)<x$ if $x>\T_V(\beta)$. It follows that
$\tau_{t+1}>\tau_t$ if $\tau_t<\T_V(\beta)$ and $\tau_{t+1}>\tau_t$ if
$\tau_t>\T_V(\beta)$. Hence $\lim_{t\to\infty}\tau_t=\T_V(\beta)$
Convergence is exponentially fast, i.e.
$| \T_V(\beta) - \tau_t | \leq c_0 \gamma_0^t$, since, by Lemma
\ref{lem:Twelldefined}
$\beta\F'_V(\T_V(\beta))\in(0,1)$.

This proves the first second inequality. Note that $\T_V(\beta)$ is
the global  maximum  of $x\mapsto \R\sym(x)$ and hence, in a
neighborhood of $\T_V(\beta)$, 
$ | \R\sym(\T_V(\beta)) -
\R\sym(\tau_t)|\le c_*(\tau_t-\T_V(\beta))^2$
 \end{proof}
We state without proof the analogous result for the rectangular case. 
The argument is exactly the same as  for the symmetric case. 
\begin{lemma}\label{lem:convergenceFixedPoint} 
Let the state evolution sequence $\{\mu_t,\vth_t\}_{t\ge 0}$ 
be defined by the recursion \ref{eq:seqThetaMu} with the initial condition $\mu_0 = \sqrt \beta \E\bV$.
For any law $\mu_V$  there exist constants $k_0,k_1 >0, \kappa_0 \in
(0,1)$ such that, for all $t\ge 1$,
\begin{align}
\forall t\geq 0~,~| \S_V(\beta,\alpha) - \vth_t | \leq k_0
\kappa_0^t \quad\quad \text{and}\quad\quad |
\R\rec(\S_V(\beta,\alpha),\alpha) - \R\rec(\vth_t,\alpha)| \leq k_1
\kappa_0^{2t} ~~.
\end{align}
\end{lemma}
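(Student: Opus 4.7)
The plan is to reduce the two-variable recursion \ref{eq:seqThetaMu} to a one-dimensional iteration on $\{\vth_t\}$ and then mimic, step by step, the proof of Lemma~\ref{lem:convergenceFixedPointSymm}. Substituting the first line of \ref{eq:seqThetaMu} into the second gives, for every $t \ge 1$, $\vth_{t+1} = \Psi(\vth_t)$ with
\begin{align*}
\Psi(x) \equiv \frac{\beta\, \F_V(x/\sqrt{\alpha})}{\sqrt{1+\beta\, \F_V(x/\sqrt{\alpha})^2}}\, .
\end{align*}
The positive fixed points of $\Psi$ are precisely the positive solutions of Eq.~\eqref{eq:FixPointRec}, so by Lemma~\ref{lem:Twelldefined} the unique such fixed point is $\S_V(\beta,\alpha)$. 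Note that $\vth_1 = \sqrt{\beta}\,\mu_0/\sqrt{1+\mu_0^2}>0$ is fixed by the initialization.

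Next I would verify that $\Psi$ is strictly increasing on $\reals_{\ge 0}$: this follows from strict monotonicity of $\F_V$ (Lemma~\ref{ref:FGvaluesMonotony}) and of $y \mapsto y/\sqrt{1+y^2}$ on $\reals_{\ge 0}$. Combined with $\Psi(0)>0$, continuity, and uniqueness of the positive fixed point, this forces $\Psi(x)>x$ on $(0,\S_V(\beta,\alpha))$ and $\Psi(x)<x$ on $(\S_V(\beta,\alpha),\infty)$. Hence $\{\vth_t\}_{t\ge 1}$ is eventually monotone and converges to $\S_V(\beta,\alpha)$.

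The main obstacle is obtaining a geometric rate, i.e.\ showing $\Psi'(\S_V(\beta,\alpha)) \in (0,1)$. Differentiation gives
\begin{align*}
\Psi'(x) = \frac{\beta}{\bigl(1+\beta\,\F_V(x/\sqrt{\alpha})^2\bigr)^{3/2}}\cdot \frac{\F_V'(x/\sqrt{\alpha})}{\sqrt{\alpha}}\, ,
\end{align*}
which is positive since $\F_V'>0$. For the upper bound I would adapt the argument used to prove Eq.~\eqref{eq:ClaimDerivative}: the proof of Lemma~\ref{lem:Twelldefined} shows that $x \mapsto \F_V(x/\sqrt{\alpha})/\bigl(x\sqrt{1+\beta \F_V(x/\sqrt{\alpha})^2}\bigr)$ is strictly decreasing and equals $1/\beta$ at $x = \S_V(\beta,\alpha)$. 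Differentiating this identity and simplifying using the fixed-point relation $\S_V = \beta \F_V(\S_V/\sqrt{\alpha})/\sqrt{1+\beta \F_V(\S_V/\sqrt{\alpha})^2}$ yields exactly $\Psi'(\S_V(\beta,\alpha)) < 1$. Choosing any $\kappa_0 \in (\Psi'(\S_V(\beta,\alpha)),1)$, the standard contraction estimate gives $|\vth_t - \S_V(\beta,\alpha)| \le k_0 \kappa_0^t$ for some $k_0>0$.

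Finally, for the bound on $\R\rec$, I would invoke Lemma~\ref{lem:Rvariations}: the function $x\mapsto \R\rec(x,\alpha)$ attains its global maximum at $\S_V(\beta,\alpha)$, hence its derivative vanishes there and by smoothness (differentiability of $\F_V$ and $\G_V$, Lemma~\ref{ref:FGvaluesMonotony}) one has the quadratic bound $|\R\rec(\S_V(\beta,\alpha),\alpha) - \R\rec(\vth_t,\alpha)| \le c_*(\vth_t - \S_V(\beta,\alpha))^2$ in a neighborhood of the maximum. Combining with the first estimate (and noting that $\vth_t$ eventually enters this neighborhood) yields the desired rate $\kappa_0^{2t}$ with $k_1 = c_* k_0^2$.
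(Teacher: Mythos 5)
Your proof is correct and is essentially the paper's own argument: the paper proves only the symmetric case (Lemma \ref{lem:convergenceFixedPointSymm}) and states Lemma \ref{lem:convergenceFixedPoint} without proof as ``exactly the same,'' and your writeup is precisely that adaptation — collapse \ref{eq:seqThetaMu} to the one-dimensional map $\Psi$, use monotonicity of $x\mapsto \F_V(x/\sqrt\alpha)/(x\sqrt{1+\beta\F_V(x/\sqrt\alpha)^2})$ from Lemma \ref{lem:Twelldefined} to get monotone convergence and $\Psi'(\S_V(\beta,\alpha))\in(0,1)$ (the analogue of Eq.~\eqref{eq:ClaimDerivative}), then the quadratic bound at the maximizer of $\R\rec(\cdot,\alpha)$ via Lemma \ref{lem:Rvariations}. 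No gaps beyond the level of detail the paper itself uses.
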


Our results  are stated in
terms of $\F_V$ and $\G_V$, and depend on the law of
$V$. However, when $\eps(V)  \to 0$, interestingly, two different
phenomena occur. First, our results can be stated independently of law
of $\bV$. Second, a phase transition occurs for a
specific value of the signal-to-noise ratio $\beta$. This is stated
formally below using the notion of uniform convergence introduced in Definition
\ref{def:UniformConvergence}.
\begin{lemma}\label{rk:sparseRegimeFG}
The following limits hold uniformly over the class $\cP$ of
probability distributions on $\reals_{\ge 0}$ with second moment equal
to $1$, and over $x\in [0,M]$ for any $M<\infty$:
\begin{align} 
\lim_{\eps(V)\to 0}\D_V(x) & = \frac{1}{2}+x^2\, ,\label{eq:DLimit}\\
\lim_{\eps(V) \to 0}\F_\bV(x) &= \frac x{ \sqrt{1/2+x^2}} \, ,\label{eq:Flimit}\\
\lim_{\eps(V) \to 0}\G_\bV(x) &= \frac {1/2}{
  \sqrt{1/2+x^2}}~~.\label{eq:Glimit}
\end{align}
Further, again uniformly over $\cP$, for any
$\beta,\alpha\in\reals_{\ge0}$
\begin{align}
\lim_{\eps(V)\to 0}\T_V(\beta) &= \begin{cases}\label{eq:TepsToZero}
0 & \mbox{ if $\beta\le 1/\sqrt{2}$,}\\
\sqrt{\beta^2-(1/2)} & \mbox{ otherwise.}
\end{cases}\\
\lim_{\eps(V)\to 0}\S_V(\beta,\alpha) &= \begin{cases}\label{eq:SepsToZero}
0 & \mbox{ if $\beta\le\sqrt{\alpha/2}$,}\\
\sqrt{\left( \beta^2 - \alpha / 2 \right ) / \left( 1 + \beta \right )} & \mbox{ otherwise.}
\end{cases}
\end{align}
\end{lemma}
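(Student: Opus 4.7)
The overall strategy is to first prove Eqs.~(\ref{eq:DLimit})--(\ref{eq:Glimit}) with explicit, distribution-free error bounds of order $\sqrt{\eps(V)}$, valid uniformly over $x \in [0,M]$ and over $\mu_V \in \cP_{\eps}$, and then transfer these limits to $\T_V(\beta)$ and $\S_V(\beta,\alpha)$ via their defining fixed-point equations, exploiting the monotonicity of $\q(x) = \F_V(x)/x$ established in Lemma~\ref{lem:Twelldefined}.

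For the first step, I decompose every expectation according to $\{V=0\}$ and $\{V>0\}$. On $\{V=0\}$ the integrand involves only $G$ (giving e.g.\ $\E(G_+^2) = 1/2$), while on $\{V>0\}$ I use the conditional normalization $\E[V^2\mid V>0] = 1/\eps(V)$. For $\D_V(x)$ I write
\begin{align*}
\D_V(x) = (1-\eps(V))\tfrac{1}{2} + \eps(V)\,\E[(xV+G)^2\mid V>0] - \eps(V)\,\E[(xV+G)_-^2\mid V>0],
\end{align*}
the middle term equals $x^2 + \eps(V)$, and the last term is bounded by $\eps(V)\,\E(G^2) \le \eps(V)$ using $(xV+G)_-^2 \le G^2$ when $xV \ge 0$. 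Hence $\D_V(x) = \tfrac{1}{2} + x^2 + O(\eps(V))$. The numerator of $\F_V$ is treated identically: the ``main'' contribution is $\eps(V)\cdot x\,\E[V^2\mid V>0] = x$, and the error is controlled by $\eps(V)\,\E[V\mid V>0]\cdot\E|G|$; Cauchy--Schwarz yields $\E[V\mid V>0] \le 1/\sqrt{\eps(V)}$, so the error is $O(\sqrt{\eps(V)})$. For $\G_V$, the bulk contribution $\E[G(xV+G)_+]$ receives $\tfrac{1}{2}$ from $\{V=0\}$ and $\eps(V)$ from $\{V>0\}$, with error $|\eps(V)\,\E[G(xV+G)_-\mid V>0]| \le \eps(V)\,\E(G^2)$, again $O(\eps(V))$. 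Dividing by $\sqrt{\D_V(x)}\ge 1/\sqrt{2}$ yields (\ref{eq:Flimit}) and (\ref{eq:Glimit}) uniformly on $[0,M]$; the error bounds depend on $V$ only through $\eps(V)$, giving the uniformity over $\cP_{\eps}$.

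For the second step I first observe that $\T_V(\beta) \le \beta$ (because $\T_V(\beta) = \beta\F_V(\T_V(\beta)) \le \beta$) and $\S_V(\beta,\alpha)\le\beta/\sqrt{1+\beta}$ (by solving the defining equation with $\F_V\le 1$), so all fixed points live in a compact set on which the uniform convergence of $\F_V$ applies. Rewrite the equation $\beta = x/\F_V(x)$ as $\q_V(x) = 1/\beta$ where $\q_V$ is strictly decreasing (Lemma~\ref{lem:Twelldefined}). Since $\q_V(x) \to 1/\sqrt{1/2+x^2}$ uniformly on $[\delta,\beta]$ for every $\delta>0$, and the limit equals $\sqrt{2}$ at $x=0$, a subsequence argument gives: if $\beta \le 1/\sqrt{2}$, then for every $\delta>0$ and $\eps(V)$ small one has $\q_V(\delta) < 1/\beta$, forcing $\T_V(\beta) < \delta$; if $\beta > 1/\sqrt{2}$, the unique positive root $t^{*} = \sqrt{\beta^2-1/2}$ of the limit equation is approached because $\q_V(t^{*}-\delta) > 1/\beta > \q_V(t^{*}+\delta)$ for $\eps(V)$ small enough and any small $\delta$. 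The same argument, applied to the equation $\alpha s^2(1+\beta\F_V(s)^2) = \beta^2\F_V(s)^2$ with $s = x/\sqrt{\alpha}$, yields (\ref{eq:SepsToZero}): substituting the limit $\F_V(s)^2 \to s^2/(1/2+s^2)$ simplifies the equation to $\alpha(1/2 + (1+\beta)s^2) = \beta^2$, giving $x^2 = (\beta^2-\alpha/2)/(1+\beta)$ when $\beta > \sqrt{\alpha/2}$.

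The main obstacle, and the reason for casting the first step in terms of explicit bounds rather than dominated convergence, is that the family $\{\mu_V : \eps(V)\le\eps\} \subset \cP$ is not tight (the mass $\eps(V)$ sits at order $1/\sqrt{\eps(V)}$), so pointwise limits must be established by crude moment bounds that happen to suffice because the singular integrand $(xV+G)_+$ grows only linearly. The transfer to $\T_V$ and $\S_V$ is delicate near the transition thresholds $\beta = 1/\sqrt{2}$ and $\beta = \sqrt{\alpha/2}$, since the limit function $\q$ is tangent to $1/\beta$ there; away from these thresholds, standard implicit-function considerations based on monotonicity suffice, and at the thresholds the one-sided argument above still yields $\T_V(\beta), \S_V(\beta,\alpha) \to 0$ by monotonicity in $\beta$.
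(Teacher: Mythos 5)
Your proposal is correct, and it reaches Eqs.~(\ref{eq:Flimit})--(\ref{eq:Glimit}) by a genuinely different route than the paper. For the uniform limits of $\D_V$, $\F_V$, $\G_V$, the paper first shows $\D_V(x)-\tfrac12-x^2=\E\{f(xV)\}$ with $f$ bounded and $f(0)=0$ (hence an $O(\eps)$ bound), and then transfers this to $\F_V$ without ever touching the numerator $\E\{V(xV+G)_+\}$: it uses the identity $\F_V=\frac{\de}{\de x}\sqrt{\D_V}$ together with convexity of $\sqrt{\D_V}$ to sandwich $\F_V(x)$ between difference quotients of $\sqrt{\D_V}$, letting $\eps\to0$ and then $\delta\to0$; $\G_V$ then follows from $x\F_V(x)+\G_V(x)=\sqrt{\D_V(x)}$. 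You instead estimate numerator and denominator directly by conditioning on $\{V=0\}$, $\{V>0\}$, using $(xV+G)_+=(xV+G)+(xV+G)_-$, the bound $(xV+G)_-\le |G|$ for $x,V\ge 0$, and Cauchy--Schwarz to get $\E V\le\sqrt{\eps(V)}$; this buys explicit, distribution-free $O(\sqrt{\eps})$ rates uniform in $x$ and avoids invoking convexity, at the price of having to control $\E\{V(xV+G)_+\}$ by hand (which you do correctly). For the transfer to $\T_V$ and $\S_V$, both arguments rest on the same key facts -- strict monotonicity of $\q(x)=\F_V(x)/x$ (and of its rectangular analogue) from Lemma~\ref{lem:Twelldefined}, plus the uniform convergence of $\F_V$ on compacts -- but you organize it as a two-sided bracket $\q_V(t^*-\delta)>1/\beta>\q_V(t^*+\delta)$ combined with the a-priori bounds $\T_V(\beta)\le\beta$, $\S_V(\beta,\alpha)\le\beta/\sqrt{1+\beta}$ (which neatly confine the fixed points to the range where uniformity holds), whereas the paper argues by contradiction along subsequences, separately for the upper and lower bounds; these are essentially equivalent, and your direct treatment of the threshold cases $\beta=1/\sqrt2$, $\beta=\sqrt{\alpha/2}$ (where the limiting root sits at $x=0$) is sound as stated -- the concluding appeal to monotonicity in $\beta$ is not even needed, since $\q_V(\delta)<1/\beta$ already holds there for small $\eps$ by strictness of $1/\sqrt{1/2+\delta^2}<\sqrt2$.
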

\begin{proof}
In order to prove Eq.~(\ref{eq:DLimit}) note that, by taking first the
expectation over $G$ in $\D_V(x) \equiv \E\{(xV+G)_+^2\}$, we get
\begin{align}
\D_V(x) -\Big(\frac{1}{2}+x^2\Big) & = \E\Big\{(1+x^2V^2)\,\Phi(xV) +xV\, \phi(xV)\Big\}-\Big(\frac{1}{2}+x^2\Big) \\
& =
\E\Big\{\big[\Phi(xV)-\Phi(0)\big]+x^2V^2\big[\Phi(xV)-1\big]+xV\phi(xV)\Big\}\equiv
\E\{f(xV)\}\, ,
\end{align}
where $f(z) \equiv
[\Phi(z)-\Phi(0)\big]+z^2\big[\Phi(z)-1\big]+z\phi(z)$. Note that
$f(0) = 0$ and $f(z)$ is bounded, whence
\begin{align}
\Big|\D_V(x) -\frac{1}{2}-x^2\Big| & \le \E\{|f(xV)|\one_{\{V\neq
  0\}}\}\le \|f\|_{\infty}\, \eps\, ,
\end{align}
which yields the desired uniform convergence of $\D_V$.

Next recall that $\F_V(x) = \frac{\de\phantom{x}}{\de
  x}\sqrt{\D_V(x)}$,  cf. Eq.~(\ref{eq:FDerivative}) and, by Lemma \ref{ref:FGvaluesMonotony},
$\sqrt{\D_V(x)}$ is strictly convex. We hence have, for all
$\delta>0$, 
\begin{align}
\frac{1}{\delta}\inf_{\mu_V\in\cP_{\eps}}\big[\sqrt{\D_V(x)}-\sqrt{\D_V(x-\delta)}\big]
\le \inf_{\mu_V\in\cP_{\eps}}\F_V(x)\le
\sup_{\mu_V\in\cP_{\eps}}\F_V(x)\le
\frac{1}{\delta}\sup_{\mu_V\in\cP_{\eps}}\big[\sqrt{\D_V(x+\delta)}-\sqrt{\D_V(x)}\big]\, .
\end{align} 
The claim (\ref{eq:Flimit}) follows by taking the limit $\eps\to 0$ (using
Eq.~(\ref{eq:DLimit})) followed by $\delta\to 0$.
The expression of $\lim_{\eps(V) \to 0} \G_V(x)$ follows by taking the
limit on the identity $x~\F_V(x) + \G_V(x) = \sqrt{\D_V(x)}$.

In order to prove Eq.~(\ref{eq:TepsToZero}), let $\T_0(\beta)$ denote
the function on the right-hand side and assume by contradiction that 
there exists a sequence $\eps_n\to 0$, probability measures
$\mu_{V_n}\in\cP_{\eps_n}$ such that $\lim_{n\to\infty}\T_{V_n}(\beta)
>x_*  = \T_0(\beta)+\delta$ for some $\delta>0$. As shown in the proof
of Lemma \ref{lem:Twelldefined}, $x\mapsto x/\F_V(x)$ is monotone increasing.
Using the definition  we have, for all $n$ large enough
\begin{align}
\beta = \frac{\T_{V_n}(\beta)}{\F_{V_n}(\T_{V_n}(\beta))} \ge
\frac{x_*}{\F_{V_n}(x_*)} \ge \frac{x_*}{\sup_{\mu_V\in\cP_{\eps_n}}\F_V(x_*)}\, .
\end{align}
Taking the limit $n\to\infty$, and using Eq.~(\ref{eq:Flimit}), we get
\begin{align}
\beta\ge \sqrt{\frac{1}{2}+\big(\T_0(\beta)+\delta\big)^2}\, ,
\end{align}
that yields a contradiction by the definition of $\T_0$. Hence $\lim\sup_{\mu_V\in\cP_{\eps}}\T_{V}(\beta)\le\T_0(\beta)$.
The matching lower bound is proved in the same way.

Finally, the proof of Eq. (\ref{eq:SepsToZero}) follows along the same
lines. 
\end{proof}

\subsection{Upper bounds: Proof of Lemma \ref{th:upperBounds}}
\label{sec:ProofUpperBound}

In this section we prove Lemma \ref{th:upperBounds}. As mentioned
before, the proof of Lemma \ref{th:upperBoundsSymmetric} is completely
analogous and omitted.

For $\mu\in [0,1]$, we define
\begin{align}
\cW_{\mu} &\equiv \left \{ (\bu,\bv) \in \reals^n\times \reals^p~:~\| \bu\|_2 = 1
  ~,\| \bv \|_2 = 1 ~,~ \bv\geq 0, 
  \<\bv,\bvz\> =\mu\right \}\,, \\
M_{\bX}(\mu) & \equiv\max\big\{\<\bu,\bX\bv\>:\;\, 
(\bu,\bv)\in\cW_{\mu}\big\}\, ,\\
\oM(\mu) & \equiv\E M_{\bX}(\mu) = \E \max\big\{\<\bu,\bX\bv\>:\;\, 
(\bu,\bv)\in\cW_{\mu}\big\}\, .
\end{align}
Note that 
\begin{align}
\sigma^+(\bX) & = \max_{\mu\in [0,1]}M_{\bX}(\mu)= 
M_{\bX}(\<\bv^+,\bvz\>) \, . \label{eq:SigmaPlus}
\end{align}
The function  $\bX\mapsto M_{\bX}(\mu)$ is Lipschitz continuous with
Lipschitz constant  $1$ (namely $|M_{\bX}(\mu)-M_{\bX'}(\mu)|\le
\|\bX-\bX'\|_F$). Hence, by Gaussian isoperimetry, we have 
\begin{align}
\prob \Big\{ \big|M_{\bX}(\mu) - \oM(\mu)\big| \ge t \Big\}
  \le 2\, 
  e^{-nt^2/2}\, . \label{eq:Isoperimetry}
\end{align}
Further we claim that $\mu\mapsto M_{\bX}(\mu)$ is uniformly
continuous for $\mu\in [0,1]$. Indeed  if $\bv(\mu)
=\sqrt{1-\mu^2}\bv_{\perp}(\mu) +\mu\bv_0$ realizes the maximum
over $\cW_{\mu}$ (with $\<\bv_{\perp}(\mu),\bv_0\>=0$), we have
\begin{align}
M_{\bX}(\mu_1) &=\big\|\bX\bv(\mu_1)\big\|_2\ge \Big\|\bX\Big(
\sqrt{1-\mu_1^2}\,\bv_{\perp}(\mu_0) +\mu_1\bv_0\Big)\Big\|_2\\
&\ge M_{\bX}(\mu_0) - C\|\bX\|_2(\mu_1-\mu_0)^{1/2}\, .
\end{align}
Recall that $\prob\{\|\bX\|\ge
C_1\}\le C_2e^{-n/C_2}$ for some constants $C_1(\alpha), C_2(\alpha)$
\cite{Guionnet}.
Hence, with probability at least $1-C_2e^{-n/C_2}$ we have, for all $\mu_0,
\mu_1\in [0,1]$
\begin{align}
\big|M_{\bX}(\mu_1)-M_{\bX}(\mu_0)\big| &\le C' \, |\mu_1-\mu_0|^{1/2}\, ,
\big|\oM(\mu_1)-\oM(\mu_0)\big| &\le C' \, |\mu_1-\mu_0|^{1/2}\, .
\end{align}
Let ${\cI}_n\equiv \{0,1/n, 2/n,\dots\}\cap [0,1]$ be a grid. 
By the above uniform continuity, we have, with probability at least
$1-C_2e^{-n/C_2}$,
\begin{align}
\sup_{\mu\in [0,1]}\big|M_{\bX}(\mu) - \oM(\mu)\big| \le \sup_{\mu\in
  \cI_n} \big|M_{\bX}(\mu) - \oM(\mu)\big| +C''n^{-1/2}\, .
\end{align}
Using Eq.~(\ref{eq:Isoperimetry}) and union bound over $\cI_n$, 
we conclude that
\begin{align}
\prob \Big\{\max_{\mu\in [0,1]} \big|M_{\bX}(\mu) - \oM(\mu)\big| \ge t \Big\}&
  \le 2\, n\, \exp\Big\{-\frac{n}{2}(t-C''n^{-1/2})^2\Big\} +
  C_2e^{-n/C_2} \le Cn \, e^{-nt^2/4}\, ,
\end{align}
where the last inequality holds for all   $t\le t_0$ with
$t_0$ a suitable constant.
In particular, by Borel-Cantelli we have, almost surely and in expectation,
\begin{align}
\lim_{n\to\infty}\max_{\mu\in [0,1]} \big|M_{\bX}(\mu) -
\oM(\mu)\big| =0\, . \label{eq:MoM}
\end{align}

In order to upper bound $\oM(\mu)$, we apply Vitale's extension of Sudakov-Fernique inequality (see e.g.
\cite[Theorem 1]{vitale2000some} and \cite[Theorem 1]{chatterjee05}
for a quantitative version) 
to the two processes $\{\cX(\bu,\bv)\}$, $\{\cY(\bu,\bv)\}$ indexed by
$(\bu,\bv) \in \cW_{\mu}$ defined as follows:
\begin{align}
\cX(\bu,\bv) & \equiv \<\bu, \bX \bv \>= \sqrt \beta \langle \buz ,
\bu \rangle \langle \bvz, \bv\rangle+\<\bu, \bZ \bv \>\, ,\\
\cY(\bu,\bv) & \equiv \sqrt \beta \langle \buz , \bu \rangle \langle
\bvz, \bv\rangle+\frac 1 {\sqrt n} \left ( \langle \bg , \bu\rangle +
  \langle \bh,\bv\rangle \right )\, ,
\end{align}
for independent random  vectors $\bg \sim \normal(0,\id_n), \bh \sim
\normal(0,\id_p)$.
It is easy to see that $\E\cX(\bu,\bv)= \E\cY(\bu,\bv)$ and 
\begin{align}
\E\big\{\big[\cX(\bu_1,\bv_1)-\cX(\bu_2,\bv_2)\big]^2\big\}&  =\left \{ \E\cX(\bu_1,\bv_1) - \cX(\bu_2,\bv_2)\right \}^2 + 
\frac{2}{n}
\big(1-\<\bu_1,\bu_2\>\<\bv_1,\bv_2\>\big)\, ,\\
\E\big\{\big[\cY(\bu_1,\bv_1)-\cY(\bu_2,\bv_2)\big]^2\big\}&  =  \left \{ \E\cY(\bu_1,\bv_1) - \cY(\bu_2,\bv_2)\right \}^2 + \frac{2}{n}
\big(2-\<\bu_1,\bu_2\>-\<\bv_1,\bv_2\>\big)\, .
\end{align}
Hence $\E\big\{\big[\cX(\bu_1,\bv_1)-\cX(\bu_2,\bv_2)\big]^2\big\}\le
\E\big\{\big[\cY(\bu_1,\bv_1)-\cY(\bu_2,\bv_2)\big]^2\big\}$
(this follows from $1-ab\le 2-a-b$ for $a,b\in [-1,1]$). We conclude
that
\begin{align}
\oM(\mu)
&\le\E \max \left \{ \sqrt \beta \mu\langle \buz , \bu \rangle +\frac 1 {\sqrt n} \left ( \langle \bg , \bu\rangle
    + \langle \bh,\bv\rangle \right ) ~:~(\bu,\bv) \in \cW_{\mu}
\right \}\\
& \le \E \max \left \{ \langle\frac{1}{\sqrt{n}}\bg+\sqrt{\beta}
  \,\mu\buz , \bu \rangle
    + \langle \frac{1}{\sqrt{n}}\bh+\vartheta\bvz,\bv\rangle -\vartheta\mu ~:~(\bu,\bv) \in \cW \right \}
\, .
\end{align}
where  last inequality holds for any $\vartheta\in\reals$, setting $\cW
\equiv \cup_{\mu}\cW_{\mu}$.

The maximum in the last expression is achieved for  
 \begin{align}
 \bu = \frac{\bg +\mu \sqrt {\beta n} \buz }{\|   \bg +\mu \sqrt{
     \beta n}
   \buz\|_2}\quad , \quad \bv = \frac{\left ( \bh +\vartheta {\sqrt n}
     \bvz \right )_+ }{\|\left ( \bh +\vartheta{\sqrt n} \bvz \right
   )_+\|_2}~~. 
\end{align}
Hence, by Lemma \ref{lem:FGinnerproducts}, 
there exists a deterministic sequence $\delta_n=\delta_n(\alpha,\beta,\vartheta)$
independent of $\mu\in [0,1]$, such that $\lim_{n\to\infty}\delta_n=
0$ for any $\vartheta\in\reals$ and
\begin{align}
\oM(\mu)&\le \E\left \{\Big\|\frac{1}{\sqrt{n}}\bg+\sqrt{\beta}
  \,\mu\buz \Big\|_2
    + \Big\| \Big(\frac{1}{\sqrt{n}}\bh+\vartheta\bvz\Big)_+\Big\|_2-\vartheta\mu \right \}\\
&\le \sqrt{1+\beta\mu^2}+
\sqrt{\alpha\D_V(\vartheta/\sqrt{\alpha})}-\vartheta\mu + \delta_n\, ,\label{eq:UpperBoundLimit}
\end{align}
where we recall that $\D_V(x) \equiv \E\{(xV+G)_+^2\}$.

We next fix $\vartheta=\vartheta_*(\alpha,\beta) =
\S_V(\beta,\alpha)$, which is also the unique maximizer of $x\mapsto
\R\rec(x,\alpha)$, as shown in  Lemma \ref{lem:Rvariations}.
Note that Eq.~(\ref{eq:UpperBoundLimit}) is strictly concave in $\mu\in
[0,1]$, with unique maximum at $\mu_*
=(\vartheta_*/\beta)(1-\vartheta_*^2/\beta)^{-1/2}$. 
Substituting in Eq.~(\ref{eq:UpperBoundLimit}), we get
\begin{align}
\max_{\mu\in[0,1]}\oM(\mu)&\le \sqrt{1-\vartheta_*^2/\beta}+ 
\sqrt{\alpha\D_V(\vartheta_*/\sqrt{\alpha})}\\
& = \R\rec(\vartheta_*,\alpha) +\delta_n\, ,\label{eq:MaxM}
\end{align}
where the last equality follows from the identity $\sqrt{\D_V(x)} =
x\F_V(x)+\G_V(x)$, and from the equation $\vartheta_* =
\beta\F_V(1+\beta\F_V)^{-1/2}$ with $\F_V =
\F_V(\vartheta_*/\sqrt{\alpha})$ that holds by definition of
$\vartheta_* = \S_V(\beta,\alpha)$.

From Eq.~(\ref{eq:SigmaPlus}), (\ref{eq:MoM}) and (\ref{eq:MaxM}) we finally get 
\begin{align}
\lim\sup_{n\to\infty} \E~\sigma^+(\bX) &\le
\lim\sup_{n\to\infty}\max_{\mu\in[0,1]}\oM(\mu) \\
&\le  \R\rec(\vartheta_*, \alpha ) =
\max_{\vartheta\in\reals}\R\rec(\vartheta , \alpha )\, ,\label{eq:LastUpperboundSigma}
\end{align}
which coincides with 
claim (\ref{eq:upperBoundGeneralProposition}).

Next reconsidering Eq.~(\ref{eq:UpperBoundLimit}) with $\vartheta =
\vartheta_*$,   we see that since the right-hand side is strictly
concave in $\mu\in [0,1]$, we can strengthen Eq.~(\ref{eq:MaxM}) to
\begin{align}
\oM(\mu)&\le \R\rec(\vartheta_*,\alpha) -c_*(\mu-\mu_*)^2+\delta_n\, ,\label{eq:StrictConvM}
\end{align}
for some $c_*>0$. We call $H(x) = c_*x^2$.

By Eq.~(\ref{eq:SigmaPlus}) and (\ref{eq:MoM}) we have, almost surely,
\begin{align}
\lim\inf_{n\to\infty}\sigma^+(\bX)
=\lim\inf_{n\to\infty}M_{\bX}(\<\bv^+,\bvz\>) \le
\lim\inf_{n\to\infty}\oM(\<\bv^+,\bvz\>) \, .
\end{align}
We then use Eq.~(\ref{eq:StrictConvM}) to deduce that
\begin{align}
\lim\inf_{n\to\infty}\sigma^+(\bX)& \le  \R\rec(\vartheta_*,\alpha)
-\lim\sup_{n\to\infty} H(|\<\bv^+,\bvz\>-\mu_*|) \\
& =  \R\rec(\vartheta_*,\alpha)
-H\big(\lim\sup_{n\to\infty} |\<\bv^+,\bvz\>-\mu_*|\big) 
\, .
\end{align}
This implies immediately
Eq.~(\ref{eq:upperBoundNNinnerProductNonSymDense}) with $\Delta = H^{-1}$,
since (as shown above) $\vartheta_* = \S_V(\beta,\alpha)$, and $\mu_* = \F_V
  (\S_V(\beta,\alpha) / \sqrt \alpha)$.

\subsection{Lower bounds: Proofs of Theorem
 \ref{th:lowerBoundsAMPsymmetric} and Theorem
  \ref{th:lowerBoundsAMPnonsymmetric}}
\label{sec:ProofLowerBounds}

In this section we prove lower bounds on the non-negative eigenvalue
(singular value) that follows from the analysis of the AMP algorithm,
namely  Theorem \ref{th:lowerBoundsAMPsymmetric} for symmetric
matrices and Theorem \ref{th:lowerBoundsAMPnonsymmetric} for rectangular matrices.
The proofs are very similar in the two cases, hence we will provide
details only in the case of rectangular matrices, and limit ourselves
to pointing out differences arising in the symmetric setting.

\subsection{Proof of Theorem \ref{th:lowerBoundsAMPnonsymmetric}}

  Define 
\begin{align}
  r_t(n) \equiv \<\hbu^t,\bX\, \hbv^t\> =  \frac 1n \langle g(\bu^t) , \bX f(\bv^t)
  \rangle\, ,
\end{align}
  and observe, using \ref{eq:AMPnonsym},
  \begin{align}
 r_t(n) & = \frac 1n \langle g(\bu^t) , \bu^{t} + \ons_t\, g(\bu^{t-1})  \rangle \\
  & = \frac 1n \langle g(\bu^t) , \bu^t \rangle + \frac{\ons_t}{n}
  \langle  g(\bu^t) ,  g(\bu^t)  \rangle +
 \frac{\ons_t}{n} \langle  g(\bu^t) ,  \big(g(\bu^{t-1}) -   g(\bu^t)  \big) \rangle \\
  & = \frac 1n \langle g(\bu^t) , \bu^t \rangle +  \ons_t  + E_t(n)~~. \label{eq:Rt}
\end{align}
where
\begin{align}
|E_t(n)| &= \frac{\ons_t}{n}\Big| \langle  g(\bu^t) ,  \big(g(\bu^{t-1}) -
 g(\bu^t) \big) \rangle \Big|\\
&\le \frac{\ons_t}{\sqrt{n}}\, 
\big\|g(\bu^{t-1}) - g(\bu^t) \big\|_2\\
& \le 4\ons_t\,
\frac{\|\bu^{t-1}-\bu^t\|_2}{\|\bu^{t-1}\|_2+\|\bu^{t}\|_2}\, .\label{eq:E-UpperBound}
\end{align}
The last step follows from triangular inequality.

By Proposition \ref{prop:SEgeneral} applied to $\psi(x,y) = x^2$, we
have, almost surely,
\begin{align}
\lim_{n\to\infty}\frac{1}{n}\|\bu^t\|_2^2 &= \E\{(\mu_t U+G)^2\} =
1+\mu_t^2\, ,\label{eq:U-2norm}
\end{align}
and therefore
\begin{align}
\lim_{n\to\infty}\frac{1}{n}\<g(\bu^t),\bu^t\> &=
\lim_{n\to\infty}\frac{1}{\sqrt{n}} \, \|\bu^t\|_2 = \sqrt{1+\mu_t^2}\\
& =   \sqrt{1+\beta \F_V(\vartheta_{t-1} /\sqrt \alpha  )^2} \, . \label{eq:Gu}
\end{align}
By applying the same proposition to $\psi(x,y) = x_+^2$ we have 
\begin{align}
\lim_{n\to\infty}\frac{1}{p}\|(\bv^t)_+\|_2^2 &=
\E\{(\vth_t/\sqrt{\alpha}\, V+G)^2_+\} \, .\label{eq:Vp-2norm}
\end{align}
Further, letting $\psi(x,y) =\ind(x>0) = 1-\ind(x\le 0)$ we get 
\begin{align}
\lim_{n\to\infty}\frac{1}{p}\|(\bv^t)_+\|_0&=\lim_{n\to\infty}\frac{1}{p}\sum_{i=1}^p\ind(\bv^t_i>0)
= \prob\big(\vth_t/\sqrt{\alpha}\, V+G>0\big)\\
&= \E\big\{G(\vth_t/\sqrt{\alpha}\, V+G)_+\big\}\, ,\label{eq:Vp-0norm}
\end{align}
where the last equality follows from Stein's lemma \cite{Ste72}.
Using together Eq.~(\ref{eq:Vp-2norm}) and Eq.~(\ref{eq:Vp-0norm}), we get
\begin{align}
\lim_{n\to\infty} \ons_t(n) = \lim_{n\to \infty}\sqrt{\alpha}\,
\frac{\|(\bv^t)_+\|_0/p}{\|(\bv^t)_+\|_2/p} = \sqrt \alpha~\G_V \left
  (\frac{\vartheta_t}{\sqrt \alpha } \right )\, . \label{eq:limbt}
\end{align}
Using Eq.~(\ref{eq:U-2norm}) and Eq.~(\ref{eq:limbt}) in the
upper bound (\ref{eq:E-UpperBound}), we get
\begin{align}
\lim_{t\to\infty} \lim_{n\to\infty} |E_t(n)| = 0\, .
\end{align}
Finally, substituting this result together with Eq.~(\ref{eq:Gu}) and
(\ref{eq:limbt}) in Eq.~(\ref{eq:Rt}), we obtain
\begin{align}
\lim_{t\to\infty} \lim_{n\to\infty} r_t(n) =\lim_{t\to\infty} \R\rec(\vth_t,\alpha)\, .
\end{align}
The claim (\ref{eq:limAMPRayleighNonSym}) follows by Lemma
\ref{lem:convergenceFixedPoint}.

Consider next Eq. ~(\ref{eq:AMPScalProdURect}). We have
\begin{align}
\lim_{n\to\infty}  \langle \hbu^t,\buz\rangle  &= \lim_{n\to\infty} \frac{\<\bu^t,\buz\>}{\|\bu^t\|_2}\\
& =\frac{\E\{U(\mu_tU+G)\}}{\sqrt{\E\{(\mu_t U+G)^2\}}} = \frac{\mu_t}{\sqrt{1+\mu_t^2}}\, ,
\end{align}
where the second equality follows by applying Proposition
\ref{prop:SEgeneral} to $\psi(x,y) = xy$ (for the numerator) and using
Eq.~(\ref{eq:U-2norm}) (for the denominator). Finally, the claim
(\ref{eq:AMPScalProdURect})
follows by taking $t\to\infty$, and using Lemma
\ref{lem:convergenceFixedPoint}.

The proof of claim (\ref{eq:AMPScalProdVRect}) follows by the same
argument and we omit it.

\subsubsection{Proof of Theorem \ref{th:lowerBoundsAMPsymmetric}}

The proof in the symmetric case is very similar to the one for
rectangular matrices, see Theorem \ref{th:lowerBoundsAMPnonsymmetric}.
We limit ourselves to sketching the first steps.  We have, using \ref{eq:AMPsym},
\begin{align}
  \rho_t(n) &\equiv \<\hbv^t,\bX\, \hbv^t\> \\
  & =  \frac 1n \langle f(\bv^t) , \bX f(\bv^t)
  \rangle\, \\
& = \frac 1n \langle f(\bv^t) , \bv^{t+1} + \ons_t\, f(\bv^{t-1})  \rangle \\
  & = \frac 1n \langle f(\bv^t) , \bv^t \rangle + \frac{\ons_t}{n}  \langle  f(\bv^t) ,   f(\bv^t)  \rangle+  \frac 1n \langle f(\bv^t) , \bv^{t+1}  - \bv^t \rangle  + \frac{\ons_t}{n} \langle  f(\bv^t) ,  \big(f(\bv^{t-1}) -   f(\bv^t)  \big) \rangle  \\
  & = \frac 1n \langle f(\bv^t) , \bv^t \rangle +  \ons_t  + \tilde E_t(n)~~. 
\end{align}
and we are left with a term $\tilde E_t = \frac 1n \langle f(\bv^t) ,
\bv^{t+1}  - \bv^t \rangle  + \frac{\ons_t}{n} \langle  f(\bv^t) ,
\big(f(\bv^{t-1}) -   f(\bv^t)  \big) \rangle  $ which we treat
similarly to $E_t(n)$ of Theorem \ref{th:lowerBoundsAMPnonsymmetric}.
Namely,  by using  Proposition \ref{prop:SEgeneral} and Proposition
\ref{lem:diffGoesToZero}, we prove that
\begin{align}
 \lim_{t\to \infty} \lim_{n\to \infty} \tilde E_t(n) = 0~~.
\end{align}
In addition, it follows from  Proposition \ref{prop:SEgeneral} that
 \begin{align}
 \lim_{n\to \infty}  \frac 1n \langle f(\bv^t) , \bv^t \rangle &=
 \tau_t \F_V(\tau_t) + \G_V(\tau_t) = \beta \F_V(\tau_{t-1})
 \F_V(\tau_t) + \G_V(\tau_t)~,\\
\lim_{n\to \infty} \ons_t &= \G_V(\tau_t)~.
\end{align}
This terminates the proof sketch.
%
%********************************************************
%
\subsection{Minimax analysis: proof of Theorems \ref{th:Fworstcase} and \ref{th:FworstcaseRec}}\label{sec:worstCase}

In this section we prove that the least favorable vectors $\bvz$ are
--asymptotically-- of the following form:
$(\bvz)_i = 1/\sqrt{|S|}$ for all $i\in S$, and $(\bvz)_i = 0$
otherwise, for some support $S\subseteq [p]$. Further, we characterize
the least favorable size of the support $|S|$.

The proofs proceed by analyzing the expression in Theorem
\ref{th:mainSym} and applying strong duality to a certain linear
program over probability distributions, that is related to the function
$\mu_V\mapsto \F_V(x)$. We start  with some preliminary facts and
definitions in Section \ref{subsec:Preliminary}. The key step is to
reduce ourselves to two points mixtures: this is achieved in Section
\ref{subsec:Reduction}.
Finally, in Sections \ref{subsec:WorstSymm} and \ref{subsec:WorstRec},
we use these results to prove Theorems  \ref{th:Fworstcase} and \ref{th:FworstcaseRec}.
Since the proof of Theorem \ref{th:FworstcaseRec} is completely
analogous to the one of Theorem  \ref{th:Fworstcase}, we will limit
ourselves to mentioning the main differences.

\subsubsection{Preliminary definitions}
\label{subsec:Preliminary}

For $\eps\in (0,1]$ and $v\in\reals_{\ge 0}$, we
 define the 2-points mixture
\begin{align}
\mu_{\eps,v} \equiv (1-\eps)\delta_0+\eps\delta_v\, ,
\end{align}
In particular, when $v=1/\sqrt{\eps}$ (and hence the above
distribution has second moment equal to $1$), we write $\mu_{\eps} =
\mu_{\eps,1/\sqrt{\eps}}$. We also write --with a slight abuse of
notation-- $\F_\eps(x)$ instead of $\F_V(x)$ when
$V\sim\mu_{\eps}$. Explicitly 
\begin{align}
\F_{\eps}(x) = \frac{\E\{(x+\sqrt{\eps}G)_+\}}{\sqrt{(1-\eps)/2+
    \E\{(x+\sqrt{\eps}G)_+^2\}}}\, .
\end{align}
Even more explicitly 
\begin{align}
\F_{\eps}(x) &= 
\frac{\eps\,  B(x/\sqrt{\eps})/x}
{\sqrt{(1-\eps)/2+\eps(B(x/\sqrt{\eps})+\Phi(x/\sqrt{\eps}))} }\, ,\label{eq:FepsDefinition}\\
B(w) & \equiv w^2\Phi(w)+w\,\phi(w)\, .
\end{align}
We will also adopt the shorthand $\T_{\eps}(\beta) = \T_{V}(\beta)$
when $V\sim \mu_{\eps}$.

We wil next establish two calculus lemmas that are useful for the following.
\begin{lemma}\label{lemma:TwoSols}
For any given $a,b\in \reals$, the equation
\begin{align}
\frac{\phi(v)}{v} + b \Phi(v) = a\, ,\label{eq:SimpleEq}
\end{align}
in the unknown $v\in\reals_{>0}$  has at most two solutions $v_1,v_2$.
\end{lemma}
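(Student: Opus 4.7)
The plan is to reduce the claim to an elementary monotonicity analysis of the function $h(v) \equiv \phi(v)/v + b\,\Phi(v)$ on $\reals_{>0}$. Specifically, I will show that $h$ has at most one critical point on $(0,\infty)$, so its graph crosses any horizontal line at most twice.

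The key computation is the derivative. Using $\phi'(v) = -v\,\phi(v)$ and the quotient rule, one obtains
\begin{align}
h'(v) &= \frac{v\,\phi'(v) - \phi(v)}{v^2} + b\,\phi(v) \\
&= -\frac{(v^2+1)\,\phi(v)}{v^2} + b\,\phi(v) \\
&= \phi(v)\left(b - 1 - \frac{1}{v^2}\right).
\end{align}
Since $\phi(v)>0$ on $\reals$, the sign of $h'(v)$ equals the sign of $b - 1 - 1/v^2$.

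From here the case analysis is immediate. If $b \le 1$, then $b-1-1/v^2 < 0$ for every $v>0$, so $h$ is strictly decreasing on $(0,\infty)$ and the equation $h(v)=a$ has at most one solution. If $b>1$, then $b-1-1/v^2$ changes sign exactly once on $(0,\infty)$, at $v_* = 1/\sqrt{b-1}$, being negative on $(0,v_*)$ and positive on $(v_*,\infty)$. Hence $h$ is strictly decreasing on $(0,v_*)$, attains a unique minimum at $v_*$, and is strictly increasing on $(v_*,\infty)$. A function with this monotonicity pattern attains each value at most twice, which is exactly the claim.

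There is no real obstacle here beyond careful bookkeeping of the derivative; the statement is a clean consequence of the fact that the ``shape'' of $h$ is controlled by a single critical point. In the remainder of Section \ref{sec:worstCase}, this lemma will then be used to constrain the support of the worst-case distribution $\mu_V$ via the stationarity conditions associated with the linear program in $\mu_V$, ultimately forcing $\mu_V$ to be a two-point mixture of the form $\mu_{\eps,v}$.
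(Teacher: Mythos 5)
Your proof is correct and follows essentially the same route as the paper: the same derivative computation $h'(v)=\phi(v)\bigl(b-1-1/v^{2}\bigr)$, the same split into $b\le 1$ and $b>1$ with the single critical point $v_*=(b-1)^{-1/2}$, and the same monotonicity conclusion. No gaps.
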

\begin{proof}
Let $h_b(v) = (\phi(v)/v)+b\Phi(v)$ denote the left-hand side of
Eq.~(\ref{eq:SimpleEq}).
Then
\begin{align}
h'_b(v) = -\Big(1-b+\frac{1}{v^2}\Big)\, \phi(v)\, .
\end{align}
If $b\le 1$, then we conclude that $h'_b(v)<0$ for all $v>0$ and hence
the equation $h_b(v) = a$ has at most one positive solution. If --on the
other hand-- $b>1$, then $h'_b(v) <0$ for $v<v_* \equiv (b-1)^{-1/2}$ and
$h'_b(v) >0$ for $v>v_*$. It follows that the equation
$h_b(v) = a$ has at most one solution in $(0,v_*]$ and at most one in $(v_*,\infty)$.
\end{proof}

\begin{lemma}\label{lemma:ElementaryInequality}
Let $b:\reals\to \reals$ be defined as $b(x) =
x^2(\Phi(x)-1)+x\phi(x)$.
Then, for every $x\in (0,\infty)$, we have
\begin{align}
\phi(x)\, b(x) >\Big(\Phi(x) -\frac{1}{2}\Big)\, b'(x) \label{eq:Inequality}\, .
\end{align}
\end{lemma}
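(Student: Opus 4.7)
The plan is to recognize that the inequality in Eq.~(\ref{eq:Inequality}) is equivalent to a monotonicity statement. Since $\phi(x)$ is exactly the derivative of $\Phi(x) - 1/2$, the elementary Wronskian-type identity
\begin{align*}
\phi(x)\,b(x) \;-\; \bigl(\Phi(x) - \tfrac{1}{2}\bigr)\,b'(x) \;=\; -\bigl(\Phi(x) - \tfrac{1}{2}\bigr)^{2}\cdot \frac{\de\phantom{x}}{\de x}\!\left(\frac{b(x)}{\Phi(x) - 1/2}\right)
\end{align*}
reduces the task to showing that the function $h(x) \equiv b(x)/(\Phi(x) - 1/2)$ is strictly decreasing on $(0, \infty)$.

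To prove this monotonicity I would exploit $b(0) = 0 = \Phi(0) - 1/2$ to rewrite both numerator and denominator as integrals from $0$. Using $b'(t) = 2t(\Phi(t) - 1) + \phi(t)$ (obtained by differentiating $b$ and simplifying via $\phi'(t) = -t\phi(t)$), one gets
\begin{align*}
h(x) \;=\; \frac{\int_{0}^{x} b'(t)\,\de t}{\int_{0}^{x} \phi(t)\,\de t} \;=\; \frac{\int_{0}^{x} \psi(t)\,\phi(t)\,\de t}{\int_{0}^{x} \phi(t)\,\de t}, \qquad \psi(t) \;\equiv\; \frac{b'(t)}{\phi(t)} \;=\; 1 - 2tM(t),
\end{align*}
where $M(t) \equiv (1-\Phi(t))/\phi(t)$ is the standard Mills ratio. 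In other words $h(x)$ is the $\phi$-weighted average of $\psi$ on $(0, x)$. Differentiating this ratio directly yields $h'(x) = \phi(x)\,\bigl(\psi(x) - h(x)\bigr)\big/\int_{0}^{x}\phi(t)\,\de t$, which is strictly negative as soon as $\psi$ is strictly decreasing on $(0, \infty)$: in that case $\psi(t) > \psi(x)$ for every $t \in (0, x)$, so the average $h(x)$ strictly exceeds $\psi(x)$ and hence $h'(x) < 0$.

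It therefore suffices to show $\psi'(t) < 0$ for $t > 0$. Using the standard identity $M'(t) = tM(t) - 1$, a short computation gives $\psi'(t) = 2t - 2M(t)(1 + t^{2})$, so the required inequality is the classical Mills-ratio bound $M(t) > t/(1 + t^{2})$, equivalently $g(t) \equiv (1 + t^{2})(1 - \Phi(t)) - t\phi(t) > 0$ for all $t > 0$. Short calculus confirms this: $g(0) = 1$; $g'(t) = 2\bigl[t(1 - \Phi(t)) - \phi(t)\bigr] < 0$ by the well-known bound $1 - \Phi(t) < \phi(t)/t$ for $t > 0$; and $g(t) \to 0^{+}$ as $t \to \infty$ from the standard Mills-ratio expansion $(1+t^{2})M(t) = t + 2/t^{3} + O(1/t^{5})$. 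Hence $g$ is strictly decreasing and positive, closing the chain of reductions.

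The main obstacle is spotting the initial reformulation: a frontal attack that differentiates $F(x) \equiv \phi(x) b(x) - (\Phi(x) - 1/2) b'(x)$ does not simplify, because the resulting expression still mixes $\Phi - 1/2$, $\phi$, and $b$ in a non-decoupled way. Recognizing the Wronskian structure converts the problem into monotonicity of $h$, and only then does the weighted-average representation make transparent which classical Mills-ratio inequality drives the argument; after this structural step the remaining computations are routine.
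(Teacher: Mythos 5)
Your proof is correct, and it takes a genuinely different route from the paper's. You exploit the Wronskian identity $\phi(x)b(x)-(\Phi(x)-\tfrac12)b'(x)=-(\Phi(x)-\tfrac12)^2\frac{\de}{\de x}\big[b(x)/(\Phi(x)-\tfrac12)\big]$ to recast the claim as monotonicity of $h(x)=b(x)/(\Phi(x)-\tfrac12)$, then use $b(0)=\Phi(0)-\tfrac12=0$ to write $h$ as the $\phi$-weighted average of $\psi(t)=b'(t)/\phi(t)=1-2tM(t)$ over $(0,x)$ (with $M$ the Mills ratio), so that everything reduces to $\psi$ being strictly decreasing, i.e.\ to the classical bound $1-\Phi(t)>t\phi(t)/(1+t^2)$, which you verify by a two-line calculus argument; all the identities you use ($b'(x)=\phi(x)-2x(1-\Phi(x))$, $M'(t)=tM(t)-1$, $\psi'(t)=2t-2(1+t^2)M(t)$, and the formula for $h'$) check out. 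The paper instead argues directly on $F(x)=\phi(x)b(x)-(\Phi(x)-\tfrac12)b'(x)$ by a case split: it shows $b>0$ always while $b'(x)<0$ for $x$ above an explicit threshold $x_0\le 2/3$ (using the representation $b'(x)=\phi(x)\{1-2\E[e^{-Z^2/(2x^2)}]\}$ and a numerical check at $x=1$), and then handles $x\in(0,x_0)$ by Taylor expansions with explicit remainder bounds. Your argument buys a cleaner, case-free proof that isolates the single driving fact (a standard Mills-ratio inequality) and avoids numerical constants; the paper's proof is more pedestrian but entirely self-contained within elementary Gaussian tail bounds it has already quoted. One cosmetic slip: $g(0)=(1+0)(1-\Phi(0))-0=\tfrac12$, not $1$; this is harmless, since your conclusion only needs that $g$ is strictly decreasing with limit $0$ at infinity (indeed $g(t)>-t\phi(t)\to 0$ already gives the nonnegative limit without the asymptotic expansion).
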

\begin{proof}
By simple calculus, we get $b(0) =0$, and the derivatives
\begin{align}
b'(x) & = \phi(x)-2x\big(1-\Phi(x)\big)\, ,& b'(0) = \phi(0)\, ,\\
b''(x) & = x\phi(x)-2\big(1-\Phi(x)\big)\, , &b''(0) = -1\, ,\\
b'''(x) & = (3-x^2)\, \phi(x)\, , & b'''(0) = 3\phi(0)\, . 
\end{align}
Let us further recall the inequalities (valid for $x>0$)
\begin{align}
\frac{\phi(x)}{x}\Big(1-\frac{1}{x^2}\Big)< 1-\Phi(x) <
\frac{\phi(x)}{x}\, ,
\end{align}
which immediately imply for all $x>0$
\begin{align}
0&< b(x) < \frac{\phi(x)}{x}\, .
\end{align}
Therefore the left-hand side of Eq.~(\ref{eq:Inequality}) is always
strictly positive. Consider the right-hand side. 
By consulting special values of the normal distribution, we see that 
$b'(1) = \phi(1)-2(1-\Phi(1))< -0.07<0$.
By a change of
variables we know that $x(1-\Phi(x))/\phi(x) =
\int_{0}^{\infty}\exp(-z-z^2/(2x^2))\de z$ or, equivalently
\begin{align}
b'(x) = \phi(x)\Big\{1-2\E\big[e^{-Z^2/(2x^2)}\big]\Big\}\, .
\end{align}
Since the term in curly brackets is decreasing in $x$, and is negative
at $x=1$, we have $b'(x)<0$ for all $x\ge 1$.
Therefore the right-hand side of Eq.~(\ref{eq:Inequality})  is non-positive for $x\ge
1$. This proves the claim for $x\ge 1$, and we will assume hereafter
$x\in (0,1)$.

Next notice that $0\le b'''(x) \le 3\phi(0)$ for
$x\in(0,1)$. Therefore, by Taylor expansion and intermediate value
theorem, we get, for $x\in (0,1)$, 
\begin{align}
b'(x) \le \phi(0) -x +\frac{3\phi(0)}{2}\, x^2\, .
\end{align}
The right-hand side is negative for $x\in (x_0,x_1)$ where
\begin{align}
x_{1/0} =\frac{1\pm\sqrt{1-6\phi(0)^2}}{3\phi(0)} \, .
\end{align}
In particular $x_0\le 2/3$, and $x_1>1$.  It follows that the
right-hand side of Eq.~(\ref{eq:Inequality}) is non-positive for $x\ge
x_0$. 

We will therefore restrict ourselves to considering $x\in
(0,x_0)\subseteq (0,2/3)$. Note that our claim can be equivalently
written as
\begin{align}
b(x) \ge \Big(\Phi(x)-\frac{1}{2}\Big)\,
\Big(1-2x\frac{1-\Phi(x)}{\phi(x)}\Big)\, .
\end{align}
We will next develop, for $x\in (0,x_0)$, a lower bound on the
left-hand side, to be denoted by $l(x)$, and an upper bound on the
right-hand side, to be denoted by $u(x)$ and prove that $l(x)\ge
u(x)$.
For the left hand side note that $b'''(x)\ge 0$ for $x\in (0,x_0)$ and
hence, again by Taylor expansion
\begin{align}
b(x) \ge \phi(0)x-\frac{1}{2}x^2\equiv l(x)\, .
\end{align}
For the right hand side note that $\Phi(x) -(1/2)\le\phi(0)x$.
Further $x\mapsto (1-\Phi(x))/\phi(x)$ is monotone decreasing. 
We therefore define
\begin{align}
d_0 \equiv \frac{2(1-\Phi(x_0))}{\phi(x_0)}\, ,
\end{align} 
and obtain the upper bound $u(x) = \phi(0)x\, (1-d_0x)$.
Hence
\begin{align}
l(x) -u(x) = \Big(d_0\phi(0)-\frac{1}{2}\Big)x^2\, ,
\end{align}
It is a straightforward exercise to check that indeed
$d_0\phi(0)>(1/2)$ thus completing the proof.
\end{proof}

\subsubsection{Reduction to two points mixtures}
\label{subsec:Reduction}

The main theorem of this Section shows that $\F_V(x)$ is minimized by
probability measures $\mu_V$ that are mixture of at most two point
masses.
\begin{theorem}\label{thm:LeastFavorableF}
Fix $x\ge 0$. Then for any random variable $V$ with probability
distribution $\mu_V\in\cP_{\beps}$, we have
\begin{align}
\F_V(x) \ge \min_{\eps\in (0,\beps]} \F_{\eps} (x)\, .
\end{align}
\end{theorem}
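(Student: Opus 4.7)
The plan is to reduce the problem to one over two-point mixtures by showing that any minimizer of $\F_V(x)$ over $\cP_{\beps}$ is supported on $\{0,v^*\}$ for a single $v^*>0$, and hence is of the form $\mu_{\epss}$ for some $\epss\in(0,\beps]$. First, I would establish that the infimum is attained: the set $\{\mu\in\cP_{\beps}:\int v^2\,d\mu=1\}$ is weakly compact by Prokhorov (the uniform second-moment bound gives tightness), and $\mu\mapsto\F_V(x)$ is weakly continuous on it because the integrands $vh(xv)$ and $K(xv)$ (with $h(u)=u\Phi(u)+\phi(u)$ and $K(u)=(1+u^2)\Phi(u)+u\phi(u)$) are continuous with at most quadratic growth, so convergence of second moments controls the integrals. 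I fix a minimizer $\mu^*$, write $\tilde\mu^*=\mu^*|_{(0,\infty)}$, and let $\epss^*=\tilde\mu^*((0,\infty))\in(0,\beps]$.

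Next I would set up the KKT conditions for this constrained minimization (with $\int d\tilde\mu\le\beps$ and $\int v^2\,d\tilde\mu=1$). Writing $N(\tilde\mu)=\int vh(xv)\,d\tilde\mu$ and $D(\tilde\mu)=\tfrac12+\int[K(xv)-\tfrac12]\,d\tilde\mu$ and putting $\alpha=N(\tilde\mu^*)/(2D(\tilde\mu^*))>0$, KKT yields constants $\gamma_0,\gamma_2\in\reals$ such that the function
\[
\Psi(v):=v\,h(xv)-\alpha\,K(xv)-\gamma_0-\gamma_2 v^2
\]
satisfies $\Psi(v)\ge 0$ for all $v>0$, with equality on $\supp(\tilde\mu^*)$.

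The key step will be to count the number of positive atoms via Lemma~\ref{lemma:TwoSols}. Exploiting the identities $h'(u)=\Phi(u)$ and $K'(u)=2h(u)$, a direct computation together with the substitution $u=xv$ yields
\[
\Psi'(v)=2(1-\alpha x)\,u\,\Phi(u)+(1-2\alpha x)\,\phi(u)-2\gamma_2\, u/x,
\]
so that (after dividing by $u>0$ and, when $1-2\alpha x\ne 0$, by $1-2\alpha x$) the equation $\Psi'(v)=0$ reduces to $b\,\Phi(u)+\phi(u)/u=a$ for constants $a,b$; in the degenerate case $1-2\alpha x=0$ it reduces to $\Phi(u)=\mathrm{const}$, which has at most one positive solution. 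In both cases Lemma~\ref{lemma:TwoSols} bounds the number of positive zeros of $\Psi'$ by two. Since each positive atom of $\tilde\mu^*$ is a local minimum of $\Psi$ where $\Psi=0$, and any two such local minima must be separated by a local maximum (as $\Psi\ge 0$ on $(0,\infty)$ with equality only on the support), the presence of $k$ positive atoms forces at least $2k-1$ positive zeros of $\Psi'$, whence $k\le 1$.

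Therefore $\mu^*=(1-\epss^*)\delta_0+\epss^*\delta_{v^*}$ for a single $v^*>0$, and the constraint $\epss^*(v^*)^2=1$ forces $v^*=1/\sqrt{\epss^*}$, so $\mu^*=\mu_{\epss^*}$ with $\epss^*\in(0,\beps]$. Optimality then gives $\F_V(x)\ge\F_{\epss^*}(x)\ge\min_{\epss\in(0,\beps]}\F_{\epss}(x)$ as claimed. The main obstacle is the algebraic reduction in the third step: the cancellations produced by $h'=\Phi$ and $K'=2h$ are precisely what allows $\Psi'=0$ to be cast in the two-parameter form required by Lemma~\ref{lemma:TwoSols}; without these identities a generic critical-point count would leave room for additional positive atoms and the argument would not close.
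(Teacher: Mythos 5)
Your reduction to two-point supports via stationarity and Lemma \ref{lemma:TwoSols} runs parallel to the paper's Lemma \ref{lemma:Constrained}, but there is a genuine gap at your very first step: the existence of a minimizer $\mu^*\in\cP_{\beps}$. The set $\{\mu\in\cP_{\beps}:\int v^2\,\de\mu=1\}$ is tight but it is \emph{not} weakly closed: along a minimizing sequence the second moment can partially escape to infinity (for instance $\mu_n=(1-1/n)\delta_0+(1/n)\delta_{\sqrt{n}}$ lies in the constraint set for large $n$ but converges weakly to $\delta_0$, whose second moment is $0$), and for the same reason the integrals $\int v\,h(xv)\,\de\mu$ and $\int K(xv)\,\de\mu$, whose integrands grow quadratically, are not weakly continuous --- weak convergence does not give uniform integrability of $v^2$, and ``convergence of second moments'' is exactly what fails in the limit. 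So Prokhorov only yields a weak limit point that need not satisfy the moment constraint, and the value of the objective need not converge to its value there; your KKT analysis then has nothing to be applied to. This is not a removable technicality: the natural repair, and the route the paper takes, is to renormalize (setting $W=xV$ and dividing by $g(w)$ so all integrands become bounded) and compactify to $[0,\infty]$, which introduces a possible atom at infinity --- the parameter $q$ (equivalently $p_2$) in Lemma \ref{lemma:Constrained} --- so that the support conclusion one can actually extract is $\{0,w_*,\infty\}$, not $\{0,v^*\}$.

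As a consequence, your proposal omits the entire second half of the paper's argument: showing that the mass at infinity confers no advantage, i.e.\ that the resulting two-parameter function $G(w,\eps)$ has no interior stationary point, so the optimum sits on the boundary $q=0$. In the paper this rests on the nontrivial Gaussian inequality of Lemma \ref{lemma:ElementaryInequality}, namely $\phi(x)\,b(x)>(\Phi(x)-\tfrac12)\,b'(x)$ for $b(x)=x^2(\Phi(x)-1)+x\phi(x)$, whose proof is itself a careful case analysis; nothing in your argument substitutes for it. A secondary point: the existence of the multipliers $\gamma_0,\gamma_2$ behind your function $\Psi$ also needs justification in this infinite-dimensional setting (the paper invokes Kneser--Kuhn strong duality, with an attainment argument for the dual variables). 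The local part of your argument is fine and essentially identical to the paper's: the identities $h'=\Phi$, $K'=2h$, the reduction of $\Psi'=0$ to the form covered by Lemma \ref{lemma:TwoSols}, and the interlacing count forcing at most one positive support point all work --- but only once a (generalized) minimizer is in hand and the atom at infinity has been excluded.
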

The proof of this theorem is presented at the end of the
section. Before getting to it,  we'll introduce a related problem.
Note that
\begin{align}
\F_V(x) \ge \inf_{y\in \reals_{>0}} \frac{\cF(x,y)}{y}\, ,\label{eq:FcF}
\end{align}
where $\cF(x,y)$ is the value of a constrained optimization problem:
\begin{align}
\text{minimize}\;\;\;\; & \E\{V(xV+G)_+\}\, ,\nonumber\\
\text{subject to}\;\;\;\; & \mu_V\in\cP_{\beps}, \label{eq:ConstrainedProblem}\\
& \E\{(xV+G)_+^2\} =y^2\, .\nonumber
\end{align}
Here it is understood that $\cF(x,y) = \infty$ if this problem is
unfeasible. 
\begin{lemma}\label{lemma:Constrained}
Let $x,y\in\reals_{>0}$ be such that the problem
(\ref{eq:ConstrainedProblem}) is
feasible. Then there exist $\eps\le \beps$ such that $\mu_{\eps}$ is
feasible and $q\in [0,1]$, such that, letting $v_*^2= (1-q)/\eps$,
we have
\begin{align}
\cF(x,y) &=  qx+\int\E\{v(xv+G)_+\}\, \mu_{\eps,v_*}(\de v)\, ,\label{eq:CalF}\\
y^2 &=  qx^2+\int\E\{(xv+G)_+^2\}\, \mu_{\eps,v_*}(\de v)\, .\label{eq:YF}
\end{align}
\end{lemma}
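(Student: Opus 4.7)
\vspace{0.25cm}
\noindent\textbf{Proof plan for Lemma \ref{lemma:Constrained}.} The approach is to view (\ref{eq:ConstrainedProblem}) as an infinite-dimensional linear program in the measure $\mu_V$ and to combine a moment-problem reduction with the critical-point count provided by Lemma \ref{lemma:TwoSols}, thereby forcing the minimizer (in a suitable compactification of the half-line) to be supported on at most one finite positive point together with a possible atom at $+\infty$.

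First I would split $\mu_V=(1-\eps)\delta_0+\mu_+$, with $\mu_+$ a non-negative measure on $\reals_{>0}$ of total mass $\eps\in(0,\beps]$, and restate the constraints purely in terms of $\mu_+$:
\begin{align*}
\int v^2\,d\mu_+(v)=1,\qquad \int\bigl(\psi(v)-\tfrac12\bigr)\,d\mu_+(v)=y^2-\tfrac12,\qquad \mu_+(\reals_{>0})\le\beps,
\end{align*}
where $\psi(v)=\E(xv+G)_+^2$ and the objective is $\int f\,d\mu_+$ with $f(v)=v\E(xv+G)_+$. Extending the measures to the compactification $[0,\infty]$ and identifying an ``atom at $\infty$ of weight $q$'' with the weak limit of $(q/M^2)\delta_M$ as $M\to\infty$ (which contributes $q$, $qx^2$, $qx$ and $0$ to $\int v^2\,d\mu$, $\int\psi\,d\mu$, $\int f\,d\mu$ and $\mu_+(\reals_{>0})$ respectively, since $\E(xM+G)_+\sim xM$), the LP has a minimizer in this compactified space, and by the classical Caratheodory reduction for extreme measures with two moment equalities and one inequality, one may assume the minimizer $\mu_+^\star$ has support on at most three points in $(0,\infty]$.

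The key structural step is to reduce the finite part of $\supp(\mu_+^\star)$ to at most one point. Introducing Lagrange multipliers $\lambda,\eta\in\reals$ and $\sigma\ge0$ gives a dual certificate $H(v)=f(v)-\lambda v^2-\eta(\psi(v)-\tfrac12)-\sigma$ which must satisfy $H\ge 0$ on $(0,\infty)$ with equality on $\supp(\mu_+^\star)\cap(0,\infty)$. Substituting $w=xv$ and using the closed forms $\E(xv+G)_+=w\Phi(w)+\phi(w)$ and $\E(xv+G)_+^2=(1+w^2)\Phi(w)+w\phi(w)$, the stationarity equation $H'(v)=0$ simplifies by elementary algebra to an equation of the form
\begin{align*}
\frac{\phi(w)}{w}+b\,\Phi(w)=a
\end{align*}
for constants $a,b$ depending on $\lambda,\eta,x$, whose positive roots number at most two by Lemma \ref{lemma:TwoSols}. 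A configuration with two distinct finite atoms $v_1<v_2$ in $\supp(\mu_+^\star)$ would force $H(v_1)=H(v_2)=0$ to be two finite global minima (both critical points of $H$), and Rolle's theorem would then produce a strictly positive critical point in $(v_1,v_2)$, yielding three critical points and contradicting the bound. Hence the finite part of $\supp(\mu_+^\star)$ is at most a single point $v_*$, and the remaining mass (if any) lies at infinity.

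Writing the resulting optimizer as $\mu_V^\star=(1-\eps)\delta_0+\eps\delta_{v_*}+(\text{atom at }\infty\text{ of weight }q)$ and imposing the second-moment constraint $\int v^2\,d\mu_V=1$ yields $\eps v_*^2+q=1$, i.e.\ $v_*^2=(1-q)/\eps$; evaluating the objective and the second-moment identity in the limit then produces exactly (\ref{eq:CalF}) and (\ref{eq:YF}). The main obstacle is the attainment/compactification step: one must carefully justify that the infimum in (\ref{eq:ConstrainedProblem}) is indeed approached by a minimizing sequence whose positive mass concentrates on a single finite atom together with mass escaping to infinity in the quantitative sense above, rather than dissipating in a more complicated way. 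Once this framework is in place, the single-finite-atom conclusion is a clean critical-point count using Lemma \ref{lemma:TwoSols}, while Lemma \ref{lemma:ElementaryInequality} will play its more substantive role in the subsequent reduced optimization over the three parameters $(\eps,v_*,q)$.
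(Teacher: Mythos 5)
Your skeleton is the same as the paper's: treat (\ref{eq:ConstrainedProblem}) as a linear program over measures, produce a dual certificate $H\ge 0$ vanishing on the support of the optimizer, reduce the stationarity equation $H'=0$ by the substitution $w=xv$ to an equation of the form $\phi(w)/w+b\,\Phi(w)=a$, and invoke Lemma \ref{lemma:TwoSols} plus a Rolle-type count to conclude that at most one finite positive point can carry mass, so the optimizer is $\delta_0$ plus one finite atom plus (possibly) mass at infinity; the final change of variables giving $v_*^2=(1-q)/\eps$, (\ref{eq:CalF}) and (\ref{eq:YF}) is then bookkeeping. That part of your argument is correct, and you are also right that Lemma \ref{lemma:ElementaryInequality} only enters the later optimization over $(\eps,v_*,q)$.

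The genuine gap is exactly the step you flag and leave open: attainment of the infimum in a compactified space, together with the validity of the certificate at the point at infinity. In your formulation the decision variable is the unweighted measure $\mu_+$ on $(0,\infty]$, but the integrands $v^2$, $f(v)=\E\{v(xv+G)_+\}$ and $\psi(v)=\E\{(xv+G)_+^2\}$ are unbounded, so weak convergence on $[0,\infty]$ does not pass to the constraints or the objective; a minimizing sequence can let second-moment mass escape in ways not captured by your ad hoc identification of an atom at $\infty$ with $\lim_M (q/M^2)\delta_M$, and the complementary-slackness condition at $\infty$ (reduced cost $\ge 0$ there, $=0$ when $q>0$) is never justified. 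The paper closes precisely this hole by a change of measure: after rescaling $W=xV$, it writes the positive part of $\mu_W$ as $(1/g(w))\,\mu$ with $g(w)=\E\{(w+G)_+^2\}$, so that in the reformulated program (\ref{eq:MuProblem}) the integrands $f(w)/g(w)$, $w^2/g(w)$, $1/g(w)$ are bounded, Lipschitz, with finite limits as $w\to\infty$, the total mass of $\mu$ is pinned to $y^2$, and a minimizer exists on the compact space $[0,\infty]$ by tightness, with $\infty$ an ordinary support point handled uniformly by the duality argument (\ref{eq:Duality}). Two smaller points: the existence of the multipliers $(\alpha_*,\beta_*,\gamma_*)$ you assert needs an argument (the paper uses Kneser--Kuhn strong duality plus a boundedness remark to get attained multipliers), and once the dual certificate is in place your Carath\'eodory reduction to three support points is superfluous --- the certificate already confines the support to $\{0,w_*,\infty\}$.
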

\begin{proof}
By a rescaling of the objective function, and letting $W=xV$, we
can rewrite the problem  (\ref{eq:ConstrainedProblem}) as
\begin{align}
\text{minimize}\;\;\;\; & \E\{W(W+G)_+\}\, ,\nonumber\\
\text{subject to}\;\;\;\; & \mu_W(\{0\})\ge 1-\beps, \\
&\E\{W^2\} = x^2\, ,\nonumber\\
& \E\{(W+G)_+^2\} =y^2\, .\nonumber
\end{align}
Now, for fixed $w\in\reals$, let
\begin{align}
f(w) &\equiv \E\{w(w+G)_+\} = w^2\Phi(w)+w\,\phi(w)\, ,\label{eq:f}\\
g(w) &\equiv \E\{(w+G)_+^2\} = (1+w^2)\Phi(w) +w\,\phi(w)\, ,\label{eq:g}
\end{align}
and write $\mu_W = (1-\beps)\delta_0+(1/g(w))\mu$ with $\mu$ a measure on
$\reals_{\ge 0}$. Then we can rewrite the optimization problem as the
following
(with decision variable $\mu$)
\begin{align}
\text{minimize}\;\;\;\; & \int \frac{f(w)}{g(w)} \mu(\de w)\, ,\nonumber\\
\text{subject to}\;\;\;\; & \int\, \frac{1}{g(w)}\mu(\de w)=\beps\, , \label{eq:MuProblem}\\
& \int \,\frac{w^2}{g(w)}\,\mu(\de w) = x^2\, ,\nonumber\\ 
&\int\, \,\mu(\de w) =y^2\, .\nonumber
\end{align}
The corresponding value is $x\cF(x,y)$.
Note that each of the functions $(f(w)/g(w))$, $1/g(w)$, $w^2/g(w)$ is
bounded and Lipschitz continuous, with a finite limit as $w\to\infty$. This implies that the value
$x\cF(x,y)$ is achieved by a measure $\mu_*$ on the
completed real line $[0,\infty]$, with total mass $y^2$. Indeed the family of normalized
distributions on $[0,\infty]$ is tight and both the objective and the
constraints are continuous in the weak topology. Hereafter, we shall assume this
holds. Functions on $[0,\infty)$ are extended by continuity to $+\infty$.

By introducing Lagrange
multipliers, we obviously have, for any $\alpha,\beta,\gamma\in\reals$
\begin{align}
x\cF(x,y) \ge \beps\alpha+\beta\, x^2+\gamma\, y^2+ \inf_{\mu} \int
\Big\{\frac{f(w)-\alpha-\beta\, w^2-\gamma\,g(w)}{g(w)}\Big\} \mu(\de w)\, ,\label{eq:Duality}
\end{align}
where the infimum is over measures $\mu$ on $\reals_{\ge 0}$.
  By
strong duality (which follows, for instance,  from the Kneser-Kuhn 
minimax theorem \cite{kneser1952theoreme}, see also 
\cite[Theorem A.1]{JohnstoneBook}), there exists\footnote{In general,
  the mentioned theorem only imply that equality is achieved
  asymptotically, along a sequence $\alpha_n,\beta_n,\gamma_n$. In the
present case, it is not had to show that, letting
$P(\alpha,\beta,\gamma;\mu)$ denote the right hand side of
Eq.~(\ref{eq:Duality}), the $\sup_{\alpha,\beta,\gamma}[\inf_{\mu}P]$
is actually achieved at finite $\alpha_*,\beta_*,\gamma_*$, by showing
that the sequence must remain bounded and using standard compactness arguments.} $\alpha_*,\beta_*,\gamma_*\in\reals$
such that the above holds with equality. Note that for such choice
$f(w)-\alpha_*-\beta_*\, w^2-\gamma_*\,g(w)\ge 0$ for all
$w\in [0,\infty]$, because otherwise the infimum is
$-\infty$. Under this condition, the infimum term in
Eq.~(\ref{eq:Duality}) is zero, and hence we must have
\begin{align}
\alpha_* = \inf_{w\in\reals_{\ge 0}}\Big[f(w)-\beta_*\,
w^2-\gamma_*\,g(w)\Big]\, ,\label{eq:GammaStar}
\end{align}
because otherwise we could increase the lower bound
Eq.~(\ref{eq:Duality}) by increasing $\alpha$. Further, since the
right-hand side is an analytic function of $w$, the infimum in
Eq.~(\ref{eq:GammaStar}) is achieved on a finite set $S_*\in
[0,\infty]$, and the minimizer $\mu_*$ of problem (\ref{eq:MuProblem})
has support $\supp(\mu_*) \subseteq S_*$ because otherwise the
infimum in the lower bound (\ref{eq:Duality}) would not be achieved.

Next we claim that $S_* \subseteq \{0,w_*,\infty\}$ for some finite
$a\in\reals_{>0}$. Indeed, let $h(w) \equiv f(w)-\beta_*\,
w^2-\gamma_*\,g(w)$.   It follows from Eqs.~(\ref{eq:f}) and
(\ref{eq:g})
that 
\begin{align}
h'(w) = (1-2\gamma_*)\phi(w) +2(1-\gamma_*)w\Phi(w) -2\beta_*\, w\, .
\end{align}
Assume $\gamma_*\neq 1/2$. We then have $h'(w) = 0$ for some finite $w\in \reals_{>0}$ if and only
if
\begin{align}
\frac{\phi(w)}{w} + \Big(\frac{2-2\gamma_*}{1-2\gamma_*}\Big)\Phi(w) =
\Big(\frac{2\beta_*}{1-2\gamma_*}\Big)\, w\, .
\end{align}
By Lemma \ref{lemma:TwoSols} this has at most two solutions
$w_1,w_2$. If on the other hand $\gamma_* = 1/2$, then the above
equation reduces to $\Phi(w) = 2\beta_*$ which has at most one
solution. 
In both cases, at most one solution --call it $w_*$-- is a local minimum of $h$.

We conclude that $S_* \subseteq \{0,w_*,\infty\}$, and therefore the
value of the problem  (\ref{eq:MuProblem}) is achieved by a measure of
the form 
\begin{align}
\mu_* = p_0 \, \delta_0 + p_1\, \delta_{w_*} + p_{2}\delta_{\infty}
\end{align}
The three constraints imply the following relations
\begin{align}
2p_0 +p_1\, \frac{1}{g(w_*)} & = \beps\, ,\\
p_1\, \frac{w_*^2}{g(w_*)}+p_2&= x^2\, ,\\
p_0+p_1+p_2 & = y^2\, ,\label{eq:yProof}
\end{align}
and the value is 
\begin{align}
x\cF(x,y) = p_1\, \frac{f(w_*)}{g(w_*)} + p_2\, .\label{eq:cFProof}
\end{align}
The proof is completed by the change of variables $p_1 = g(w_*)\,
\eps$, $p_0 = (\beps-\eps)/2$, $p_2=qx^2$, $w_* = v_*x$, $p_2 = qx^2$.
With these substitutions  Eq.~(\ref{eq:yProof}) yields (\ref{eq:YF}),
and Eq.~(\ref{eq:cFProof}) yields (\ref{eq:CalF}).
\end{proof}

We are now in position to prove Theorem \ref{thm:LeastFavorableF},
that is the main result in this section.
\begin{proof}[Proof of Theorem  \ref{thm:LeastFavorableF}]
By Lemma  \ref{lemma:Constrained} and Eq.~(\ref{eq:FcF}), we have,
for any $\mu_V\in\cF_{\beps}$,
\begin{align}
\F_V(x) \ge \inf_{q,v, \eps} \frac{qx+\int\E\{v(xv+G)_+\}\,
  \mu_{\eps,v_*}(\de v)}
{\sqrt{qx^2+\int\E\{(xv+G)_+^2\}\, \mu_{\eps,v_*}(\de v)}}\, .
\end{align}
where the infimum is over $q\in [0,1]$, $\eps\in (0,\beps]$, $v_* =
\sqrt{(1-q)/\eps}$. Our claim is equivalent to saying that the infimum
on the right hand side is achieved when $q=0$.

 Since $x>0$ is given, we will can regard the right-hand side as a
function of $w=v_*x$ and $\eps$, and substitute $qx^2 = x^2-\eps
w^2$. 
We then define the function
\begin{align}
G(w,\eps) = \frac{x^2-\eps\,w^2+\eps\E\{w(w+G)_+\}}
{\sqrt{x^2-\eps\,w^2+(1-\eps)/2+\eps\E\{(w+G)_+^2\}}}\, .
\end{align}
More explicitly 
\begin{align}
G(w,\eps) & =\frac{x^2+\eps\,  b(w)}
{\sqrt{x^2+(1/2)+\eps(b(w)+\Phi(w)-(1/2))} }\, ,\\
b(w) & \equiv w^2\big(\Phi(w)-1\big)+w\,\phi(w)\, ,\label{eq:bw_definition}
\end{align}
which needs to be optimized over $\eps\in (0,\beps]$, and $w\in
[0,x/\sqrt{\eps}]$. Our claim is equivalent to saying that the minimum
cannot be in the interior of this domain.

Since $G$ is analytic in the mentioned domain, a minimum in the
interior must satisfy $\partial_{w} G (w,\eps) = \partial_{\eps}
G(w,\eps)=0$.
Simple calculus shows that these two conditions are equivalent
--respectively-- to:
\begin{align}
2\eps b'(w) \,\Big[x^2+\frac{1}{2}+\eps
\Big(b(w)+\Phi(w)-\frac{1}{2}\Big)\Big] &=
\big[x^2+\eps\, b(w)\big]\,\eps\big[b'(w)+\phi(w)\big]\, ,\\
2b(w) \,\Big[x^2+\frac{1}{2}+\eps
\Big(b(w)+\Phi(w)-\frac{1}{2}\Big)\Big] &=
\big[x^2+\eps\, b(w)\big]\,\Big[b(w)+\Phi(w)-\frac{1}{2}\Big]\, .
\end{align}
Taking the ratio of these equations, we obtain the necessary
condition
\begin{align}
\frac{b'(w)}{b(w)} = \frac{b'(w)+\phi(w)}{b(w)+\Phi(w)-(1/2)}\, ,
\end{align} 
or equivalently
\begin{align}
\Big(\Phi(x)-\frac{1}{2}\Big)b'(x) = \phi(x)\, b(x)\, .
\end{align} 
Lemma \ref{lemma:ElementaryInequality} establishes that this equation
does not have any solution in $\reals_{>0}$ and hence $G(w,\eps)$ does
not have stationary points in domain $\eps\in(0,\beps]$, $w\in
[0,x/\sqrt{\eps}]$. This finishes our proof.
\end{proof}

\begin{figure}
\begin{center}
\includegraphics[width = 12cm]{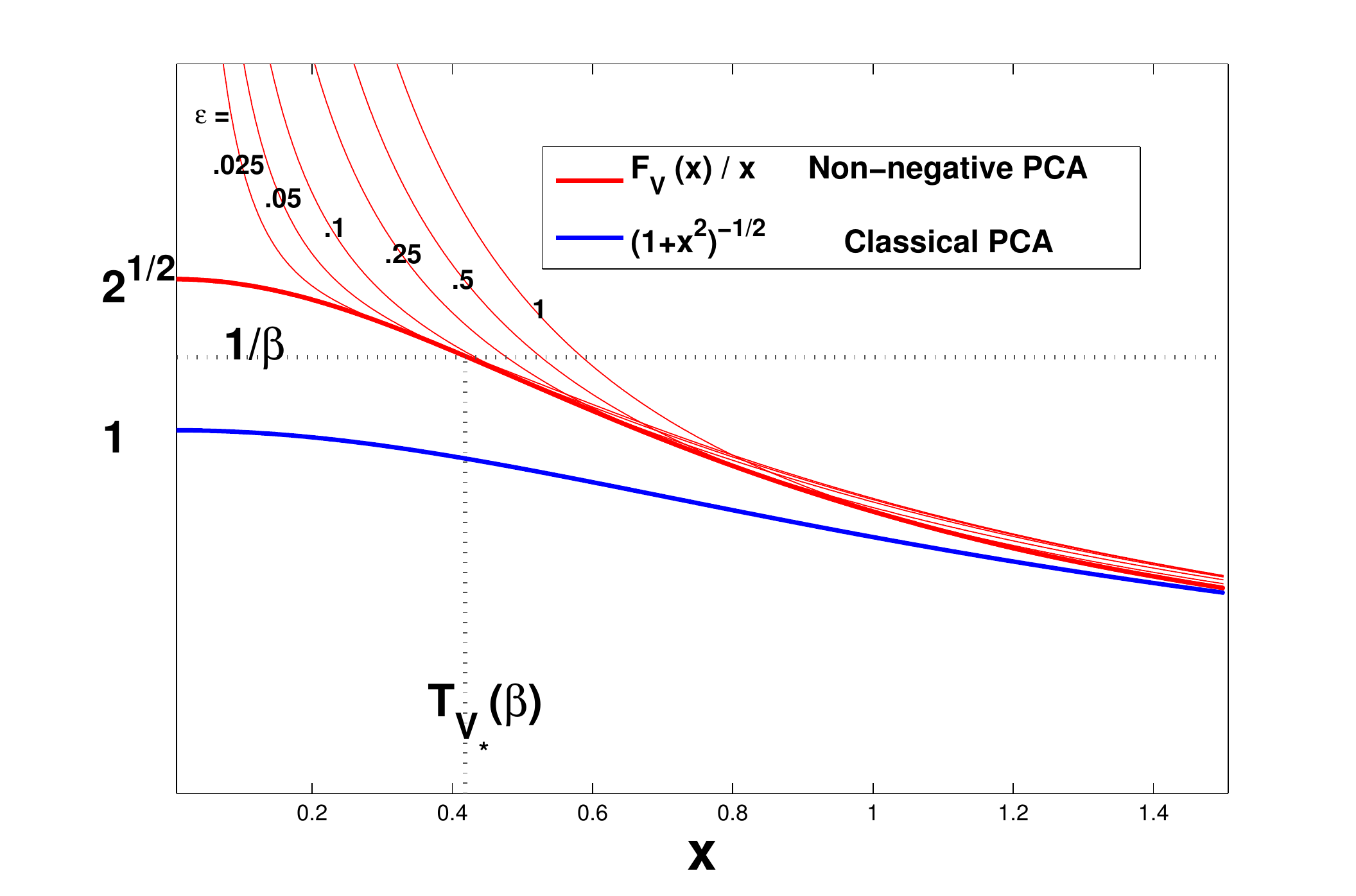}
\caption{The function $\F_V(x) / x$ where $V$ is a mixture of two Dirac $\delta$s at $0$ and at $1/\sqrt \varepsilon$ with various values of $\varepsilon$, and the worst case curve $\F_*(x)/x$. The analogue curve in the case of classical PCA is drawn in blue and the construction of $\T_*(\beta)$ for $\beta \in (1/\sqrt 2,1)$ is illustrated with dashed lines.}\label{fig:symWorstCase}
\end{center}
\end{figure}
We conclude with a Corollary of Theorem \ref{thm:LeastFavorableF}.
(Figure \ref{fig:symWorstCase} provides an illustration of the
argument used in the proof.)
\begin{corollary}\label{coro:LeastFavorableT}
Fix $\beta\in (0,\infty)$. Then for any random variable $V$ with probability
distribution $\mu_V\in\cP_{\beps}$, we have
\begin{align}
\T_V(\beta) \ge \inf_{\eps\in (0,\beps]} \T_{\eps} (\beta)\, . \label{eq:WorstCaseT}
\end{align}
Further, for any $\beta>1/\sqrt{2}$, the infimum on the right-hand
side is achieved at some $\eps_*\in (0,\beps]$.
\end{corollary}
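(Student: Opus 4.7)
The plan for the inequality \eqref{eq:WorstCaseT} is to apply Theorem \ref{thm:LeastFavorableF} at the point $x = \tau := \T_V(\beta)$. By definition of $\T_V$ one has $\F_V(\tau) = \tau/\beta$, and the theorem supplies some $\eps_0 \in (0, \beps]$ with $\F_{\eps_0}(\tau) \le \F_V(\tau) = \tau/\beta$. This rearranges to $\tau/\F_{\eps_0}(\tau) \ge \beta$; since the map $x \mapsto x/\F_{\eps_0}(x)$ is strictly increasing and crosses $\beta$ at $x = \T_{\eps_0}(\beta)$ by Lemma \ref{lem:Twelldefined}, this forces $\T_{\eps_0}(\beta) \le \tau$, and taking the infimum over $\eps$ gives \eqref{eq:WorstCaseT}.

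For the attainment claim when $\beta > 1/\sqrt 2$, the strategy is a compactness argument on the closed interval $[0, \beps]$. First I would verify that $\eps \mapsto \T_\eps(\beta)$ is continuous on $(0, \beps]$, which follows from the implicit function theorem applied to $F(x, \eps) = x - \beta \F_\eps(x)$: the derivative $\partial_x F(\T_\eps(\beta), \eps) = 1 - \beta \F_\eps'(\T_\eps(\beta))$ is strictly positive by Eq. \eqref{eq:ClaimDerivative}, and $(x, \eps) \mapsto \F_\eps(x)$ is jointly smooth. Continuity at $\eps = 0$, with limiting value $\T_0(\beta) := \sqrt{\beta^2 - 1/2}$, is the content of Lemma \ref{rk:sparseRegimeFG}. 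By compactness the minimum over $[0, \beps]$ is then attained at some $\eps_* \in [0, \beps]$; it remains to rule out the case $\eps_* = 0$.

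Ruling out $\eps_* = 0$ amounts to exhibiting some $\eps \in (0, \beps]$ with $\T_\eps(\beta) < \sqrt{\beta^2 - 1/2}$, which by strict monotonicity of $x \mapsto x/\F_\eps(x)$ reduces to showing $\F_\eps(x_*) < \F_0(x_*)$ at $x_* = \sqrt{\beta^2 - 1/2}$, where $\F_0(x) := x/\sqrt{1/2 + x^2}$ denotes the sparse limit from Lemma \ref{rk:sparseRegimeFG}. Writing $\F_\eps(x) = N(\eps)/D(\eps)$ with $N(\eps) = x\Phi(u) + \sqrt{\eps}\,\phi(u)$ and $D(\eps)^2 = (1-\eps)/2 + \eps\Phi(u) + x^2\Phi(u) + x\sqrt{\eps}\,\phi(u)$, where $u = x/\sqrt{\eps}$, and using the Mill's-type inequality $1 - \Phi(u) < \phi(u)/u$ together with $\sqrt{\eps} = x/u$, direct manipulation yields
\begin{align*}
D(\eps)^2 - \bigl(\tfrac{1}{2} + x^2\bigr) &= \eps\bigl(\Phi(u) - \tfrac{1}{2}\bigr) + x^2\bigl[\phi(u)/u - (1 - \Phi(u))\bigr], \\
N(\eps) - x &= x\bigl[\phi(u)/u - (1 - \Phi(u))\bigr].
\end{align*}
The main obstacle is that \emph{both} excesses are positive, so the strict inequality $\F_\eps < \F_0$ cannot come merely from monotonicity in $\eps$. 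It follows instead from a polynomial-versus-exponential comparison: as $\eps \to 0^+$ the dominant term in the denominator excess is $\eps/2$, while the bracketed quantity $\phi(u)/u - (1 - \Phi(u))$ is of order $\phi(u)/u^3$ and hence exponentially small in $1/\eps$. Therefore the denominator excess dominates, giving $\F_\eps(x_*) < \F_0(x_*)$ for all sufficiently small $\eps > 0$, so $\eps_* > 0$ and the corollary follows.
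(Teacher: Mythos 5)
Your proof is correct and takes essentially the paper's route: the inequality is the paper's own argument (apply Theorem \ref{thm:LeastFavorableF} at $x=\T_V(\beta)$ and use strict monotonicity of $x\mapsto x/\F_{\eps}(x)$ from Lemma \ref{lem:Twelldefined}), merely phrased directly instead of by contradiction. For attainment, your excess computation—numerator gain exponentially small in $1/\eps$ versus a denominator gain of order $\eps/2$—is exactly the content of the paper's expansion $\F_{\eps}(x)=x/\sqrt{(1+\eps)/2+x^2}+O(\phi(x/\sqrt{\eps}))$; the paper packages this as $\eps\mapsto\T_{\eps}(\beta)$ being decreasing for small $\eps$, while you use continuity plus compactness on $[0,\beps]$ and rule out a minimizer at $\eps=0$, which is an equally valid wrapping of the same key estimate.
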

\begin{proof}
Assume the claim (\ref{eq:WorstCaseT}) does not hold. Then there
exists $\mu_V\in \cP_{\beps}$  such that
$\T_V(\beta)<\T_{\eps}(\beta)$ for all $\eps\in (0,\beps]$. Now, on the
one hand, by definition we have 
\begin{align}
\frac{1}{\beta} = \frac{\F_V(\T_V(\beta))}{\T_V(\beta)} =
\frac{\F_\eps(\T_\eps(\beta))}{\T_\eps(\beta)}  \, .
\end{align}
On the other hand, by Theorem \ref{thm:LeastFavorableF}, there exists
$\eps_0\in (0,\beps]$ such that $\F_V(x)\ge\F_{\eps_0}(x)$ for $x=
\T_V(\beta)\in \reals_{>0}$. Using this fact, the contradiction
assumption $\T_V(\beta)<\T_{\eps_0}(\beta)$, and the fact that
$x\mapsto \F_{\eps_0}(x)/x$ is strictly decreasing on $\reals_{>0}$ as
shown in the proof of 
Lemma \ref{lem:Twelldefined}, we get
\begin{align}
\frac{1}{\beta} = \frac{\F_V(\T_V(\beta))}{\T_V(\beta)} \ge
 \frac{\F_{\eps_0}(\T_V(\beta))}{\T_V(\beta)}>
 \frac{\F_{\eps_0}(\T_{\eps_0}(\beta))}{\T_{\eps_0}(\beta)}
 =\frac{1}{\beta}\, .
\end{align}
We therefore reached a contradiction, which proves the claim
(\ref{eq:WorstCaseT}).

In order to prove that the infimum is achieved at some $\eps_*\in
(0,\beps]$, note that $\eps\mapsto\T_{\eps}(\beta)$ is clearly
continuous and, by Lemma \ref{rk:sparseRegimeFG},
\begin{align}
\lim_{\eps\to 0}\T_{\eps}(\beta) = \sqrt{\beta^2-(1/2)}\, .
\end{align}
It is therefore sufficient to show that $\eps\to\T_{\eps}(\beta)$ is
decreasing for $\eps$ small enough. By an argument similar to the
above, this follows if we show that $\eps\mapsto \F_{\eps}(x)$ is
decreasing for $x=\T_0(\beta) = \sqrt{\beta^2-(1/2)}$ and  $\eps$
small enough. Indeed using the definition (\ref{eq:FepsDefinition})
and recalling that $\Phi(w) = 1-O(\phi(w))$ as $w\to\infty$, we get,
for every fixed $x>0$
\begin{align}
\F_{\eps}(x) = \frac{x}{\sqrt{\frac{1+\eps}{2}+x^2}}
+O\big(\phi(x/\sqrt{\eps})\big)\, ,
\end{align}
which is of course decreasing in $\eps$ for $\eps\in [0,c(x)]$ with $c(x)>0$.
\end{proof}

\subsubsection{Proof of Theorems \ref{th:Fworstcase}}
\label{subsec:WorstSymm}

First let $\beta \in [0, 1/\sqrt 2]$. 
We then set $\ell = \lfloor n\eps\rfloor$ and
\begin{align}
(\bvz)_i = \begin{cases}
1/\sqrt{\ell} & \mbox{ for } i\in \{1,2,\dots,\ell\},\\
0 & \mbox{ for } i\in\{\ell+1,\dots,n\} \, .
\end{cases}
\end{align}
Then of course $\{\bvz(n)\}_{n\ge 0}$ converges in empirical
distribution to $\mu_{\eps}$ and, by Theorem \ref{th:mainSym} 
\begin{align}
\lim_{n\to\infty}\<\bv^+,\bv_0\>&=
\F_\eps(\T_\eps(\beta))\,   ,
\end{align}
with $\T_{\eps}(\beta)$ the only non-negative solution of $x =
\beta\F_{\eps}(x)$. By Lemma \ref{rk:sparseRegimeFG}
(cf. Eqs.~(\ref{eq:Flimit}), (\ref{eq:TepsToZero})), we have
$\lim_{\eps\to 0} \F_\eps(\T_\eps(\beta)) = 0$, and hence
\begin{align}
\lim_{\eps\to 0}\lim_{n\to\infty}\<\bv^+,\bv_0\>&= 0\, .
\end{align}
The claim (\ref{eq:WorstCaseLowSNR}) then follows by replacing $\eps$,
by sequence  $\{\eps_n\}_{n\ge 1}$ with $\eps_n\downarrow 0$
sufficiently slowly. The limit vanishes in this case as well by a
standard argument.

Next consider the claim (\ref{eq:WorstCaseHighSNR}).
We let $\eps_*$ be the value achieving the infimum  in
Eq.~(\ref{eq:WorstCaseT}), which exists by Corollary \ref{coro:LeastFavorableT}.
 It is obvious (by another application of Theorem \ref{th:mainSym}) that equality holds
for the stated choice of $\bvz(n)$. 
Assume by contradiction that the
inequality (\ref{eq:WorstCaseHighSNR}) does not hold for some sequence
$\{\bvz(n)\}$. Then, by tightness, there exists a subsequence along
which the limit on the left hand side exists, and that converges in
empirical distribution to a certain probability measure
$\mu_V\in\cP_{\beps}$. Hence, using Theorem \ref{th:mainSym}, it
follows that (using the
definition of $\T_V(\beta)$)
\begin{align}
\T_V(\beta)<T_{\eps_{*}}(\beta) = \inf_{\eps\in (0,\beps]}\T_{\eps}(\beta)\, .
\end{align}
This contradicts corollary   \ref{coro:LeastFavorableT}, hence proving
our claim.

\subsubsection{Proof of Theorem \ref{th:FworstcaseRec}}
\label{subsec:WorstRec}

The proof of Theorem \ref{th:FworstcaseRec} is very similar to the
proof of Theorem \ref{th:Fworstcase}, and therefore we will only
sketch the first steps.

First if $\beta \in [0,\sqrt{\alpha /   2})$, 
we set $p= \lfloor n\eps\rfloor$
\begin{align}
(\bvz)_i = \begin{cases}
1/\sqrt{\ell} & \mbox{ for } i\in \{1,2,\dots,\ell\},\\
0 & \mbox{ for } i\in\{\ell+1,\dots,p\} \, .
\end{cases}
\end{align}
Then  $\{\bvz(p)\}_{p\ge 0}$ converges in empirical
distribution to $\mu_{\eps}$ and, by Theorem \ref{th:mainRec},
\begin{align}
\lim_{p\to\infty}\<\bv^+,\bvz\>&=
\F_\eps(\S_\eps(\beta,\alpha)/\sqrt{\alpha})\,   .
\end{align}
with $\S_{\eps}(\beta,\alpha)\equiv \S_V(\beta,\alpha)$ for
$V\sim\mu_{\eps}$ is given by Definition \ref{def:functionsFGRTS},
i.e. is the only positive solution $x$ of
\begin{align}
x =
\frac{\beta\F_{\eps}(x/\sqrt{\alpha})}{\sqrt{1+\beta\F_{\eps}(x/\sqrt{\alpha})^2}}\, .
\end{align}
 By Lemma \ref{rk:sparseRegimeFG}
(cf. Eqs.~(\ref{eq:Flimit}), (\ref{eq:SepsToZero})), we have
$\lim_{\eps\to 0} \F_\eps (\S_\eps(\beta,\alpha)/\sqrt{\alpha})= 0$, and hence
\begin{align}
\lim_{\eps\to 0}\lim_{p\to\infty}\<\bv^+,\bv_0\>&= 0\, .
\end{align}
The claim follows by taking $\eps=\eps(p)\to 0$ slowly enough.

Next consider $\beta>\sqrt{\alpha/2}$. By the same argument as in
Corollary \ref{coro:LeastFavorableT}, we have, for any $\mu_V\in\cP_{\beps}$,
\begin{align}
\S_V(\beta,\alpha) \ge \inf_{\eps\in (0,\beps]} \S_{\eps} (\beta,\alpha)\, . \label{eq:WorstCaseS}
\end{align}
Further, for any $\beta>1/\sqrt{2}$, the infimum on the right-hand
side is achieved at some $\eps_*\in (0,\beps]$. 
We then take $V_{*}\sim \mu_{\eps_{*}}$.

Assuming  that the claim (\ref{eq:WorstCaseRecS}) is
false, we can construct by the same tightness argument used in the
previous section, a probability distribution $\mu_V$, such that
$\S_V(\beta,\alpha) < \S_{\eps_*} (\beta,\alpha)$. This contradicts
Eq.~(\ref{eq:WorstCaseS}), which proves our claim.

\section*{Acknowledgements}

This work was
partially supported by  the NSF grant CCF-1319979 and
the grants AFOSR/DARPA FA9550-12-1-0411 and FA9550-13-1-0036.
%
%****************************************************************
%
\appendix

\section{State evolution: Proofs of Proposition \ref{prop:AMPsym} and
Proposition \ref{lem:diffGoesToZeroSymm}}
\label{app:StateEvolution}

In this appendix we characterize the high-dimensional behavior 
of AMP as per Proposition \ref{prop:AMPsym} and
Proposition \ref{lem:diffGoesToZeroSymm}. The analogous 
results for rectangular matrices (namely, Propositions \ref{prop:SEgeneral}
and \ref{lem:diffGoesToZero}) follow from very similar arguments which
we omit here.  

It is convenient to first state two simple facts.  The first one
allows to control small perturbations of a given iterative scheme.
\begin{lemma}\label{lemma:PerturbedAMP}
Let $\bX$ be as in the statement of Proposition \ref{prop:AMPsym}, and
the sequences $\{\bu^t\}_{t\ge 0}$, $\{\btu^t\}_{t\ge 0}$ be defined by
the recursions
\begin{align}
\bu^{t+1} &= \bX\, g_t(\bu^t)-\onsa_tg_{t-1}(\bu^{t-1})\, ,\\
\btu^{t+1} &= \bX\, g_t(\btu^t)-\onsa_tg_{t-1}(\btu^{t-1})+\bDelta^t\, ,
\end{align}
where $\onsa_t\in\reals$ and $g_t:\reals^n\to\reals^n$.

Assume that $\lim_{n\to\infty}\|\bu^0-\btu^{0}\|_2/\sqrt{n} = 0$, $\lim\sup_{n\to\infty}\|\bu^0\|_2^2/n<\infty$ and,
for every $t\in \{0,\dots,T\}$, we have the following, almost surely
\begin{enumerate}
\item $\lim_{n\to\infty}\|\bDelta^t\|_2^2/n =0$.
\item $\lim\sup_{n\to\infty}|\onsa^t| <\infty$.
\item $g_t$ is Lipschitz continuous with bounded Lipschitz
  constant. Namely, there exists $L_t\in \reals$ independent of $n$ such that
$\|g_t(\bu)-g_t(\bu')\|_2\le L_t\|\bu-\bu'\|_2$ for all
$\bu,\bu'\in\reals^n$.
\end{enumerate}
Then, for all $t\in\{0,1,\dots,T+1\}$ we have
\begin{align}
\lim_{n\to\infty}\frac{1}{n}\|\bu^t-\btu^t\|_2^2 &= 0\, ,\label{eq:PerturbedAMP}\\
\lim\sup_{n\to\infty}\frac{1}{n}\|\bu^t\|_2^2 &<\infty\, .\label{eq:BoundedAMP}
\end{align}
\end{lemma}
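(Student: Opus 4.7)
The plan is to prove both claims (\ref{eq:PerturbedAMP}) and (\ref{eq:BoundedAMP}) jointly by induction on $t\in\{0,1,\dots,T+1\}$. The base case $t=0$ is immediate from the stated hypotheses on $\bu^0$ and $\btu^0$. For the inductive step, I would subtract the two recursions to obtain a linear equation for the discrepancy $\bw^t\equiv\bu^t-\btu^t$:
\begin{align}
\bw^{t+1} = \bX\bigl(g_t(\bu^t)-g_t(\btu^t)\bigr) -\onsa_t\bigl(g_{t-1}(\bu^{t-1})-g_{t-1}(\btu^{t-1})\bigr) -\bDelta^t\, .
\end{align}
Applying the triangle inequality together with the Lipschitz condition (hypothesis~3) yields
\begin{align}
\|\bw^{t+1}\|_2 \leq \|\bX\|_2\, L_t\, \|\bw^t\|_2 + |\onsa_t|\, L_{t-1}\, \|\bw^{t-1}\|_2 + \|\bDelta^t\|_2\, .
\end{align}
Dividing by $\sqrt{n}$, squaring, and using $(a+b+c)^2\le 3(a^2+b^2+c^2)$, every term on the right-hand side tends almost surely to zero: the two $\bw$-terms vanish by the induction hypothesis, $\|\bDelta^t\|_2^2/n\to 0$ by hypothesis~1, and the prefactors $\|\bX\|_2$, $L_t$, $L_{t-1}$, $|\onsa_t|$ are all bounded almost surely (the first by the spectral input discussed below, the others by hypotheses~2 and~3). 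This establishes (\ref{eq:PerturbedAMP}) at index $t+1$.

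For the boundedness claim (\ref{eq:BoundedAMP}), the analogous argument gives
\begin{align}
\|\bu^{t+1}\|_2 \leq \|\bX\|_2\bigl(L_t\|\bu^t\|_2 + \|g_t(0)\|_2\bigr) + |\onsa_t|\bigl(L_{t-1}\|\bu^{t-1}\|_2 + \|g_{t-1}(0)\|_2\bigr)\, ,
\end{align}
and combining the inductive hypothesis $\limsup_n \|\bu^t\|_2^2/n<\infty$ with the a.s. boundedness of $\|\bX\|_2$ and $|\onsa_t|$ yields the desired uniform bound, provided $\|g_t(0)\|_2/\sqrt{n}$ is bounded. In the AMP setting this is automatic: the denoisers $g_t$ of interest (e.g.\ $f$ and the normalizations introduced in \ref{eq:AMPsym}) act componentwise, so $g_t(0)$ is a scalar multiple of $\mathbf{1}$ with $O(1)$ entries. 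If one wishes to state the lemma free of the AMP context, this property can simply be added as a fourth hypothesis.

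The only non-routine ingredient is the almost sure boundedness of the operator norm $\|\bX\|_2$. Since $\bX=\beta\bvz\bvz^{\sT}+\bZ$ with $\|\bvz\bvz^{\sT}\|_2=1$ and $\bZ$ a Wigner matrix with $\bZ_{ij}\sim\normal(0,1/n)$, the Bai--Yin theorem gives $\|\bZ\|_2\to 2$ almost surely, whence $\limsup_{n\to\infty}\|\bX\|_2\leq \beta+2<\infty$ almost surely. Everything else in the argument is bookkeeping: iterating a finite number ($T+1$) of triangle and Lipschitz estimates. The main subtlety to watch for is ensuring that all of the bounding constants at each stage ($\|\bX\|_2$, $L_t$, $|\onsa_t|$) admit a \emph{common} almost sure bound over the event on which the induction is run, which is immediate since the induction is of bounded finite length $T$.
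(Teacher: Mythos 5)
Your argument is essentially the paper's own proof: induction on $t$, subtracting the two recursions, applying the triangle inequality and the Lipschitz hypothesis, and controlling $\|\bX\|_2\le\beta+\|\bZ\|_2$ by standard random-matrix bounds (the paper uses an exponential tail bound from \cite{Guionnet} where you invoke Bai--Yin; either route gives the needed almost-sure control). Your additional observation that the boundedness claim (\ref{eq:BoundedAMP}) also requires $\|g_t(0)\|_2/\sqrt{n}$ to stay bounded is a fair point the paper leaves implicit (it only says the claim "follows by a similar argument"), and as you note it holds for the specific denoisers to which the lemma is applied.
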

\begin{proof}
The proof is immediate by induction over $t$.
We will prove Eq.~(\ref{eq:PerturbedAMP}): Eq.~(\ref{eq:BoundedAMP})
follows by a similar argument. The case $t=0$ holds by
assumption.
In order to prove the induction step, note that
$\|\bX\|_2\le \beta+\|\bZ\|_2\le \beta+3$ with probability larger that
$1-c^{-1}e^{-c\, n}$ for some $c>0$ \cite{Guionnet}.
By triangular
inequality 
\begin{align}
\|\bu^{t+1}-\btu^{t+1}\|_2 &\le
\|\bX\|_2\big\|g_t(\bu^{t})-g_t(\btu^{t})\big\|_2+
|\onsa_t|\,
\big\|g_{t-1}(\bu^{t-1})-g_{t-1}(\btu^{t-1})\big\|_2+\|\bDelta^t\|_2\\
&\le L(\beta+3)
\big\|\bu^{t}-\btu^{t}\big\|_2+|\onsa_t|\,\big\|\bu^{t-1}-\btu^{t-1}\big\|_2
+\|\bDelta^t\|_2\, ,
\end{align}
where the second inequality holds with probability at least $1-c^{-1}e^{-c\, n}$.
The induction claim follows by dividing the above inequality by $\sqrt{n}$.
\end{proof}

The second remark allows to establish limit results as in
Proposition \ref{prop:AMPsym}, once they have been established for a
perturbed sequence.
\begin{lemma}\label{lemma:Perturb}
Assume that the sequences of vectors $\bu=\bu(n)$, $\btu= \btu(n)$
satisfy
\begin{align}
\lim_{n\to\infty}\frac{1}{n}\|\bu(n)-\btu(n)\|_2^2 &= 0\, ,\label{eq:PerturbAssumption1}\\
\lim\sup_{n\to\infty}\frac{1}{n}\|\btu(n)\|_2^2  &<\infty\, ,\label{eq:PerturbAssumption2}
\end{align} 
and further assume $\buz = \buz(n)$ be such that $\sup_n\|\buz(n)\|_2<\infty$.
If $\lim_{n\to\infty}n^{-1}\sum_{i=1}^n\psi(\btu_i,\sqrt{n}(\buz)_i)$
exists for some  pseudo-Lipschitz function $\psi$,
then
\begin{align}
\lim_{n\to\infty}\frac{1}{n}\sum_{i=1}^n\psi(\bu_i,\sqrt{n}(\buz)_i) =
\lim_{n\to\infty}\frac{1}{n}\sum_{i=1}^n\psi(\btu_i,\sqrt{n}(\buz)_i)\, .
\end{align}
\end{lemma}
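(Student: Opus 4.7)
The plan is to control the difference of the two empirical averages by a single application of the pseudo-Lipschitz property of $\psi$ followed by Cauchy--Schwarz, so that the conclusion follows directly from hypothesis \eqref{eq:PerturbAssumption1}.

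First I would apply the pseudo-Lipschitz inequality coordinatewise to the two-dimensional vectors $\bx_i \equiv (\bu_i,\sqrt{n}(\buz)_i)$ and $\by_i \equiv (\btu_i,\sqrt{n}(\buz)_i)$. Since $\|\bx_i-\by_i\|_2 = |\bu_i - \btu_i|$, the definition at the end of Section \ref{sec:Notations} yields
\[
\bigl|\psi(\bx_i)-\psi(\by_i)\bigr| \;\le\; L\bigl(1+\|\bx_i\|_2+\|\by_i\|_2\bigr)\,|\bu_i-\btu_i|,
\]
with $L$ depending only on $\psi$. Summing over $i$, dividing by $n$, and applying Cauchy--Schwarz in $\reals^n$ yields the bound
\[
\Bigl|\tfrac{1}{n}\textstyle\sum_{i=1}^n \psi(\bx_i) - \tfrac{1}{n}\textstyle\sum_{i=1}^n \psi(\by_i)\Bigr| \;\le\; \frac{L}{\sqrt{n}}\;\frac{\|\bu-\btu\|_2}{\sqrt{n}}\cdot \Bigl(\textstyle\sum_{i=1}^n (1+\|\bx_i\|_2+\|\by_i\|_2)^2\Bigr)^{1/2}.
\]

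Next I would verify that the last factor is $O(\sqrt{n})$. Expanding $(1+\|\bx_i\|_2+\|\by_i\|_2)^2 \le 3(1+\|\bx_i\|_2^2+\|\by_i\|_2^2)$ and summing, this reduces to checking that $\|\bu\|_2^2/n$, $\|\btu\|_2^2/n$ and $\|\buz\|_2^2$ are uniformly bounded in $n$. The first is immediate from the assumption $\sup_n\|\buz(n)\|_2<\infty$; the second is hypothesis \eqref{eq:PerturbAssumption2}; the last then follows from the triangle inequality $\|\bu\|_2 \le \|\btu\|_2 + \|\bu-\btu\|_2$ combined with \eqref{eq:PerturbAssumption1}. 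Hence the upper bound is $O(\|\bu-\btu\|_2/\sqrt{n})$, which vanishes by \eqref{eq:PerturbAssumption1}.

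Combined with the assumed existence of $\lim_n n^{-1}\sum_i \psi(\btu_i,\sqrt{n}(\buz)_i)$, this yields the equality of limits. I do not expect any real obstacle here: the argument is entirely mechanical, and the only point worth noting is that $L$ is the fixed constant attached to $\psi$ (not depending on $n$), which is built into the definition of pseudo-Lipschitz used throughout the paper.
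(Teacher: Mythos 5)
Your proposal is correct and follows essentially the same route as the paper: apply the pseudo-Lipschitz bound coordinatewise, use Cauchy--Schwarz, and control the resulting factor via \eqref{eq:PerturbAssumption2}, the bound on $\|\buz\|_2$, and the triangle inequality for $\|\bu\|_2$. The only blemish is a harmless mislabeling in your final paragraph (the roles of ``first'' and ``last'' among the three quantities are swapped), but all three bounds are justified, so the argument stands as written.
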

\begin{proof}
Using the pseudo-Lipschitz property of $\psi$, and Cauchy-Schwartz,
we get
\begin{align}
\frac{1}{n}\sum_{i=1}^n
\Big|\psi(\bu_i,\sqrt{n}(\buz)_i) -\psi(\btu_i,\sqrt{n}(\buz)_i) \Big|
&\le 
\frac{L}{n}\sum_{i=1}^n\big(1+2\sqrt{n}|(\buz)_i|+|\bu_i|
+|\btu_i|\big)\, |\bu_i-\btu_i|\\
&\le
\frac{L}{n}\big(\sqrt{n}+2\sqrt{n}\|\buz\|_2+\|\bu\|_2+\|\btu\|_2\big)
\,\|\bu-\btu\|\, .
\end{align}
By Eqs.~(\ref{eq:PerturbAssumption1}) and (\ref{eq:PerturbAssumption2}), 
$\lim\sup_{n\to\infty}\frac{1}{n}\|\btu^t\|_2^2 <\infty$. Using this
fact together with the other assumptions, we get from the last
inequality
\begin{align}
\lim\sup_{n\to\infty}\frac{1}{n}\sum_{i=1}^n
\Big|\psi(\bu_i,\sqrt{n}(\buz)_i) -\psi(\btu_i,\sqrt{n}(\buz)_i) \Big|
=0\, ,
\end{align}
which proves our claim.
\end{proof}

\subsection{Proof of Proposition \ref{prop:AMPsym}}
\label{app:AMPsym}

The proof consists in modifying the AMP sequence $\{\bv^t\}_{t\ge 0}$ 
as to reduce ourselves to the setting of \cite{BM-MPCS-2011}.
The first step consists in  introducing a sequence $\{\bw^t\}_{t\ge
  0}$ defined by $\bw^0 = (1,1,\dots,1)^{\sT}$, $\bw^{-1} = 0$ and letting,
for all $t\ge 0$, 
\begin{align}
\bw^{t+1} = \bX\,(\bw^t)_+-\tons_t\, (\bw^{t-1})_+\, ,\label{eq:SimplifiedAMP}
\end{align}
where $\tons_t = \|(\bw^t)_+\|_0/n$.
The relation between this recursion  and the original one is quite
direct: they differ only by a normalization factor.
\begin{lemma}\label{lemma:VvsW}
Let $\{\bw^t\}_{t\ge 0}$ be defined per Eq.~(\ref{eq:SimplifiedAMP})
and $\{\bv^t\}_{t\ge 0}$ be the AMP sequence, as per
(\ref{eq:AMPsym}). Then, for all $t\ge 1$ we have 
\begin{align}
\bv^{t} = \sqrt{n}\, \frac{\bw^t}{\|(\bw^{t-1})_+\|_2}\, .
\end{align}
\end{lemma}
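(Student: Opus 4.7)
The plan is a straightforward induction on $t$. The essential observation is that the normalized projection $f(\bx) = \sqrt{n}(\bx)_+/\|(\bx)_+\|_2$ is scale-invariant on nonzero inputs, i.e.\ $f(c\bx) = f(\bx)$ for every $c>0$. Consequently, if $\bv^t$ and $\bw^t$ differ only by a positive multiplicative constant then they produce the same $f$-image, and the only quantity in the AMP recursion that responds to this rescaling is the Onsager coefficient $\ons_t$, whose dependence on $\|(\bv^t)_+\|_2$ picks up the scale factor in a controlled way. The lemma is thus an algebraic identity witnessing that the rescaling encoded in $c_t \equiv \|(\bw^{t-1})_+\|_2/\sqrt n$ cancels between $\ons_t$ and $f(\bv^{t-1})$ to leave exactly the simpler Onsager coefficient $\tons_t = \|(\bw^t)_+\|_0/n$.

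For the base case $t=1$ I would check directly: since $\bv^0 = \bw^0 = \one$ and $\bv^{-1} = \bw^{-1} = 0$ (the memory term is absent at initialization), one has $f(\bv^0) = \one$ and hence $\bv^1 = \bX\one = \bw^1$; because $\|(\bw^0)_+\|_2 = \sqrt n$, this is exactly $\sqrt n\,\bw^1/\|(\bw^0)_+\|_2$, as claimed.

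For the inductive step, assume $\bv^t = \bw^t/c_t$ with $c_t = \|(\bw^{t-1})_+\|_2/\sqrt n$. Scale invariance of $f$ immediately gives $f(\bv^t) = \sqrt n\,(\bw^t)_+/\|(\bw^t)_+\|_2$, and an analogous expression for $f(\bv^{t-1})$ from the preceding induction step (with the $t=0$ initialization handled separately as above). Since $\bv^t$ and $\bw^t$ share the same sign pattern, $\|(\bv^t)_+\|_0 = \|(\bw^t)_+\|_0$ while $\|(\bv^t)_+\|_2 = \|(\bw^t)_+\|_2/c_t$, so
$$\ons_t \;=\; \frac{\|(\bv^t)_+\|_0}{\sqrt n\,\|(\bv^t)_+\|_2} \;=\; \frac{c_t\,\|(\bw^t)_+\|_0}{\sqrt n\,\|(\bw^t)_+\|_2}.$$
Substituting into \ref{eq:AMPsym}, factoring out the common prefactor $\sqrt n/\|(\bw^t)_+\|_2$, and observing that the $c_t$ from $\ons_t$ combines with the $1/\|(\bw^{t-1})_+\|_2$ inside $f(\bv^{t-1})$ to produce exactly $1/n$, the recursion collapses to
$$\bv^{t+1} \;=\; \frac{\sqrt n}{\|(\bw^t)_+\|_2}\Bigl[\bX(\bw^t)_+ \;-\; \frac{\|(\bw^t)_+\|_0}{n}(\bw^{t-1})_+\Bigr] \;=\; \frac{\sqrt n\,\bw^{t+1}}{\|(\bw^t)_+\|_2},$$
closing the induction.

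There is no real obstacle here: the whole argument is an algebraic verification engineered by the specific normalizations chosen in the definitions of $f$, $\ons_t$, and $\tons_t$. The only mildly delicate point is the $t=0$ case, where $f(\bv^{-1}) = f(0)$ is formally undefined, but this is immaterial because the memory term is absent from the very first update by the initialization $\bv^{-1} = \bw^{-1} = 0$.
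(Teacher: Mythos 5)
Your proof is correct and takes essentially the same route as the paper's: both argue by induction, using the scale invariance of $f$ and the relation between the Onsager coefficients $\ons_t$ and $\tons_t$ to cancel the normalization factor $\|(\bw^{t-1})_+\|_2/\sqrt{n}$, the only difference being that the paper rescales the $\bw$-recursion to recover the \ref{eq:AMPsym} update while you rewrite the \ref{eq:AMPsym} update to recover the rescaled $\bw$-recursion. Your explicit treatment of the base case and of the vanishing memory term at $t=0$ is consistent with what the paper leaves implicit.
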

\begin{proof}
The proof is by induction over the number of iterations. Let us first assume that it holds 
for all iterations until $t$, and  prove it for iteration
$t+1$. Multiplying Eq.~(\ref{eq:SimplifiedAMP}) by
$\sqrt{n}/\|(\bw^{t})_+\|_2$, we get
\begin{align}
\sqrt{n}\, \frac{\bw^{t+1}}{\|(\bw^{t-1})_+\|_2} = \bX\,
\frac{(\bw^t)_+\sqrt{n}}{\|(\bw^{t})_+\|_2}-
\frac{1}{\sqrt{n}}\,
\frac{\|(\bw^t)_+\|_0}{\|(\bw^t)_+\|_2}\,(\bw^{t-1})_+\, .\label{eq:InductionWV}
\end{align}
Note that the induction hypothesis implies $\bv^t = c\,\bw^t$ for some
constant $c$, and hence 
\begin{align}
\sqrt{n}\frac{ (\bw^t)_+}{\|(\bw^t)_+\|_2}= \sqrt{n}\frac{ (\bv^t)_+}{\|(\bv^t)_+\|_2} =f(\bv^t) \,. \label{eq:FirstWV}
\end{align}
By the same argument and using $\|(\bv^t)_+\|_2 =
\sqrt{n}\|(\bw^t)_+\|_2/\|(\bw^{t-1})_+\|_2$, we get
\begin{align}
\frac{1}{\sqrt{n}}\,
\frac{\|(\bw^t)_+\|_0}{\|(\bw^t)_+\|_2}\,(\bw^{t-1})_+ &=
\frac{1}{n}\,
\frac{\|(\bw^t)_+\|_0\|(\bw^{t-1})_+\|_2}{\|(\bw^t)_+\|_2}\,f(\bv^{t-1})
\\
&= \frac{1}{\sqrt{n}}\,
\frac{\|(\bv^t)_+\|_0}{\|(\bv^t)_+\|_2}\,f(\bv^{t-1}) =\ons_t\,
f(\bv^{t-1})\, .\label{eq:SecondWV}
\end{align}
Using Eqs.~(\ref{eq:FirstWV}) and (\ref{eq:SecondWV}) in
Eq.~(\ref{eq:InductionWV}), we obtain
\begin{align}
\sqrt{n}\, \frac{\bw^{t+1}}{\|(\bw^{t-1})_+\|_2} =
\bX\,f(\bv^t)-\ons_t\,f(\bv^{t-1})\, .
\end{align}
The induction step is completed by comparing this with
(\ref{eq:AMPsym}). The base case follow easily by a similar argument.
\end{proof}

As a second step, we introduce a sequence $\{\bs^t\}_{t\ge 0}$ defined
as  follows. First , we let $\mu_t$, $\sigma_t$ be scalars given by
\begin{align}
\mu_{t+1} & = \beta\, \E\big\{V(\mu_t V+\sigma_t G)_+\big\}\,,\label{eq:StateEvolutionSplit1}\\
\sigma^2_{t+1} & = \E\big\{(\mu_t V+\sigma_t G)_+^2\big\}\,. \label{eq:StateEvolutionSplit2}
\end{align}
with initial conditions $\mu_1 = \beta\E(V)$ and $\sigma_1 = 1$. 
Note that by Cauchy-Schwartz
$\mu_{t+1}\le\beta\sqrt{\mu_t^2+\sigma_t^2}$ and $\sigma_{t+1}^2\le
\mu_t^2+\sigma_t^2$, whence $\mu_t,\sigma_t<\infty$ for all $t$. 
Further, since $G\ge 0$ with probability $1/2$, we also have
$\mu_{t+1}\ge \beta\mu_t/2$, $\sigma^2_{t+1}\ge \mu_t^2/2$,
whence $\mu_t,\sigma_t\in (0,\infty)$ for all $t$. 

Using
these quantities (and recalling that $\bX = \beta\, \bvz\bvz^{\sT} +\bZ$ with
$(\bZ)_{ij}\sim\normal(0,1)$, i.i.d. for $i<j$), we define
\begin{align}
\bs^{t+1} &= \bZ\, h_t(\bs^t,\bvz\sqrt{n})-\onsd_t\, h_{t-1}(\bs^{t-1},\bvz\sqrt{n})\, ,\\
& h_t(x,y) \equiv (x+\mu_ty)_+ \, ,\label{eq:IterationS}\\
&\onsd_t \equiv \frac{1}{n}\|(\bs^t+\mu_t\bvz\sqrt{n})_+ \|_0\, ,
\end{align}
As usual, here  $h_t(\bs^t,\bvz\sqrt{n})$ is interpreted as the
component-wise application of $h_t$. The initial condition is $\bs^1 =
\bw^1-\mu_1\bv_0$. This iteration is in the form of 
\cite[Theorem 1]{javanmard2013state} (and analogous to \cite[Theorem 4]{BM-MPCS-2011}), which implies immediately the
following.
\begin{lemma}\label{lemma:StateEvolutionS}
For any $t\ge 1$ and any pseudo-Lipshitz function
$\psi:\reals\times\reals\to\reals$, we have, almost surely
\begin{align}
\lim_{n\to\infty}\frac{1}{n}\sum_{i=1}^n\psi\big(\bs^t_i,\sqrt{n}(\bvz)_i\big)
=\E\big\{\psi(\sigma_tG,V)\big\}\, ,
\end{align}
where expectation is with respect to $G\sim\normal(0,1)$ independent
of $V$.
\end{lemma}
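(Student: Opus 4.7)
The plan is to recognize the recursion defining $\{\bs^t\}_{t\ge 1}$ as an instance of the symmetric AMP framework analyzed in \cite{BM-MPCS-2011,javanmard2013state}, so that the lemma follows as a direct application of the state-evolution theorems therein. In that framework, the side information is $\sqrt{n}\bvz$ (which by assumption converges in empirical distribution to $V$), the Gaussian noise matrix is $\bZ$ with $\bZ_{ij}\sim\normal(0,1/n)$ above the diagonal, and the iteration nonlinearity at step $t$ is $h_t(x,y)=(x+\mu_t y)_+$. The latter is $1$-Lipschitz in $x$ uniformly in $y$, which is the only regularity hypothesis those theorems impose on the nonlinearity.

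First I would verify that the Onsager coefficient $\onsd_t$ is the correct one prescribed by the general theorem, namely the empirical average along the iterates of $\partial_x h_t$. Since $\partial_x h_t(x,y)=\ind(x+\mu_t y>0)$ holds Lebesgue-almost-everywhere, its empirical average over $i$ equals $n^{-1}\#\{i:\bs^t_i+\mu_t\sqrt{n}(\bvz)_i>0\}= n^{-1}\|(\bs^t+\mu_t\bvz\sqrt{n})_+\|_0$, matching the stated definition of $\onsd_t$.

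Next I would check that the variance recursion produced by the abstract theorem coincides with the specific recursion~(\ref{eq:StateEvolutionSplit2}). The general state-evolution formula reads $\sigma_{t+1}^2=\E\{h_t(\sigma_t G,V)^2\}=\E\{(\mu_t V+\sigma_t G)_+^2\}$, as required; the auxiliary sequence $\mu_t$ enters only as a deterministic parameter of the nonlinearity rather than as a state-evolution variable, and its recursion~(\ref{eq:StateEvolutionSplit1}) will be used elsewhere to cancel the rank-one signal component generated by $\beta\bvz\bvz^\sT$ when passing from the $\bw$-recursion back to the $\bs$-recursion.

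The main remaining step, and the place where I would expect to need care, is verifying the initial condition: one must show that $(\bs^1,\sqrt{n}\bvz)$ converges in joint empirical distribution to $(\sigma_1 G,V)$ with $G\sim\normal(0,1)$ independent of $V$ and $\sigma_1=1$. Writing $\bw^1=\bX\mathbf{1}=\beta(\bvz^\sT\mathbf{1})\bvz+\bZ\mathbf{1}$ and using $\bvz^\sT\mathbf{1}\approx\sqrt{n}\,\E V$ (which follows from convergence of the empirical distribution of $\sqrt{n}(\bvz)_i$ to $V$), the deterministic signal component of $\bw^1$ is $\approx\beta\sqrt{n}(\E V)\bvz$ and is cancelled by the subtraction of $\mu_1\bvz$ (up to the appropriate scaling of $\bvz$), leaving $\bs^1\approx\bZ\mathbf{1}$. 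Since $\bZ\mathbf{1}$ has entries that form an approximately $\normal(0,1)$ sample, and the correct conditional-independence structure with respect to the rest of the $\bZ$-randomness is provided by Bolthausen's conditioning technique underlying \cite{BM-MPCS-2011,javanmard2013state}, the initialization falls within the scope of those theorems and no additional argument beyond invoking them is needed.
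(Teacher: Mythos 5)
Your proposal is correct and takes essentially the same route as the paper: the paper's proof consists precisely of observing that the recursion defining $\{\bs^t\}$ (driven by $\bZ$, with nonlinearity $h_t(x,y)=(x+\mu_t y)_+$, side information $\sqrt{n}\bvz$, and Onsager coefficient $\onsd_t$ equal to the empirical average of $\partial_x h_t$) is exactly of the form covered by Theorem 1 of \cite{javanmard2013state} (analogous to Theorem 4 of \cite{BM-MPCS-2011}), so the lemma follows by direct citation, with the variance recursion and the initialization $\bs^1\approx\bZ\mathbf{1}$ (hence $\sigma_1=1$) handled exactly as you indicate.
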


The sequences $\{\bs^t\}_{t\ge 0}$ and $\{\bw^t\}_{t\ge 0}$ are in
fact closely related as we show next.
\begin{lemma}\label{lemma:StateEvolutionW}
For any $t\ge 1$ and any pseudo-Lipshitz function
$\psi:\reals\times\reals\to\reals$, we have
\begin{align}
\lim_{n\to\infty}\frac{1}{n}\big\|\bw^t-\mu_t\bvz-\bs^t\big\|_2&=0\,
,\\
\lim_{n\to\infty}\frac{1}{n}\sum_{i=1}^n\psi\big(\bw^t_i,\sqrt{n}(\bvz)_i\big)
&=\E\big\{\psi(\mu_tV+\sigma_tG,V)\big\}\, ,
\end{align}
where expectation is with respect to $G\sim\normal(0,1)$ independent
of $V$.
\end{lemma}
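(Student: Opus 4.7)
My plan is to prove both claims by induction on $t$, via the slightly stronger (and more natural) statement
\[
\frac{1}{\sqrt{n}}\big\|\bw^t - \sqrt{n}\,\mu_t\bvz - \bs^t\big\|_2 \;\xrightarrow{n\to\infty}\; 0\quad\text{a.s.,}
\]
which implies the first displayed claim as stated since $\|\mu_t\bvz\|_2 = \mu_t = O(1)$. Once this coupling is established, the pseudo-Lipschitz limit follows by setting $\btu^t \equiv \bs^t + \sqrt{n}\,\mu_t\bvz$, applying Lemma \ref{lemma:StateEvolutionS} to the test function $\tilde\psi(x,y) = \psi(x + \mu_t y,\, y)$ (which is pseudo-Lipschitz whenever $\psi$ is), and then invoking Lemma \ref{lemma:Perturb} to transfer the limit from $\btu^t$ to $\bw^t$.

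For the base case $t=1$, I would compute $\bw^1 = \bX\mathbf 1 = \beta\,\bvz(\bvz^{\sT}\mathbf 1) + \bZ\mathbf 1$. The assumed empirical-distribution convergence of $\sqrt{n}\,\bvz$ to $V$ gives $\bvz^{\sT}\mathbf 1/\sqrt{n} = n^{-1}\sum_i \sqrt{n}\,(\bvz)_i \to \E V$ almost surely, so $\beta\,\bvz(\bvz^{\sT}\mathbf 1) = \sqrt{n}\,\mu_1\bvz + o(\sqrt{n})\bvz$, which matches the definition $\bs^1 := \bZ\mathbf 1$ (and is consistent with $\sigma_1=1$).

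For the inductive step, I substitute $\bX = \beta\bvz\bvz^{\sT}+\bZ$ to get
\[
\bw^{t+1} = \beta\,\bvz\,\bigl(\bvz^{\sT}(\bw^t)_+\bigr) + \bZ(\bw^t)_+ - \tons_t(\bw^{t-1})_+.
\]
The induction hypothesis combined with $1$-Lipschitz continuity of $x\mapsto x_+$ yields $\|(\bw^s)_+ - h_s(\bs^s,\sqrt{n}\bvz)\|_2 = o(\sqrt{n})$ for $s\in\{t-1,t\}$. Applying Lemma \ref{lemma:StateEvolutionS} to the pseudo-Lipschitz function $\psi(x,y) = y(x+\mu_t y)_+$ gives $\bvz^{\sT} h_t(\bs^t,\sqrt{n}\bvz)/\sqrt{n} \to \E[V(\mu_t V+\sigma_t G)_+] = \mu_{t+1}/\beta$; by Cauchy--Schwarz, the same limit holds for $\bvz^{\sT}(\bw^t)_+/\sqrt{n}$, so the rank-one contribution equals $\sqrt{n}\,\mu_{t+1}\bvz + o(\sqrt{n})\bvz$. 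For the noise term,
\[
\bigl\|\bZ(\bw^t)_+ - \bZ\,h_t(\bs^t,\sqrt{n}\bvz)\bigr\|_2 \le \|\bZ\|_{\mathrm{op}}\cdot o(\sqrt{n}) = o(\sqrt{n}),
\]
using the standard bound $\|\bZ\|_{\mathrm{op}} = O(1)$ a.s.\ \cite{Guionnet}. The Onsager term is split as $(\tons_t - \onsd_t)(\bw^{t-1})_+ + \onsd_t\bigl[(\bw^{t-1})_+ - h_{t-1}(\bs^{t-1},\sqrt{n}\bvz)\bigr]$; the second summand is $o(\sqrt{n})$ by induction, so the remaining task is to show $\tons_t - \onsd_t \to 0$, after which comparison with the recursion for $\bs^{t+1}$ closes the induction.

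The main obstacle is precisely this convergence $\tons_t - \onsd_t \to 0$: both are $\ell_0$-type quantities and the natural test function $\ind(x>0)$ is not pseudo-Lipschitz. I would resolve this by appealing to the extension of Lemma \ref{lemma:StateEvolutionS} (included as part of \cite{BM-MPCS-2011,javanmard2013state}, and stated explicitly for indicators in Proposition \ref{prop:AMPsym}) that applies to $\psi(x,y) = \ind(x\le a)$. Since $\sigma_t>0$ and $G$ is Gaussian, the limiting law of $\mu_t V + \sigma_t G$ is absolutely continuous and puts no mass at $0$, so $0$ is a continuity point of its CDF and the Portmanteau theorem promotes weak convergence to convergence of $\prob(\bs^t_i + \mu_t\sqrt{n}(\bvz)_i > 0)$. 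This yields $\tons_t, \onsd_t \to \prob(\mu_t V + \sigma_t G > 0)$ and closes the argument. Proposition \ref{lem:diffGoesToZeroSymm} then follows as a corollary, since the SE recursion $\tau_{t+1} = \beta\F_V(\tau_t)$ contracts to a unique fixed point (Lemma \ref{lem:convergenceFixedPointSymm}) and the pseudo-Lipschitz limit of $n^{-1}\|\bv^t - \bv^{t+\ell}\|_2^2$ can be computed in closed form through the $(\mu_t,\sigma_t)$ pairs.
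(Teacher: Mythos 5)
Your proposal is correct and follows essentially the same route as the paper: you couple $\bw^t$ with $\bs^t+\mu_t\sqrt{n}\,\bvz$ by induction, control the rank-one term with the pseudo-Lipschitz test function $y(x+\mu_t y)_+$, handle the Onsager mismatch $\tons_t-\onsd_t$ via the no-atom-at-zero argument for $\mu_tV+\sigma_tG$, and transfer the pseudo-Lipschitz limits with Lemma \ref{lemma:Perturb} --- exactly the paper's decomposition into the perturbation $\bDelta^t$ controlled through Lemmas \ref{lemma:PerturbedAMP} and \ref{lemma:StateEvolutionS}. (One small caution: for the indicator step argue directly from Lemma \ref{lemma:StateEvolutionS} plus smoothing, as the paper does with $\psi_\delta$, rather than citing Proposition \ref{prop:AMPsym}, which is itself a consequence of this lemma.)
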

\begin{proof}
Define $\bts^t=\bw^t-\mu_t\bvz\sqrt{n}$. 
Then Eq.~(\ref{eq:SimplifiedAMP})
implies immediately
\begin{align}
\bts^{t+1}   &= \bZ\, h_t(\bs^t,\bvz\sqrt{n})-\onsd_t\,
h_{t-1}(\bs^{t-1},\bvz\sqrt{n})+\bDelta^t\, ,\label{eq:IterationStilde}\\
\bDelta^t & \equiv
\big(\beta\<\bvz,(\mu_t\bvz\sqrt{n}+\bts_t)_+\>-\mu_{t+1}\sqrt{n}\big)\, \bvz
+(\onsd_t-\tons_t)\, (\mu_{t-1}\bvz\sqrt{n}+\bts^{t-1})_+\, .
\end{align}

Next note that  our claim is equivalent to the following holding for every
pseudo-Lipshitz $\psi$ and every iteration number $\ell$:
\begin{align}
\lim_{n\to\infty}\frac{1}{n}\big\|\bts^{\ell}-\bs^{\ell}\big\|_2^2&=0\,
,\label{eq:SvsStilde}\\
\lim_{n\to\infty}\frac{1}{n}\sum_{i=1}^n\psi\big(\bts^{\ell}_i,\sqrt{n}(\bvz)_i\big)
&=\E\big\{\psi(\sigma_{\ell}G,V)\big\}\, .\label{eq:SEInduction}
\end{align}
We prove this by induction over the iteration number.
Assume that the claim indeed holds for all $\ell\in\{1,\dots,t$. Then comparing 
Eq.~(\ref{eq:IterationS}) and Eq.~(\ref{eq:IterationStilde}) we obtain
--by Lemma \ref{lemma:PerturbedAMP}-- that Eq.~(\ref{eq:SvsStilde})
holds for
all $\ell\in\{1,\dots,t,t+1\}$ provided we can prove that
\begin{align}
\lim_{n\to\infty}\frac{1}{n}\|\bDelta^\ell\|_2^2 = 0
\end{align}
for all $\ell\in\{1,\dots,t\}$. Now we have
\begin{align}
\frac{1}{n}\|\bDelta^{\ell}\|_2^2&\le D^{\ell}_1+D^{\ell}_2\, ,\label{eq:BoundDelta}\\
D^{\ell}_1& \equiv 2\big(\beta\<\bvz,(\mu_{\ell}\bvz+\bts_{\ell}
n^{-1/2})_+\>-\mu_{\ell+1}\big)^2\, ,\\
D^{\ell}_2&\equiv 4
(\onsd_\ell-\tons_\ell)^2\Big(\mu_{\ell-1}^2+\frac{1}{n}\|\bts^{\ell-1}\|_2^2\Big)\,
.
\end{align}
Now, using $\psi(x,y) = (\mu_{\ell}y+x)_+y$ in
Eq.~(\ref{eq:SEInduction}) we get for all $\ell\in\{1,\dots,t\}$,
almost surely
\begin{align}
\lim_{n\to\infty}\beta \<\bvz,(\mu_{\ell}\bvz+\bts_{\ell}
n^{-1/2})_+\>& =\beta\,\lim_{n\to\infty} \frac{1}{n}\sum_{i=1}^n \psi\big(\bts^{\ell}_i,\sqrt{n}(\bvz)_i\big)\\
&=\beta\E\big\{V(\mu_{\ell}V+\sigma_{\ell}G)_+\big\} = \mu_{\ell+1}\, .
\end{align}
In other words $D^{\ell}_1\to 0$ almost surely.

Using again Eq.~(\ref{eq:SEInduction}) we have
$n^{-1}\|\bts^{\ell-1}\|_2^2\to\sigma_{\ell-1}^2$ almost surely.
Therefore,
since
$\mu_{t}$ is finite for all $t$, we get
$|D^{\ell}_2|\le C|\onsd_\ell-\tons_\ell|$ for some constant $C$
bounded uniformly in $n$. Finally,
fix $\delta>0$ and let
\begin{align}
\psi_{\delta}(x) =\begin{cases}
1 & \mbox{ if $x>\delta$,}\\
x/\delta &\mbox{ if $x\in (0,\delta)$,}\\
0 & \mbox{ otherwise.}
\end{cases}\label{eq:PsiDelta}
\end{align}
Then
\begin{align}
\onsd_\ell-\tons_\ell&=
\frac{1}{n}\sum_{i=1}^n\Big[\ind(\bs^{\ell}_i+\mu_{\ell}(\bvz)_i\sqrt{n}>0)-
\ind(\bts^{\ell}_i+\mu_{\ell}(\bvz)_i\sqrt{n}>0)\Big]\\
& \le \frac{1}{n}\sum_{i=1}^n\Big[\psi_{\delta}(\bs^{\ell}_i+\mu_{\ell}(\bvz)_i\sqrt{n}+\delta)-
\psi_{\delta}(\bts^{\ell}_i+\mu_{\ell}(\bvz)_i\sqrt{n})\Big]\\
& \le \frac{1}{n}\sum_{i=1}^n\Big[\psi_{\delta}(\bs^{\ell}_i+\mu_{\ell}(\bvz)_i\sqrt{n}+\delta)-
\psi_{\delta}(\bs^{\ell}_i+\mu_{\ell}(\bvz)_i\sqrt{n}) +
\frac{1}{\delta}\|\bs^{\ell}_2-\bts^{\ell}_i|\Big]\\
& \le  \frac{1}{n}\sum_{i=1}^n\Big[\psi_{\delta}(\bs^{\ell}_i+\mu_{\ell}(\bvz)_i\sqrt{n}+\delta)-
\psi_{\delta}(\bs^{\ell}_i+\mu_{\ell}(\bvz)_i\sqrt{n})\Big] +
\frac{1}{\delta\sqrt{n}}\|\bs^{\ell}-\bts^{\ell}\|_2\, .
\end{align}
Taking $n\to\infty$ and using Eqs.~(\ref{eq:SvsStilde}),
(\ref{eq:SEInduction}), we conclude that
\begin{align}
\lim\sup_{n\to\infty} (\onsd_\ell-\tons_\ell)&\le
\E\Big\{\psi_{\delta}(\mu_{\ell}V+\sigma_{\ell}G+\delta)-\psi_{\delta}(\mu_{\ell}V+\sigma_{\ell}G)\Big\}\,,
\end{align}
And since $G$ has a density with respect to Lebesgue measure, this
implies, by letting $\delta\to 0$,
$\lim\sup_{n\to\infty}(\onsd_\ell-\tons_\ell)\le 0$.
A lower bound is obtained by a similar argument  yielding
\begin{align}
\lim_{n\to\infty} (\onsd_\ell-\tons_\ell) = 0\, .
\end{align}
and hence $D^\ell_2\to 0$.
By Eq.~(\ref{eq:BoundDelta}) we have $\|\bDelta^{\ell}\|_2^2/n\to 0$
and hence Eq.~(\ref{eq:SvsStilde}) holds for $\ell=t+1$. Finally,
Eq.~(\ref{eq:SEInduction}) follows for $\ell=t+1$ by Lemma
\ref{lemma:Perturb},
Lemma \ref{lemma:StateEvolutionS} and Eq.~(\ref{eq:SvsStilde}) (for $\ell=t+1$).
\end{proof}

Finally, the proof of Proposition \ref{prop:AMPsym} follows
immediately from Lemma \ref{lemma:StateEvolutionW}, using Lemma
\ref{lemma:VvsW}.
Indeed, by applying \ref{lemma:StateEvolutionW} to $\psi(x,y) =
(x)_+^2$, we get, almost surely
\begin{align}
\lim_{n\to\infty} \frac{1}{\sqrt{n}}\|(\bw^t)_+\|_2 =
\sqrt{\E\{(\mu_tV+\sigma_tG)^2_+\}} = \sigma_{t+1}\in(0,\infty)\, .
\end{align}
Hence, for any pseudo-Lipshitz function $\psi$, almost surely
\begin{align}
\lim_{n \to \infty} \frac 1n \sum_{i=1}^n \psi( \bv_i^t ,\sqrt n
 (\bvz)_i) &= 
\lim_{n \to \infty} \frac 1n \sum_{i=1}^n \psi\Big( \sqrt{n}\frac{\bw_i^t}{\|(\bw^{t-1})_+\|_2} ,\sqrt n
 (\bvz)_i) \\
& = \lim_{n \to \infty} \frac 1n \sum_{i=1}^n \psi\Big(\frac{\bw_i^t}{\sigma_{t+1}} ,\sqrt n
 (\bvz)_i\Big) \\
&   \E \left \{ \psi \Big( \frac{\mu_t}{\sigma_t} V + G , V\Big)
\right \} \, .
\end{align}
We conclude by noting that --by comparison of
Eq.~(\ref{eq:SymmetricStateEvolution})
with Eqs.~(\ref{eq:StateEvolutionSplit1}) and
(\ref{eq:StateEvolutionSplit2}) --
it follows that $\tau_t=\mu_t/\sigma_t$ for all $t$.

Finally the claim $\psi(x,y) = \ind(x\le a)$, follows by a standard
argument already used above. Namely, we use the
bounds $\psi_{\delta}(x)\le\ind(x\le a)\le \psi_{\delta}(x+\delta)$,
with the definition in Eq.~(\ref{eq:PsiDelta}), apply the previous
result to the Lipschitz functions $\psi_{\delta}(x)$,
$\psi_{\delta}(x+\delta)$
and eventually let $\delta\to 0$.

\subsection{Proof of Proposition \ref{lem:diffGoesToZeroSymm}}
\label{app:DiffGoesToZeroSymm}

The proof is based on a version of state evolution that 
describes the asymptotic joint distribution of $\bv^t$, $\bv^s$ for
two distinct times $t,s$. 
For this purpose, we define a function $\H_V:[-1,1]\times\reals_+\times
\reals_+\to\reals$
as follows
\begin{align}
\H_V(Q;\tau_1,\tau_2) = \frac{\E\{(\tau_1V+G_1)_+(\tau_2V+G_2)_+\}}
{\sqrt{\E\{(\tau_1V+G_1)_+^2\}\E\{(\tau_2V+G_2)_+^2\}}}\, , 
\end{align}
where expectation is with respect to the centered Gaussian vector 
$(G_1,G_2)$ with $\E\{G_1^2\} = \E\{G_2^2\}=1$, $\E\{G_1G_2\} = Q$,
independent of $V$. 

Let the state evolution sequence $\{\tau_t\}_{t\ge 0}$ be given as per
Eq.~(\ref{eq:SymmetricStateEvolution}), and define recursively
$\{Q_{t,s}\}_{t,s\ge 0}$ by letting
\begin{align}
Q_{t+1,s+1} = \H_V(Q_{t,s};\tau_t,\tau_s)\, .\label{eq:TwoTimesSE}
\end{align}
with initial condition $Q_{1,1}=1$ and, for $t\ge 2$,
\begin{align}
Q_{t,1} =
\frac{\E\{(\tau_{t-1}V+G)_+\}}{\sqrt{\E\{(\tau_{t-1}V+G)_+^2\}}}\, .
\end{align}
Then we have the following extension of state evolution.
\begin{lemma}\label{lemma:TwoTimes}
With the above definitions, let $\psi:\reals^3\to\reals$ be a
pseudo-Lipschitz function. Then, for any $t,s\ge 1$, we have,
almost surely
\begin{align}
\lim_{n\to\infty}\frac{1}{n}\sum_{i=1}^n\psi(\bv^t_i,\bv^s_i,\sqrt{n}(\bvz)_i)
=\E\big\{\psi(\tau_tV+G_t,\tau_sV+G_s,V)\big\}\, ,
\end{align}
where expectation is with respect to the centered Gaussian vector 
$(G_t,G_s)$ with $\E\{G_t^2\} = \E\{G_s^2\}=1$, $\E\{G_tG_s\} = Q_{t,s}$,
independent of $V$. 
\end{lemma}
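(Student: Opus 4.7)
The plan is to lift the single-time state evolution result (Proposition \ref{prop:AMPsym}) to a two-time statement by running in parallel the reductions already used in Appendix \ref{app:AMPsym}. First, by Lemma \ref{lemma:VvsW}, we have $\bv^t = \sqrt{n}\,\bw^t/\|(\bw^{t-1})_+\|_2$, and by Lemma \ref{lemma:StateEvolutionW} (applied to $\psi(x,y)=(x)_+^2$) the normalizing constants $\|(\bw^{t-1})_+\|_2/\sqrt{n}$ converge almost surely to $\sigma_t$ with $\tau_t=\mu_t/\sigma_t$. Hence the claim about $(\bv^t_i,\bv^s_i,\sqrt{n}(\bvz)_i)$ reduces, via the pseudo-Lipschitz property, to an analogous claim about $(\bw^t_i,\bw^s_i,\sqrt{n}(\bvz)_i)$, namely
\begin{align*}
\lim_{n\to\infty}\frac{1}{n}\sum_{i=1}^n\psi\big(\bw^t_i,\bw^s_i,\sqrt{n}(\bvz)_i\big)
=\E\big\{\psi(\mu_tV+\sigma_t\tilde G_t,\mu_sV+\sigma_s\tilde G_s,V)\big\},
\end{align*}
where $(\tilde G_t,\tilde G_s)$ are standard jointly Gaussian with correlation $Q_{t,s}$, independent of $V$.

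Second, I would reduce the statement for $\bw^t$ to one for the auxiliary sequence $\bs^t$ defined in Eq.~(\ref{eq:IterationS}). This uses the identity $\bts^t=\bw^t-\mu_t\bvz\sqrt{n}$ and the perturbation bound established in the proof of Lemma \ref{lemma:StateEvolutionW}, which already shows $\|\bts^t-\bs^t\|_2/\sqrt{n}\to 0$ almost surely for each $t$. Combined with a straightforward two-variable extension of Lemma \ref{lemma:Perturb} (where the pseudo-Lipschitz estimate is applied componentwise in each of the first two arguments, using Cauchy--Schwarz twice), this reduces the target to
\begin{align*}
\lim_{n\to\infty}\frac{1}{n}\sum_{i=1}^n\tilde\psi\big(\bs^t_i,\bs^s_i,\sqrt{n}(\bvz)_i\big)
=\E\big\{\tilde\psi(\sigma_t\tilde G_t,\sigma_s\tilde G_s,V)\big\},
\end{align*}
for a suitable pseudo-Lipschitz $\tilde\psi$.

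Third, I would invoke the general multi-time state evolution for symmetric AMP iterations driven by a GOE noise matrix, as developed in \cite{BM-MPCS-2011, javanmard2013state}. That theory asserts that the joint empirical distribution of $\{(\bs^t_i)_{t\ge 1},\sqrt{n}(\bvz)_i\}_{i\le n}$ converges in the sense of pseudo-Lipschitz test functions to that of $\{(Z_t)_{t\ge 1}, V\}$, where $(Z_t)_{t\ge 1}$ is a centered Gaussian process independent of $V$ with a covariance matrix $(\Sigma_{t,s})$ determined by a deterministic recursion in terms of the nonlinearities $h_t(x,y)=(x+\mu_ty)_+$. Specifically
\begin{align*}
\Sigma_{t+1,s+1}=\E\big\{h_t(Z_t,V)\,h_s(Z_s,V)\big\}=\E\big\{(Z_t+\mu_tV)_+(Z_s+\mu_sV)_+\big\},
\end{align*}
with $\Sigma_{t,t}=\sigma_t^2$ matching Eq.~(\ref{eq:StateEvolutionSplit2}). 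Setting $Q_{t,s}=\Sigma_{t,s}/(\sigma_t\sigma_s)$ and using $\tau_t=\mu_t/\sigma_t$, a direct rescaling shows that $Q_{t+1,s+1}=\H_V(Q_{t,s};\tau_t,\tau_s)$, matching Eq.~(\ref{eq:TwoTimesSE}); the boundary case $Q_{t,1}$ follows from $\bs^1=\bw^1-\mu_1\bvz=\bZ\,\one-\mu_1\bvz$ having i.i.d.\ $\normal(0,1)$ entries independent of $\bvz$, so that $\Sigma_{t,1}=\E\{(\mu_{t-1}V+\sigma_{t-1}G)_+\}$ after an elementary computation.

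The main obstacle will be the careful bookkeeping needed to extend Lemma \ref{lemma:Perturb} to functions of two (or more) iterates and to verify that the perturbation terms $\bDelta^t$ from Eq.~(\ref{eq:IterationStilde}) remain negligible in the two-time setting; this requires checking that the Onsager correction $\onsd_t$ concentrates to $\tons_t$ not only marginally (as in Appendix \ref{app:AMPsym}) but uniformly enough to control joint fluctuations. Once this is done, an induction on $\max(t,s)$ that mirrors the single-time induction in the proof of Lemma \ref{lemma:StateEvolutionW}, combined with the multi-time theorem of \cite{javanmard2013state}, delivers the claim. The extension to indicator test functions $\psi(x,y,z)=\ind(x\le a)$ follows from the standard sandwiching argument with the Lipschitz approximations $\psi_\delta$ defined in Eq.~(\ref{eq:PsiDelta}), using that the limiting laws have a density with respect to Lebesgue measure.
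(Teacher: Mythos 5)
Your proposal is correct and follows essentially the same route as the paper: reduce the statement about $\bv^t,\bv^s$ to the auxiliary iteration $\{\bs^t\}$ via Lemma \ref{lemma:VvsW}, Lemma \ref{lemma:StateEvolutionW} and the perturbation lemmas, and then invoke the multi-time state evolution of \cite{BM-MPCS-2011,javanmard2013state}, checking that the resulting Gaussian covariance reproduces the recursion (\ref{eq:TwoTimesSE}). The only difference is organizational: the paper handles the ``bookkeeping obstacle'' you flag by explicitly stacking the iterates $\bs^0,\dots,\bs^t$ into a vector-valued per-coordinate state $\br^t\in(\reals^T)^n$ so that a single application of Theorem 1 of \cite{javanmard2013state} delivers the joint (two-time) convergence directly, rather than extending Lemma \ref{lemma:Perturb} time by time.
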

\begin{proof}
Very similar statements were proven, for instance in \cite[Theorem
4.2]{BM-MPCS-2011} or \cite[Lemma C1]{donoho2013high}. The
construction is always the same, and we will only sketch the first
steps. Thanks to Lemma \ref{lemma:StateEvolutionW}, it is sufficient
to prove that, for $\{\bs^t\}_{t\ge 0}$ defined per
Eq.~(\ref{eq:IterationS}), we have
\begin{align}
\lim_{n\to\infty}\frac{1}{n}\sum_{i=1}^n\psi(\bs^t_i,\bs^s_i,\sqrt{n}(\bvz)_i)
=\E\big\{\psi(\Gamma_t,\Gamma_s,V)\big\}\, ,
\end{align}
where $(\Gamma_t,\Gamma_s)$ is a centered Gaussian vector with
$\E\{\Gamma_t^2\} = \sigma_t^2$, $\E\{\Gamma_s^2\} = \sigma_s^2$,
$\E\{\Gamma_t\Gamma_s\} = \sigma_t\sigma_s\, Q_{t,s}$.

In order to prove the last claim, we fix a maximum time $T$, and
consider all $t,s\in\{0,1,\dots,T-1\}$. We then define
$\br^t\in(\reals^{T})^n$, $t\in\{0,1,\dots,T-1\}$ that we can think of either as a vector of
lenfth $n$, with entries in $\reals^T$, or as a matrix with dimensions
$n\times T$. With the last interpretation in mind, $\br^t$ is defined
as a matrix whose first $t+1$ columns ar $\bs^0$, $\bs^1$, \dots,
$\bs^t$, and the others vanish, namely
\begin{align}
\br^t = \Big[\bs^0\Big|\bs^1\Big|\cdots\Big|\bs^t\Big|0\cdots 0\Big]\, .
\end{align}
Define $\bh_t:\reals^{T+1}\to\reals^{T}$ by letting
\begin{align}
\bh_t(s_0,s_1,\dots,s_{T-1};v) \equiv
\big(s_0,h_0(s_0;v),\dots,h_{t-1}(s_{t-1};v),h_t(s_t;v), 0,\dots,0\big)
;%
\end{align}
Then it is easy to see that Eq.~(\ref{eq:IterationS}) implies
\begin{align}
\br^{t+1} = \bZ\bh_t(\br^t;\sqrt{n}\bvz)
-\bh_t(\br^t;\sqrt{n}\bvz)\onsD_t\, ,
\end{align}
for a certain sequence of matrices $\onsD_t\in\reals^{T\times T}$. The
proof then follows by applying \cite[Theorem 1]{javanmard2013state} to
$\{\br^t\}_{t\ge 0}$.
\end{proof}

The next lemma provides the basic tool for applying the state
evolution method to prove our claim.
\begin{lemma}\label{lemma:TwoTimesTinfty}
Let $\{Q_{t,s}\}_{t,s\ge 1}$ be defined as above using the two times
state evolution recursion (\ref{eq:TwoTimesSE}). Then 
\begin{align}
\lim_{t\to\infty}Q_{t,t+1} = 1\, .
\end{align}
\end{lemma}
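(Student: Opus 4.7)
Let $q_t \equiv Q_{t,t+1}$. By the recursion $q_{t+1} = \H_V(q_t; \tau_t, \tau_{t+1})$, and since $\tau_t \to \tau \equiv \T_V(\beta)$ exponentially fast by Lemma \ref{lem:convergenceFixedPointSymm}, the plan is to analyse the autonomous limit map $Q \mapsto \H_V(Q; \tau, \tau)$, show that $Q = 1$ is a globally attracting fixed point with a strict contraction rate $\rho(\tau) < 1$, and then transfer the contraction to the non-autonomous sequence by continuity.

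The first step is to establish convexity of $Q \mapsto \H_V(Q; \tau, \tau)$ on $[0,1]$ via the Hermite polynomial expansion. Writing $(\tau v + G)_+ = \sum_{k \ge 0} c_k(\tau v)\, H_k(G)$ in the orthonormal probabilist's Hermite basis, with $c_k(w) = \E\{H_k(G)(w+G)_+\}$, and using the identity $\E\{H_j(G_1) H_k(G_2)\} = Q^k \delta_{jk}$ for standard bivariate Gaussians of correlation $Q$, I would obtain
\begin{equation*}
\E\{(\tau V + G_1)_+ (\tau V + G_2)_+\} = \E_V\Big\{\sum_{k \ge 0} c_k(\tau V)^2\, Q^k\Big\}.
\end{equation*}
Dividing by $\D_V(\tau)$ exhibits $\H_V(\cdot; \tau, \tau)$ as a power series in $Q$ with non-negative coefficients, hence convex on $[0,1]$, with value $1$ at $Q = 1$.

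The second step is to compute the derivative at $Q = 1$ and show it is strictly less than one. By Price's theorem (differentiation under the Gaussian density),
\begin{equation*}
\rho(\tau) \equiv \partial_Q \H_V(Q; \tau, \tau)\big|_{Q=1} = \frac{\P(\tau V + G > 0)}{\D_V(\tau)} = \frac{\E_V\{\Phi(\tau V)\}}{\E_V\{K(\tau V)\}},
\end{equation*}
where $K(w) = (1+w^2)\Phi(w) + w\phi(w)$. Since $K(w) - \Phi(w) = w^2\Phi(w) + w\phi(w) > 0$ for $w > 0$ and vanishes at $w = 0$, and since $\tau > 0$ (Lemma \ref{lem:Twelldefined}) while $\P(V > 0) > 0$ (as $\E V^2 = 1$ excludes $V \equiv 0$), strict inequality $\E_V\{\Phi(\tau V)\} < \E_V\{K(\tau V)\}$ holds, so $\rho(\tau) < 1$. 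Convexity of $\H_V(\cdot; \tau, \tau)$ together with $\H_V(1; \tau, \tau) = 1$ then gives the tangent-line inequality
\begin{equation*}
1 - \H_V(Q; \tau, \tau) \le \rho(\tau)\, (1 - Q) \qquad \text{for all } Q \in [0, 1].
\end{equation*}

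The third step is to pass from the autonomous to the non-autonomous contraction by a perturbation argument. The map $(\tau_1, \tau_2) \mapsto \H_V(Q; \tau_1, \tau_2)$ is continuous at $(\tau, \tau)$ uniformly in $Q \in [0,1]$: this follows from dominated convergence applied to the Gaussian-integral representations of the numerator and denominator, using the coupling $G_2 = Q G_1 + \sqrt{1-Q^2}\, G'$ with $G_1, G'$ independent to make the $Q$-dependence explicit, and the uniform lower bound $\D_V(\tau_i) \ge 1/2$. Combined with $\tau_t \to \tau$, for any $\eps > 0$ there exists $t_0$ such that
\begin{equation*}
1 - q_{t+1} \le \eps + \rho(\tau)\, (1 - q_t) \qquad \text{for all } t \ge t_0.
\end{equation*}
Iterating yields $\limsup_{t \to \infty}(1 - q_t) \le \eps / (1 - \rho(\tau))$, and sending $\eps \to 0$ (via larger $t_0$) forces $q_t \to 1$, since $q_t \le 1$ by Cauchy--Schwarz.

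The main obstacle is verifying convexity of $\H_V(\cdot; \tau, \tau)$ cleanly: one must justify the termwise manipulation of the Hermite series (standard via $L^2$ convergence and the finite second moment of $(\tau V + G)_+$), and note that convexity can fail for $\H_V(\cdot; \tau_1, \tau_2)$ when $\tau_1 \neq \tau_2$ (the coefficients $\E_V\{c_k(\tau_1 V) c_k(\tau_2 V)\}$ need not be non-negative). The perturbation step sidesteps this by applying convexity only at the limit point and absorbing the $\tau_1 \neq \tau_2$ discrepancy into the uniform continuity term $\eps$.
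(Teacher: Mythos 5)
Your proposal is correct and follows essentially the same route as the paper: the Hermite-expansion convexity step is exactly the paper's Lemma \ref{lemma:ConvexityTwoTimes} (proved there via the Ornstein--Uhlenbeck spectral decomposition), and the derivative at $Q=1$ is the same computation, compared against $\D_V(\tau)$ using non-negativity of $V$ and Stein's identity. The only difference is the finish: the paper uses derivative $\le 1$ plus \emph{strict} convexity to get $\H_V^*(Q)>Q$ on $[0,1)$ and concludes by a monotonicity/liminf contradiction, whereas you use the strict bound $\rho(\tau)<1$ with a tangent-line contraction and a uniform-continuity perturbation of the non-autonomous recursion, which is equally valid and additionally yields a geometric convergence rate.
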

Before proving this Lemma, we state a useful general fact (which
appeared already in specific forms in
\cite{BayatiMontanariLASSO,donoho2013high}.
\begin{lemma}\label{lemma:ConvexityTwoTimes}
Let $h:\reals^2\to\reals$ be a Borel function, $W, Z_1,Z_2$  random
variables, and $\prob_q$ a probability distribution such that --under $\prob_q$--
$(Z_1,Z_2)$ is a centered Gaussian vector independent of $W$, with 
covariance given $\E_q(Z_1^2)=\E_q(Z_2^2)=1$ and $E_q\{Z_1,Z_2\}=
q$. Assume $\E\{h(Z_1,W)^2\}<\infty$ and define 
\begin{align}
\cH(q) \equiv \E_q\{h(Z_1,W)h(Z_2,W)\}\, .
\end{align}
Then $q\mapsto\cH(q)$ is non-decreasing and convex on $[0,1]$.
Further, unless $h(x,y)$ is affine in $x$, it is strictly convex.
Finally assuming $h$ is weakly differentiable, and denoting by 
$\partial_1 h$ its derivative with respect to the first argument, we have
\begin{align}
\left.\frac{\de\cH}{\de q}\right|_{q=1} = \E\{[\partial_1h(Z,W)]^2\}\,
.\label{eq:DerivativeTwoTimes}
\end{align}
\end{lemma}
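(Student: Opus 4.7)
The plan is to expand $h$ in the orthonormal Hermite basis and reduce the three claims to elementary facts about power series with non-negative coefficients. Let $\{H_k\}_{k\ge 0}$ denote the Hermite polynomials normalized so that $\E\{H_j(Z)H_k(Z)\} = \delta_{jk}$ under $Z\sim\normal(0,1)$; they form an orthonormal basis of $L^2(\normal(0,1))$. By Fubini, the assumption $\E\{h(Z,W)^2\}<\infty$ implies that for $\prob_W$-almost every $w$ the section $z\mapsto h(z,w)$ lies in $L^2(\normal(0,1))$, so we may write $h(z,w) = \sum_{k\ge 0} a_k(w) H_k(z)$ with $a_k(w) = \E_Z\{h(Z,w)H_k(Z)\}$ and $\E_W\sum_k a_k(W)^2 = \E\{h(Z,W)^2\}<\infty$.

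The key ingredient is Mehler's identity: for a centered Gaussian pair $(Z_1,Z_2)$ with unit variances and correlation $q\in[-1,1]$, $\E\{H_j(Z_1)H_k(Z_2)\} = \delta_{jk}\, q^k$. Combining this with Fubini and the independence of $W$ from $(Z_1,Z_2)$ yields
\begin{align}
\cH(q) \;=\; \sum_{k\ge 0} c_k\, q^k, \qquad c_k \;\equiv\; \E[a_k(W)^2] \;\ge\; 0,
\end{align}
with $\sum_k c_k = \cH(1)<\infty$. Monotonicity and convexity on $[0,1]$ are then immediate from termwise differentiation, since $\cH'(q) = \sum_{k\ge 1} k c_k q^{k-1}$ and $\cH''(q) = \sum_{k\ge 2} k(k-1) c_k q^{k-2}$ are series of non-negative terms on this interval.

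For strict convexity, $\cH''(q)$ vanishes on $(0,1)$ only if $c_k = 0$ for every $k\ge 2$, which is equivalent to $a_k(W) = 0$ almost surely for all $k\ge 2$, i.e., $h(z,W) = a_0(W) + a_1(W)z$ a.s.---precisely the case in which $h$ is affine in its first argument. For Eq.~(\ref{eq:DerivativeTwoTimes}), I differentiate the Hermite expansion termwise in $z$: using the standard identity $H_k'(z) = \sqrt{k}\, H_{k-1}(z)$ for the orthonormal Hermite polynomials, weak differentiability of $h(\cdot,w)$ together with $\E\{[\partial_1 h(Z,W)]^2\}<\infty$ gives $\partial_1 h(z,w) = \sum_{k\ge 1} a_k(w)\sqrt{k}\, H_{k-1}(z)$ in $L^2(\normal(0,1))$. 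Parseval then yields $\E\{[\partial_1 h(Z,w)]^2\} = \sum_{k\ge 1} k\, a_k(w)^2$, which after averaging over $W$ equals $\sum_{k\ge 1}k c_k = \lim_{q\uparrow 1}\cH'(q)$ by monotone convergence, matching $\cH'(1)$ interpreted as a left derivative.

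The only delicate point is justifying that the Hermite expansion may be differentiated termwise in the weak sense. This will follow by a standard density argument: $\partial_1$ acts on the basis as $H_k\mapsto \sqrt{k}H_{k-1}$, and the finiteness $\E\{[\partial_1 h(Z,W)]^2\}<\infty$ forces $\sum_k k c_k <\infty$, which licenses both the termwise differentiation of the expansion of $h(\cdot,w)$ and the extension of $\cH'(q)$ as a continuous non-decreasing function up to $q = 1$. Everything else is bookkeeping.
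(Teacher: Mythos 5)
Your proof is correct, and for the monotonicity, convexity and strict convexity claims it is essentially the paper's own argument in different clothing: the paper writes $\cH(q)=\E\{h(X_0)h(X_t)\}$ for the stationary Ornstein--Uhlenbeck process at $t=\log(1/q)$ and invokes the spectral decomposition of the OU generator, whose eigenfunctions are exactly your orthonormal Hermite polynomials, so its expansion $\cH(q)=\sum_\ell c_\ell^2 q^\ell$ (first for $h$ not depending on $W$, then averaged over $W$ by conditioning) is precisely your Mehler identity computation with $c_k=\E[a_k(W)^2]$. Where you genuinely diverge is the derivative formula at $q=1$: the paper does not differentiate the series but instead writes $\cH(q)=\E_q\{h(Z,W)^2\}-\tfrac12\E_q\{[h(Z_1,W)-h(Z_2,W)]^2\}$, represents $Z_1=aX+bY$, $Z_2=aX-bY$ with $a=\sqrt{(1+q)/2}$, $b=\sqrt{(1-q)/2}$, and Taylor-expands in $b$, which uses weak differentiability directly; you instead read off the left derivative $\sum_k k\,c_k$ by monotone convergence and identify it with $\E\{[\partial_1h(Z,W)]^2\}$ through $H_k'=\sqrt{k}\,H_{k-1}$ and Parseval. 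Both routes work; yours is arguably more unified since the whole lemma comes out of one expansion, but it leans on the standard Gaussian Sobolev fact that a weakly differentiable $L^2(\gamma)$ function has square-integrable derivative iff $\sum_k k\,a_k(w)^2<\infty$, with the derivative's Hermite coefficients obtained by integration by parts --- you flag this as ``a standard density argument'' rather than proving it, which is acceptable and comparable to the level of detail in the paper's own sketch (note also that you add the hypothesis $\E\{[\partial_1h(Z,W)]^2\}<\infty$, not stated in the lemma, though the identity persists with both sides infinite under your monotone-convergence reading).
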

\begin{proof}
First consider the case of $h(x,y)=h(x)$ independent of the
second argument. 
Let $\{X_t\}_{t\ge 0}$ be the stationary Ornstein--Uhlenbeck process with
covariance $\E(X_0X_t) = e^{-t}$. Then 
\begin{align}
\cH(q) = \E\{h(X_0) h(X_t)\}\Big|_{t= \log(1/q)}\, ,
\end{align}
Then we have the spectral representation (for $t= \log(1/q)$
and $c_{\ell} = \<\phi_{\ell},h\>$, $\phi_{\ell}$ the $\ell$-th
eigenfunction of the Ornstein--Uhlenbeck generator
\begin{align}
\cH(q) = \sum_{\ell = 0}^{\infty}c_{\ell}^2\, e^{-\ell\, t} =
\sum_{\ell = 0}^{\infty}c_{\ell}^2\, q^{\ell}\, ,
\end{align}
whence the $\cH$ is non-decreasing and convex.
Strict convexity follows since $c_{\ell}\neq 0$ for some $\ell\ge 2$
as long as $h(x)$ is non-linear. 

Finally, if $h$ depends on its second argument as well,
we have $\cH(q) = \E\{\cH_W(q)\}$, with
$\cH_W(q) \equiv \E_q\{h(Z_1,W)h(Z_2,W)|W\}$. Using independence of
$(Z_1,Z_2)$ and $W$, the previous proof applies to $\cH_W$ for almost
every $W$ and, by linearity, to $\cH(q)$.

Equation~(\ref{eq:DerivativeTwoTimes}) follows by writing
\begin{align}
\cH(q) = \E_q\{h(Z,W)^2\} -\frac{1}{2} \E_q\big\{\big[h(Z_1,W)
-h(Z_2,W)\big]^2\big\}\, ,
\end{align}
with $Z\sim\normal(0,1)$. The claim follows by 
using the representation $Z_1 = aX+bY$, $Z_2= aX-bY$, with $X,Y$
independent standard normal, $a=\sqrt{(1+q)/2}$, $b=\sqrt{(1-q)/2}$,
and Taylor expanding the right hand side in $b$.
\end{proof}

We are now in position to prove Lemma \ref{lemma:TwoTimesTinfty}.
\begin{proof}[Proof of Lemma \ref{lemma:TwoTimesTinfty}]
Recall that  $\lim_{t\to\infty}\tau_t=\T_V(\beta)\in(0,\infty)$,
cf. Lemma \ref{lem:convergenceFixedPointSymm}. Letting $\tau_*\equiv
\T_V(\beta)$, we define $\H_V^*(Q) \equiv \H_V(Q;\tau_*,\tau_*)$, i.e.
\begin{align}
\H_V^*(Q) = \frac{\E\{(\tau_*V+G_1)_+(\tau_*V+G_2)_+\}}
{\sqrt{\E\{(\tau_*V+G_1)_+^2\}\E\{(\tau_*V+G_2)_+^2\}}}\, , 
\end{align}
with $(G_1,G_2)$ a centered Gaussian vector with
$\E(G_1^2)=\E(G_2^2)=1$ and $\E\{G_1G_2\}= Q$. 
By Lemma \ref{lemma:ConvexityTwoTimes}, the function $Q\mapsto
\H_V^*(Q)$ is  strictly convex and monotone increasing
in $[0,1]$. Further we have  $\H_V^*(1) = 1$ and, for $G\sim\normal(0,1)$,
\begin{align}
\left.\frac{\de\phantom{Q}}{\de Q} \H_V^*(Q)\right|_{Q=1} =
\frac{\prob\big(\tau_*V+G\ge
  0\big)}{\E\big\{\big(\tau_*V+G\big)_+^2\big\}}\, .
\end{align}
Note that 
\begin{align}
\E\big\{\big(\tau_*V+G\big)_+^2\big\} &=
\E\big\{\big(\tau_*V+G\big)\big(\tau_*V+G\big)_+\big\} \\
&= \tau_*\E\big\{V(\tau_*V+G)_+\big\}+ \E\big\{G(\tau_*V+G)_+\big\}\\
& \ge\prob\big(\tau_*V+G\ge0\big)\, ,
\end{align}
where the last inequality follows since $V\ge 0$, and applying Stein's
Lemma to the second term. We therefore have
\begin{align}
\left.\frac{\de\phantom{Q}}{\de Q} \H_V^*(Q)\right|_{Q=1} \le 1\, ,
\end{align}
and therefore, by convexity, $\H_V^*(Q)>Q$ for all $Q\in [0,1)$.

Now, for ease of notation, let $Q_t \equiv Q_{t,t+1}$. Note
that $\H_V(Q;\tau_1,\tau_2)\in [0,1]$ for all $Q\in [0,1]$:
indeed, for $Q\ge 0$, the random variables $(\tau_1V+G_1)_+$
and $(\tau_2V+G_2)_+$ are non-decreasing functions of positively
correlated ones, and hence are positively correlated. Therefore
$Q_t\in[0,1]$ for all $t$. Assume by contradiction that $Q_t$ does not
converge to $1$, and let $Q_* \equiv \lim\inf_{t\to\infty} Q_t$. Let
$\{t(k)\}_{k\in\naturals}$  be a subsequence with
$\lim_{k\to\infty}Q_{t(k)} = Q_*$.
 Since $\tau_t\to\tau_*$,  $\H_V$
is continuous and $\H_V^*$ is non-decreasing, we have
\begin{align}
Q_* &= \lim_{k\to\infty} Q_{t(k)}\\
& =\lim\inf_{k\to\infty}\H_V(Q_{t(k)-1};\tau_{t(k)},\tau_{t(k)-1}) \\
&= \lim\inf_{k\to\infty} \H^*_V(Q_{t(k)-1}) \\
& \ge \H_V^*(Q_*)\,.
\end{align}
This contradicts the previous remark that $\H_V^*(Q)>Q$ for all $Q\in
[0,1)$,
and hence proves the claim that $Q_t\to 1$.
\end{proof}

We are now in position to prove our claim
(\ref{lem:diffGoesToZeroSymm}). 
First note that --by triangular inequality-- it is sufficient to
consider the case $\ell=1$.
Using Lemma \ref{lemma:TwoTimes} 
for $\psi(x,y,z) = (x-y)^2$ and $s=t+1$, we get, almost surely
\begin{align}
\lim_{n\to\infty}\frac{1}{n}\|\bv^t-\bv^{t+1}\|_2^2 = 
\E\big\{\big(\tau_t V+G_t-\tau_{t+1}V-G_{t+1}\big)^2\big\} =
(\tau_t-\tau_{t+1})^2 + 2(1-Q_{t,t+1})\, .
\end{align}
Since by Lemma \ref{lem:convergenceFixedPointSymm} the sequence
$\tau_t$ converges to a finite limit as $t\to\infty$, 
we have $\lim_{t\to\infty}(\tau_t-\tau_{t+1}) = 0$. Hence taking the
limit $t\to\infty$ in the last expression and using Lemma \ref{lemma:TwoTimesTinfty}, we obtain the desired result.

\section{Proof of Theorems \ref{th:mainSym} and \ref{th:mainRec}}\label{sec:ProofOfMainTheorems}

\begin{proof}[Proof of Theorem \ref{th:mainSym}] 
For the sake of clarity, we will note the dimension index $n$ for a
matrix $\bX_n\in\reals^{n\times n}$, distributed according to the
\ref{eq:SymmetricModel}. In order to prove
 Eq.~(\ref{eq:lambdaPlusSim}) (i.e. $\lim_{n \to \infty}
 \lambda^+(\bX_n) = \R\sym(\T_V(\beta))$ almost surely),
we need to prove:
\begin{align}
\prob\left [\liminf_{n \to \infty} \lambda^+(\bX_n) \geq
  \R\sym(\T_V(\beta)) \right ] = 1~~\text{and}~~ \prob\left
  [\limsup_{n \to \infty} \lambda^+(\bX_n) \leq \R\sym(\T_V(\beta))
\right ] = 1~.
\end{align}
\begin{itemize}
\item  Theorem \ref{th:lowerBoundsAMPsymmetric} states that there exists a deterministic sequence $\{\delta_t\}_t$ such that $\lim_t \delta_t = 0$ and 
\[ \prob\left [\lim_{n \to \infty} \<\hbv^t, \bX_n \hbv^t \> \geq \R\sym(\T_V(\beta)) - \delta_t \right ] = 1~~.\]
It follows, using $\lambda^+(\bX_n) \geq \<\hbv^t, \bX_n \hbv^t \>$, and taking the intersection of these events for $t \in \naturals$, that  
\[ \prob\left [\liminf_{n \to \infty} \lambda^+( \bX_n) \geq \R\sym(\T_V(\beta)) \right ] = 1~~.\]
\item  Since the function $\bX_n \mapsto \max \left \{ \<\bv,\bX_n\bv\>~:~\bv\geq 0~,~\|\bv\|_2 \leq 1\right \}$ is 1-Lipschitz continuous, then using the upper bound of Lemma \ref{th:upperBoundsSymmetric} and Gaussian isoperimetry, for any $s>0$ we have, with probability at least $1-\exp\{-ns^2 / 2\}$, 
 \[   \lambda^+(\bX_n)  \leq \R\sym(\T_V(\beta)) +s~~.  \]
 Taking $s = \sqrt{(4 \log n)/n}$, with probability at least $1-n^{-2}$, 
 \[   \lambda^+(\bX_n)\leq\R\sym(\T_V(\beta)) + \frac{4 \log n}n~~.\]
 Hence $\limsup_{n\to \infty}\lambda^+(\bX_n) \leq
 \R\sym(\T_V(\beta))$ almost surely by Borel-Cantelli.
\end{itemize}
This concludes the proof of Eq.~(\ref{eq:lambdaPlusSim}).  
Equation (\ref{eq:innerProductSim}) follows immediately from Lemma
\ref{th:upperBoundsSymmetric} since $\lim_{x\to0}\Delta(x) = 0$, and
we know that the sequence $\lambda^+(\bX)$ converges almost surely to
to $\R\sym(\T_V(\beta))$.

 In order to prove the limit behavior as $\eps \to 0$ of Eqs. (\ref{eq:RlimitEpsZeroSym}) and (\ref{eq:FlimitEpsZeroSym}) we refer to Lemma \ref{rk:sparseRegimeFG} in Section \ref{sec:preliminaryProof} that establish the limit behavior of functions of interest $\T_V,\F_V,\G_V$ uniformly over the class of probability distributions $\cP$. We know, thanks to Definition \ref{def:UniformConvergence}, Lemma \ref{rk:sparseRegimeFG}, and uniform continuity on the interval $[0, 1]$ of  the square function $x\mapsto x^2$, and on $\reals_{\geq 0}$ of
 \begin{align*}
\F_0~:~ x \mapsto \frac x{ \sqrt{1/2+x^2}}~,\quad \G_0~:~ x\mapsto \frac {1/2}{  \sqrt{1/2+x^2}}~\text{and}\quad\T_0~:~ \beta \mapsto \begin{cases}
0 & \mbox{ if $\beta\le 1/\sqrt{2}$,}\\
\sqrt{\beta^2-(1/2)} & \mbox{ otherwise,}
\end{cases}\\
 \end{align*}
 that for any $\kappa>0$, one can find $\eps_0 = \eps_0(\kappa)$ such that for any $\eps<\eps_0$ and $\mu_V \in \cP_\eps$, we have, for $\beta\geq 0$, $|\F_V(\T_V(\beta)) -\F_0(\T_0(\beta))|\leq\kappa $ and
 $$|\beta \F_V(\T_V(\beta))^2 +2 \G_V(\T_V(\beta)) -\left [\beta \F_0(\T_0(\beta))^2 +2 \G_0(\T_0(\beta)) \right ]| \leq \kappa~~.  $$
 This proves uniform convergence of $\F_V(\T_V(\cdot))$ to $\F_0(\T_0(\cdot))$ and of $\beta \F_V(\T_V(\cdot))^2 +2 \G_V(\T_V(\cdot))$ to $\beta \F_0(\T_0(\cdot))^2 +2 \G_0(\T_0(\cdot))$. Since we have 
\begin{align*}
\beta \F_0(\T_0(\beta))^2 +2 \G_0(\T_0(\beta)) &=\begin{cases}
\sqrt 2 & \mbox{ if $\beta\le 1/\sqrt{2}$,}\\
\beta  + 1/(2\beta) & \mbox{ otherwise,}
\end{cases} 
\end{align*}
and
\begin{align*}
\F_0(\T_0(\beta))&=\begin{cases}
0 & \mbox{ if $\beta\le 1/\sqrt{2}$,}\\
\sqrt{ 1 - 1/(2\beta^2)} & \mbox{ otherwise,}
\end{cases} 
\end{align*}
the result is proved.
\end{proof}

\begin{proof}[Proof of Theorem \ref{th:mainRec}]
In order to prove Theorem \ref{th:mainRec} we proceed as for Theorem
\ref{th:mainSym}. We consider a sequence of random matrices $\{
\bX_n\}_{n\geq 1}$ of size $n \times p$, generated according to the
\ref{eq:SpikedMatrix}. We use  Lemma \ref{th:upperBounds} and
Gaussian isoperimetry for the $1$-Lipschitz function  $$\max \left \{
  \< \bu, \bX_n \bv\>~:~\|\bu\|_2\leq 1~,~\|\bv\|_2\leq 1~,~\bv\geq
  0\right \}~~,$$ to conclude that with probability at least
$1-n^{-2}$,  
\begin{align}
\sigma^+(\bX_n)\leq \R\rec(\S_V(\beta,\alpha)/\sqrt\alpha)
+\frac{\log n}{n}~~.
\end{align}
This proves, using Borel-Cantelli Lemma, that 
$\limsup_{n \to \infty} \sigma^+(\bx_n)\leq
\R\rec(\S_V(\beta)/\sqrt\alpha)$ almost surely.

By Theorem \ref{th:lowerBoundsAMPnonsymmetric} there exists a
deterministic sequence $\{\delta_t\}_{t}$ such that 
\begin{align}
 \prob \left [\< \hbu^t, \bX \hbv^t\> \geq \R\rec(\S_V(\beta,\alpha)) -
   \delta_t \right ] = 1~~.
\end{align}
Since $\sigma^+(\bX_n) \geq \< \hbu^t, \bX_n \hbv^t\>  $, and by
taking the intersection over $t \in \naturals$, we get 
$\liminf_{n \to \infty} \sigma^+(\bX_n)  \geq \R\rec(\S_V(\beta))= 1$ almost surely.
This concludes the proof of Eq.~(\ref{eq:MainRrec}),  i.e. $\lim_{n \to \infty}
\sigma^+(\bX_n) =  \R\rec(\S_V(\beta,\alpha))$. 

Together with Lemma \ref{th:upperBounds}, and using  $\lim_{x \to 0}
\Delta(x) = 0$, this implies Eq.~(\ref{eq:MainFrec}), i.e. 
$\lim_{n \to \infty} \< \bvz,\bv^+\> = \F_V(\S_V(\beta,\alpha) / \sqrt
\alpha)$ almost surely.

Finally the proof of Eqs.~(\ref{eq:RlimitEpsZeroNonSym}) and
(\ref{eq:FlimitEpsZeroNonSym}) follows from Lemma
\ref{rk:sparseRegimeFG} as in the symmetric case.
 \end{proof}

\bibliographystyle{amsalpha}

\newcommand{\etalchar}[1]{$^{#1}$}
\providecommand{\bysame}{\leavevmode\hbox to3em{\hrulefill}\thinspace}
\providecommand{\MR}{\relax\ifhmode\unskip\space\fi MR }
% \MRhref is called by the amsart/book/proc definition of \MR.
\providecommand{\MRhref}[2]{%
  \href{http://www.ams.org/mathscinet-getitem?mr=#1}{#2}
}
\providecommand{\href}[2]{#2}

\end{document}